\newtheorem{runningexample}{Example}
\newtheorem{proposition}{Proposition}
\newtheorem{examples}{Examples}
\newtheorem{example}{Example}
\newtheorem{obs}{Observation}
\newtheorem{definition}{Definition}
\newtheorem{theorem}{Theorem}
\newtheorem{corollary}{Corollary}
\newtheorem{lemma}{Lemma}
\newtheorem{claim}{Claim}
\newcommand{\proofclosure}{
\begin{proof}
\label{proof_th_closure}
Let $\pA_1$ and $\pA_2$ be PIAs that accept the languages $L(\pA_1)$ and $L(\pA_2)$, respectively.
In the following, we describe an automaton $\pA$ 
which accepts the corresponding language operation.

\paragraph*{Union.} 
$L(\pA_1) \cup L(\pA_2)$ is accepted by a pebble-intervals automaton $\pA$ which 
on input $w$, \new{uses a silent transition to move to the initial
  state of either $\pA_1$  of $\pA_2$ (intuitively, it guesses whether $w \in L(\pA_1)$ or $w \in L(\pA_2)$)} and then simulates the corresponding automaton on $w$.

\paragraph*{Concatenation.}
We show that $L(\pA_1) L(\pA_2)$ is accepted by a pebble-intervals automaton $\pA$.
Essentially, $\pA$ guesses where the partition is in the input, then it
simulates the corresponding automaton on each segment of the input.
More precisely, given input $w$, if $\varepsilon \in L(\pA_1)\cap L(\pA_2)$, 
the automaton $\pA$ guesses whether $w$ is the empty string, and if so moves to an accepting state using a silent transition. Otherwise, 
$\pA$ places a pebble $p$ on an arbitrary position $\ell$ in $w$.
$\pA$ remembers the letter $w(\ell)$ in its state, as it will need to simulate $\pA_2$ as if it has read the letter $w(\ell)$ in the first position in the input to $\pA_2$. 
$\pA$ then simulates $\pA_1$ on the prefix $w(1) \ldots w(\ell-1)$
by replacing all move transitions $\Move{k}{i}{\rhd}$ with $\Move{k}{i}{p}$.  
Similarly, 
$\pA$ then simulates $\pA_2$ on the suffix $w(\ell) \ldots w(n)$ by replacing all move transitions 
$\Move{k}{\lhd}{i}$ with $\Move{k}{p}{i}$. 
However, for this part of the input we need some new silent transitions to simulate $\pA_2$ placing a pebble on the position $p$ is on
and reading the letter $w(\ell)$.
$\pA$ accepts if both simulations accept.

\paragraph*{Kleene-$\star$.}
We show that $L(\pA_1)^\star$ is accepted by a pebble-intervals automaton $\pA$.
This automaton works similarly to the concatenation case, except it uses two pebble to enclose the substring it simulate an automaton on,
and it non-deterministically chooses when to move on to the next segment of the word. 
The automaton $\pA$ guesses whether the input $w$ is the empty string or not. If so, it goes into an accepting state. 
Otherwise, $\pA$ begins to simulate $\pA_1$ on contiguous substrings of $w$ in rounds, using pebbles $p,p'$ to enclose the substrings. 
In the first round, $\pA$ places $p$ anywhere and then places $p'$ somewhere to the right of $p$.
Then $\pA$ simulates $\pA_1$ on the substring between the pebbles by replacing $\Move{k}{\lhd}{i}$ transitions with $\Move{k}{p}{i}$ transitions, and 
$\Move{k}{i}{\rhd}$ transitions with $\Move{k}{i}{p'}$ transitions. In addition, $\pA$ uses new silent transitions to simulate $\pA_1$ placing pebbles on the positions
of $p$ and $p'$.
The subsequent rounds are similar, except that they begin with $\pA$ placing $p$ to the right of $p'$.
Note that for a run to be successful, the placement of $p$ in the first round must be on the first position of the input,
and in subsequent rounds, the placement of $p$ must be immediately to the right of $p'$.
$\pA$ decides non-deterministically to finish a round or continue it whenever the simulated state of $\pA_1$ is accepting
(otherwise $\pA$ must continue the round). $\pA$ is at an accepting state whenever $\pA_1$ is at an accepting state.

\paragraph*{Shuffle.}
The automaton $\pA$ uses two disjoint sets of pebbles. $\pA$ uses one set of pebbles to simulate $\pA_1$ on some substring of $w$, which is not necessarily contiguous,
and the other set of pebbles to simulate $\pA_2$ on the substring composed of the remaining positions. 
$\pA$ accepts if both simulations accept.

\paragraph*{Iterated shuffle.}
This is similar to the case of the shuffle, except that $\pA$ performs several simulations of $\pA_1$ in rounds.
At the beginning of each round, $\pA_1$ is set to its initial state,
and all pebbles are considered available. $\pA$ simulates $\pA_1$ on an arbitrary subset of the unread positions of the input. 
If $\pA_1$ goes into an accepting state, $\pA$ non-deterministically chooses whether to continue the simulation of $\pA_1$,
or whether a shuffled string was accepted by $\pA_1$ and this is the end of a round. 
In the latter case $\pA$ either starts a new round or it goes into an accepting state.
\end{proof}
}
\newcommand{\proofnonclosure}{
\begin{proof}
\label{proof_th_intersection}
(Part 1.)
We show that if pebble-intervals automata were effectively closed under intersection with regular languages,
we could decide an undecidable problem. Namely, the Minsky halting problem. 
A Minsky machine \cite{minsky1967computation} is a sequence of labeled instructions
\begin{flalign*}
\begin{array}{ll}
\texttt{0}: &\mathtt{comm}_0 \\ 
\texttt{1}: &\mathtt{comm}_1 \\ 
\vdots & \vdots \\
\texttt{n-1}: &\mathtt{comm}_{n-1} \\ 
\mathtt{HALT}: &\mathtt{halt}	
\end{array}
\end{flalign*}
where each of the first $n$ instructions is either an $\texttt{inc}_\texttt{c}$ instruction of the form:
\begin{flalign*}
\begin{array}{ll}
\texttt{i:} & \texttt{c:= c+1; goto i+1}	
\end{array}	
\end{flalign*}
or a conditional $\texttt{dec}_\texttt{c}\texttt{(j)}$ instruction of the form:
\begin{flalign*}
\begin{array}{ll}
\texttt{i:} & \texttt{if c=0 then goto j} \\
						& \texttt{else c:=c-1; goto i+1}
\end{array}
\end{flalign*}
and $\texttt{c}$ is one of the registers of the machine.

A \emph{trace} of a Minsky machine $M$ is a sequence $\texttt{t}_1,\ldots,\texttt{t}_m$ of labels where $\texttt{t}_1 = \texttt{0}$, 
$\texttt{t}_m = \texttt{HALT}$, and for
$0 < k < m$, if $\texttt{t}_{k-1} = \texttt{i}$ is an $\texttt{inc}_\texttt{c}$ instruction, then $\texttt{t}_{k} = \texttt{i+1}$,
and if it is a $\texttt{dec}_\texttt{c}\texttt{(j)}$ instruction, then either $\texttt{t}_{k} = \texttt{i+1}$, or $\texttt{t}_{k} = \texttt{j}$.

Note that the language of traces of a Minsky machine $M$ is regular.

A trace $\texttt{t}_1,\ldots,\texttt{t}_m$ is \emph{feasible} if the following holds. 
If $\texttt{t}_{k-1} = \texttt{i}$ for a $\texttt{dec}_\texttt{c}\texttt{(j)}$ instruction, and $\texttt{j} \neq \texttt{i+1}$,
then $\texttt{t}_{k} = \texttt{j}$ iff $\texttt{c} = 0$ at step $k-1$ during the run of $M$. 
Note that $c=0$ holds exactly at the points
where the number of 
$\texttt{inc}_\texttt{c}$ instructions is equal to the number of $\texttt{dec}_\texttt{c}$ instructions.

The problem of deciding whether a given Minsky machine $M$ with two registers
$c_0,c_1$ halts when started with $\texttt{c}_0 = \texttt{c}_1 = 0$ is referred to here as the \emph{Minsky halting problem}. 
It is known that the Minsky halting problem is undecidable \cite{schroeppel1972two}.
In other words, it is undecidable whether there is a feasible trace of $M$.

Now fix a Minsky machine $M$ and denote the language of its traces by $\texttt{T}$.

Let $\Sigma = \{\texttt{halt},\texttt{inc}_\texttt{c},\texttt{dec}_\texttt{c},\texttt{jmp}_\texttt{c} \mid \texttt{c} \in \{\texttt{c}_0,\texttt{c}_1\}\}$.
We define a mapping $h: \texttt{T} \rightarrow \Sigma^\star$ from traces to $\Sigma^\star$ which 
will also produce a regular language.
Let $\texttt{t}_1 \ldots \texttt{t}_m$ be a trace. For $\texttt{t}_{k-1} = \texttt{i}$ where $0 < k < m$:
\begin{itemize}
	\item 
	If $\texttt{i}$ is an $\texttt{inc}_\texttt{c}$ instruction, then $f(\texttt{t}_{k-1},\texttt{t}_{k}) = \texttt{inc}_\texttt{c}$.
	\item
	If $\texttt{i}$ is a $\texttt{dec}_\texttt{c}\texttt{(j)}$ instruction with $\texttt{j} \neq \texttt{i+1}$,
	then
	\[
	f(\texttt{t}_{k-1},\texttt{t}_{k}) = 
	\begin{cases}
	\texttt{dec}_\texttt{c} & \text{if $\texttt{t}_{k} = \texttt{i+1}$,}\\
	\texttt{jmp}_\texttt{c} & \text{if $\texttt{t}_{k} = \texttt{j}$}
	\end{cases}
	\]
	\item
	If $\texttt{i}$ is a $\texttt{dec}_\texttt{c}\texttt{(i+1)}$ instruction,
	then
	$f(\texttt{t}_{k-1},\texttt{t}_{k}) \in \{\texttt{dec}_\texttt{c},\texttt{jmp}_\texttt{c}\}$.
\end{itemize}
Now define $h(\texttt{t}_1 \ldots \texttt{t}_m)$ as the string $f(\texttt{t}_1,\texttt{t}_2)f(\texttt{t}_2,\texttt{t}_3)\ldots f(\texttt{t}_{m-1},\texttt{t}_m) \, \texttt{halt}$.

Denote the language resulting from applying $h$ to all the traces by $\texttt{Inst}_M = \{h(\texttt{t}) \mid \texttt{t} \in \texttt{T} \}$, and note $\texttt{Inst}_M$ is regular.
This language describes the traces of a machine. In order to decide whether there exists a feasible trace of $M$,
we need to be able to test if the appearances of the $\texttt{jmp}_\texttt{c}$ letter are in appropriate positions. 
For this purpose, we next define a pebble-intervals language by shuffling two pebble-intervals languages given by a context-free grammar,
which can distinguish between appropriately-placed and inappropriately-placed $\texttt{jmp}_\texttt{c}$ letters.
The intersection of the languages will consist of the feasible traces, meaning it will be non-empty if and only
if the given Minsky machine halts. 

We now define a context free grammar which produces sequences of 
$\texttt{inc}_{\texttt{c}}$ and $\texttt{dec}_{\texttt{c}}$ which are
well matched, along with $\texttt{jmp}_{\texttt{c}}$ letters which are appropriately placed,
that is, if $\texttt{jmp}_{\texttt{c}}$ appear in position $i$ of a string, then there is 
an equal number of 
$\texttt{inc}_{\texttt{c}}$ and $\texttt{dec}_{\texttt{c}}$ in the prefix containing the first $i$ positions.
Define the context free grammar $\Gamma_\texttt{c}$ as follows:
\begin{align*}
\begin{array}{rcl}
A & ::= & B \ \texttt{halt} \\
B & ::= & BB  \mid C B C \mid D\\
C & ::= & \varepsilon \mid \texttt{jmp}_{\texttt{c}} \ C \\
D & ::= & \texttt{inc}_{\texttt{c}} \ D \ \texttt{dec}_{\texttt{c}} \mid DD \mid \varepsilon
\end{array}
\end{align*}

\begin{claim}
\label{cl_grammar_is_pia}
The language produced by the grammar $\Gamma_\texttt{c}$ is PI.
\end{claim}
\begin{proof}[Proof of Claim~\ref{cl_grammar_is_pia}]
Essentially, the automaton simulates $\pA_{Dyck}$ on substrings of its input in order
verify that every jump is preceded by the appropriate number of increments and decrements.
These substrings are enclosed between pebbles 1 and 2 (or the ends of the input). The automaton $\pA_{Dyck}$
is simulated using a disjoint set of pebbles.

We describe a successful run of the automaton for one of the registers.
We automaton places pebble $0$ on the last position to read $\texttt{halt}$.
Then it guesses whether there are any $\texttt{jmp}_{\texttt{c}}$ on the input. 
If not, then it simply simulates $\pA_{Dyck}$ on the whole input.
Otherwise, pebble $1$ is placed to the left of $0$ to read $\texttt{jmp}_{\texttt{c}}$, and  $\pA_{Dyck}$
is simulated on the substring in the interval between the beginning of the input and pebble $1$.
Let $i \in \{1,2\}$ be the pebble placed at the beginning of the current round.
In a new round, the automaton guesses whether there is $\texttt{jmp}_{\texttt{c}}$ to the right of pebble $i$,
and if so, it moves pebble $3-i$ to the right of pebble $i$ to read $\texttt{jmp}_{\texttt{c}}$, and simulates 
$\pA_{Dyck}$ on the substring in the interval between $3-i$ and $i$.
If the automaton guesses there is no $\texttt{jmp}_{\texttt{c}}$ to the right of pebble $i$, it simulates $\pA_{Dyck}$
on the remainder of the input, in the interval between $i$ and $0$ and does not enter any new rounds.
\end{proof}

We resume the proof of Theorem~\ref{th_intersection}.
Define
$L_{\texttt{c}_0}$ as the language produced by $\Gamma_{\texttt{c}_0}$, and $L_{\texttt{c}_1}$ as the language produced by $\Gamma_{\texttt{c}_1}$.
Finally, define $\texttt{L}$ as the shuffle between $L_{\texttt{c}_0}$ and $L_{\texttt{c}_1}$, and note
that by Claim~\ref{cl_grammar_is_pia} and Theorem \ref{th_closure}, it is a pebble-intervals language.

Therefore, intersecting $\texttt{L}$ with $\texttt{Inst}_M$ would result in exactly the feasible traces. 
Now assume for contradiction that pebble-intervals languages are effectively closed under intersection with regular languages.
We describe a procedure for deciding the Minsky halting problem. Given $M$, generate and automaton $\pA_{\mathit{inst}}$ for 
$\texttt{Inst}_M$. Using
the assumed effective procedure for intersection with regular languages, now generate an automaton $\pA_M$ for $L_M = \texttt{L} \cap \texttt{Inst}_M$.
To decide the Minsky halting problem for $M$, test $\pA_M$ for emptiness.

Since emptiness of pebble-intervals automata is decidable, we contradict undecidability of the Minsky halting problem, 
so we conclude
that pebble-intervals languages are not effectively closed under intersection, even with regular languages.

(Part 2.)
We show we can build an automaton for $(L_M)^c$ and conclude that if we could effectively build a complement automaton
for any pebble-intervals language, then in particular we could build one 
for $((L_M)^c)^c = L_M$, which we could test for emptiness and again solve the Minsky halting problem.

Automaton for $(L_M)^c$: since $L_M = \texttt{L} \cap \texttt{Inst}_M$, we have that $(L_M)^c = \texttt{L}^c \cup (\texttt{Inst}_M)^c$. 
Note that if $\texttt{L}^c$ and $(\texttt{Inst}_M)^c$ are PI languages, then by Theorem \ref{th_closure}, their union $(L_M)^c$ is
a pebble-intervals language.
$(\texttt{Inst}_M)^c$ is regular and therefore also a pebble-intervals language, so it remains to show that 
$\texttt{L}^c$ is accepted by a pebble-intervals automaton. 
\begin{claim}
\label{cl_complement_L_pi}
$\texttt{L}^c$ is a pebble-intervals language.
\end{claim}
\begin{proof}[Proof of Claim~\ref{cl_complement_L_pi}]
\label{proof_cl_complement_L_pi}
Note that for every $w \in \texttt{L}^c$, it holds that there is a prefix of $w$ ending with some $\texttt{jmp}$ such that the number of increments and decrements are not equal. 
The automaton guesses for which register this happens, the prefix, and whether it is the number of increments or decrements that is larger. Then it verifies its guesses using $\pA_{\mathit{Dyck}}$.
\end{proof}
This concludes the proof of Theorem~\ref{th_intersection}.
\end{proof}

}
\newcommand{\proofcorunivinc}{
\begin{proof}
\label{proof_cor_universality_inclusion}
We have seen in the proof above that we can effectively build an automaton $\pA$ for $(L_M)^c$.
We have that $\pA$ has a universal language if and only if $L_M = \emptyset$. 
Thus if we could test for universality, we could again solve the Minsky halting problem.

Since universality easily reduces to inclusion, undecidability of the inclusion problem follows.
\end{proof}
}
\newcommand{

\begin{proof}
Here we prove Theorem~\ref{th_our_snf}.
We denote the class of all finite structures over a vocabulary $\voc$ by $\str{\voc}$.
To define $h$, we first need to introduce two functions (called translations), $\trans_1$ and $\trans_2$.
For simplicity we assume that the empty word satisfies $\psi$, and therefore
$\varphi_\varepsilon = \mathit{True}$. 
For the other case, we need to change the following by conjoining each of
 $\varphi^0$, 
 $\varphi^1_\exists$, and
 $\varphi^2_\exists$ with $\varphi_\varepsilon = \exists x \, (\mathit{True})$.

\paragraph*{The translation $\trans_1$}
Let 
\[
\varphi^0 = \forall x \forall y \, \chi^0(x,y) \wedge \bigwedge_{b=1}^{B}{\forall x \exists y \, \chi_b^0(x,y)}
\]
be the Scott Normal Form of $\psi$ (see e.g. \cite[Theorem 2.1]{gradel1999logics}). 
The formula $\varphi^0$ is over the vocabulary 
$\dvoc{\Sigma} \cup \vocSNF$  where $\vocSNF$ is a set of fresh  unary relations.
The formulas 
$\chi^0$, and $\chi_b^0$ are quantifier-free. 
The length of $\varphi^0$ is linear in that of $\psi$. 

For every model $\cD \in \dat{\Sigma}$ of $\psi$, 
there is a unique expansion of $\cD$ which satisfies $\varphi^0$. 
Let 
$\trans_1: \str{\dvoc{\Sigma} \cup \vocSNF} \rightarrow \dat{\Sigma}$ 
be the function
which takes a $\str{\dvoc{\Sigma} \cup \vocSNF}$-structure to its reduct to $\dvoc{\Sigma}$. 
The following hold:
	\begin{enumerate}[label=(\roman*${}_1^0$),ref=(\roman*${}_1^0$)]
		\item \label{proof_transa_i_dp}
		For every $\cE \in \str{\dvoc{\Sigma} \cup \vocSNF}$,
		if $\cE \models {\varphi^0}$ then $\trans_1(\cE) \models_{\dat{\Sigma}} \psi$, and
		\item \label{proof_transa_ii_dp}
		For every $\cD \in \dat{\Sigma}$,
		if $\cD \models_{\dat{\Sigma}} \psi$ then there exists $\cE \in \str{\dvoc{\Sigma} \cup \vocSNF}$ 
		such that $\trans_1(\cE) = \cD$ and $\cE \models {\varphi^0}$.
	\end{enumerate}

\paragraph*{Making the types explicit}
Next we define a formula $\varphi^1$ which is equivalent to $\varphi^0$.
Let $A $ be a finite set such that
$\{\nu_a \mid a\in A\}$ is the set of 1-types over $\dvoc{\Sigma} \cup \vocSNF$. 
Every quantifier-free formula is equivalent to a disjunction of 2-types. 
Hence, there a set $C$ whose size is at most the number of 2-types over $\dvoc{\Sigma} \cup \vocSNF$
such that
every conjunct ${\forall x \exists y \, \chi_b^0(x,y)}$ is equivalent to
\[
 \forall x \, \bigwedge_{a=1}^{A}{\nu_a(x) \rightarrow {\exists y \, \bigvee_{c=1}^{C}{\beta_{abc}(x,y)}}}
\]
where $\beta_{abc}(x,y)$ is 2-type  over $\dvoc{\Sigma} \cup \vocSNF$
for every $a$, $b$, and $c$. 
There is a set $\Theta^\beta_\forall(x,y)$ 
of 2-types  over $\dvoc{\Sigma} \cup \vocSNF$ such that 
\[
\begin{array}{llllll}
\displaystyle \chi^0(x,y) &\equiv&
\displaystyle \bigvee_{\beta \in \Theta^\beta_\forall} \beta(x,y),
\\
\displaystyle \varphi^0 &\equiv&
\displaystyle  \forall x \forall y \bigvee_{\beta \in \Theta^\beta_\forall} \beta(x,y)
\wedge \bigwedge_{b=1}^{B}\forall x \, \bigwedge_{a=1}^{A}\nu_a(x) \rightarrow  \\ & & {\exists y \, \bigvee_{c=1}^{C}{\beta_{abc}(x,y)}}.
\end{array}
\]
Let
$
\varphi^1 = \varphi^1_\forall \wedge \varphi^1_\exists
$, 
where
\[
\begin{array}{lll}
\varphi^1_\forall &=& \forall x \forall y \bigvee_{\beta \in \Theta^\beta_\forall} \beta(x,y), \\
\varphi^1_\exists &=& \forall x \, \bigwedge_{a=1}^{A}{\nu_a(x) \rightarrow \bigwedge_{b=1}^{B}{\exists y \, \bigvee_{c=1}^{C}{\beta_{abc}(x,y)}}}.	
\end{array}
\]

We have $\varphi^1 \equiv \varphi^0$ and from~\ref{proof_transa_i_dp} and~\ref{proof_transa_ii_dp}:
	\begin{enumerate}[label=(\roman*${}_1^1$),ref=(\roman*${}_1^1$)]
		\item \label{proof_transa_i}
		For every $\cE \in \str{\dvoc{\Sigma} \cup \vocSNF}$,
		if ${\cE} \models \varphi^1$ then $\trans_1({\cE}) \models \psi$, and
		\item \label{proof_transa_ii}
		For every $\cD \in \dat{\Sigma}$,
		if $\cD \models_{\dat{\Sigma}} \psi$ then there exists ${\cE} \in \str{\dvoc{\Sigma} \cup \vocSNF}$ 
		such that $\trans_1({\cE}) = \cD$ and ${\cE} \models \varphi^1$.
	\end{enumerate}

\paragraph*{The translation $\trans_2$}

Let $\Xi = \{\xi_a \mid a \in [A]\}$.
For every 2-type $\beta$ over 
$\dvoc{\Sigma} \cup \vocSNF$,
let $\beta^{\Xi}$ be the 2-type over $\dvoc{\Xi}$ such that:
\begin{itemize}
 \item For every $\alpha(x,y)$ which is one of $R(x,y)$ or $\neg R(y,x)$ for $R \in \{ \leq_1, \lesssim_2, S_2\}$, 
 $\beta(x,y) \models \alpha(x,y)$ if and only if $\beta^{\Xi}(x,y) \models \alpha(x,y)$. 
 \item For every $a \in [A]$ and $z \in \{x,y\}$, 
 $\beta(x,y) \models \nu_a(z)$ if and only if $\beta^{\Xi}(x,y) \models \xi_a(z)$. 
\end{itemize}

Let $\varphi^2 \in \FO^2(\dvoc{\Xi})$ be the formula obtained from $\varphi^1$ by
replacing the 1-types $\nu_a(x)$ with $\xi_a(x)$
and the 2-types $\beta(x,y)$  and $\beta_{abc}(x,y)$ with $\beta^\Xi(x,y)$ and $\beta_{abc}^\Xi(x,y)$ respectively: 
\[
\varphi^2= \varphi^2_\forall \wedge \varphi^2_\exists  
\]
where
\[
\begin{array}{llll}
\varphi^2_\forall & = & 	\forall x \forall y \bigvee_{\beta^\Xi \in \Theta_\forall^\Xi} \beta^\Xi(x,y), \\
\varphi^2_\exists & = & \forall x \, \bigwedge_{a = 1}^{A}{\xi_a(x) \rightarrow 
\bigwedge_{b=1}^{B}{\exists y \, \bigvee_{c=1}^{C}{\beta_{a b c}^\Xi(x,y)}}}.
\end{array}
\]
and where $\Theta_\forall^\Xi = \{\beta^\Xi \mid \beta \in \Theta^\beta_\forall\}$.

Finally, we define the translation 
$
\trans_2: \dat{\Xi} \rightarrow \str{\dvoc{\Sigma} \cup \vocSNF}
$. 
Let $\cD \in \dat{\Xi}$ with universe $D$.
We define $\trans_2(\cD)$ as follows:
\begin{itemize}
	\item 
	The universe and order relations of $\trans_2(\cD)$ are identical to those of $\cD$.
	\item
	For every $\xi_a \in \Xi$ and $d \in D$, if $\cD \models \xi_a(d)$ then 
	$d$ has 1-type $\nu_a(x)$ in $\trans_2(\cD)$. 
\end{itemize}

Observe that for a data word $\cD \in \dat{\Xi}$ with universe $D$, we have 
for all $d,d' \in D$ and every 2-type $\beta^\Xi$,
$\cD \models \beta^\Xi(d,d')$   if and only if $\trans_2(\cD) \models \beta(d,d')$.
Hence, from~\ref{proof_transa_i} and~\ref{proof_transa_i}:
	\begin{enumerate}[label=(\roman*${}_2$),ref=(\roman*${}_2$)]
		\item \label{proof_transb_i}
		For every $\cE \in \dat{\Xi}$,  
		if $\cE \models_{\dat{\Xi}} \varphi^2$ then $\trans_2(\cE) \models \varphi^1$, and
		\item \label{proof_transb_ii}
		For every $\cD \in \str{\dvoc{\Sigma} \cup \vocSNF}$, 
		if $\cD \models \varphi^1$ 
		then there exists $\cE \in \dat{\Xi}$ such that $\trans_2(\cE) = \cD$ and $\cE \models_{\dat{\Xi}} \varphi^2$.
	\end{enumerate}

Let $h$ be a letter-to-letter substitution given as follows: for every $\xi_a \in \Xi$, 
$h(\xi_a)=\sigma_a$, where $\sigma_a$ is the unique letter in $\Sigma$ such that 
$\nu_a (x) \models \sigma_a(x)$. 
Let $\overline{h}$ be the function 
  $\overline{h}:\dat{\Xi}\to\dat{\Sigma}$
  such that for every $\cD \in \dat{\Xi}$, $\overline{h}(\cD) = \cE$, where $\cE$
  has the same universe and order relations as $\cD$, and where the interpretation
  $\sigma^\cE$ of $\sigma\in\Sigma$ in $\cE$ is 
  $
   \bigcup_{\xi \in \overline{h}^{-1}(\sigma)} \xi^\cD
  $.
Note that $\overline{h}$ is the composition of $\trans_2$ and $\trans_1$. 
Using \ref{proof_transa_i}, \ref{proof_transb_i}, 
\ref{proof_transa_ii}, and \ref{proof_transb_ii}, we have:
\begin{enumerate}[label=(\roman*${}_h$)]
	\item \label{lemma:snf:i}
	For every $\cE \in \dat{\Xi}$, 
	$\cE \models_{\dat{\Xi}} \varphi^2$ implies $\overline{h}(\cE) \models_{\dat{\Sigma}} \psi$, and
	\item \label{lemma:snf:ii}
	For every $\cD \in \dat{\Sigma}$, 
	$\cD \models_{\dat{\Sigma}} \psi$ implies the existence of $\cE \in \dat{\Xi}$ 
	such that $\overline{h}(\cE) = \cD$ and $\cE \models_{\dat{\Xi}} \varphi^2$.
\end{enumerate}

Let $\hat{h}$ be the function 
 $\hat{h}:2^{\Xi^\star}\to 2^{\Sigma^\star}$ which transforms every word $u$ in the input language 
by substituting the letters according to $h$.
Now we can prove that 
$L(\psi) = \hat{h}(L(\varphi^2))$. 
Observe that for $\cE \in \dat{\Xi}$, 
$\hat{h}(\{\project(\cE)\}) =  \{\project(\overline{h}(\cE))\}$.

Let $u \in \hat{h}(L(\varphi^2))$. 
There is some $\cE \in \dat{\Xi}$ such that $\cE \models \varphi^2$ and $\{u\} = \hat{h}(\{\project(\cE)\})$
and hence $u = \project(\overline{h}(\cE))$. By~\ref{lemma:snf:i}, 
$\overline{h}(\cE) \models \psi$, and hence $u \in \hat{h}(L(\varphi^2))$. 

Conversely, 
let $u \in L(\psi)$. There is some $\cD \in \dat{\Sigma}$ such that $\cD \models \psi$ and $u = \project(\cD)$. 
By~\ref{lemma:snf:ii}, 
there is $\cE \in \dat{\Xi}$ 
	such that $\overline{h}(\cE) = \cD$ and $\cE \models_{\dat{\Xi}} \varphi^2$.
	Hence, $\project(\cE) \in L(\varphi^2)$. 
	We have $\hat{h}(\{\project(\cE)\}) = \{\project(\overline{h}(\cE))\} = \{ \project(\cD)\} = \{u\}$, and hence
	$u \in \hat{h}(L(\varphi^2))$. 

Given $\psi$, the formula $\varphi^0$ can be computed in polynomial time in the length of $\psi$. 
The size of $\vocSNF$ is linear in the length of $\psi$. 
W.l.o.g. we can assume that every symbol in $\dvoc{\Sigma} \cup \vocSNF$ occurs in $\psi$. 
Then the number of 1-types and 2-types over 
$\dvoc{\Sigma} \cup \vocSNF$ is at most exponential in the length of $\psi$, and 
the formulas $\varphi^1$ and $\varphi^2$ can be computed in exponential space. 
The lemma follows with the notation slightly simplified by replacing
$\beta^\Xi$ with $\theta$, 
$\beta_{abc}^\Xi$ with $\theta_{abc}$,
$\Theta_{\forall}^\Xi$ with $\Theta_{\forall}$,
$\varphi_\exists^2$ with $\varphi_\varepsilon \land \forall x \, \bigwedge_{a=1}^{A}{\xi_a(x) \rightarrow \bigwedge_{b=1}^{B}{\exists y \, \bigvee_{c=1}^{C}{\theta_{a b c}(x,y)}}}$, and
$\varphi_\forall^2$ with $\forall x \forall y \, \bigvee_{\theta \in \Theta_\forall} \theta(x,y)$.

\end{proof}
}{

\begin{proof}
Here we prove Theorem~\ref{th_our_snf}.
We denote the class of all finite structures over a vocabulary $\voc$ by $\str{\voc}$.
To define $h$, we first need to introduce two functions (called translations), $\trans_1$ and $\trans_2$.
For simplicity we assume that the empty word satisfies $\psi$, and therefore
$\varphi_\varepsilon = \mathit{True}$. 
For the other case, we need to change the following by conjoining each of
 $\varphi^0$, 
 $\varphi^1_\exists$, and
 $\varphi^2_\exists$ with $\varphi_\varepsilon = \exists x \, (\mathit{True})$.

\paragraph*{The translation $\trans_1$}
Let 
\[
\varphi^0 = \forall x \forall y \, \chi^0(x,y) \wedge \bigwedge_{b=1}^{B}{\forall x \exists y \, \chi_b^0(x,y)}
\]
be the Scott Normal Form of $\psi$ (see e.g. \cite[Theorem 2.1]{gradel1999logics}). 
The formula $\varphi^0$ is over the vocabulary 
$\dvoc{\Sigma} \cup \vocSNF$  where $\vocSNF$ is a set of fresh  unary relations.
The formulas 
$\chi^0$, and $\chi_b^0$ are quantifier-free. 
The length of $\varphi^0$ is linear in that of $\psi$. 

For every model $\cD \in \dat{\Sigma}$ of $\psi$, 
there is a unique expansion of $\cD$ which satisfies $\varphi^0$. 
Let 
$\trans_1: \str{\dvoc{\Sigma} \cup \vocSNF} \rightarrow \dat{\Sigma}$ 
be the function
which takes a $\str{\dvoc{\Sigma} \cup \vocSNF}$-structure to its reduct to $\dvoc{\Sigma}$. 
The following hold:
	\begin{enumerate}[label=(\roman*${}_1^0$),ref=(\roman*${}_1^0$)]
		\item \label{proof_transa_i_dp}
		For every $\cE \in \str{\dvoc{\Sigma} \cup \vocSNF}$,
		if $\cE \models {\varphi^0}$ then $\trans_1(\cE) \models_{\dat{\Sigma}} \psi$, and
		\item \label{proof_transa_ii_dp}
		For every $\cD \in \dat{\Sigma}$,
		if $\cD \models_{\dat{\Sigma}} \psi$ then there exists $\cE \in \str{\dvoc{\Sigma} \cup \vocSNF}$ 
		such that $\trans_1(\cE) = \cD$ and $\cE \models {\varphi^0}$.
	\end{enumerate}

\paragraph*{Making the types explicit}
Next we define a formula $\varphi^1$ which is equivalent to $\varphi^0$.
Let $A $ be a finite set such that
$\{\nu_a \mid a\in A\}$ is the set of 1-types over $\dvoc{\Sigma} \cup \vocSNF$. 
Every quantifier-free formula is equivalent to a disjunction of 2-types. 
Hence, there a set $C$ whose size is at most the number of 2-types over $\dvoc{\Sigma} \cup \vocSNF$
such that
every conjunct ${\forall x \exists y \, \chi_b^0(x,y)}$ is equivalent to
\[
 \forall x \, \bigwedge_{a=1}^{A}{\nu_a(x) \rightarrow {\exists y \, \bigvee_{c=1}^{C}{\beta_{abc}(x,y)}}}
\]
where $\beta_{abc}(x,y)$ is 2-type  over $\dvoc{\Sigma} \cup \vocSNF$
for every $a$, $b$, and $c$. 
There is a set $\Theta^\beta_\forall(x,y)$ 
of 2-types  over $\dvoc{\Sigma} \cup \vocSNF$ such that 
\[
\begin{array}{llllll}
\displaystyle \chi^0(x,y) &\equiv&
\displaystyle \bigvee_{\beta \in \Theta^\beta_\forall} \beta(x,y),
\\
\displaystyle \varphi^0 &\equiv&
\displaystyle  \forall x \forall y \bigvee_{\beta \in \Theta^\beta_\forall} \beta(x,y)
\wedge \bigwedge_{b=1}^{B}\forall x \, \bigwedge_{a=1}^{A}\nu_a(x) \rightarrow  \\ & & {\exists y \, \bigvee_{c=1}^{C}{\beta_{abc}(x,y)}}.
\end{array}
\]
Let
$
\varphi^1 = \varphi^1_\forall \wedge \varphi^1_\exists
$, 
where
\[
\begin{array}{lll}
\varphi^1_\forall &=& \forall x \forall y \bigvee_{\beta \in \Theta^\beta_\forall} \beta(x,y), \\
\varphi^1_\exists &=& \forall x \, \bigwedge_{a=1}^{A}{\nu_a(x) \rightarrow \bigwedge_{b=1}^{B}{\exists y \, \bigvee_{c=1}^{C}{\beta_{abc}(x,y)}}}.	
\end{array}
\]

We have $\varphi^1 \equiv \varphi^0$ and from~\ref{proof_transa_i_dp} and~\ref{proof_transa_ii_dp}:
	\begin{enumerate}[label=(\roman*${}_1^1$),ref=(\roman*${}_1^1$)]
		\item \label{proof_transa_i}
		For every $\cE \in \str{\dvoc{\Sigma} \cup \vocSNF}$,
		if ${\cE} \models \varphi^1$ then $\trans_1({\cE}) \models \psi$, and
		\item \label{proof_transa_ii}
		For every $\cD \in \dat{\Sigma}$,
		if $\cD \models_{\dat{\Sigma}} \psi$ then there exists ${\cE} \in \str{\dvoc{\Sigma} \cup \vocSNF}$ 
		such that $\trans_1({\cE}) = \cD$ and ${\cE} \models \varphi^1$.
	\end{enumerate}

\paragraph*{The translation $\trans_2$}

Let $\Xi = \{\xi_a \mid a \in [A]\}$.
For every 2-type $\beta$ over 
$\dvoc{\Sigma} \cup \vocSNF$,
let $\beta^{\Xi}$ be the 2-type over $\dvoc{\Xi}$ such that:
\begin{itemize}
 \item For every $\alpha(x,y)$ which is one of $R(x,y)$ or $\neg R(y,x)$ for $R \in \{ \leq_1, \lesssim_2, S_2\}$, 
 $\beta(x,y) \models \alpha(x,y)$ if and only if $\beta^{\Xi}(x,y) \models \alpha(x,y)$. 
 \item For every $a \in [A]$ and $z \in \{x,y\}$, 
 $\beta(x,y) \models \nu_a(z)$ if and only if $\beta^{\Xi}(x,y) \models \xi_a(z)$. 
\end{itemize}

Let $\varphi^2 \in \FO^2(\dvoc{\Xi})$ be the formula obtained from $\varphi^1$ by
replacing the 1-types $\nu_a(x)$ with $\xi_a(x)$
and the 2-types $\beta(x,y)$  and $\beta_{abc}(x,y)$ with $\beta^\Xi(x,y)$ and $\beta_{abc}^\Xi(x,y)$ respectively: 
\[
\varphi^2= \varphi^2_\forall \wedge \varphi^2_\exists  
\]
where
\[
\begin{array}{llll}
\varphi^2_\forall & = & 	\forall x \forall y \bigvee_{\beta^\Xi \in \Theta_\forall^\Xi} \beta^\Xi(x,y), \\
\varphi^2_\exists & = & \forall x \, \bigwedge_{a = 1}^{A}{\xi_a(x) \rightarrow 
\bigwedge_{b=1}^{B}{\exists y \, \bigvee_{c=1}^{C}{\beta_{a b c}^\Xi(x,y)}}}.
\end{array}
\]
and where $\Theta_\forall^\Xi = \{\beta^\Xi \mid \beta \in \Theta^\beta_\forall\}$.

Finally, we define the translation 
$
\trans_2: \dat{\Xi} \rightarrow \str{\dvoc{\Sigma} \cup \vocSNF}
$. 
Let $\cD \in \dat{\Xi}$ with universe $D$.
We define $\trans_2(\cD)$ as follows:
\begin{itemize}
	\item 
	The universe and order relations of $\trans_2(\cD)$ are identical to those of $\cD$.
	\item
	For every $\xi_a \in \Xi$ and $d \in D$, if $\cD \models \xi_a(d)$ then 
	$d$ has 1-type $\nu_a(x)$ in $\trans_2(\cD)$. 
\end{itemize}

Observe that for a data word $\cD \in \dat{\Xi}$ with universe $D$, we have 
for all $d,d' \in D$ and every 2-type $\beta^\Xi$,
$\cD \models \beta^\Xi(d,d')$   if and only if $\trans_2(\cD) \models \beta(d,d')$.
Hence, from~\ref{proof_transa_i} and~\ref{proof_transa_i}:
	\begin{enumerate}[label=(\roman*${}_2$),ref=(\roman*${}_2$)]
		\item \label{proof_transb_i}
		For every $\cE \in \dat{\Xi}$,  
		if $\cE \models_{\dat{\Xi}} \varphi^2$ then $\trans_2(\cE) \models \varphi^1$, and
		\item \label{proof_transb_ii}
		For every $\cD \in \str{\dvoc{\Sigma} \cup \vocSNF}$, 
		if $\cD \models \varphi^1$ 
		then there exists $\cE \in \dat{\Xi}$ such that $\trans_2(\cE) = \cD$ and $\cE \models_{\dat{\Xi}} \varphi^2$.
	\end{enumerate}

Let $h$ be a letter-to-letter substitution given as follows: for every $\xi_a \in \Xi$, 
$h(\xi_a)=\sigma_a$, where $\sigma_a$ is the unique letter in $\Sigma$ such that 
$\nu_a (x) \models \sigma_a(x)$. 
Let $\overline{h}$ be the function 
  $\overline{h}:\dat{\Xi}\to\dat{\Sigma}$
  such that for every $\cD \in \dat{\Xi}$, $\overline{h}(\cD) = \cE$, where $\cE$
  has the same universe and order relations as $\cD$, and where the interpretation
  $\sigma^\cE$ of $\sigma\in\Sigma$ in $\cE$ is 
  $
   \bigcup_{\xi \in \overline{h}^{-1}(\sigma)} \xi^\cD
  $.
Note that $\overline{h}$ is the composition of $\trans_2$ and $\trans_1$. 
Using \ref{proof_transa_i}, \ref{proof_transb_i}, 
\ref{proof_transa_ii}, and \ref{proof_transb_ii}, we have:
\begin{enumerate}[label=(\roman*${}_h$)]
	\item \label{lemma:snf:i}
	For every $\cE \in \dat{\Xi}$, 
	$\cE \models_{\dat{\Xi}} \varphi^2$ implies $\overline{h}(\cE) \models_{\dat{\Sigma}} \psi$, and
	\item \label{lemma:snf:ii}
	For every $\cD \in \dat{\Sigma}$, 
	$\cD \models_{\dat{\Sigma}} \psi$ implies the existence of $\cE \in \dat{\Xi}$ 
	such that $\overline{h}(\cE) = \cD$ and $\cE \models_{\dat{\Xi}} \varphi^2$.
\end{enumerate}

Let $\hat{h}$ be the function 
 $\hat{h}:2^{\Xi^\star}\to 2^{\Sigma^\star}$ which transforms every word $u$ in the input language 
by substituting the letters according to $h$.
Now we can prove that 
$L(\psi) = \hat{h}(L(\varphi^2))$. 
Observe that for $\cE \in \dat{\Xi}$, 
$\hat{h}(\{\project(\cE)\}) =  \{\project(\overline{h}(\cE))\}$.

Let $u \in \hat{h}(L(\varphi^2))$. 
There is some $\cE \in \dat{\Xi}$ such that $\cE \models \varphi^2$ and $\{u\} = \hat{h}(\{\project(\cE)\})$
and hence $u = \project(\overline{h}(\cE))$. By~\ref{lemma:snf:i}, 
$\overline{h}(\cE) \models \psi$, and hence $u \in \hat{h}(L(\varphi^2))$. 

Conversely, 
let $u \in L(\psi)$. There is some $\cD \in \dat{\Sigma}$ such that $\cD \models \psi$ and $u = \project(\cD)$. 
By~\ref{lemma:snf:ii}, 
there is $\cE \in \dat{\Xi}$ 
	such that $\overline{h}(\cE) = \cD$ and $\cE \models_{\dat{\Xi}} \varphi^2$.
	Hence, $\project(\cE) \in L(\varphi^2)$. 
	We have $\hat{h}(\{\project(\cE)\}) = \{\project(\overline{h}(\cE))\} = \{ \project(\cD)\} = \{u\}$, and hence
	$u \in \hat{h}(L(\varphi^2))$. 

Given $\psi$, the formula $\varphi^0$ can be computed in polynomial time in the length of $\psi$. 
The size of $\vocSNF$ is linear in the length of $\psi$. 
W.l.o.g. we can assume that every symbol in $\dvoc{\Sigma} \cup \vocSNF$ occurs in $\psi$. 
Then the number of 1-types and 2-types over 
$\dvoc{\Sigma} \cup \vocSNF$ is at most exponential in the length of $\psi$, and 
the formulas $\varphi^1$ and $\varphi^2$ can be computed in exponential space. 
The lemma follows with the notation slightly simplified by replacing
$\beta^\Xi$ with $\theta$, 
$\beta_{abc}^\Xi$ with $\theta_{abc}$,
$\Theta_{\forall}^\Xi$ with $\Theta_{\forall}$,
$\varphi_\exists^2$ with $\varphi_\varepsilon \land \forall x \, \bigwedge_{a=1}^{A}{\xi_a(x) \rightarrow \bigwedge_{b=1}^{B}{\exists y \, \bigvee_{c=1}^{C}{\theta_{a b c}(x,y)}}}$, and
$\varphi_\forall^2$ with $\forall x \forall y \, \bigvee_{\theta \in \Theta_\forall} \theta(x,y)$.

\end{proof}
}
\newcommand{\proofgooddecoration}{
\begin{proof}
\label{proof_prop_good_decoration}
If $\cD$ is the empty word $\emptyset_{\dat{\Xi}}$, then the empty word $\emptyset_{\compatTasks}$ over the vocabulary
$\compatTasks$ is the unique $\cD$-task word. It is easy to verify that 
$\emptyset_{\compatTasks}$  is a
completed $\cD$-task word if and only if $\cD \models \varphi_\exists$.
From now on, we assume $\cD$ is not the empty word. 
Note $\cD \models \varphi_\varepsilon$. 

Assume $\cD \models \varphi_\exists$.
Let $\cT$ have the same universe and order relations as $\cD$.
For every $d \in D$, we define the unique $\ts_d \in \compatCompletedTasks$ for which $\cT \models \ts_d(d)$ as follows.
Let $a \in [A]$ be such that $\cD \models \xi_a(d)$.
Since $\cD \models \varphi_\exists$, we have $\cD \models \bigwedge_{b=1}^{B}{\exists y \, \bigvee_{c=1}^{C}{\theta_{a b c}(d,y)}}$. 
Therefore there exist $d_b \in D: b\in B$ and $c_b \in [C]: b \in [B]$
such that $\cD \models \theta_{a b c_b}(d,d_b)$ 
for every $b \in [B]$.
Let $\ts_d = \{C_\theta \mid \theta \in \omega_d\}$,
and set $\cT \models \ts_d(d)$.

The set $\omega_d = \{\theta_{a b c_b} \mid b \in [B]\}$ is a witness type set and $\ts_d$ realizes $\omega_d$.
Hence, $\cT \models \forall x \, \bigvee_{\ts \in \compatCompletedTasks} \ts(x)$.
For every $d\in D$, $\cD\models \xi^\omega(d)$,
and for every $\theta \in \omega_d$, $P_\theta \notin \ts_d$ and $\cD \models \exists y \, \theta(d,y)$. 
Hence, $\cT$ is a $\cD$-task word.

For the other direction, let $\cT$ be completed $\cD$-task word. 
Let $d \in D$ and $a\in [A]$ such that $\cD \models \xi_a(d)$.
Since $\cT$ is a completed task word, there exists 
$\ts_d \in \compatCompletedTasks$ such that $\cT \models \ts(d)$. 
Let $\omega_d$ be the witness type set such that $\ts$ realizes $\omega_d$. 
There exist $c_b \in [C]: b \in [B]$ such that $\omega_d = \{ \theta_{a b c_b} \mid b \in [B] \}$
and $\ts_d = \{ C_{\theta_{a b c_b}} \mid b \in [B] \}$. 
Since $\cT$ is a task word, $\cD \models \exists y\, \theta(d,y)$.

Consequently, 
$\cD \models \bigwedge_{b=1}^{B}{\exists y \, \bigvee_{c=1}^{C}{\theta_{a b c}(d,y)}}$ for every $d\in D$ and $a\in [A]$, and hence $\cD \models \varphi_\exists$. 

\end{proof}

}
\newcommand{\proofuniquetasked}{

Before proving the claim, we recall the conditions of Definition~\ref{def_tasked_word}.
In a \emph{$\cD$-task word}
every $d \in D$ with $\cT \models \ts(d)$ satisfies: 
	\begin{enumerate} 
	\item \label{def_tasked_word_a}
	   $\cD \models \xi^{\omega(\ts)}(d)$,
	\item \label{def_tasked_word_b}
	   for every $\theta \in \omega(\ts)$, $P_\theta \in \ts$ iff $\cD \models \neg \exists y \, \theta(d,y)$.
	\item \label{def_tasked_word_d}
	   $C_\theta \in \ts  \implies \cD \models \exists y \, \theta(d,y)$ 
	\end{enumerate}

\begin{proof}
First we show the existence of such  $\cT_1$. 
We denote the universe of $\cT_1$ by $T_1$.
Let $\cT_1$ be the $\trim{\cD}{1}$-task word given as follows.
For every $d\in T_1$, let $\ts\in\compatTasks$ be such that $\cT_1\models \ts(d)$, 
and let $\ts_1$ be:
\[
\begin{array}{lllllllll}
\ts_1 & = & \{C_\theta \mid \theta \in \omega(\ts), \trim{\cD}{1} \models \exists y \, \theta(d,y)\} \cup \\ &&
\{P_\theta \mid \theta \in \omega(\ts), \trim{\cD}{1} \models \neg \exists y \, \theta(d,y)\}.	
\end{array}
\]
Then $\cT_1\models\ts_1(d)$. Since $\omega(\ts_1)=\omega(\ts)$, Condition~\ref{def_tasked_word_b} in Definition~\ref{def_tasked_word} holds. 
Since $\cT$ is a $\cD$-task word, for every $d\in T_1$ we have $\cD \models \xi^{\omega(\ts)}(d)$, 
implying Condition~\ref{def_tasked_word_a} in Definition~\ref{def_tasked_word} holds. 

It remains to show that $\cT_1$ is unique.	
Assume for contradiction that there exists another $\trim{\cD}{1}$-task word $\tilde{\cT}$ satisfying the statement of the lemma.
Then there is some $d \in D_1$ and distinct $\ts_1,\tilde{\ts}\in\compatTasks$ 
such that 
$\cT_1 \models \ts_1(d)$ and $\tilde{\cT} \models \tilde{\ts}(d)$. 
We have $\omega(\tilde{\ts})=\omega(\ts)=\omega(\ts_1)$, and hence there is $\theta \in \omega(\ts)$ such that
either $P_\theta \in \ts_1 - \tilde{\ts}$
or $C_\theta \in \ts_1 - \tilde{\ts}$. 
In either case, since $\cT_1$ satisfies Condition~\ref{def_tasked_word_b} in Definition~\ref{def_tasked_word}, 
$\tilde{\cT}$ does not satisfy Condition~\ref{def_tasked_word_b}, in contradiction to the assumption that $\tilde{\cT}$ 
is a $\trim{\cD}{1}$-task word. 
\end{proof}
}
\newcommand{\proofcltwo}{
\begin{proof}
\label{proof_cl_2}
It is given that $\cD \models \perfectalpha_{w(\ell_1),w(\ell_2),\leq_1}(d_1,d_2)$. 
Since $\embeddingOp{\dataAbs}{\cT}$ is order-preserving and $\cD \models d_1 \lneq_1 d_2$, we get $\ell_1\lneq \ell_2$.
Since $\cT$ is a task word, there exist $\ts_1,\ts_2 \in \compatTasks$ such that $\cT \models \ts_1(d_1)$ and $\cT \models \ts_2(d_2)$. 
By definition of a task word, we have that $\cD \models \xi^{\omega(\ts_1)}(d_1)$ and $\cD \models \xi^{\omega(\ts_2)}(d_2)$,
and therefore 
$
\cD \models \perfectalpha_{w(\ell_1)}(d_1) \wedge \perfectalpha_{w(\ell_2)}(d_2)
$.

\sloppypar{
We consider one of the cases for $\perfectalpha_{w(\ell_1),w(\ell_2),\lesssim_2}(x,y)$. The other cases can be treated analogously. 
If 
$\perfectalpha_{w(\ell_1),w(\ell_2),\lesssim_2}(x,y) = x \lnsim_2 y$ 
then $w(\ell_1)\notin \Gammatop$ while $w(\ell_2)\in \Gammatop$.
By definition of $w=\dataAbs(\cT)$,
we have that  $\val{\cD}{\embeddingOp{\dataAbs}{\cT}(\ell_1)}$ $<$ $\maxdv_\cD$
and $\val{\cD}{\embeddingOp{\dataAbs}{\cT}(\ell_2)}$ $=$ $\maxdv_\cD$. 
Hence, 
$\cD \models \perfectalpha_{w(\ell_1),w(\ell_2),\lesssim_2}(d_1,d_2)$.

We consider one of the cases for $\perfectalpha_{w(\ell_1),w(\ell_2),S_2}(x,y)$. The other cases can be treated analogously. 
If 
$\perfectalpha_{w(\ell_1),w(\ell_2),S_2}(x,y) = \neg S_2 (y,x)$
then $w(\ell_1)\in \Gammatop$ while $w(\ell_2)\in \Gammarest$.
By definition of $w=\dataAbs(\cT)$,
we have that $\val{\cD}{\embeddingOp{\dataAbs}{\cT}(\ell_1)}$ $=$ $\maxdv_\cD$
and $\val{\cD}{\embeddingOp{\dataAbs}{\cT}(\ell_2)}$ $\leq$ $\maxdv_\cD-2$.  
Therefore, we have 
$ \cD \models \perfectalpha_{w(\ell_1),w(\ell_2),S_2}(d_1,d_2) $. 
}
\end{proof}

}
\newcommand{\proofperfectwotype}{
\begin{proof}
Since  $\perfect_{\alpha,\beta}(x,y) \not\models_{\dat{\Xi}} \chi(x,y) \wedge \chi(y,x)$, there exists 
$\cD$ and elements $d_1,d_2 \in D$ such that $\cD \models \perfect_{\alpha,\beta}(d_1,d_2)$ and $\cD \not\models \chi(d_1,d_2) \wedge \chi(d_2,d_1)$.
By Observation~\ref{lem_perfect_2type} we know that  
for every data word $\cD'$ and every two elements $d_1',d_2' \in D'$ such that
$\cD' \models \perfect_{\alpha,\beta}(d_1',d_2')$, 
the 2-type of $(d_1',d_2')$ is the same as the 2-type of $(d_1,d_2)$. 
Let us denote this 2-type by $\theta(x,y)$. 
Since $\cD \not\models \chi(d_1,d_2) \wedge \chi(d_2,d_1)$, we have that either
$\theta(x,y) \notin \Theta_\forall$ or $\theta(y,x) \notin \Theta_\forall$ 
and therefore also $\cD' \not\models \chi(d_1',d_2') \wedge \chi(d_2',d_1')$.
\end{proof}
}
\newcommand{\proofunitask}{
\label{proof_uni_task}
\begin{proof}
\begin{claim}
\label{lem_perfect_ext}
Let $w \in \Gamma^+$. Then $w$ is perfect if and only if $\ext(w)$ is perfect.
\end{claim}
\begin{proof}
\label{proof_perfect_ext}
Let $s = \ext(w)$.

Assume that $w$ is perfect.
Let $\ell_1 \lneq \ell_2$ be positions in $s$ 
such that at least one of $s(\ell_1)$, $s(\ell_2)$ is in $\Gammatop$. 
For $i=1,2$, let $\ell_i' = \embeddingOp{\ext}{w}(\ell_i)$. We have $w(\ell_i')=s(\ell_i)$ and $\ell_1' \lneq \ell_2'$.
Therefore $\perfect_{s(\ell_1),s(\ell_2)}(x,y) = \perfect_{w(\ell_1'),w(\ell_2')}(x,y)$.
Since $w$ is perfect,
$
\perfect_{s(\ell_1),s(\ell_2)}(x,y) \models_{\dat{\Xi}} \chi(x,y) \wedge \chi(y,x)
$.

Now assume that $s$ is perfect. 
Let $\ell_1' \lneq \ell_2'$ be positions in $w$ 
such that at least one of $w(\ell_1')$, $w(\ell_2')$ is in $\Gammatop$. 
Denote $\ell_1'' = \min\{\ell \mid w(\ell) = w(\ell_1')\}$ and $\ell_2'' = \max\{\ell \mid w(\ell) = w(\ell_2')\}$
and note that $\ell_1'',\ell_2'' \in \extPos(w)$. 
For $i=1,2$, let $\ell_i$ be such that $\ell_i'' = \embeddingOp{\ext}{w}(\ell_i)$.
We have $w(\ell_i') = w(\ell_i'')=s(\ell_i)$ for $i=1,2$ and $\ell_1 \lneq \ell_2$. 
Hence, $\perfect_{w(\ell_1'),w(\ell_2')}(x,y) = \perfect_{s(\ell_1),s(\ell_2)}(x,y)$, and 
since  $s$ is perfect,
$
\perfect_{w(\ell_1'),w(\ell_2')}(x,y) \models_{\dat{\Xi}} \chi(x,y) \wedge \chi(y,x)
$.

\end{proof}

Assume $\cT$ is perfect. Let $d_1$ and $d_2$ be distinct elements of $\cD$. 
We show $\cD \models \chi(d_1,d_2)$.
Let 
$
e = \maxdv_\cD - \max\{\val{\cD}{d_1},\val{\cD}{d_2}\}
$.
Let the universe of $\trim{\cD}{e}$ be $D'$. 
Note that $d_1,d_2\in D'$ and that $\maxdv_{\trim{\cD}{e}} \in \{\val{\trim{\cD}{e}}{d_1},\val{\trim{\cD}{e}}{d_2}\}$. 
Since $\cT$ is a perfect task word, we have that
$\ext(\dataAbs(\ttrim{\cT}{e}))$ is a perfect string, and by Claim~\ref{lem_perfect_ext}, so is $w_e = \dataAbs(\ttrim{\cT}{e})$.
By Lemma~\ref{cl_2},
either $\trim{\cD}{e} \models \perfect_{w_e(\ell_1),w_e(\ell_2)}(d_1,d_2)$ 
or $\trim{\cD}{e} \models \perfect_{w_e(\ell_2),w_e(\ell_1)}(d_2,d_1)$. 
In either case, since $\dataAbs(\ttrim{\cT}{e})$ is perfect, we have that
$\trim{\cD}{e} \models \chi(d_1,d_2)$.
Since $\chi(x,y)$ is quantifier-free and $\trim{\cD}{e}$ is a substructure of $\cD$,
we also have that $\cD \models \chi(d_1,d_2)$.

For the other direction, assume $\cD \models \varphi_\forall$. 
Assume for contradiction that there is $e$ such that
$\ext(\dataAbs(\ttrim{\cT}{e}))$ is not perfect. Then by Claim~\ref{lem_perfect_ext}, 
$w_e = \dataAbs(\ttrim{\cT}{e})$ is also not perfect. 
That is,
there exist positions $\ell_1 \lneq \ell_2$ in $w_e$
such that at least one of $w_e(\ell_1),w_e(\ell_2)$ is in $\Gammatop$, and such that
$
\perfect_{w_e(\ell_1),w_e(\ell_2)}(x,y) \not\models_{\dat{\Xi}} \chi(x,y) \wedge \chi(y,x)
$.
For $i=1,2$, let $d_i = \embeddingOp{\dataAbs}{\ttrim{\cT}{e}}(\ell_i)$. We have $d_1\lneq d_2$. 
By the definition of $\dataAbs$, $\maxdv_{\trim{\cD}{e}} \in \{\val{\trim{\cD}{e}}{d_1},\val{\trim{\cD}{e}}{d_2}\}$.
By Lemma~\ref{cl_2}, we have $\trim{\cD}{e} \models \perfect_{w_e(\ell_1),w_e(\ell_2)}(d_1,d_2)$
and by applying Lemma~\ref{lem_cor_perfect_2type} with $\alpha = w_e(\ell_1)$ and $\beta = w_e(\ell_2)$, we have that $\trim{\cD}{e} \not\models \varphi_\forall$. 
Since $\trim{\cD}{e}$ is a substructure of $\cD$ and $\varphi_{\forall}$ is universal, we also have that $\cD \not\models \varphi_\forall$ in contradiction to our assumption.
\end{proof}
}
\newcommand{\proofcheckconsec}{
\begin{proof}
Given $s_e$ and $s_{e+1}$, 
checking if $(s_e,s_{e+1})$ are consecutive in $\EXPSPACE$ is done as follows. 
We iterate over all $r \in \Gammatop^\star \cap \CGamma^\star$ such that $|r| \leq 7 |\Theta_\exists|$
and over all strictly monotone functions 
$g:[n_r] \to [n_r + n_{s_e}]$.
We search for such $r$ and $g$ for which $s_{e+1} = \rcon{r}{g}{s_e}$ and answer to whether such $r$ and $g$ are found. 
Lemmas~\ref{lem:consec_to_syntactic} and~\ref{lem:syntactic_to_consec} guarantee the correctness 
of a semi-decision procedure behaving as above without restricting the length of $r$, and we show that if there is an $r$ such that $s_{e+1} = \rcon{r}{g}{s_e}$, 
then there is such an $r$ with $|r| \leq 7 |\Theta_\exists|$ (see App.~\ref{app_proof_lem:delete_nonext}).
\end{proof}
}
\newcommand{\proofconsectosyntactic}{
\begin{proof}
We first need some additional notation.
Let
 \[
 \begin{array}{l}
   \begin{array}{llllllllllllllll}
 		w &=& \dataAbs(\cT) \\ 
		X' &=& \extElem(\ttrim{\cT}{1})\\
		X_{\onetoprm} &=& \embeddingOp{\dataAbs}{\cT}(\positions_\onetoprm(w)) \\ 
		X &=& X_{\onetoprm} \cup X' \\
		w_{X} &=& \dataAbs(\cT|_X)\\
		w_{X_\onetoprm} &=& \dataAbs(\cT|_{X_{\onetoprm}})
 	\end{array}
 \end{array}
  \]
 The string $r \in \Gammatop^\star \cap \PGamma^\star$ is obtained from $w_{X_\onetoprm}$ by substituting every letter $(\otop,\ts)$ with 
 $(\otop,\ts_P)$ such that $\ts_P = \{P_\theta \mid \theta \in  \omega(\ts)\}$.  
The function $g$ is given by
\[
g = \embeddingOp{\dataAbs}{\cT|_{X_\onetoprm}}\circ\embeddingOp{\dataAbs}{\cT|_X}^{-1}
\]
Finally, let $w' = \dataAbs(\ttrim{\cT}{1})$, and $w_{X'} = \dataAbs(\cT|_{X'})$. 

By Lemma~\ref{cl_extPos_cardinality},
${\extElem(\cT)} \subseteq X$. 
Hence,
$s = \ext(w_X)$. We will prove that $w_X = \rcon{r}{g}{s^0}$, and the lemma will follow.

Notice that $w_X = w_{X_{\onetoprm}} \shuffle_g w_{X'}$, $|w_{X_\onetoprm}| = |r|$, $|w_{X'}| = |s^0|$,
and $g$ is strictly monotone as the composition of two order-preserving functions.
Hence, $s^0$, $r$, and $g$ are as required in the definition of $\rcon{r}{g}{s^0}$ (Definition~\ref{def:rcon}). 
Let $\ell \in [n_{w_X}]$ and $d=\embeddingOp{\dataAbs}{\cT|_{X}}(\ell)$. 
Let $(\rcon{r}{g}{s^0})(\ell) = (\hlayerp{a},\ts_a)$ and
$w_X(\ell) = (\hlayerp{b},\ts_b)$.
By the construction of $s$ and $\rcon{r}{g}{s^0}$, 
$h_a = h_b$ and $\omega(\ts_a)=\omega(\ts_b)$. 
We need to prove that $\ts_a=\ts_b$. 
Let $s^1$ be as in Definition~\ref{def:rcon}, $s^1(\ell) = (\hlayerp{a^1},\ts_a^1)$,
and $\theta \in \omega(\ts_a)$. 

Assume $C_\theta \in \ts_a$. If $C_\theta \in \ts_a^1$, then 
 $\ell$ is not in the image of $g$, $d$ is an element of $\trim{\cD}{1}$, 
 and there is no element $d'$ of $\trim{\cD}{1}$ such that $\trim{\cD}{1} \models \theta(d,d')$, hence $C_\theta \in \ts_b$. 
Otherwise, there are 
$\ell_1<\ell_2 \in [n_{s^1}]$ such that
$
\perfect_{s^1(\ell_1),s^1(\ell_2)}(x,y) \equiv_{\dat{\Xi}} \theta(x,y)
$
and either $s^1(\ell_1) \in \Gammatop$
or $s^1(\ell_2) \in \Gammatop$, and
either $\theta \models_{\DW(\Xi)} x \leq_1 y$ and $\ell = \ell_1$, or
$\theta \models_{\DW(\Xi)} x >_1 y$ and $\ell = \ell_2$. 
By Lemma~\ref{cl_2}, this implies that 
$\cD \models \exists y\, \theta(d,y)$, and hence 
$C_\theta \in \ts_b$. 
 
 Conversely, assume $C_\theta \in \ts_b$. There is $d'$ in the universe of $\cT$ such that
 $\cD \models \theta(d,d')$. If both $d$ and $d'$ are elements of $\trim{\cD}{1}$, then 
$C_\theta \in \ts_a^1$ and hence $C_\theta\in \ts_a$. 
 Otherwise, we have $\maxdv_\cD \in \{\val{\cD}{d}, \val{\cD}{d'}\}$.
By Lemma~\ref{lem:extemal-witnesses}, we may assume w.l.o.g. that $d' \in \extElem(\cT)$, and hence $d'\in X$. 
Let $\ell' \in  [n_{w_X}]$ such that $d'=\embeddingOp{\dataAbs}{\cT|_{X}}(\ell')$. 
Let $\ell_1<\ell_2 \in [n_{s^1}]$ be such that
$\{\ell,\ell'\}=\{\ell_1,\ell_2\}$. 
Let $d_i = \embeddingOp{\dataAbs}{\cT|_X}(\ell_i)$.
By Lemma~\ref{cl_2}, 
$
\cD \models \perfect_{s^1(\ell_1),s^1(\ell_2)}(d_1,d_2)
$.
By Observation~\ref{lem_perfect_2type}, 
there is a 2-type $\theta'(x,y)$ such that 
$
\perfect_{s^1(\ell_1),s^1(\ell_2)}(x,y) \equiv \theta'(x,y)
$.
The 2-type $\theta'(x,y)$ is the 2-type of $(d_1,d_2)$.
We have $\theta \models x \leq_1 y$ if and only if $\cD\models d_1 \leq_1 d_2$. 
Hence
$\theta(x,y)= \theta'(x,y)$ if $\theta \models x \leq_1 y$,
and 
$\theta(x,y)= \theta'(y,x)$ if $\theta \models x >_1 y$. 
Consequently, $C_\theta \in \ts_b$. 
We get $\ts_a=\ts_b$.

\end{proof}

}
\newcommand{\proofsyntactictoconsec}{

\begin{proof}
Let $s=\ext(\rcon{r}{g}{s^0})$.
Without loss of generality, we may assume the universe $D_0$ of $\cD_0$ is disjoint from $\mathbb{N}$. 
Let $\bar{g}$ be as in the definition of $r \shuffle_g s^0$. 
Let $\cD$ be the data word over $\Xi$ with universe $D_0 \cup [n_r]$
such that:
\begin{enumerate}
 \item $\cD_0$ is the substructure of $\cD$ induced by $D_0$.
 
 \item For every $\ell \in [n_r]$ and $r(\ell)=(\hlayer,\ts)$, $\cD \models \xi^{\omega(\ts)}(\ell)$. 
 
 \item For every $\ell_1,\ell_2 \in [n_r]$, $\cD \models \ell_1 \sim_2 \ell_2$.
 
 \item For every $d\in D_0$ and $\ell \in [n_r]$, $\cD \models d <_2  \ell$.
 
 \item For every $\ell_1, \ell_2 \in [n_r]$, $\cD \models \ell_1 \leq_1 \ell_2$ if and only if $\ell_1 \leq \ell_2$.
 
 \item For every $d\in D_0$, let
  $d'\in D_0$ be the maximal element of $\extElem(\cT_0)$ 
 with respect to $\leq_1$ such that $d' \leq_1 d$, and let $\ell'\in [n_{s^0}]$ be such that
 $d' =\embeddingOp{\ext \, \circ \, \dataAbs}{\cT_0}(\ell')$. 
 For every $\ell_1 \in [n_r]$, 
 $\cD \models d \leq_1 \ell_1$ if and only if $\bar{g}(\ell') < g(\ell_1)$;
 if no such $d'$ exists for $d$ then 
 $\cD \models d \leq_1 \ell_1$.

\end{enumerate}
Let $\cT$ be a $\cD$-task word such that, for every $d \in D_0$,
there are $\ts,\ts_0 \in \compatTasks$ such that $\omega(\ts)=\omega(\ts_0)$,
$\cT\models \ts(d)$, and $\cT_0\models \ts_0(d)$.
Clearly $\ext(\dataAbs(\ttrim{\cT}{1})) = s^0$. 
Let $r$, $w_X$, $w_{X'}$, and $w_{X_\onetoprm}$
be as in Lemma~\ref{lem:consec_to_syntactic}. By the construction of $\cD$, 
\[
g = \embeddingOp{\dataAbs}{\cT|_{X_\onetoprm}}\circ\embeddingOp{\dataAbs}{\cT|_X}^{-1}.
\] 
By Lemma~\ref{lem:consec_to_syntactic}, 
$\ext(\rcon{r}{g}{s^0}) = \ext(\dataAbs(\cT))$, i.e., $s^0$ and $s$ are consecutive extremal strings.

\end{proof}
}
\begin{document}

\title{Pebble-Intervals Automata and \\ \texorpdfstring{FO$^2$}{FO2} with Two Orders\thanks{This work was supported by the Austrian Science Fund (FWF) projects P30360, P30873, and W1255.} \\ (Extended Version)
}

\author{Nadia Labai}
\author{Tomer Kotek}
\author{Magdalena Ortiz}
\author{Helmut Veith\thanks{This article is dedicated to the memory of Helmut Veith, who passed away tragically
while this manuscript was being prepared.}}
\affil{TU Wien, Vienna, Austria}

\date{}

\maketitle        

\begin{abstract}

We introduce a novel automata model, which we call \emph{pebble-intervals automata
(PIA)}, and study its power and closure properties. 
PIAs are tailored for  a decidable fragment of FO that is important 
for reasoning
about structures that use data values from infinite domains:
the two-variable fragment  with one total preorder and its
induced successor relation, one linear order, and an arbitrary number of unary
relations. 
We prove that the string projection of every language of
data words definable in the logic is accepted by a
 pebble-intervals automaton $\pA$, and obtain as a corollary 
 an automata-theoretic proof of the $\EXPSPACE$ upper bound for 
finite satisfiability  due to Schwentick and Zeume.

\end{abstract}

\section{Introduction}

Finding decidable fragments of First Order Logic (FO) 
that are expressive enough for reasoning in different applications is a major
line of research. 
A prominent such fragment is \emph{the two-variable fragment \FOtwo\
of FO}, which has a decidable finite satisfiability problem \cite{Mor75,GKV97}
and is well-suited for handling graph-like structures. 
It captures many \emph{description logics}, which are prominent formalisms for
knowledge representation, and  
several authors have recently applied fragments based on \FOtwo\ 
to verification of programs 
\cite{DBLP:conf/icdt/ItzhakyKRSTVZ17,Ahmetaj17,DBLP:conf/icdt/CalvaneseOS16,calvanese2014shape,Rensink}.
Unfortunately, \FOtwo\ has severe limitations, e.g., it cannot 
express transitivity, and in the applications to verification above, it cannot
reason about programs whose variables range over data values from infinite
domains.
This has motivated the exploration of decidable extensions of \FOtwo\ with special relations which are not axiomatizable in 
\FOtwo. 
For example, finite satisfiability of \FOtwo\ with a linear order was shown to be \NEXPTIME-complete in \cite{otto2001two}, even in the presence 
of the induced successor relation \cite{etessami2002first}, 
and equivalence relations have been used to model data values which can be tested
for equality 
\cite{bojanczyk2011two,DBLP:conf/pods/BojanczykDMSS06,DBLP:conf/lpar/DavidLT10,DBLP:conf/icdt/NiewerthS11}.
However, related extensions of \FOtwo\ with preorders easily become undecidable
 \cite{DBLP:conf/pods/BojanczykDMSS06,DBLP:conf/csl/ManuelZ13}.
Recently 
the logic $\mathrm{FO}^2(\leq_1,\lesssim_2,S_2)$, 
that is \FOtwo\ with a linear order $\leq_1$, a total preorder $\lesssim_2$ and its induced successor $S_2$, 
and any number of unary relations from a finite alphabet,  
was shown to have an $\EXPSPACE$-complete satisfiability problem \cite{ar:zeume12}. 
This logic can compare data values in terms of which is smaller than which and whether they are consecutive in $\lesssim_2$,
making it suitable to model linearly ordered data domains, and 
a good candidate for 
extending existing verification methods which use two-variable logics.
We  continue the study of
$\mathrm{FO}^2(\leq_1,\lesssim_2,S_2)$, and in particular, focus on
a suitable automata model for it.
Establishing a connection to suitable automata 
for fragments of FO that can talk about values from infinite domains is an
active area of research. Automata are also important  in automated verification, where they are used, for
example, to reason about temporal properties of program traces
\cite{DBLP:conf/lics/VardiW86,DBLP:books/handbookMC}.
We make the following contributions:
\begin{compactitem}
\item As an  automata model for $\mathrm{FO}^2(\leq_1,\lesssim_2,S_2)$ we
  propose \emph{pebble-intervals automata} (PIAs).  
Similarly to  
classical 
finite-state automata, PIAs are read-once
automata for strings.
However, they read the input in varying  order.
Using  
a fixed set of pebbles $[m]=\{1,\ldots,m\}$, a PIA reads a position $p$ 
by choosing three pebbles $i,j,k \in  [m]$ and non-deterministically 
moving $k$ to position $p$ between the positions of $i$
and $j$.

\item We study the computational power and closure
      properties of PIAs. We describe a restricted  
class of  PIAs
that accept exactly the regular languages, and show that
 some context-free languages, and 
even languages which are not context-free, are accepted by PIAs.
We prove that PIAs are effectively 
closed under union, concatenation, Kleene star, shuffle, and iterated
shuffle, \new{but not effectively 
closed under intersection, even with regular
  languages, nor under complement.}
\item
\new{We show that the emptiness problem for PIA is $\NL$-complete if the number of pebbles is logarithmic in the size of the
automaton, and is $\PSPACE$ in general.} 
\item 
We 
show that PIAs contain $\mathrm{FO}^2(\vocDataWords)$
in the following sense: 
for each 
sentence $\psi$, there is 
a PIA 
whose language coincides with the \emph{projection language} of $\psi$,
obtained by omitting $\lesssim_2$ and $S_2$ from the structures satisfying $\psi$. 
\item As a corollary, we get an automata-theoretic proof for $\EXPSPACE$ membership of finite satisfiability for
 $\mathrm{FO}^2(\vocDataWords)$ that was established in \cite{ar:zeume12}. 
\end{compactitem}

\section{Pebble-Intervals Automata}

In this section, we introduce pebble-intervals automata (PIA). 
We study their emptiness problem, their expressive power, 
and closure properties of the languages they accept.

Let $[n] = \{1, \ldots, n \}$. A \emph{string} of length $n\geq
0$ over alphabet $\Sigma$ is a
mapping $w: [n] \rightarrow \Sigma$, written also $w = w(1) \cdots w(n)$.
Note that $[0] = \emptyset$ and $w : [0] \to \Sigma$ is the empty string
$\varepsilon$. 
We often use $s$, $u$, $v$, and $w$ for strings, and  $|w|$ for the length of
$w$. 

\medskip

A PIA is equipped with a finite number $m$ of pebbles.
It begins its computation with no pebbles on the input $w$, and
uses  \textsc{move} transitions to place and replace pebbles. 
In a $\Move{k}{i}{j}$ transition, 
the pebble $k$ (which may or may not have been previously placed on $w$) is
non-deterministically placed on a previously unread position 
in the interval between pebbles $i$ and $j$. 
The input boundaries can be used as interval  boundaries, e.g.,
a $\Move{k}{i}{\lhd}$ transition places pebble $k$ to the right of
pebble $i$.
For convenience we allow \emph{silent} transitions that go to a new
state without moving any pebbles. 
As pebbles can only be placed on unread
positions, each position of $w$ is read at most once.  
In an accepting run all positions must be read, and the run must end at
an accepting state.

\begin{definition}[Pebble-intervals automata]
\label{def_pebble_intervals_automata}
A \emph{PIA} $\pA$ is a tuple $(\Sigma, m, Q, q_{\mathrm{init}}, F, \delta)$, 
where $\Sigma$ is the (finite) alphabet, $m \in \bN$,
 $Q$ is the finite set of  states, $q_{\mathrm{init}} \in Q$ is the initial state, $F
 \subseteq Q$ are the accepting states, and $\delta \subseteq (Q \times Q)
 \cup (Q \times \MoveS_m \times \Sigma \times Q)$ is the transition relation 
with 
$
\MoveS_m = \{\Move{k}{i}{j} \mid i \in [m] \cup \{\rhd\}, j \in [m] \cup \{\lhd\} ,k \in [m], i \neq j 	\}.
$
We may omit $m$ when it is clear from the context.
Transitions in $Q \times \MoveS \times \Sigma \times Q$ are \textsc{move} transitions, 
and transitions in $Q \times Q$ are \emph{silent} transitions. 
The \emph{size} of $\pA$ is $|\delta| + |\Sigma| + |Q|$. 
\end{definition}

The positions of $m$ pebbles on a string of length $n$ during a run of a PIA are described by an 
\emph{$(m,n)$-pebble assignment}, which 
is a function $\newtau:[m]\to [n] \cup
\{\bot\}$ with either 
 $\newtau(i) \not= \newtau(j)$ or $\newtau(i) = \newtau(j) = \bot$
for each $1\leq i < j \leq m$; 
the pebbles $j$ with $\newtau(j) = \bot$ are
unassigned. We define $\rho_\bot$ as $\rho_{\bot}(i) = \bot$ for every $i \in [m]$.
By $\hat{\newtau}:[m] \cup \{\rhd, \lhd\} \to \{0\} \cup [n+1]$ we denote 
the extension of 
$\tau$ with 
$\hat{\newtau}(\rhd) = 0$ and $\hat{\newtau}(\lhd) = n+1$.

\begin{definition}[Semantics of PIAs]
\label{def_semantics}
Consider a PIA $\pA=(\Sigma, m, Q, q_{\mathrm{init}}, F, \delta)$. 
A \emph{configuration} of $\pA$ on string $u \in \Sigma^\star$ is a triple $(q,\rho,N)$
where 
	$q\in Q$ is the current state,  
	$\rho:[m]\to [\length{u}]\cup \{\bot\}$ is the current pebble assignment,  
	and 
	$N \subseteq [\length{u}]$ is the set of already-read positions. 
The \emph{initial} configuration $\pi_{\mathrm{init}}$
is $(q_{\mathrm{init}}, \rho_{\bot}, \emptyset)$.
A configuration $(q,\rho,N)$ is \emph{accepting} if $q\in F$ and $N = [\length{u}]$.
Let $\pi = (q,\rho,N)$ and
$\pi' = (q',\rho',N')$ be configurations  
on $u$. 
We call them 
\emph{consecutive} and write $\pi \overset{t}{\rightsquigarrow} \pi'$
if there exists a transition $t$ in $\delta$ such that either: 
\begin{compactenum}
 \item $t$ is a silent transition of the form $(q,q')$, $N = N'$, and  $\rho = \rho'$; or 
 \item $t$ is a \textsc{move} transition $(q,\Move{k}{i}{j},u(\ell),q')$  
   with $\hat{\rho}(i) < \ell <
   \hat{\rho}(j)$ and $\ell \in [\length{u}]-N$, and additionally  
   $\rho' = \rho[k\mapsto \ell]$ and $N' = N \cup \{\ell\}$. That is, 
pebble $k$ is placed on position $\ell$  
in the open interval between $i$ and $j$, reading the letter
                              $u(\ell)$.  
\end{compactenum}
Let 
$\bar{t}=(t_1,\ldots,t_r)$ 
and $\bar{\pi} =
(\pi_0,\ldots,\pi_r)$ be sequences of transitions and configurations. 
We call $(\bar{t},\bar{\pi})$ a \emph{computation of $\pA$ on $u$} if
 $\pi_0 = \pi_{\mathrm{init}}$ and
$\pi_{i-1} \overset{t_i}{\rightsquigarrow} \pi_i$ for every $i\in [r]$, and write
$\pi_0 \overset{\bar{t}}{\rightsquigarrow} \pi_r$. 
\new{We call $(\bar{t},\bar{\pi})$ \emph{accepting} if $\pi_r$ is accepting.}
We write $\pi \overset{\star}{\rightsquigarrow} \pi'$ if $\pi
\overset{\bar{t}}{\rightsquigarrow} \pi'$ for some  $\bar{t}$. 
The automaton $\pA$ \emph{accepts} $u$ if there is an accepting  computation of $\pA$ on
$u$. 
The set of all $u$ accepted by $\pA$ is denoted 
$L(\pA)$, and called a \emph{PI language}.  
\end{definition}

\smallskip\noindent{\bf Computational power.} 
PIAs generalize standard  non-deterministic finite-state automata. 
A PIA  $\pA=(\Sigma, 1, Q, q_{\mathrm{init}}, F, \delta)$ 
with one pebble is \emph{unidirectional} 
if $q_{\mathrm{init}}$ has no incoming transitions, 
and the $\MoveNoArgs$ transitions from other states 
use $\Move{1}{1}{\lhd}$ only.

\begin{proposition}
\label{prop:regular-languages}
A  language $L$ is accepted by a standard 
non-deterministic
finite-state automaton 
iff 
$L = L(\pA)$ for a unidirectional PIA $\pA$  
with the same number of states. 
\end{proposition}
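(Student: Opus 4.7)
The plan is to exhibit a bidirectional, state-preserving translation between standard NFAs and unidirectional PIAs. The guiding intuition is that a unidirectional PIA with a single pebble whose moves from non-initial states are all of the form $\Move{1}{1}{\lhd}$ reads its input strictly from left to right, one position at a time---the very behavior of an NFA.

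For the direction from NFAs to PIAs, given an NFA $A=(\Sigma, Q, q_0, F, \delta_A)$, I would build a PIA $\pA$ with one pebble and state set $Q$, initial state $q_0$, and accepting set $F$. Each NFA transition $(q,\sigma,q')$ becomes $(q_0, \Move{1}{\rhd}{\lhd},\sigma,q')$ when $q=q_0$ and $(q, \Move{1}{1}{\lhd},\sigma,q')$ otherwise. Silent/$\varepsilon$ behavior on the NFA side, if any, maps one-to-one onto silent transitions on the PIA side. Since an accepting PIA run on $w$ must read every input position and pebble~1, after its initial placement, can only advance to an unread position strictly to its right, the only schedule that reads all $|w|$ positions places pebble~1 on positions $1,2,\ldots,|w|$ in order; the resulting sequence of states and letters read is exactly an accepting NFA run on $w$. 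Conversely, any accepting NFA run on $w$ is realized by this schedule.

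For the converse direction, given a unidirectional PIA $\pA$ I would build an NFA on the same state set by turning every $\Move{1}{\cdot}{\lhd}$ transition labeled $\sigma$ into the NFA transition that discards the pebble apparatus and just reads $\sigma$, and each silent transition into an $\varepsilon$-transition (which can then be eliminated in the standard way without increasing the state count). The run-to-run correspondence is again immediate from the fact that pebble~1 never moves backward, so the sequence of letters it reads is exactly $w(1)w(2)\cdots w(|w|)$, the sequence an NFA would read. Acceptance of $\varepsilon$ lines up as well, since $\pA$ accepts the empty string iff $q_{\mathrm{init}}\in F$ (possibly via silent transitions), which translates directly to NFA acceptance of $\varepsilon$.

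The main subtlety worth flagging is the requirement that $q_{\mathrm{init}}$ of a unidirectional PIA have no incoming transitions, whereas an NFA's initial state may have them. This is harmless in the PIA-to-NFA direction; in the NFA-to-PIA direction we assume without loss of generality that the NFA has already been put into the standard normal form in which the initial state has no incoming transitions, so that state count is preserved. I do not expect any deep obstacle: once the right bookkeeping of initial versus non-initial $\MoveNoArgs$ transitions is fixed, both simulations are straightforward inductions on the length of the input.
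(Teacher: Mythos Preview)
The paper leaves this proposition without proof, and your approach is the natural one. However, there is a real (though easily repaired) gap in how you handle silent/$\varepsilon$-transitions at the initial state, and your claim about preserving state count does not go through.

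\textbf{The gap.} Your translations do not treat silent transitions out of the initial state correctly. In a unidirectional PIA, once you leave $q_{\mathrm{init}}$ via a silent transition, pebble~$1$ is still unassigned; from any non-initial state the only permitted move is $\Move{1}{1}{\lhd}$, which requires pebble~$1$ to already be on the word, so no letter can ever be read thereafter. Silent transitions out of $q_{\mathrm{init}}$ are therefore useful only for accepting~$\varepsilon$. But your PIA-to-NFA translation maps such a transition to an $\varepsilon$-edge, after which the NFA \emph{can} read a letter via the image of some $\Move{1}{1}{\lhd}$ transition. Concretely: take $\pA$ with states $\{q_{\mathrm{init}},q_1,q_2\}$, $F=\{q_2\}$, a silent transition $(q_{\mathrm{init}},q_1)$, and $(q_1,\Move{1}{1}{\lhd},a,q_2)$. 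Then $L(\pA)=\emptyset$, while your NFA accepts $a$. The mirror issue bites the NFA-to-PIA direction if the NFA has $\varepsilon$-moves out of its start state. The missing observation is that in any accepting PIA run on a nonempty input the \emph{first} non-silent step must be a move transition taken from $q_{\mathrm{init}}$ itself; once you state this, you can discard (or reroute) silent edges at $q_{\mathrm{init}}$ before translating and the run correspondence becomes genuinely bijective.

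\textbf{State count.} Normalizing an NFA so that its initial state has no incoming edges is \emph{not} free: the one-state NFA for $a^\star$ has a self-loop on its start state, whereas any one-state unidirectional PIA (no incoming edges to $q_{\mathrm{init}}$) accepts at most $\{\varepsilon\}$. So ``same number of states'' cannot hold verbatim under your construction; at best you get equality up to $+1$. This looks like a slight imprecision in the proposition itself (its uses in the paper only need the state count within a constant factor), but you should flag it rather than absorb it into ``without loss of generality.''
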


\noindent
PI languages also contain  non-regular languages, and even some
non-context-free ones. 

\begin{examples}
\label{ex_languages}
The following are examples of PI languages: 
\begin{compactenum}
 \item 
 There is a PIA $\pA_{\mathit{Dyck}}$ with one pebble that accepts
the Dyck language $L_{\mathit{Dyck}}$ of well-nested brackets, which  is context-free but not regular. 
  The alphabet has two letters $\lbbrck$ and $\rbbrck$, and the
 states are $q_{\lbbrck}$ and $q_{\rbbrck}$.
 The initial and only accepting state is $q_{\rbbrck}$.
 The transition relation contains
 $(q_{\rbbrck},\Move{1}{\rhd}{\lhd},\lbbrck,q_{\lbbrck})$
 and $(q_{\lbbrck},\Move{1}{1}{\lhd},\rbbrck,q_{\rbbrck})$. 
 $\pA_{\mathit{Dyck}}$ accepts a string iff there are as many left as right brackets,  and no prefix has more right than left brackets. 
 \item
A similar one pebble PIA 
accepts the language $L_{\mathit{two}}$  of all strings  of two types of parentheses, where 
	each type is well-nested with respect to itself, but not necessarily 
 to the other type. 
	E.g., $(\, [\, )\, ] \in L_{\mathit{two}}$, but $(\, ] \notin L_{\mathit{two}}$.
	$L_{\mathit{two}}$ is not context-free.
 \item
	$\{ a^n \$ b^n \# c^n \mid n \geq 0 \}$, which is not context-free, is accepted by a PIA with $3$ pebbles.
	Pebbles $1$ and $2$ read the $\$$ and the $\#$, and then the PIA
	keeps doing the following: 
	pebble $3$ reads an $a$ to the left of pebble $1$, a $b$ between pebbles $1,2$, and a $c$ to the right of pebble $2$.
 \item
	$\{ w\$ w \mid w \in \{0,1\}^+ \}$ is not context-free, and is accepted by a PIA with $3$ pebbles.
	Pebble $1$ reads the $\$$, pebble $2$ reads a letter $\sigma$ to the left of pebble $1$, and pebble $3$ also reads $\sigma$ to the right of pebble $1$.
	Then the PIA repeats: 
	\begin{enumerate*}[label=(\roman*)]
		\item a letter $\sigma$ is non-deterministically chosen,
		\item pebble $2$ reads $\sigma$ between its current position and pebble $1$, and
		\item pebble $3$ reads $\sigma$ to the right of its current position. 
	\end{enumerate*}
	Similar languages are PI languages, e.g., $\{w w^R w w \mid w \in
        \Sigma^\star\}$, where $w^R$ is $w$ in reverse.
\end{compactenum}
\end{examples}

We conjecture that not all context-free language are PI
languages; e.g, 
the  Dyck language of two types of  well-nested parentheses seems not to be PI.

\smallskip\noindent{\bf Closure properties.} We provide a construction of suitable PIAs
in the appendix to show the following.
\begin{theorem}
The class of PI  languages is  effectively 
closed under union, concatenation, Kleene-$\star$, shuffle, and iterated shuffle.
It is not effectively closed under
intersection, even with regular languages, nor under 
complement.
\end{theorem}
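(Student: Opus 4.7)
The plan is to treat the positive closure properties by direct constructions on given PIAs $\pA_1$ and $\pA_2$, and to derive the two non-closure results by reduction from an undecidable problem using the decidability of emptiness for PIAs.

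For the positive part, I would construct a new PIA $\pA$ for each operation. For union, a single silent transition from a fresh initial state branches non-deterministically into the initial states of $\pA_1$ or $\pA_2$, after which the chosen automaton runs unchanged; correctness is immediate from the semantics of silent transitions. For concatenation, $\pA$ first guesses a split point by placing a fresh pebble $p$, remembers the letter under $p$ in its finite state (since $\pA_2$ must see that letter as its leftmost input), simulates $\pA_1$ on the prefix by rewriting every $\Move{k}{i}{\rhd}$ transition as $\Move{k}{i}{p}$, then simulates $\pA_2$ on the suffix by rewriting every $\Move{k}{\lhd}{i}$ transition as $\Move{k}{p}{i}$, handling $\varepsilon \in L(\pA_1) \cap L(\pA_2)$ as a separate guessed case. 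Kleene-$\star$ is handled analogously by using two pebbles $p,p'$ to enclose one block at a time and moving $p$ to the right of $p'$ between rounds; the automaton may non-deterministically terminate whenever $\pA_1$ is in an accepting state. For shuffle, $\pA$ takes the disjoint union of the pebble sets of $\pA_1$ and $\pA_2$ and runs both simulations in interleaved fashion; since pebbles cover unread positions, the two simulations naturally read complementary position subsets. Iterated shuffle is the same, with $\pA_1$ restarted in rounds as in the Kleene-$\star$ construction. Each construction is polynomial in $|\pA_1| + |\pA_2|$.

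For non-closure under intersection with regular languages I would reduce from the Minsky machine halting problem. Given a two-counter Minsky machine $M$, encode traces as words over $\Sigma = \{\texttt{inc}_\texttt{c}, \texttt{dec}_\texttt{c}, \texttt{jmp}_\texttt{c}, \texttt{halt}\}$ for $\texttt{c}\in\{\texttt{c}_0,\texttt{c}_1\}$. The language $\texttt{Inst}_M$ of syntactically valid traces is regular. For each register $\texttt{c}$ I would define a language $L_{\texttt{c}}$ of words in which the $\texttt{inc}_\texttt{c}$ and $\texttt{dec}_\texttt{c}$ letters are globally balanced and every $\texttt{jmp}_\texttt{c}$ sits at a position where the prefix has equal numbers of $\texttt{inc}_\texttt{c}$ and $\texttt{dec}_\texttt{c}$, and show $L_{\texttt{c}}$ is PI by using the one-pebble PIA $\pA_{\mathit{Dyck}}$ from Example~1 on substrings bounded by pebbles placed at successive $\texttt{jmp}_\texttt{c}$ positions. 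Then $\texttt{L} = L_{\texttt{c}_0} \shuffle L_{\texttt{c}_1}$ is PI by the shuffle closure just established, and $\texttt{L} \cap \texttt{Inst}_M$ is non-empty iff $M$ halts from zeroed counters. Effective closure under intersection with regular languages plus decidability of emptiness for PIAs would therefore decide Minsky halting, a contradiction. Non-closure under complement follows by De~Morgan: $L_M^c = \texttt{L}^c \cup \texttt{Inst}_M^c$, and $\texttt{Inst}_M^c$ is regular hence PI; if $\texttt{L}^c$ is PI then union closure gives a PIA for $L_M^c$, whose complement (again by assumed closure) would give a PIA for $L_M$ testable for emptiness. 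So it suffices to show $\texttt{L}^c$ is PI, which I would do by guessing a prefix that witnesses a mismatch (a register, a position, and whether increments or decrements dominate) and verifying the guess using $\pA_{\mathit{Dyck}}$ on the prefix.

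The main obstacles I anticipate are on the negative side: ensuring that $L_{\texttt{c}}$ really is PI requires a careful pebble protocol that walks through successive $\texttt{jmp}_\texttt{c}$ occurrences while keeping the enclosing pebbles available for nested $\pA_{\mathit{Dyck}}$ simulations, and analogously for the witness-based PIA for $\texttt{L}^c$. On the positive side, the only delicate point is the concatenation and Kleene-$\star$ constructions, where one must be precise about how the boundary letter at pebble $p$ is consumed exactly once by the $\pA_2$ simulation; this is handled by introducing new silent transitions that mimic $\pA_2$'s first-step reading of that letter before the $\Move{k}{p}{i}$ transitions take over.
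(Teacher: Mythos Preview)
Your proposal is correct and follows essentially the same approach as the paper: the positive closure constructions (union via a silent branch, concatenation via a boundary pebble $p$ with letter memoization, Kleene-$\star$ via two enclosing pebbles in rounds, shuffle via disjoint pebble sets, iterated shuffle via rounds) all match the paper's constructions, and the non-closure arguments via Minsky machine halting---building a PI language $\texttt{L}$ as the shuffle of two per-counter Dyck-like languages verified by $\pA_{\mathit{Dyck}}$ on jump-delimited segments, then using decidable emptiness to derive a contradiction, and handling complement via $L_M^c = \texttt{L}^c \cup \texttt{Inst}_M^c$ with a direct PIA for $\texttt{L}^c$---are exactly the paper's route.
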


From the construction used in the proof of the above theorem, we also obtain:
\begin{corollary}
\label{cor_universality_inclusion}
The universality and inclusion problems for PIAs are undecidable.
\end{corollary}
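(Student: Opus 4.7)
The plan is to derive both undecidability results from the construction already carried out in the proof of the non-closure theorem (Part~2), where an effective procedure is given that, starting from a Minsky machine $M$, builds a PIA $\pA$ with $L(\pA) = (L_M)^c$. Recall that $L_M = \texttt{L} \cap \texttt{Inst}_M$ is exactly the set of feasible traces of $M$, so $L_M \neq \emptyset$ iff $M$ halts.

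First I would argue undecidability of universality. Given $M$, effectively construct the PIA $\pA$ for $(L_M)^c$ as in Part~2 of the non-closure proof. Observe that $L(\pA) = \Sigma^\star$ iff $L_M = \emptyset$ iff $M$ does not halt when started with both registers set to zero. Hence a decision procedure for universality of PIAs would yield a decision procedure for the complement of the Minsky halting problem, which is undecidable.

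Next I would reduce universality to inclusion. Over any fixed alphabet $\Sigma$, the language $\Sigma^\star$ is regular and, by Proposition~\ref{prop:regular-languages}, is accepted by a unidirectional one-pebble PIA $\pA_\Sigma$. For any PIA $\pA$ over $\Sigma$, we have $L(\pA) = \Sigma^\star$ iff $L(\pA_\Sigma) \subseteq L(\pA)$. Thus an algorithm for inclusion would decide universality, which we just showed to be undecidable. Consequently, both problems are undecidable.

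There is no genuine obstacle here: the work has already been done in the non-closure proof, which provides the effective translation from Minsky machines to PIAs for $(L_M)^c$; the only thing to check carefully is that the construction of $\pA_\Sigma$ via Proposition~\ref{prop:regular-languages} indeed gives us a bona fide PIA to use on the left-hand side of the inclusion, so that the reduction from universality to inclusion is entirely within the class of PIAs.
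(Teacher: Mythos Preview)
Your proposal is correct and follows essentially the same approach as the paper: use the effective construction of a PIA for $(L_M)^c$ from Part~2 of the non-closure proof to reduce (co-)halting to universality, then reduce universality to inclusion. You spell out the reduction from universality to inclusion via Proposition~\ref{prop:regular-languages}, which the paper leaves implicit by saying ``universality easily reduces to inclusion.''
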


\smallskip\noindent{\bf Emptiness.}
For deciding whether $L(\pA)\! \neq \!\emptyset$ for a given PIA, we use 
\emph{feasible sequences of transitions}, which are 
those that  correspond to an actual computation of a PIA. 
One can show that for a given PIA with $m$ pebbles,  $L(\pA) \neq \emptyset$ iff there is a feasible sequence of transitions $\bar{t}$ 
of length at most $|\pA|\cdot 2^{O(m \log m)}$, and that the existence of the latter  can be guessed and verified using a bounded amount of information (roughly a counter,  two transitions, and two pebble assignments). 
This gives us the upper bounds below, which hold also if $\pA$ is not given explicitly, as long as
$\delta$ can be computed  non-deterministically in $\log(|\pA|)$ space.  
For the case where $\pA$ has $O(\log |\pA|)$ pebbles, 
$\NL$-hardness follows from the same result for standard finite state automata and Prop.~\ref{prop:regular-languages}. 

\begin{theorem}
\label{th_pia_emptiness}
If a PIA $\pA$ has $O(\log |\pA|)$ pebbles,
its emptiness problem is $\NL$-complete. In general, the emptiness problem for PIA 
is in $\PSPACE$. 
\end{theorem}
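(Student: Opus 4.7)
The plan is to reduce the emptiness problem to the existence of a short sequence of transitions that is compatible with a sequence of succinct abstract configurations of $\pA$, then guess such a sequence on the fly. The key observation is that during any computation on input $u$, the actual positions of pebbles and the set of read positions in $[|u|]$ carry more information than is needed: what matters for applicability of a $\Move{k}{i}{j}$ transition is only (a) the current state, (b) the relative order of the currently placed pebbles, and (c) whether there is still some unread position strictly between any two consecutive pebble slots (or between a pebble and an input endpoint). I will therefore define an \emph{abstract configuration} as a triple consisting of a state $q\in Q$, an \emph{arrangement} $\Ord = \langle O, R_\leq\rangle$ recording the currently assigned pebbles $O\subseteq[m]$ together with the linear order $R_\leq$ of their positions, and a subset $U$ of the $|O|+1$ ``gaps'' marking those gaps that still contain at least one unread position. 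The number of abstract configurations is $|Q|\cdot 2^{O(m\log m)}$, so each one can be stored in $O(m\log m + \log|\pA|)$ bits.

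Next, I will define a \emph{feasible sequence} $(\bar t, \bar\Ord)$ exactly so that it mirrors a real computation on this abstract level: the transitions of $\bar t$ must be applicable to the corresponding abstract configurations (respecting the endpoints $\rhd,\lhd$ and the openness of the relevant gap), and the arrangements must update consistently when a pebble is moved. Then I will prove (adapting the argument already laid out for the auxiliary feasibility lemma) the equivalence: $L(\pA)\neq\emptyset$ iff there is a feasible sequence starting from the empty arrangement, ending in a state of $F$ with all gaps marked ``fully read'', i.e.\ with $U=\emptyset$. One direction is a projection of a real accepting computation to its abstract trace; the converse direction reconstructs a concrete input word $u$ and a real accepting computation by induction on the length of the feasible sequence, inserting fresh letters into $u$ at the positions dictated by the $\MoveNoArgs$ transitions, as in the companion lemma.

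Once this equivalence is in place, a standard pumping argument bounds the length of a minimal feasible sequence by $|\pA|\cdot 2^{O(m\log m)}$: between any two identical abstract configurations in a feasible sequence we can excise the loop, so a minimal witness visits each abstract configuration at most once. The membership in $\PSPACE$ (and in $\NL$ when $m=O(\log|\pA|)$) then follows by the standard nondeterministic algorithm: maintain a binary counter of current length bounded by $|\pA|\cdot 2^{O(m\log m)}$ together with the current abstract configuration, nondeterministically guess the next transition $t\in\delta$ and the next abstract configuration, verify local consistency, and accept when a final abstract configuration is reached. The total workspace is $O(m\log m + \log|\pA|)$, which is polynomial in $|\pA|$ in general and logarithmic when $m=O(\log|\pA|)$; the latter case also inherits $\NL$-hardness from emptiness of NFAs via Proposition~\ref{prop:regular-languages}.

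The main obstacle I foresee is the reconstruction direction: when processing a $\Move{k}{i}{j}$ transition during the inductive construction of the concrete input and computation, one must insert the fresh letter at a position consistent with both the old and the new arrangement, and simultaneously re-index all existing pebble positions and the read-set $N$ by the appropriate shift. Getting the bookkeeping right for the two sub-cases (pebble $i$ is assigned, versus the move references an endpoint $\rhd$ because no pebble sits to the left of the chosen gap) is the delicate part, but it is exactly the content of the companion feasibility lemma, which is re-used here.
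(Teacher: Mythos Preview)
Your approach is correct and follows the same overall strategy as the paper: abstract configurations, a feasibility lemma, a pumping bound of $|\pA|\cdot 2^{O(m\log m)}$, and an on-the-fly nondeterministic search. The one substantive difference is your component $U$ tracking which gaps still contain unread positions. The paper omits this entirely: its abstract configuration is just $(\text{state},\text{arrangement})$, with no gap-occupancy information.

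The reason $U$ is dispensable is exactly the reconstruction you invoke. When you build $u$ inductively by inserting one fresh position per $\MoveNoArgs$ transition, every position of the current $u$ has already been read at every stage, so the applicability condition ``there is an unread position in the interval'' is satisfied \emph{for free} by the freshly inserted letter. Thus for the emptiness problem (as opposed to membership on a fixed input) condition~(c) in your list is redundant, and tracking $U$ adds bookkeeping without buying anything. Your version still works (a student-feasible sequence ending in $U=\emptyset$ exists iff a paper-feasible sequence exists, via projecting out $U$ in one direction and reading off the correct $U$ from the reconstructed run in the other), and the extra $O(m)$ bits for $U$ do not change the asymptotic space bound; it is simply heavier than necessary.
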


\subsubsection*{Related automata models.} 
\emph{Jumping finite automata} \cite{DBLP:journals/ijfcs/MedunaZ12}
are probably the closest to PIAs: they  are   
essentially PIAs with one pebble,  which is placed on  an arbitrary unvisited position
without specifying an interval. 
In the context of languages with infinite alphabets, various automata models have been proposed that run on \emph{data words}: string words where values from an infinite domain are attached to each position.
Register automata are finite-state machines on data words which use registers to compare whether data values are equal  
 \cite{kaminski1994finite,neven2004finite,DBLP:journals/iandc/BouyerPT03}; their string  projection languages are regular.
 Pebble automata \cite{neven2004finite} use pebbles in a stack discipline to test for equality of data values.
Data automata \cite{DBLP:conf/pods/BojanczykDMSS06,bojanczyk2011two,DBLP:conf/lics/BojanczykMSSD06} are an extension of register automata introduced to prove the decidability of satisfiability of \FOtwo\ on words with a linear order, a successor relation, and an equivalence relation. 
Their projection languages are accepted by multicounter automata, which are finite automata on strings extended with counters, that are equivalent to Vector Addition Systems or Petri Nets \cite{DBLP:conf/ac/Esparza96}.
Class Memory Automata \cite{DBLP:journals/tcs/BjorklundS10} have the same expressive power as data automata.
Variable Finite Automata \cite{DBLP:conf/lata/GrumbergKS10} extend finite
state automata with variables  from 
an infinite alphabet.
Many works have studied these automata models and their 
variations, see \cite{segoufin2006automata} and \cite[Chapter 4]{KaraPhD} for surveys.

\section{PIAs and \texorpdfstring{$\mathrm{FO}^2(\leq_1,\lesssim_2,S_2)$}{FO2}}

To establish the relation between $\mathrm{FO}^2(\leq_1,\lesssim_2,S_2)$
and PIAs, we need some preliminaries. 
Recall that a total preorder $\lesssim$ is a transitive total relation which can be seen as an equivalence relation
whose equivalence classes are linearly ordered. 
We use $x \sim_2 y$ as shorthand for $(x \lesssim_2 y) \land (y \lesssim_2 x)$.
The induced successor relation $S$ of a total preorder $\lesssim$ 
is such  that $S(x,y)$ if $x \lesssim y$ and there is no $z$ such that $x \lnsim z \lnsim y$.

Two-variable logic (\FOtwo) is the restriction of FO to formulas that only use
two variables $x$ and $y$, and $\mathrm{FO}^2(\leq_1,\lesssim_2,S_2)$ is 
\FOtwo\ with a linear order $\leq_1$, a total preorder $\lesssim_2$ and its
induced successor $S_2$, and any number of unary relations from a finite alphabet.

All structures and strings in this paper are finite.
For a structure $\cA$, we denote its universe by $A$ and its size by $|A|$. 
The \emph{empty structure} has $A = \emptyset$ and is denoted $\emptystructure{\voc}$.

\smallskip
\noindent{\sffamily \bf Data words.}
Let   $\Sigma$ a finite alphabet. Its \emph{extension for data words} is
$\dvoc{\Sigma} = \langle \leq_1, 
\lesssim_2, S_2, \sigma: \sigma \in \Sigma \rangle$. 
A \emph{data word} over $\Sigma$ is a finite $\dvoc{\Sigma}$-structure $\cD$ with universe $D$
where $\sigma: \sigma \in \Sigma$ are interpreted as unary relations that partition $D$.
We use $\cD$, $\cD'$, etc.~to denote data words.  
The empty word is denoted by $\emptyset_{\dat{\Sigma}}$, and 
the class of all data words over $\Sigma$ by $\dat{\Sigma}$.
A set of data words is called a \emph{data language}.  

Let $\varphi_1,\varphi_2$ be $\FO^2(\dvoc{\Sigma})$ formulas.
We write $\varphi_1 \models_{\dat{\Sigma}} \varphi_2$ 
if $\cD \models \varphi_1$ implies $\cD \models \varphi_2$ for every 
$\cD \in \dat{\Sigma}$, and define  equivalence $\equiv_{\dat{\Sigma}}$ analogously. 
We may omit $\Sigma$ 
if clear from context.
The \emph{data value} $\val{\cD}{d}$ of an element $d \in D$ is the number  of equivalence classes $E$
of $\sim_2$ whose elements $d'\in E$ satisfy $d' \lesssim_2 d$, and 
$\maxdv_\cD = \max_{d\in D}\val{\cD}{d}$.
The \emph{string projection} of $\cD$, denoted $\project(\cD)$, is the string $w$ of length $|w| = |D|$ where for all $\ell \in [|w|]$,
$w(\ell) = \sigma$ if and only if $\cD \models \sigma(d)$
where $d$ is the unique element of $\cD$ such that $\ell =  |\{d' \in D \mid \cD \models d' \leq_1 d\}|$.
The projection of the empty structure $\emptystructure{\dvoc{\Sigma}}$, and
only of $\emptystructure{\dvoc{\Sigma}}$,   
is $\varepsilon$.
The \emph{projection language} of a data language $\Delta$ is the string language
$L(\Delta) = \{ w \mid w = \project(\cD) \mbox{ for some } \cD \in \Delta \}$.
If a formula $\psi$ defines $\Delta$, we write $L(\psi)$ for $L(\Delta)$.

\begin{example}
To avoid ambiguity, in our running examples
we use underlined symbols.  
Let $\ul{\Xi} = \{\ul{\xi}_1, \ul{\xi}_2\}$ be a set of unary relations
and
let $\ul{\cD}$ be the data word with universe $\ul{D} = \{\ul{a},\ul{b},\ul{c},\ul{d},\ul{e},\ul{f}\}$ 
where $\leq_1$ is the lexicographic order, the interpretation of $\ul{\xi}_1$
is $\{\ul{a},\ul{b},\ul{c},\ul{e}\}$, the interpretation of  $\ul{\xi}_2$ is $\{\ul{d},\ul{f}\}$, and 
$\ul{b} \lnsim_2 \ul{a} \lnsim_2 \ul{e} \lnsim_2 \ul{c} \lnsim_2 \ul{d} \sim_2 \ul{f}$.
Note e.g. that $\ul{\cD} \models S_2(\ul{a},\ul{e})$ and $\ul{\cD} \models \neg S_2(\ul{b},\ul{e}) \wedge (\ul{b} \lesssim_2 \ul{e})$.
The string projection of $\ul{\cD}$ is $\project(\ul{\cD}) = \ul{\xi}_1 \ul{\xi}_1 \ul{\xi}_1 \ul{\xi}_2 \ul{\xi}_1 \ul{\xi}_2$.
\end{example}

The goal of this section is to prove the following theorem.
\begin{theorem}
\label{th:containment}
If $\psi$ is a 
$\mathrm{FO}^2(\leq_1,\lesssim_2,S_2)$
sentence, there is a PIA $\pA$ 
with $L(\psi)=L(\pA)$.
\end{theorem}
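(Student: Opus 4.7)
The plan is to reduce the task to one that fits the pebble-intervals paradigm, via three successive normalizations: a Scott-style normal form of $\psi$; a task-word decoration capturing the $\forall x\exists y$ part; and a sequence of short ``extremal strings'' that certifies the universal part.

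First, I would convert $\psi$ into a normal form $\varphi = \varphi_\varepsilon \wedge \varphi_\forall \wedge \varphi_\exists$ over an extended vocabulary $\Xi$, where $\varphi_\forall$ is universal over pairs of elements and $\varphi_\exists$ has the shape $\forall x \bigwedge_b \exists y\, \theta_b(x,y)$ for certain 2-types $\theta_b$, together with a letter-to-letter morphism $h$ such that $L(\psi) = \hat{h}(L(\varphi))$. Since letter-to-letter morphisms can be absorbed directly into the transition labels of a PIA, it then suffices to build a PIA whose language is $L(\varphi)$.

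Second, I would introduce a bookkeeping decoration on top of $\cD$: a ``task word'' $\cT$ in which each element carries the 2-types it is supposed to realize as witnesses for $\varphi_\exists$. The existence of a \emph{completed} $\cT$ is equivalent to $\cD \models \varphi_\exists$. For $\varphi_\forall$, I would exploit the structure of $\lesssim_2$: iteratively strip off the top $\sim_2$-class and record at each layer a short extremal string of bounded length---depending only on the number of existential 2-types of $\varphi$---which summarizes what is needed for perfectness. Perfectness of $\cT$, equivalently $\cD \models \varphi_\forall$, then reduces to perfectness of each extremal string together with a syntactically checkable consecutivity condition between successive extremal strings.

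Third, I would design $\pA$ to guess such a sequence of perfect, pairwise consecutive extremal strings $s_E, s_{E-1}, \ldots, s_0$---where $E = \maxdv_\cD$---while simultaneously reading the input $u$. The current extremal string is kept in the finite control, and pebbles mark those positions of $u$ that correspond to extremal elements of the highest $\sim_2$-layer placed so far. A layer transition from $s_e$ to $s_{e-1}$ corresponds to materializing the new top $\sim_2$-class inside the open intervals determined by the current pebble anchors: $\pA$ reads all new-layer positions via $\Move{k}{i}{j}$ transitions that place fresh pebbles between existing anchors or between an anchor and an input boundary, and pebbles for lower-layer anchors can be recycled once the upper-layer anchors are fixed. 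The main obstacle will be to establish that a number of pebbles bounded in $|\psi|$ suffices for this layered simulation and that the $\leq_1$ constraint on positions read at different layers is respected without revisiting positions; the key observation I expect to exploit is that every non-extremal position at a given layer lies strictly between two extremal anchors of that layer, so once the anchors are fixed, the remaining same-layer positions can be read using the movable $k$-slot of $\Move{k}{i}{j}$ and immediately discarded. Combined with the boundedness of the extremal strings and with the syntactic consecutivity check, this will yield the correspondence between accepting runs of $\pA$ on $u$ and witness sequences for a data word $\cD \models \varphi$ with $\project(\cD) = u$.
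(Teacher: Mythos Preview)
Your proposal is correct and follows essentially the same approach as the paper: normal form, task words for $\varphi_\exists$, layer-stripping with bounded-length extremal strings, perfectness for $\varphi_\forall$, and a PIA that guesses a consecutive sequence of perfect extremal strings while placing pebbles on the extremal anchors and using a scratch pebble for the non-extremal positions. The only details you will need to sharpen are the three-layer abstraction $\{\onetoprm,\ttop,\rest\}$ (needed to decide $S_2$ from the extremal string alone) and the partial embedding $\notOneTopEmbeddingNoArg{s}{s'}$ that tells the automaton which pebbles survive from one extremal string to the next; both are exactly what the paper introduces to make the consecutivity check and the pebble-recycling step work.
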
 

To prove this, we rely on the normal form defined next.  
A \emph{1-type} $\nu(x)$ over $\dvoc{\Sigma}$ is a maximal consistent conjunction of atomic and 
negated atomic formulas 
with the free variable $x$. A 
\emph{2-type} $\theta(x,y)$ is defined similarly.
Given a $\FO^2(\dvoc{\Sigma})$ formula $\psi$, we obtain a $\varphi$ in normal form  by
taking the Scott Normal Form \cite[Theorem 2.1]{gradel1999logics}  of $\psi$, 
and  
expanding the quantifier-free formulas to Disjunctive Normal Form, and in fact to disjunctions of $2$-types $\theta$.
The Scott Normal Form of $\psi$ introduces linearly many new symbols,
resulting in an extended $\Sigma'$. 
We let $\Xi = \{\xi_a \mid a\in [A]\}$ be an alphabet containing a symbol for
every 1-type over 
$\Sigma'$.

\begin{theorem}[Normal Form]
\label{th_our_snf}
\label{lem_th_our_snf}
Let $\psi \in \FO^2(\dvoc{\Sigma})$. 
Then 
there exist $A,B,C \in \mathbb{N}$,
an alphabet $\Xi = \{\xi_a \mid a \in [A]\}$, 
a formula  $\varphi \in \FO^2(\dvoc{\Xi})$ of the form  
$\varphi = \varphi_\forall \land \varphi_\exists$
and a letter-to-letter substitution $h:\Xi\to\Sigma$
such that  $L(\psi) = h(L(\varphi))$,  
\vspace{-0.3\baselineskip}
{\small
\[
\varphi_\forall = \forall x \forall y \bigvee_{\theta \in \Theta_\forall} \theta(x,y)
\qquad \qquad 
\varphi_\exists  = \varphi_\varepsilon \land \forall x \, \bigwedge_{a \in [A]}\xi_a(x) \rightarrow   
\bigwedge_{b \in [B]}{\exists y \, \bigvee_{c \in [C]}{\theta_{a b c}(x,y)}} \vspace{-0.3\baselineskip}
 \] 
}
with $\theta$ and $\theta_{abc}$ 2-types over $\dvoc{\Xi}$, and 
$\varphi_\epsilon =  \mathit{True}$ if $\emptystructure{\dvoc{\Sigma}} \models_{\dat{\Sigma}} \psi$ and 
$\varphi_\epsilon = \exists x\, (\mathit{True})$ if $\emptystructure{\dvoc{\Sigma}} \not\models_{\dat{\Sigma}} \psi$. 
Moreover, 
$\varphi$ is computable in $\EXPSPACE$ and is of length exponential in $|\psi|$. 
\end{theorem}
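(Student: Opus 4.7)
The plan is to produce $\varphi$ through two successive translations of $\psi$, combined with a renaming of types as letters. First I would apply Scott Normal Form to $\psi$, introducing linearly many fresh unary predicates $\vocSNF$, to obtain an equivalent $\varphi^0 = \forall x\forall y\, \chi^0(x,y) \wedge \bigwedge_{b=1}^{B} \forall x\exists y\, \chi_b^0(x,y)$ over $\dvoc{\Sigma}\cup\vocSNF$ with $\chi^0, \chi_b^0$ quantifier-free; the $\dvoc{\Sigma}$-reduct $\trans_1$ sets up a bijection between expansions satisfying $\varphi^0$ and $\dvoc{\Sigma}$-models of $\psi$. Then I would expand each quantifier-free subformula as a disjunction of $2$-types over $\dvoc{\Sigma}\cup\vocSNF$ and split each conjunct $\forall x\exists y\, \chi_b^0$ by cases on the $1$-type of $x$, yielding $\varphi^1 = \varphi^1_\forall \wedge \varphi^1_\exists$ whose shape already matches the target except that the atoms still live over the enriched vocabulary.

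The second translation names types to collapse the extra predicates into a single alphabet. Let $\Xi = \{\xi_a \mid a \in [A]\}$ contain one letter per $1$-type $\nu_a$ over $\dvoc{\Sigma}\cup\vocSNF$, and for every $2$-type $\beta$ let $\beta^\Xi$ be the $2$-type over $\dvoc{\Xi}$ obtained by preserving the order atoms verbatim and replacing each $\nu_a$-atom by the corresponding $\xi_a$-atom. Substituting $\beta^\Xi$ for $\beta$ throughout $\varphi^1$ yields $\varphi^2$ in the target form. The semantic inverse $\trans_2 \colon \dat{\Xi} \to \str{\dvoc{\Sigma}\cup\vocSNF}$ decodes each $\xi_a$ as the $1$-type $\nu_a$, and the composition $\trans_1\circ\trans_2$ agrees with the pointwise lift $\hat h$ of the letter-to-letter map $h(\xi_a) = \sigma_a$, where $\sigma_a$ is the unique element of $\Sigma$ forced by $\nu_a$. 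Chaining the two model correspondences then gives $L(\psi) = h(L(\varphi^2))$.

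The hard part is the complexity bookkeeping rather than any deep model-theoretic step. The enriched vocabulary has size linear in $|\psi|$, so the sets of $1$- and $2$-types over it, and hence the alphabet $\Xi$ and the formula $\varphi^2$, have exponential size; however each candidate $2$-type has polynomial representation, and consistency with $\chi^0$ or with each $\chi_b^0$ is a quantifier-free check, so one can enumerate $\Theta_\forall$ and the families $\theta_{abc}$ on the fly and emit $\varphi^2$ in $\EXPSPACE$. A minor but necessary edge case is the empty structure: to guarantee $\varepsilon \in L(\varphi)$ iff $\varepsilon \in L(\psi)$, the clause $\varphi_\varepsilon = \exists x\, \mathit{True}$ must be conjoined to $\varphi_\exists$ exactly when $\emptystructure{\dvoc{\Sigma}} \not\models_{\dat{\Sigma}} \psi$, and omitted (i.e.\ set to $\mathit{True}$) otherwise.
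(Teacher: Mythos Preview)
Your proposal is correct and follows essentially the same approach as the paper: the two-stage translation via Scott Normal Form over $\dvoc{\Sigma}\cup\vocSNF$ followed by the type-renaming into the alphabet $\Xi$, with $\trans_1$ the reduct and $\trans_2$ the decoding of $\xi_a$ as the $1$-type $\nu_a$, is exactly what the paper does, including the definition of $\beta^\Xi$, the identification of $\trans_1\circ\trans_2$ with the lift of $h$, the handling of $\varphi_\varepsilon$, and the $\EXPSPACE$ bookkeeping.
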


\noindent
We let $\Theta_\exists = \{ \theta_{abc} \mid a \in [A], b\in [B], c \in [C]\}$ and 
$\Theta = \Theta_\forall \cup \Theta_\exists$.
Given $a \in~[A]$, a \emph{witness type set for $a$}  is a choice of 2-types satisfying the right-hand side of the implication for $\xi_a$. That is,
a set of 2-types $\omega \subseteq \Theta_\exists$ that contains one $\theta_{abc}$ for every $b\in [B]$, representing a choice of the existential constraints an element needs to fulfill.
Denote by $\Omega_a$ the set of witness type sets for $a$ and let $\Omega = \bigcup_{a \in [A]}\Omega_a$. 
For a witness type set $\omega \in \Omega$, let $\omega(x) = \bigwedge_{\theta
  \in \omega}{\exists y \, \theta(x,y)}$ be its \emph{existential constraints}.
\new{Note that $\omega(x)$ is always satisfiable and that there is a unique letter $\xi^{\omega} \in \Xi$ such that $\omega(x) \models_{\dat{\Xi}} \xi^\omega(x)$.}

\begin{example}
Consider the following formula $\ul{\varphi}$ given in normal form \\[1mm]
{\small $\forall x \forall y \, \ul{\chi}(x,y) \wedge  
\forall x \, \big(\ul{\xi_1}(x) \rightarrow \exists y \left( \ul{\theta}_1(x,y) \vee \ul{\theta}_3(x,y) \right) \wedge 
\ul{\xi}_2(x) \rightarrow \exists y \left( \ul{\theta}_2(x,y) \vee
  \ul{\theta}_4(x,y) \right)\big) $ } \\[1mm]
where
$\ul{\chi}(x,y)$ is the disjunction of 2-types equivalent to $(\ul{\xi}_2(x) \wedge \ul{\xi}_2(y)) \rightarrow x \sim_2 y$, and the 
 $\ul{\theta}_i$  are given as the following 2-types (omitted
clauses are negated): \vspace{-0.5\baselineskip}
{\small
\[
\begin{array}{r@{\,=\,}l@{\quad}r@{\,=\,}l}
\ul{\theta}_1 & x <_1 y \wedge S_2(x,y) \wedge \ul{\xi}_1(x) \wedge \ul{\xi}_2(y) & 
\ul{\theta}_3 & x <_1 y \wedge \neg S_2(x,y) \wedge x \lesssim_2 y \wedge \ul{\xi}_1(x) \wedge \ul{\xi}_2(y) \\
\ul{\theta}_2 & y <_1 x \wedge S_2(y,x) \wedge \ul{\xi}_2(x) \wedge \ul{\xi}_1(y) &
\ul{\theta}_4 & y <_1 x \wedge \neg S_2(y,x) \wedge y \lesssim_2 x \wedge \ul{\xi}_2(x) \wedge \ul{\xi}_1(y)  
\end{array}   \vspace{-0.2\baselineskip}
\] 
}
\noindent 
A data word satisfies $\ul{\varphi}$ iff it is the empty structure, or
{\it (a)}
the largest element of $\leq_1$ has letter $\ul{\xi}_2$,
	{\it (b)}
	the smallest element of $\leq_1$ has letter $\ul{\xi}_1$,
	{\it (c)}
	all elements with  $\ul{\xi}_2$ have maximal value,  and
	{\it (d)}
	no element with $\ul{\xi}_1$ has maximal value.

Note that $\ul{\cD} \models \ul{\varphi}$.
The projection language $L(\ul{\varphi})$ is the regular language with regular expression $\ul{\xi}_1 (\ul{\xi}_1 + \ul{\xi}_2)^\star \ul{\xi}_2 + \varepsilon$.
We have $\ul{\Theta}_\exists = \{ \ul{\theta}_1, \ul{\theta}_2, \ul{\theta}_3, \ul{\theta}_4 \}$.
For $\ul{\varphi}$, we have $A = 2$, $B=1$, and $C=2$.
The witness type sets of $\ul{\varphi}$ are $\{\theta_{111}\}$, $\{\theta_{112}\}$, $\{\theta_{211}\}$, and $\{\theta_{212}\}$, where $\theta_{111} = \ul{\theta}_1$, 
$\theta_{112} = \ul{\theta}_3$, $\theta_{211} = \ul{\theta}_2$, and $\theta_{212} = \ul{\theta}_4$.  
Hence, we have $\ul{\Omega} = \{\{\ul{\theta}_{1}\}, \{\ul{\theta}_{2}\}, \{\ul{\theta}_{3}\}, \{\ul{\theta}_{4}\}\}$, 
and $\xi^{\{\ul{\theta}_1\}} = \xi^{\{\ul{\theta}_3\}} = \ul{\xi}_1$, and $\xi^{\{\ul{\theta}_2\}} = \xi^{\{\ul{\theta}_4\}} = \ul{\xi}_2$.
\end{example}

We construct a PIA $\pA^\varphi$ 
that   
accepts a string $w$  
iff it can be extended into a data word $\cD$ that satisfies the normal form
$\varphi$ of a given sentence $\psi$. 
Note that $\psi$ and $\varphi$ have different alphabets, 
but since there is a letter-to-letter substitution $h$ such that
$L(\psi)=h(L(\pA^\varphi))$, and  
PIAs are closed under  letter-to-letter substitutions, this proves
Theorem~\ref{th:containment}. 

For constructing our PIA, we 
first focus on 
the existential part, i.e.,  
whether $w$ can be extended into a $\cD$ that satisfies $\varphi_\exists$. 
This is achieved in two steps: 
\noindent{\bf (S1)}  We reduce the existence of $\cD$ to the existence of a
sequence of \emph{consecutive task words},  data words
that store additional information of already satisfied vs.\,`promised'  
 subformulas; 
the sequence should lead to a \emph{completed} task
  word where all promises are  fulfilled.
\noindent{\bf (S2)}
We do not have a bound on the length of task words and 
their data values, so we use \emph{extremal strings} to decide the existence of the  desired sequence with the limited memory of PIAs.  
After these two steps,  
we introduce \emph{perfect} extremal strings to  
guarantee the satisfaction of $\varphi_\forall$. 
Our PIA will then decide if a sequence of perfect extremal
strings exists. 

\smallskip\noindent
{\bf Task words for \texorpdfstring{$\varphi_{\exists}$}{varphi-exists}}
\label{sec_fo_data}
We start by defining \emph{task words}, which are like data words but do more book-keeping. Additionally to data values, elements in task words 
are assigned \emph{tasks}, which are witness type sets where each 2-type may be
marked as
\emph{completed} if its satisfaction has already been established, or as
\emph{promised} otherwise. 
We reduce the satisfaction of  $\varphi_{\exists}$ to the existence of a
sequence of $\cT_1, \ldots, \cT_n$ of
\emph{consecutive} task words,  where we keep  
assigning new data values 
and  updating promised into completed tasks, until we reached a \emph{completed} task word  
$\cT_n$. 

\begin{definition}[Tasks]  
For $\theta \in \Theta_{\exists}$, we call $C_\theta$ a \emph{completed task} 
and $P_\theta$ a \emph{promised task}. 
Let 
$\CTasks  \,{=}\, \{C_\theta \mid \theta \in \Theta_{\exists}\}$, 
$\PTasks \,{=}\, \{P_\theta \mid \theta \in \Theta_{\exists}\}$ and 
$\Tasks \,{=}\, \CTasks \cup \PTasks$.
\end{definition}

For each task set  $\ts \subseteq \Tasks$, there is at most one witness type set
$\omega \in \Omega$ that $\ts$ \emph{realizes}, which means that 
 for every $\theta \in \Theta_{\exists}$, 
 {\bf (1)} $|\{P_\theta,C_\theta\} \cap \ts| \leq 1$, and 
 {\bf (2)} $|\{P_\theta,C_\theta\} \cap \ts| = 1$ if and only if $\theta \in \omega$. 
If there is such an $\omega$, we denote it $\omega(\ts)$, and call $\ts$ 
an \emph{$\Omega$-realization}. The set of all 
$\Omega$-realizations is 
 $\compatTasks$, and 
$\compatCompletedTasks = \compatTasks \cap 2^\CTasks$  and 
$\compatPromisedTasks = \compatTasks \cap 2^\PTasks$. 

\begin{example}
Since the witness type sets in $\ul{\Omega}$ are singletons, so are 
the 
$\ts \in 2^{\Tasks}_{\ul{\Omega}}$. 
Let $\ul{ts}_i^C = \{ C_{\ul{\theta}_i} \}$ and $\ul{ts}_i^P = \{ P_{\ul{\theta}_i} \}$ for $i \in [4]$. Then
we have $2^{\Tasks}_{\ul{\Omega}} = \{ \ul{ts}_i^C \mid i \in [4]\} \cup \{ \ul{ts}_i^P \mid i \in [4]\}$, 
and $\{C_{\ul{\theta}_i}\}$ and $\{P_{\ul{\theta}_i}\}$ are $\{\theta_i\}$-realizations for $i \in [4]$.
\end{example}

\emph{$\cD$-task words} are data words that assign tasks to the elements of $\cD$.
More precisely,
each $d \in D$ is assigned, instead of a letter $\xi_a$, a task set $\ts$ that realizes a witness type set $\omega$
which  contains $C_\theta$ for each $\theta \in \omega$ that $d$ satisfies,
and $P_\theta$ for the remaining $\theta \in \omega$.

\begin{definition}[Task word]
\label{def_tasked_word}
\label{def_completed_tasked_word}
Let $\cD$ be a data word over $\Xi$. 
A \emph{$\cD$-task word} is a data word $\cT$ over $\compatTasks$
that has the same universe and order relations as $\cD$,
and for every $d \in D$ with $\cT \models \ts(d)$,
	{\bf {(1)}} $\cD \models \xi^{\omega(\ts)}(d)$, and
	{\bf {(2)}} for every $\theta \in \omega(\ts)$, $C_\theta \in \ts$ iff $\cD \models \exists y \, \theta(d,y)$.
A \emph{task word} $\cT$ is a $\cD$-task word for some $\cD$, and it is \emph{completed} if 
$\cT \models \varphi_\varepsilon \land \forall x \, \bigvee_{\ts \in \compatCompletedTasks} \ts(x)$. 
\end{definition}

\begin{example}
We define a $\ul{\cD}$-task word $\ul{\cT}$; its
 vocabulary is 
 $2^{\Tasks}_{\ul{\Omega}}$,
its universe is $\{\ul{a},$ $\ul{b},$ $\ul{c},$ $\ul{d},$ $\ul{e},$ $\ul{f}\}$, 
and
 $\leq_1$, $\lesssim_2$, and $S_2$ are the same as in $\ul{\cD}$.
The interpretation of the letter $\ul{ts}_1^C$ is $\{ \ul{c} \}$, 
that of $\ul{ts}_2^C$ is $\{ \ul{d} \}$,
that of $\ul{ts}_3^C$ is $\{ \ul{a},\ul{b},\ul{e} \}$,
and that of $\ul{ts}_4^C$ is $\{ \ul{f} \}$; the other  letters are empty.
As $\ul{\cD} \models \ul{\varphi}$, all existential constraints are satisfied and
$\ul{\cT}$ is completed.
\end{example}

\noindent
The satisfaction of $\varphi_\exists$ 
coincides with the existence of a completed task word.

\begin{lemma}
\label{lem_good_decoration}
Let $\cD \in \dat{\Xi}$. There exists a completed $\cD$-task word
iff $\cD \models \varphi_\exists$.
\end{lemma}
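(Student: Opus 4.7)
The plan is to prove both directions by explicit construction/extraction of a witness task assignment, handling the empty-structure edge case separately. The crucial observation is that a completed task set is precisely the image, under $\theta \mapsto C_\theta$, of some witness type set $\omega \in \Omega$, so completed $\cD$-task words are essentially choice functions that, for each element $d$, pick a witness type set whose existential demands are all already met in $\cD$.

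For the empty case, if $\cD = \emptystructure{\dvoc{\Xi}}$, the only $\cD$-task word is $\emptystructure{\compatTasks}$, which trivially satisfies $\forall x\, \bigvee_{\ts \in \compatCompletedTasks}\ts(x)$; so it is completed iff $\cD \models \varphi_\varepsilon$. Since $\varphi_\exists$ restricted to the empty universe collapses to $\varphi_\varepsilon$, the equivalence is immediate here. From now on, assume $D \neq \emptyset$, so $\cD \models \varphi_\varepsilon$ automatically holds, and likewise any $\cD$-task word satisfies $\varphi_\varepsilon$.

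For the direction $\cD \models \varphi_\exists \Rightarrow$ there is a completed $\cD$-task word, I would construct $\cT$ on the same universe and with the same orderings as $\cD$. For each $d \in D$, let $a\in [A]$ be the unique index with $\cD \models \xi_a(d)$. Applying the existential conjunct of $\varphi_\exists$ at $d$, for each $b \in [B]$ I choose some $c_b \in [C]$ and some witness $d_b\in D$ with $\cD \models \theta_{a b c_b}(d,d_b)$. Set $\omega_d = \{\theta_{a b c_b} \mid b\in [B]\} \in \Omega_a$ and $\ts_d = \{C_\theta \mid \theta \in \omega_d\} \in \compatCompletedTasks$, and declare $\cT \models \ts_d(d)$. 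Then $\omega(\ts_d)=\omega_d$, $\xi^{\omega_d}=\xi_a$ so condition~(1) of Definition~\ref{def_tasked_word} holds, and condition~(2) holds because every $\theta\in\omega_d$ is witnessed at $d$ by construction while no $P_\theta$ is used. Thus $\cT$ is a $\cD$-task word, and it is completed because $\ts_d \in \compatCompletedTasks$ for every $d$.

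Conversely, suppose $\cT$ is a completed $\cD$-task word. Fix $d\in D$ and let $\ts_d \in \compatCompletedTasks$ be the unique task set with $\cT \models \ts_d(d)$; let $a$ be such that $\cD \models \xi_a(d)$. By Definition~\ref{def_tasked_word}(1), $\xi^{\omega(\ts_d)} = \xi_a$, which forces $\omega(\ts_d) \in \Omega_a$, so $\omega(\ts_d)$ contains some $\theta_{abc_b}$ for every $b \in [B]$. Since $\ts_d \subseteq \CTasks$, for every such $\theta_{abc_b}$ we have $C_{\theta_{abc_b}} \in \ts_d$, and Definition~\ref{def_tasked_word}(2) then yields $\cD \models \exists y\, \theta_{abc_b}(d,y)$. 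Hence $\cD \models \bigwedge_{b\in [B]} \exists y\, \bigvee_{c\in [C]} \theta_{abc}(d,y)$, which is exactly the existential implication at $d$ for letter $\xi_a$. As this holds for all $d \in D$, we conclude $\cD \models \varphi_\exists$.

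The argument is essentially bookkeeping; the only subtle point is verifying that the witness type set $\omega_d$ in the forward construction indeed lies in $\Omega_a$ (so that $\xi^{\omega_d}=\xi_a$ and condition~(1) is met), which follows from the uniqueness of $\xi^\omega$ for a witness type set noted right after Theorem~\ref{th_our_snf}. Other than the empty-structure case, there is no real obstacle.
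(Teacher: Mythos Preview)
Your proof is correct and follows essentially the same approach as the paper: handle the empty case separately, then in each direction explicitly construct the witness (task word from existential witnesses in one direction, existential witnesses from completed tasks in the other), verifying the conditions of Definition~\ref{def_tasked_word}. The only cosmetic difference is that you make slightly more explicit why $\omega_d\in\Omega_a$ and hence $\xi^{\omega_d}=\xi_a$, which the paper leaves implicit.
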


We now characterize the notion of \emph{consecutive} task words using 
\emph{trimmings}.
\begin{definition}[Trimming, consecutiveness]
\label{def_trimmed_tasked_word}
\label{def_trimming}
The \emph{trimming} of a data word $\cD$, 
denoted $\trim{\cD}{1}$, 
is the substructure of $\cD$ induced by removing the elements with the maximal data value.
For task words, trimmings are obtained by removing the elements with 
the largest data value and updating the tasks of the remaining elements correctly. That is, a \emph{trimming of a $\cD$-task word $\cT$} 
 is a
 $\trim{\cD}{1}$-task word $\cT_1$ such that $\omega(\ts)=\omega(\ts_1)$ 
for every $d$ and every $\ts,\ts_1 \in \compatTasks$ with
$\cT \models \ts(d)$ and $\cT_1 \models \ts_1(d)$.
We say that $\cT_{1},\cT$ are \emph{consecutive} if $\cT_{1}$ is a trimming of $\cT$.
\end{definition}

The trimming of a task word is unique, and we denote it 
$\ttrim{\cT}{1}$. 

\begin{example}
$\trim{\ul{\cD}}{1}$ is obtained from  $\ul{\cD}$
by removing 
$\ul{d}$ and $\ul{f}$. 
The $\trim{\ul{\cD}}{1}$-task word $\ttrim{\ul{\cT}}{1}$ 
has universe 
$\{\ul{a},\ul{b},\ul{c},\ul{e}\}$ and order relations 
as in $\trim{\ul{\cD}}{1}$.
Note that  $\ul{d}$ and $\ul{f}$ 
contributed in $\ul{\cD}$ to the satisfaction of $\ul{\varphi}_\exists$, 
so  $\ttrim{\ul{\cT}}{1}$ has promised tasks and is no longer completed,  with
 interpretations $\ul{ts}_1^P =\{ \ul{c} \}$, 
 $\ul{ts}_3^P =\{ \ul{a},\ul{b},\ul{e} \}$,
and the  remaining letters empty.
Note that the tasks for the shared elements 
of $\ul{\cT}$ and $\ttrim{\ul{\cT}}{1}$ realize the same witness type sets.
\end{example}

\noindent
We have achieved {\bf(S1)}: reducing satisfaction of $\varphi_\exists$ to finding a sequence of task words. 

\label{subsec_towards_automaton}

\begin{proposition}
There is a data word $\cD \models \varphi_\exists$ if and only if there is a sequence $\cT_1
 \ldots, \cT_n$ of consecutive task words, where $\cT_n$ is a completed $\cD$-task word.
\end{proposition}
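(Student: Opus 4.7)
The plan is to obtain both directions almost immediately from Lemma~\ref{lem_good_decoration} and Definition~\ref{def_trimming}, since the proposition essentially packages the lemma together with iterated trimming.

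For the $(\Leftarrow)$ direction, suppose we are given a sequence $\cT_1,\ldots,\cT_n$ of consecutive task words with $\cT_n$ a completed $\cD$-task word. The mere existence of a completed $\cD$-task word already gives $\cD \models \varphi_\exists$ by Lemma~\ref{lem_good_decoration}, so the intermediate task words $\cT_1,\ldots,\cT_{n-1}$ are not actually needed for this direction.

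For the $(\Rightarrow)$ direction, assume $\cD \models \varphi_\exists$. By Lemma~\ref{lem_good_decoration}, there exists a completed $\cD$-task word $\cT$. Let $n = \maxdv_\cD$ (or $n=1$ if $\cD$ is the empty word, in which case $\cT_1 = \emptystructure{\compatTasks}$ already works since $\cD \models \varphi_\varepsilon$). Set $\cT_n := \cT$, and then iteratively define $\cT_i := \ttrim{\cT_{i+1}}{1}$ for $i = n-1, n-2, \ldots, 1$, relying on the fact that the trimming of a task word is well-defined and unique. By Definition~\ref{def_trimming}, each pair $(\cT_i,\cT_{i+1})$ is consecutive, so the resulting sequence meets the requirements of the statement, with $\cT_n$ a completed $\cD$-task word by construction.

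There is no real obstacle here; the only point worth double-checking is that successive trimmings remain well-defined task words (over the corresponding trimmed data words $\trim{\cD}{i}$), which holds because trimming only removes the elements of maximal $\lesssim_2$-value and preserves the witness type sets $\omega(\ts)$ on the remaining universe, as required by Definition~\ref{def_trimming}. Hence the sequence witnesses the claimed equivalence.
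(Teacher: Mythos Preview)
Your proof is correct and is precisely the argument the paper leaves implicit: the proposition is stated without proof in the paper, since it follows directly from Lemma~\ref{lem_good_decoration} combined with iterated trimming (Definition~\ref{def_trimming} and Lemma~\ref{lemma_unique_tasked_trimming1}), exactly as you outline. One minor remark: as the proposition is literally stated, there is no constraint on $\cT_1$, so for the $(\Rightarrow)$ direction the singleton sequence $\cT_1 := \cT$ already suffices; your iterated-trimming construction is not strictly needed here, but it is the ``right'' sequence for the subsequent automaton construction (cf.\ Lemma~\ref{lem_dataword_series}), so it is good that you spelled it out.
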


Now to {\bf (S2)}: as the limited memory of PIAs hinders the manipulation of task words with unbounded  length and data values, we operate on their \emph{extremal strings} instead. 

First, in \emph{data abstractions} of task words, 
we do not distinguish all data values, but only 
the \emph{top layer}  elements with maximal value,
the \emph{second to top layer},  
and the rest.  
We let $\Layers = \{\otop,\ttop, \rest\}$, and define
the alphabet 
$\Gamma=  \Layers \times \compatTasks$. 
We also define its restrictions
to completed and promised tasks as
$\CGamma =  \Layers \times \compatCompletedTasks$ 
and 
$\PGamma =  \Layers \times \compatPromisedTasks$, 
while $\Gamma_h = \{\hlayer\} \times \compatTasks$ is the restriction of $\Gamma$ to some specific $h\in\Layers$.
For a symbol $\gamma=(\hlayer,\ts)$ in $\Gamma$, we denote $\ts(\gamma)=\ts$ and $\omega(\gamma)=\omega(\ts)$.

\begin{definition}[Data abstraction]
\label{def_data_abstraction}
Let $\cT$ be a $\cD$-task word.
For every $d \in D$, let $\ts_d$ be such that $\cT \models \ts_d(d)$,
and let $\cA$ be the data word over $\Gamma$ with  same universe and order relations as $\cT$, and with $\cA \models \gamma_h (d)$ 
where $\gamma_h = (\hlayer, \ts_d)$ iff
(a) $h = \onetoprm$ and  $\val{\cD}{d} = \maxdv_\cD$, 
(b) $h = \twotoprm$ and $\val{\cD}{d} = \maxdv_\cD-1$, or 
(c) $h = \restrm$ and $\val{\cD}{d} \in [\maxdv_\cD-2]$.
The \emph{data abstraction $\dataAbs(\cT)$} of $\cT$ 
is the string projection
$\project(\cA)$.
\end{definition}

Extremal strings are obtained from  data abstractions by 
keeping only the maximal and minimal positions in each layer with respect to the tasks. 
We extend to them the notions of \emph{consecutive} and \emph{completed}. 

\begin{definition}[extremal strings]
\label{def_function_ext}
\label{def_extremal_string}\label{def_consecutive_extremal}
Let $w \in \Gamma^\star$.
We define its extremal positions $\extPos(w)$:

 \renewcommand{\arraystretch}{1.25}
{\footnotesize
    \begin{tabular}{rl} 
$\positions_{h, \theta}(w)=$  & $\{ \ell \in [\length{w}] \mid w(\ell)  =
(\hlayer, \ts)$, $\theta\in \omega(\ts)\}$ \\
$\positions_{\restrm, P_\theta}(w)=$ &  $\{ \ell \in [\length{w}] \mid w(\ell)
= (\rest, \ts)$, $ P_\theta \in \ts\}$ \\
$\extremal_{h,\theta}(w)=$ &  $\{\ell \mid \ell = \max(\positions_{h,
  \theta}(w))$ \mbox{ or } { $\ell = \min(\positions_{h, \theta}(w))\}$} \\
{If $\theta \models x \leq_1 y$}, \quad 
$\extremal_{\theta}(w)=$ & $\left\{\ell\mid \ell=\max(\positions_{\restrm, P_\theta}(w))\right\}$ \\ 
{If $\theta \models y <_1 x$}, \quad
$\extremal_{\theta}(w)=$  & $\left\{\ell\mid \ell = \min(\positions_{\restrm, P_\theta}(w))\right\}$\\ 
$\extPos(w) =$ & $\displaystyle{\bigcup_{\theta \in \Theta_{\exists}} \big (
  \extremal_{\theta}(w) \!\!\!\! \bigcup_{h \in \Layers}
  \!\!\!\!\!\extremal_{h,\theta}(w) \big)  }$  \vspace{-3\baselineskip}
     \end{tabular}}
 
\smallskip\noindent
If $\extPos(w) = \{\ell_1, \ldots, \ell_r\}$ and  $\ell_1 < \cdots < \ell_r$, then the \emph{extremal string} of $w$ is $\ext(w) = w(\ell_1) \cdots w(\ell_r)$. 
 $\EXT(\Gamma) = \{\ext(w) \mid w \in \Gamma^\star\}$ denotes the set of extremal strings.
Note that 
$s=\ext(w)$ implies $\ext(s) = s$ and 
$\ext(\varepsilon)=\varepsilon$.
An extremal string $s$ is \emph{completed} if $s \in \CGamma^+$,
or if $s= \varepsilon$ and 
$\emptystructure{\dat{\Sigma}} \models_{\dat{\Sigma}} \psi$.
We occasionally  write $\ext(\cT)$ to mean 
$\ext(\dataAbs(\cT))$.

A pair $s',s$ of extremal strings is \emph{consecutive} if 
$s' = \ext(\ttrim{\cT}{1})$ and $s = \ext(\cT)$ for some task word $\cT$. 
\end{definition}

For an extremal string $s$ and  ${\ell} \in [\length{s}]$, 
the set of letters that can augment $s$ at position $\ell$ without being extremal is 
 $\Gamma^{\nonextsf}(s,\ell) =  
\{\gamma \in \Gamma \mid  \ext(s) = \ext(s(1)\cdots s({\ell}-1) \gamma s({\ell})\cdots s(\length{s})) \},$ and we define 
$ \Gamma_{h}^{\nonextsf}(s,\ell) = \Gamma_{h} \cap \Gamma^{\nonextsf}(s,\ell)$  for $h \in \Layers$.

\begin{example}
\label{ex_emb_ext_}
Let $\ul{w}$ be the following 6-letter string over $\ul{\Gamma} = \Layers \times 2^{\Tasks}_{\ul{\Omega}}$: \\
{\small\centering
$(\rest, \ul{ts}_3^C)(\rest, \ul{ts}_3^C)(\ttop, \ul{ts}_1^C)(\otop, \ul{ts}_2^C)(\rest, \ul{ts}_3^C)(\otop, \ul{ts}_4^C)$
}\\
Then $\extPos(\ul{w}) = \{1,3,4,5,6\}$, since the letter at position $2$ appears both to the left, at position $1$, and to the right, at position $5$, and  $\ul{s} = \ext(\ul{w})$  
is the substring obtained from $\ul{w}$ by removing the non-extremal position $2$. 
\end{example}

\noindent
The concludes {\bf (S2)}, reducing the existence of $\cD$  to a sequence of extremal strings. 

\begin{corollary}
There is a data word $\cD \models \varphi_\exists$ if and only if there is a sequence of consecutive extremal strings where the last one is completed.
\end{corollary}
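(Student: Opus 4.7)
The plan is to derive this corollary by combining the preceding proposition with the translation between task words and their extremal strings given by the map $\cT \mapsto \ext(\cT)$. The proposition already reduces $\cD \models \varphi_\exists$ to the existence of a sequence of consecutive task words ending in a completed one, so it suffices to show that such sequences correspond bijectively (at the level of existence) to sequences of consecutive extremal strings ending in a completed string.

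For the forward direction, I would take the sequence $\cT_1, \ldots, \cT_n$ of consecutive task words furnished by the proposition and set $s_i := \ext(\cT_i)$. Each pair $(s_i, s_{i+1})$ is consecutive by Definition~\ref{def_consecutive_extremal}, with $\cT_{i+1}$ itself serving as the witness task word since $\cT_i = \ttrim{\cT_{i+1}}{1}$ gives $\ext(\ttrim{\cT_{i+1}}{1}) = s_i$. To see that $s_n$ is completed, I would observe that every element of $\cT_n$ carries a task from $\compatCompletedTasks$, so $\dataAbs(\cT_n)$ is a string over $\CGamma$ and hence $s_n \in \CGamma^+$; the corner case $s_n = \varepsilon$ corresponds exactly to $\cT_n$ being empty, which by the proposition forces $\emptystructure{\dvoc{\Xi}} \models \varphi_\exists$, matching the $\varphi_\varepsilon$ clause in the definition of a completed extremal string.

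For the backward direction, given consecutive extremal strings $s_1, \ldots, s_n$ with $s_n$ completed, I would construct task words $\cT_1, \ldots, \cT_n$ inductively, maintaining the invariants $\ext(\cT_i) = s_i$ and $\cT_{i-1} = \ttrim{\cT_i}{1}$. At each step, the consecutivity of $(s_{i-1}, s_i)$ yields some witness task word $\cT'$ with $\ext(\ttrim{\cT'}{1}) = s_{i-1}$ and $\ext(\cT') = s_i$. The top layer of $\cT'$, together with its $\leq_1$- and $\lesssim_2$-interactions with elements below it, is entirely determined by the extremal positions of $\ttrim{\cT'}{1}$; since those extremal positions are captured by $s_{i-1} = \ext(\cT_{i-1})$, the same top layer can be grafted onto $\cT_{i-1}$ to form $\cT_i$. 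Finally, I would verify that $\cT_n$ is a completed task word by contrapositive: if some element $d$ of $\cT_n$ carried a promised task $P_\theta$, then inspecting the cases in Definition~\ref{def_extremal_string} (either $d$ sits in the top or second-top layer and is therefore itself in $\extremal_{h,\theta}$, or $d$ sits in the rest layer and the min/max position of $\positions_{\restrm, P_\theta}$ is in $\extremal_\theta$) would force some extremal position to carry $P_\theta$, contradicting $s_n \in \CGamma^+$. Applying the proposition to $\cT_n$ then yields the desired $\cD$.

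The main obstacle is the inductive step of the backward direction: ensuring that the constructed task words form a \emph{genuine} consecutive sequence, in which $\cT_{i-1}$ is literally the trimming of $\cT_i$ rather than merely another task word with the same extremal string. The grafting argument hinges on the fact that extremal strings are precisely the summary of task word structure that new top-layer elements can distinguish, so the witness task word $\cT'$ can be transplanted onto $\cT_{i-1}$ without disturbing either the lower tasks or the resulting extremal string.
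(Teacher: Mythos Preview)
Your overall plan is sound and matches what the paper does, though the paper states this corollary without a standalone proof; the nontrivial backward direction is only made rigorous later via the syntactic machinery of Lemmas~\ref{lem:consec_to_syntactic} and~\ref{lem:syntactic_to_consec} (and is re-derived in Claim~\ref{lem:extend_D0} when proving $L(\pA^\varphi)\subseteq L(\varphi)$). Your ``grafting'' argument is exactly that machinery in disguise: given any $\cT_{i-1}$ with $\ext(\cT_{i-1})=s_{i-1}$, Lemma~\ref{lem:consec_to_syntactic} extracts $r,g$ from the witness task word for $(s_{i-1},s_i)$ so that $s_i=\ext(\rcon{r}{g}{s_{i-1}})$, and then Lemma~\ref{lem:syntactic_to_consec} rebuilds a $\cT_i$ on top of your specific $\cT_{i-1}$ with $\ttrim{\cT_i}{1}=\cT_{i-1}$. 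So the identification of the obstacle and its resolution are both correct.

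There is one genuine slip in your completeness verification. You write that if $d$ carries $P_\theta$ and sits in the top or second-top layer, then $d$ ``is therefore itself in $\extremal_{h,\theta}$''. That is false: $d$'s position lies in $\positions_{h,\theta}(w)$ but need not be its minimum or maximum, and the extremal endpoints of $\positions_{h,\theta}(w)$ could a priori carry $C_\theta$ rather than $P_\theta$, since $\extremal_{h,\theta}$ looks only at $\theta\in\omega(\ts)$, not at the $P/C$ flag. The fix is to invoke the propagation Lemma~\ref{lemma_twotop_type}: elements of the same layer share a data value, and $\theta\in\omega(\ts)$ forces the same $1$-type, so if $\theta\models x\leq_1 y$ then $P_\theta$ propagates to the right and the $\max$ of $\positions_{h,\theta}(w)$ carries $P_\theta$; dually for $\theta\models y<_1 x$ and the $\min$. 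With this correction your contrapositive goes through. Your treatment of the $\restrm$ case is already correct, since $\extremal_\theta$ is defined directly from $\positions_{\restrm,P_\theta}$.
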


\smallskip\noindent
{\bf Perfect extremal strings for \texorpdfstring{$\varphi_\forall$}{varphi-forall}}
We define in the appendix a formula  
which intuitively `extracts' the 2-type of elements in a data word. 
Let $\alpha = (\hlayerp{\alpha},\ts_\alpha)$ and $\beta = (\hlayerp{\beta},\ts_\beta)$
in $\Gamma$ with 
at least one of them 
in $\Gammatop$.
The formula $\perfect_{\alpha,\beta}(x,y)$  
implies for every atomic formula either itself or its negation. 
For example, if $h_\alpha = h_\beta = \onetoprm$, then 
$\perfectalpha_{\alpha,\beta,\lesssim_2}(x,y)$ implies $x \precsim_2 y$, $y \precsim_2 x$, $\neg S_2 (x,y)$, and $\neg S_2 (y,x)$.
Hence for all $\alpha, \beta \in \Gamma$ with at least one of them in $\Gammatop$,  
there exists a 2-type $\theta(x,y)$ such that 
$\perfect_{\alpha,\beta}(x,y) \equiv_{\dat{\Xi}} \theta(x,y)$. This allows us to describe the 2-type of elements in task words via $\perfect_{\alpha,\beta}$ formulas.
For any two elements of the data word, there is a (possibly iterated) trimming in which both appear and one of them has the maximal data value, and their perfect formula, 
which is equivalent to their 2-type, determines whether they satisfy the universal constraint $\chi$. 
Thus we can ensure satisfaction of $\chi$ using $\perfect_{\alpha,\beta}(x,y)$ formulas from all the trimmings. 

\begin{definition}[Perfect string, perfect task word] 
\label{def_perfect_string}
\label{def_perfect_tasked_word}
Let $w \in \Gamma^\star$. We say $w$ is a \emph{perfect string} if for every two positions $\ell_1 \lneq \ell_2$ in $w$ 
such that 
$\{w(\ell_1),w(\ell_2)\} \cap \Gammatop \not= \emptyset$ we have
$ \perfect_{w(\ell_1),w(\ell_2)}(x,y) \models_{\dat{\Xi}} \chi(x,y) \wedge \chi(y,x)$. 
Note that the empty string $\varepsilon$ is perfect. 
A task word $\cT$ is \emph{perfect} if it is empty, or if $\ext(\cT)$ and $\ext(\ttrim{\cT}{1})$ are perfect.
\end{definition}

\begin{example}
Let $\ul{\alpha} = (\ttop, \ul{ts}_1^C)$ and $\ul{\beta} = (\otop, \ul{ts}_2^C)$.
Then $\perfect_{\ul{\alpha},\ul{\beta}}(x,y)$ is given by:
{\small
{$\perfect_{\ul{\alpha},\ul{\beta}}(x,y) = \ul{\xi}_1(x) \wedge \ul{\xi}_2(y) \wedge (x <_1 y) \wedge (x \lnsim_2 y) \wedge S_2(x,y). $ }\\
}
The 2-type $\theta$ to which $\perfect_{\ul{\alpha},\ul{\beta}}(x,y)$ is equivalent over $\dat{\Xi}$  
is given by the conjunction of $\perfect_{\ul{\alpha},\ul{\beta}}(x,y)$ with
$\neg \ul{\xi}_2(x) \wedge \neg\ul{\xi}_1(y) \wedge (y \not<_1 x) \wedge (y \not\lesssim_2 x) \wedge \neg S_2(y,x)$.
We have that $\ul{w}$ is a perfect string, and 
$\perfect_{\ul{\alpha},\ul{\beta}}(x,y) \models_{\dat{\Xi}} \ul{\chi}(x,y) \wedge \ul{\chi}(y,x)$. 
\end{example}

We   
characterize satisfiability in terms of perfect completed task words.
\begin{lemma}
\label{lem_uni_task}
Let $\cT$ be a $\cD$-task word. 
$\cT$ is perfect if and only if $\cD \models \varphi_\forall$.
\end{lemma}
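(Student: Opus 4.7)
The plan is to unfold the recursive definition of a perfect task word and translate the statement into a condition about all iterated trimmings, namely: $\cT$ is perfect iff $\ext(\ttrim{\cT}{e})$ is a perfect string for every $e \geq 0$. The two directions of the lemma will then reduce to the semantic bridge between the syntactic predicate $\perfect_{\alpha,\beta}(x,y)$ and actual $2$-types of elements, already provided by Lemma~\ref{cl_2}.

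A preliminary claim I would establish first is that for every $w \in \Gamma^+$, the string $w$ is perfect if and only if $\ext(w)$ is perfect. The $(\Rightarrow)$ direction is immediate since $\ext(w)$ is a substring of $w$. For $(\Leftarrow)$, given positions $\ell_1' < \ell_2'$ of $w$ with at least one letter in $\Gammatop$, I would replace $\ell_1'$ by $\min\{\ell \mid w(\ell) = w(\ell_1')\}$ and $\ell_2'$ by $\max\{\ell \mid w(\ell) = w(\ell_2')\}$, both of which lie in $\extPos(w)$ by Definition~\ref{def_extremal_string}; the letters are unchanged, hence the corresponding $\perfect$ formula is the same, and the property transfers from $\ext(w)$.

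For the $(\Leftarrow)$ direction of the lemma, assume $\cD \models \varphi_\forall$. Suppose for contradiction some $\ext(\dataAbs(\ttrim{\cT}{e}))$ is not perfect. By the claim above, neither is $w_e := \dataAbs(\ttrim{\cT}{e})$, so there exist positions $\ell_1 < \ell_2$ in $w_e$ with at least one letter in $\Gammatop$ and with $\perfect_{w_e(\ell_1),w_e(\ell_2)}(x,y) \not\models_{\dat{\Xi}} \chi(x,y) \wedge \chi(y,x)$. Setting $d_i = \embeddingOp{\dataAbs}{\ttrim{\cT}{e}}(\ell_i)$, one of the $d_i$ attains $\maxdv_{\trim{\cD}{e}}$, so Lemma~\ref{cl_2} applies and gives $\trim{\cD}{e} \models \perfect_{w_e(\ell_1),w_e(\ell_2)}(d_1,d_2)$. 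By the cited consequence of the $\perfect$ construction (which associates $\perfect_{\alpha,\beta}$ with a unique $2$-type), we conclude $\trim{\cD}{e} \not\models \chi(d_1,d_2) \wedge \chi(d_2,d_1)$, and since $\varphi_\forall$ is a universal sentence preserved under substructures, also $\cD \not\models \varphi_\forall$, contradicting the assumption.

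For the $(\Rightarrow)$ direction, assume $\cT$ is perfect and fix distinct $d_1, d_2 \in D$. I would choose $e = \maxdv_\cD - \max\{\val{\cD}{d_1},\val{\cD}{d_2}\}$ so that both $d_i$ survive in $\trim{\cD}{e}$ and at least one of them has maximal data value there. Let $w_e = \dataAbs(\ttrim{\cT}{e})$ and let $\ell_i$ be the image of $d_i$ in $w_e$; by the choice of $e$ at least one of $w_e(\ell_1), w_e(\ell_2)$ lies in $\Gammatop$. By hypothesis and the preliminary claim, $w_e$ is perfect. Lemma~\ref{cl_2} yields $\trim{\cD}{e} \models \perfect_{w_e(\ell_1),w_e(\ell_2)}(d_1,d_2)$ (up to swapping $d_1,d_2$), so perfection of $w_e$ delivers $\trim{\cD}{e} \models \chi(d_1,d_2)$; since $\chi$ is quantifier-free and $\trim{\cD}{e}$ is a substructure of $\cD$, the same holds in $\cD$, giving $\cD \models \varphi_\forall$.

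The main obstacle is the preliminary claim relating perfection of $w$ to perfection of $\ext(w)$: although intuitive, it requires carefully checking that the two extremal-position choices for each layer-and-type really capture every witness needed for the $\perfect_{\alpha,\beta}$ check, and in particular that replacing a non-extremal position by an extremal one preserves the letter and thus the relevant $\perfect$ formula. Everything else reduces to applications of Lemma~\ref{cl_2} and the observation that $\varphi_\forall$ is universal and $\chi$ is quantifier-free.
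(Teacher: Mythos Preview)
Your proposal is correct and follows essentially the same approach as the paper's proof: the same auxiliary claim that $w$ is perfect iff $\ext(w)$ is perfect (proved the same way, via the min/max positions with a given letter), the same choice $e = \maxdv_\cD - \max\{\val{\cD}{d_1},\val{\cD}{d_2}\}$ for the forward direction, and the same contradiction argument via Lemma~\ref{cl_2} for the converse (where the paper invokes Lemma~\ref{lem_cor_perfect_2type} for the step you describe as ``the cited consequence of the $\perfect$ construction'').
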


As a corollary of Lemma~\ref{lem_uni_task} and Lemma~\ref{lem_good_decoration}, we get:
\begin{proposition}
\label{prop_cor_perfect_completed_tasked_word}
For every data word $\cD \in \dat{\Xi}$, 
$\cD \models \varphi$ if and only if there exists a perfect completed $\cD$-task word.
There is $\cD \models \varphi$ if and only if there is a sequence of consecutive perfect extremal strings where the last one is completed.
\end{proposition}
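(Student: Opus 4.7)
The first equivalence should follow immediately by combining Lemma~\ref{lem_good_decoration} and Lemma~\ref{lem_uni_task}, together with the fact that $\varphi = \varphi_\forall \land \varphi_\exists$. For the forward direction, if $\cD \models \varphi$, then $\cD \models \varphi_\exists$ and Lemma~\ref{lem_good_decoration} yields a completed $\cD$-task word $\cT$; since $\cD \models \varphi_\forall$, Lemma~\ref{lem_uni_task} ensures $\cT$ is also perfect. Conversely, any perfect completed $\cD$-task word $\cT$ witnesses $\cD \models \varphi_\exists$ via Lemma~\ref{lem_good_decoration} and $\cD \models \varphi_\forall$ via Lemma~\ref{lem_uni_task}, so $\cD \models \varphi$.

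For the second equivalence, my plan is to reduce it to the first one, using the corollary established just before (which relates $\cD \models \varphi_\exists$ to sequences of consecutive extremal strings ending in a completed one) and layering on the perfection condition. For the forward direction, I would start with the perfect completed $\cD$-task word $\cT$ guaranteed by the first equivalence, form the chain of iterated trimmings $\cT, \ttrim{\cT}{1}, \ttrim{\cT}{2}, \ldots$ terminating at the empty task word, and read off the sequence of extremal strings in reverse. Each pair is consecutive by Definition~\ref{def_consecutive_extremal}, and unfolding the recursive definition of a perfect task word applied to each trimming shows that every one of these extremal strings is perfect. The last string $\ext(\cT)$ is completed because $\cT$ is.

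For the backward direction, given a sequence $s_0 = \varepsilon, s_1, \ldots, s_n$ of consecutive perfect extremal strings with $s_n$ completed, I would build a perfect completed $\cD$-task word by induction, using at each step the existence of a witness task word $\cT_i$ supplied by Definition~\ref{def_consecutive_extremal} together with the syntactic characterization of consecutive extremal strings via $\rcon{r}{g}{s^0}$ to align the $\cT_i$ into a single chain of trimmings, so that the final $\cT_n$ is a completed $\cD$-task word whose iterated-trimming extremal strings are exactly $s_0, \ldots, s_n$. The main obstacle I anticipate is checking that the resulting $\cT_n$ is globally perfect, i.e., that the universal constraint $\chi$ holds across elements living in different layers; this will amount to verifying that perfectness of each $s_i$ (which controls 2-types involving the top layer at step $i$) is preserved under the $\rcon{\cdot}{\cdot}{\cdot}$-style assembly, and then invoking Lemma~\ref{lem_uni_task} to upgrade perfectness of the chain into $\cD \models \varphi_\forall$. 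Once such a $\cT_n$ is in hand, the first equivalence delivers $\cD \models \varphi$, as required.
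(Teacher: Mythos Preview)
Your proposal is correct. The paper presents Proposition~\ref{prop_cor_perfect_completed_tasked_word} simply as a corollary of Lemma~\ref{lem_good_decoration} and Lemma~\ref{lem_uni_task} with no further argument, which is exactly your reasoning for the first equivalence; for the second equivalence your more detailed plan (iterated trimmings for the forward direction, and an inductive build-up via the syntactic $\rcon{r}{g}{\cdot}$ machinery for the backward direction) anticipates precisely the constructions the paper carries out later in Lemma~\ref{lem_dataword_series}, Lemma~\ref{lem:syntactic_to_consec}, and Claim~\ref{lem:extend_D0}.
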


We are almost ready to define $\pA^\vphi$. Intuitively, it will 
guess a sequence of extremal strings as in
Prop.~\ref{prop_cor_perfect_completed_tasked_word}, placing pebbles from an
  extremal string to a consecutive one.  
This requires the automaton to verify consecutiveness, and to know which
positions in consecutive extremal strings 
correspond to the same position in the input. 
This is easy if we
have the underlying task word;
indeed, given a task word $\cT$ and an extremal string
 $s' = \ext(\cT)$, 
there is a bijective mapping from the \emph{extremal elements} of $\cT$ that
$s'$ stores, to their positions in $s'$. The same holds for $\ttrim{\cT}{1}$
and $s = \ext(\ttrim{\cT}{1})$. 
By composing these mappings after inverting the latter, and restricting its
domain  to positions that remain extremal after updating the abstracted data values
(that is, shifting the top layer to second top, and the second top into
the remaining layer), we obtain a \emph{partial embedding from $s$ to $s'$ via
  $\cT$} that keeps track of the matching positions; the precise definition is in the appendix.   
But one major hurdle remains: these notions are defined in terms of a task
word $\cT$, and our PIA cannot store task words, only their extremal
strings.  
We overcome this through a merely syntactic characterization of
consecutiveness, which can be verified without a concrete task word.  
This rather technical step 
relies on the fact that if $s,s'$ are consecutive, then $s'$ can
be obtained by guessing a substring $r$ that will get new data values,
interleaving it into the proper positions $g$ of $s$, which can also be guessed,
and updating the abstracted data values.   
Also the partial embedding that keeps track of matching the
positions  
can be obtained without a concrete $\cT$, using $r$ and $g$.  

\begin{lemma}
\label{lem_check_consec_expspace}
We can decide whether two given extremal strings $s,s'$  are consecutive 
in $\EXPSPACE$.  
If they are, then we can also obtain in $\EXPSPACE$ a partial embedding
$\notOneTopEmbeddingNoArg{s}{s'}$  from
positions in $s$ to positions in $s'$ that coincides with the 
partial embedding from $s$ to $s'$ via
  $\cT$ for every task word $\cT$ such that 
 $s' = \ext(\cT)$ and $s = \ext(\ttrim{\cT}{1})$. 
\end{lemma}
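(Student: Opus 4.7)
The plan is to reduce consecutiveness to a purely syntactic question about pairs of extremal strings, so that no task word needs to be stored during the check. Concretely, as two companion lemmas I would prove that a pair $(s, s')$ is consecutive if and only if there exist a string $r$ over a suitably restricted alphabet (top-layer letters together with an appropriate realization of tasks) and a strictly monotone function $g\colon [|r|] \to [|r| + |s|]$ such that $s' = \ext(\rcon{r}{g}{s})$, where the operation $\rcon{r}{g}{s}$ interleaves $r$ into $s$ along $g$ and relabels layers (top down to second-top, second-top down to rest) to reflect the newly inserted top layer. Intuitively, $r$ records the elements promoted to the new maximal data value, and $g$ records where they sit among the surviving extremal positions of $s$.

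Given the characterization, the decision procedure iterates over all candidate pairs $(r, g)$ and verifies the syntactic equality $s' = \ext(\rcon{r}{g}{s})$. The main obstacle is bounding $|r|$: a priori $r$ could be arbitrarily long, which would blow up the search. Here I would recycle the pigeonhole argument used to show that every extremal string has length at most $7|\Theta_\exists|$; any position of $r$ that ends up non-extremal in $\rcon{r}{g}{s}$ contributes nothing to $\ext(\rcon{r}{g}{s})$ and can be dropped, so without loss of generality $|r| \leq 7|\Theta_\exists|$. Since $|\Theta_\exists|$ is exponential in $|\psi|$ and each letter of $\Gamma$ has description size polynomial in $|\Theta_\exists|$, the representations of $r$, $g$, $s$, and $s'$ all fit in exponential space, and the verification is polynomial in that size, yielding the $\EXPSPACE$ bound.

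For the second claim I would read the partial embedding $\notOneTopEmbeddingNoArg{s}{s'}$ directly off a successful witness $(r, g)$: a position $\ell$ in $s$ maps to the position in $\rcon{r}{g}{s}$ originating from $\ell$, restricted to those positions that survive the extremal projection $\ext$. The hard part is verifying that this syntactic embedding coincides with the semantic one defined via any task word $\cT$ with $s' = \ext(\cT)$ and $s = \ext(\ttrim{\cT}{1})$, and in particular that it does not depend on which witness the algorithm happens to find. This should fall out of the soundness direction of the characterization: from $\cT$ one constructs a canonical witness $(r_\cT, g_\cT)$ whose induced syntactic embedding is by construction the semantic one via $\cT$, and one then argues that the layer plus task content of each extremal letter, together with its relative order, uniquely pins down the correspondence between extremal positions of $s$ and $s'$, so every valid witness yields the same embedding.
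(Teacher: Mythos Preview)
Your plan is correct and essentially the paper's: reduce consecutiveness to the existence of $r \in \Gammatop^\star \cap \PGamma^\star$ and strictly monotone $g$ with $s' = \ext(\rcon{r}{g}{s})$ (Lemmas~\ref{lem:consec_to_syntactic} and~\ref{lem:syntactic_to_consec}), bound $|r| \le 7|\Theta_\exists|$ (the paper does this at the task-word level via Lemma~\ref{lem:delete_nonext}, but your syntactic deletion argument works as well), and enumerate. The one place you underestimate the work is your last sentence: showing that the partial embedding is independent of the witness $(r,g)$---equivalently, of $\cT$---is Lemma~\ref{lem_f_extremal}, whose proof is a nontrivial case analysis (distinguishing $\Gammattop$ versus $\Gammarest$ positions and invoking Lemmas~\ref{lemma_cor_twotop_type} and~\ref{lemma_cor_rest_theta}) rather than an immediate consequence of ``layer plus task content pins down the correspondence.''
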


\smallskip\noindent
{\bf 
The automaton.}
We give a high-level description of  $\pA^{\varphi} = (\Xi, m+1, Q, q_{\mathrm{init}}, F,
\delta)$, and refer to App.~\ref{sec_aut_def} for a full
definition. We have $m = 7 \cdot |\Theta_\exists|$: there is one pebble for each existential
constraint in  $\Theta_\exists$ and each layer in $\Gamma$, plus an additional pebble per
constraint, and one designated pebble $m+1$ to read non-extremal
positions.   $Q = Q_e \cup Q_p$  has two types of states: 
\begin{compactitem} 
\item $Q_e$ contains states  $(s,\tau)$ with  
$s$ a  perfect extremal string and 
$\newtau$ an $(m+1,\length{s})$-pebble assignment, which  intuitively describes
the assignment  after reading $s$.  
\item 
$Q_p$  contains states of the forms $(s,\tilde{s},\tau,0)$ and $(s,\tilde{s},\tau,1)$ for every
perfect extremal string $s$, 
non-empty prefix $\tilde{s}$ of $s$, and
$(m+1,\length{s})$-pebble assignment
$\newtau$  
that satisfies certain conditions that  hold when only the prefix $\tilde{s}$ has been read.
\end{compactitem}
The initial state is $q_{\mathrm{init}} \! = \! (\varepsilon, \rho_\bot)\in Q_e$
and the final states are 
$F \! = \! \{(s,\tau) \in Q_e \mid s\mbox{ is completed}\}$. 
The transition $\delta$ is roughly as follows. 
$\pA^{\varphi}$  
should transition from $(s,\tau) \in Q_e$ 
 to $(s', \tau') \in Q_e$ for consecutive $s,s'$, but 
since it can only move one pebble at a time, we have intermediate states in $Q_p$ 
which allow it to read $s'$ from left to right  
by iterating over all its prefixes.  
We start reading $s'$ by moving to $(s',s'(1),\tau'_{{q'}},0) \in Q_p$, 
where $\tau'_{{q'}}$ stores the pebble assignment induced by 
$\notOneTopEmbeddingNoArg{s'}{s}$. 
Once the whole extremal string $s'$ has been read, we move to the next
  extremal state. 

This finishes the construction of the automaton $\pA^\varphi$ with $L(\pA^\varphi) = L(\varphi)$, and thus the proof of Theorem~\ref{th:containment}.
\label{sec_expspace}
Concerning the upper bound on finite satisfiability,  
by Theorem~\ref{lem_th_our_snf} and $\EXT(\Gamma) \subseteq \Gamma^{7 |\Theta_\exists|}$, we get
that $\pA^\varphi$ has size at most double exponential in $|\psi|$. 
For the $\EXPSPACE$ upper bound, we need to show that the transition relation of $\pA^\varphi$ is $\EXPSPACE$-computable (Lemma~\ref{lem_check_consec_expspace} in the appendix).  
This with Theorem~\ref{th_pia_emptiness} 
gives an alternative proof of 
the upper bound in \cite{ar:zeume12}:
\begin{corollary}
\label{cor:expspace}
Finite satisfiability of $\mathrm{FO}^2(\vocDataWords)$ is in $\EXPSPACE$.
\end{corollary}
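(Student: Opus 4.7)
The plan is to reduce finite satisfiability of a given $\psi\in\mathrm{FO}^2(\vocDataWords)$ to the emptiness problem of the PIA $\pA^\varphi$ constructed in the proof of Theorem~\ref{th:containment}, and to carefully account for its size so that the $\PSPACE$ bound of Theorem~\ref{th_pia_emptiness} applied to $\pA^\varphi$ yields the desired $\EXPSPACE$ bound in $|\psi|$. First I would invoke Theorem~\ref{lem_th_our_snf} to obtain in $\EXPSPACE$ the normal form $\varphi=\varphi_\forall\wedge\varphi_\exists$ over the alphabet $\Xi$, together with the letter-to-letter substitution $h:\Xi\to\Sigma$. Since $L(\psi)=h(L(\varphi))$ and $h$ is letter-to-letter, we have $L(\psi)\neq\emptyset$ iff $L(\varphi)\neq\emptyset$, so it suffices to check emptiness of $\pA^\varphi$, which has $L(\pA^\varphi)=L(\varphi)$.

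The next step is to bound the representation sizes. Since $\varphi$ has length $2^{O(|\psi|)}$, both the 1-type alphabet $\Xi$ and the set $\Theta_\exists$ have cardinality at most $2^{O(|\psi|)}$, so the number of pebbles $m+1=7|\Theta_\exists|+1$ is exponential, and each letter of $\Gamma=\Layers\times\compatTasks$ can be written with $2^{O(|\psi|)}$ bits. An extremal string has at most $7|\Theta_\exists|$ positions by Lemma~\ref{lem_check_consec_expspace}'s underlying bound (cf.\ the definition of $\extPos$), so a state $(s,\tau)$ or $(s,\tilde{s},\tau,b)$ of $\pA^\varphi$ has an exponential description, and the automaton itself has size at most double exponential in $|\psi|$. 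In particular, $\log|\pA^\varphi|=2^{O(|\psi|)}$, so a counter on sequence length, two transitions and two arrangements — which is what the $\PSPACE$ emptiness algorithm of Theorem~\ref{th_pia_emptiness} stores — fit in exponential space in $|\psi|$.

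The only remaining point is that Theorem~\ref{th_pia_emptiness} gives a $\PSPACE$ bound in $|\pA|$ only when the transition relation $\delta$ can be computed non-deterministically in $\log(|\pA|)$ space, so I would verify that $\pA^\varphi$ satisfies this on-the-fly computability condition. This is precisely what Lemma~\ref{lem_check_consec_expspace} asserts: given two candidate states $(s,\tau)$ and $(s',\tau')$, both of exponential description length, we can decide in $\EXPSPACE$ whether $s$ and $s'$ are consecutive and recover the partial embedding $\notOneTopEmbeddingNoArg{s}{s'}$ used to propagate pebble positions across extremal states. The silent transitions inside $Q_p$ that advance through the prefixes of $s'$, as well as the \textsc{move} transitions that place pebbles according to $\notOneTopEmbeddingNoArg{s}{s'}$, are then computable in space $2^{O(|\psi|)}=\log|\pA^\varphi|$. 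Combining all of this, applying Theorem~\ref{th_pia_emptiness} to $\pA^\varphi$ yields an emptiness test that runs in space polynomial in $\log|\pA^\varphi|=2^{O(|\psi|)}$, i.e.\ in $\EXPSPACE$ in $|\psi|$.

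The main obstacle is not conceptual but accounting-based: ensuring that the double-exponential blow-up in $|\pA^\varphi|$ is entirely absorbed into the implicit representation so that only $\log|\pA^\varphi|$ space is ever stored at once. This depends critically on (i) the a priori length bound on extremal strings, which keeps state descriptions exponential rather than double exponential, and (ii) the $\EXPSPACE$ computability of consecutiveness in Lemma~\ref{lem_check_consec_expspace}, which prevents us from having to enumerate or materialize transitions.
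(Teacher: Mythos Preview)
Your proposal is correct and follows essentially the same approach as the paper: compute the normal form $\varphi$, observe that $\pA^\varphi$ has double-exponential size with only exponentially many pebbles (via the $7|\Theta_\exists|$ length bound on extremal strings), invoke Lemma~\ref{lem_check_consec_expspace} for the on-the-fly computability of $\delta$, and then apply Theorem~\ref{th_pia_emptiness}. The paper's own argument is exactly this, stated more tersely and supported by Lemmas~\ref{lem:automaton_size} and~\ref{lem:delta_exptime} in the appendix.
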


\noindent{\bf Relation to the proof  of Schwentick and Zeume
  \cite{ar:zeume12}}  Naturally, there are  similarities between the techniques;
our extremal strings and tasks are similar to their profiles and directional constraints. However, a key difference is that in  their
`geometric' view, elements of the data word are assigned points $(a,b)$ in the plane with $a$ a position
in $\leq_1$, and $b$ a data value. 
Existential constraints are indicated by marking the witnesses
with the letters they should have, and many profiles
in a  consistent sequence 
can contain points with the
same $a$ value. 
In contrast, our `temporal' view arises from the
computation of the PIA. We mark elements with existential
constraints they need to satisfy and that they have satisfied, which  is compatible with the read-once nature of PIA. 
 It does not seem possible to  use their proof
techniques without modifying PIA to allow multiple
readings of the input. The modified  model would work  for 
the logic-to-automata relation established here, but
we suspect it would be too strong for the other
direction.

\section{Discussion and Conclusion}
We introduced pebble-intervals automata (PIA) and
studied their computational power. 
We proved that the projections of data languages definable in
 $\mathrm{FO}^2(\leq_1,\lesssim_2,S_2)$
are
PI languages, and as a by-product, 
obtained an alternative proof that finite satisfiability  is in $\EXPSPACE$.
The main question that remains is the converse of our main result:
whether every PI language is the projection of an  $\mathrm{FO}^2(\leq_1,\lesssim_2,S_2)$
definable data language. 
We believe this is the case. 
Our work also gives rise to 
other questions. 
We suspect that our results  
can be extended to $\omega$-languages, and we would like to adapt them
to $C^2$, which extends \FOtwo\ with counting quantifiers 
\cite{DBLP:conf/csl/Pratt-Hartmann14,DBLP:journals/tocl/Tan14}. 
We also plan  to explore further the computational power of our
automata model, for instance,  to establish a
pumping lemma that allows us to prove that some context-free 
languages are not PI languages. 

 \bibliographystyle{plain}
 \bibliography{ref}

\newpage

\appendix

\section{Embeddings}
\label{app_embeddings}
Throughout the paper, we introduce definitions in which 
a data word or string $X$ is obtained from a data word or string $Y$ by applying an operation $Op$.
The operation $Op$ induces an injective order-preserving function $F_{Op}$ relating each position or universe element in $X$ 
to a position or universe element in $Y$. 
$F_{Op}$ is called an \emph{embedding of $X$ into $Y$}. 
By \emph{order-preserving} we mean that whenever a pair $a \leq_X b$ of positions or universe elements 
is mapped by $F_{Op}$ to a pair $a',b'$ of positions or universe elements, 
we have $a' \leq_Y b'$, where $\leq_Z$ for $Z\in \{X,Y\}$
is as follows:
\begin{enumerate*}[label=(\roman*)]
 \item if $Z$ is a string, $\leq_Z$ is the order on natural numbers;
 \item if $Z$ is a data word, $\leq_Z$ is the order $\leq_1$ of $Z$. 
\end{enumerate*}
For instance, given a data word $\cD$ with universe $D$ of size $n$, 
the embedding of the string projection $w_\cD = \project(\cD)$ of $\cD$ into $\cD$
assigns to every position $\ell \in [n]$ the element $d$
which satisfies $\ell =  |\{d' \in D \mid \cD \models d' \leq_1 d\}|$.
We denote the embedding of $X$ into $Y$ by $\embeddingOp{Op}{Y}$. In the case of the string projection,
$\embeddingOp{\project}{\cD}$ is an injective function (indeed a bijection)
from $[n]$ to $D$ which preserves the order $\leq_1$ of $\cD$ in terms of the order of positions in the projection. 
The inverse $\embeddingOp{Op}{Y}^{-1}$ of an embedding is a partial function. 
Given two operations $Op_1$ and $Op_2$, 
we denote the embedding resulting from their composition $X = Op_1(Op_2(Z))$,
$\embeddingOp{Op_2}{Z} \circ \embeddingOp{Op_1}{Op_2(Z)}$,
by  $\embeddingOp{Op_1 \circ Op_2}{Z}$. 

\begin{runningexample}
$\embeddingOp{\project}{\ul{\cD}}: [6] \rightarrow \ul{D}$ is given by:
\[
\begin{array}{llllllll}
\embeddingOp{\project}{\ul{\cD}}(1) = \ul{a} &&&	\embeddingOp{\project}{\ul{\cD}}(4)  = \ul{d} \\
\embeddingOp{\project}{\ul{\cD}}(2) = \ul{b} &&& \embeddingOp{\project}{\ul{\cD}}(5)  = \ul{e} \\
\embeddingOp{\project}{\ul{\cD}}(3) = \ul{c} &&& \embeddingOp{\project}{\ul{\cD}}(6)  = \ul{f}
\end{array}
\]
\end{runningexample}

\begin{runningexample}
\label{ex_emb_abst}
Let $\ul{w} = \dataAbs(\ul{\cT})$ be the data abstraction of the task word $\ul{\cT}$. Then $\ul{w}$ is the 6-letter string over $\ul{\Gamma} = \Layers \times 2^{\Tasks}_{\ul{\Omega}}$ given by:
{\small
\[
(\rest, \ul{ts}_3^C)(\rest, \ul{ts}_3^C)(\ttop, \ul{ts}_1^C)(\otop, \ul{ts}_2^C)(\rest, \ul{ts}_3^C)(\otop, \ul{ts}_4^C).
\]
}
Let
$\ul{w}' = \dataAbs(\ttrim{\ul{\cT}}{1})$. Then $\ul{w}'$ is given by the string
\[
(\rest, \ul{ts}_3^P)(\rest, \ul{ts}_3^P)(\otop, \ul{ts}_1^P) (\ttop, \ul{ts}_3^P)
\]
Note that the length discrepancy between $\ul{w}$ and $\ul{w}'$ matches the number of elements removed in the trimming. 
The embedding $\embeddingOp{\dataAbs}{\ul{\cT}}$ is given by:
\[
\begin{array}{llllllllllllllllll}
\embeddingOp{\dataAbs}{\ul{\cT}}(1) = \ul{a} &&&	\embeddingOp{\dataAbs}{\ul{\cT}}(4) = \ul{d}  \\
\embeddingOp{\dataAbs}{\ul{\cT}}(2) = \ul{b} &&&	\embeddingOp{\dataAbs}{\ul{\cT}}(5) = \ul{e}  \\
\embeddingOp{\dataAbs}{\ul{\cT}}(3) = \ul{c} &&&	\embeddingOp{\dataAbs}{\ul{\cT}}(6) = \ul{f} 
\end{array}
\]
The embedding $\embeddingOp{\dataAbs}{\ttrim{\ul{\cT}}{1}}$ is given by:
\[
\begin{array}{llllllllllllllllll}
\embeddingOp{\dataAbs}{\ttrim{\ul{\cT}}{1}}(1) = \ul{a} \\
\embeddingOp{\dataAbs}{\ttrim{\ul{\cT}}{1}}(2) = \ul{b} \\
\embeddingOp{\dataAbs}{\ttrim{\ul{\cT}}{1}}(3) = \ul{c}
\end{array}
\]\end{runningexample}

\begin{runningexample}
\label{ex_emb_ext}
The embedding $\embeddingOp{\ext}{\ul{w}}$ is given by: 
{
\[
\begin{array}{lllllllllllllllll}
\embeddingOp{\ext}{\ul{w}}(1) = 1 &	\embeddingOp{\ext}{\ul{w}}(2) = 3 & \embeddingOp{\ext}{\ul{w}}(3) = 4  \\ 
\embeddingOp{\ext}{\ul{w}}(4) = 5  & \embeddingOp{\ext}{\ul{w}}(5) = 6
\end{array}
\]
}
The embedding $\embeddingOp{\ext}{\ul{w}'}$ is the identity function $\embeddingOp{\ext}{\ul{w}'}(i) = i$.
\end{runningexample}

\section{Proof of closure properties}

We briefly  recall 
the definitions  of \emph{shuffle} and \emph{iterated
  shuffle}. 

\begin{definition}[(Iterated) shuffle] 
\label{shuffle-def}
Let $u,v \in \Sigma^\star$. 
A string  $w \in \Sigma^\star$ (of length $\length{u} + \length{v}$) 
is called a \emph{shuffle} of 
$u,v \in \Sigma^\star$ 
 if there is a set $S = \{i_1, \ldots i_{\length{u}}\} \subseteq
[\length{w}]$ such that $u$ is the string $w(i_1)\cdots w(i_{\length{u}})$,
and $v$ is the string of the remaining letters of $w$.
We denote $u \shuffle v$ the set of all strings that are a shuffle of $u$ 
  and $v$, and define
the shuffle of two languages $L,L' \subseteq \Sigma^\star$ 
as  $L \shuffle L' = \bigcup_{u \in L, v \in L'} u \shuffle v$.
The \emph{iterated shuffle} of a language $L$ is defined as $L^{\shuffle} = \bigcup_{i \geq 0} L_i$ where $L_0 = L$ and $L_{i+1} = L_i \shuffle L$.
\end{definition}

\begin{theorem}
\label{th_closure}
The class of PI  languages is  effectively 
closed under union, concatenation, Kleene-$\star$, shuffle, and iterated shuffle.
\end{theorem}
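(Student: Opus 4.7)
Given PIAs $\pA_1$ and $\pA_2$ accepting $L(\pA_1)$ and $L(\pA_2)$, the plan is to construct, for each operation, a single PIA $\pA$ accepting the desired language by combining $\pA_1$ and $\pA_2$ while exploiting the ability of PIAs to (i) read positions in a non-left-to-right order, (ii) carve the input into intervals delimited by pebbles, and (iii) use disjoint pebble sets to run independent simulations in parallel. In every case I would take the state space of $\pA$ to be a product (or disjoint union) of the state spaces of $\pA_1$ and $\pA_2$ together with a small amount of bookkeeping, and the pebble set to be the disjoint union of fresh copies of the pebbles of $\pA_1$ and $\pA_2$ together with a few auxiliary pebbles used as interval delimiters. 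A uniform trick is that whenever $\pA_i$ is to be simulated on a subinterval of the input bounded by an auxiliary pebble $p$, every move transition of $\pA_i$ mentioning $\rhd$ or $\lhd$ as a boundary is rewritten so that $p$ plays the role of that boundary. Silent transitions in $\pA$ are used to switch phases, reset simulated states, and handle the corner cases where $\varepsilon$ does or does not belong to one of the component languages.

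The individual constructions are: for \emph{union}, $\pA$ uses a silent transition from a fresh initial state to the initial state of either $\pA_1$ or $\pA_2$ and then simulates that component. For \emph{concatenation}, $\pA$ places a boundary pebble $p$ on an arbitrarily chosen position $\ell$, memorizes the letter $w(\ell)$ in its state (to feed it to $\pA_2$), simulates $\pA_1$ on the prefix strictly left of $p$ (rewriting $\rhd$ to $p$) and $\pA_2$ on the suffix from $\ell$ onward (rewriting $\lhd$ to $p$ and using a silent transition to simulate the very first read of $\pA_2$ at position $\ell$); the cases $\varepsilon\in L(\pA_1)$ or $\varepsilon\in L(\pA_2)$ are handled by extra silent branches. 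For \emph{Kleene-$\star$}, $\pA$ proceeds in rounds, each of which uses two auxiliary pebbles $p,p'$ to enclose a contiguous segment of the input on which $\pA_1$ is simulated from a fresh initial state; a new round begins by placing $p$ immediately to the right of the previous $p'$, which forces the segments to tile $w$; $\pA$ may terminate whenever the simulated state is accepting and all positions read, with a separate silent branch accepting $\varepsilon$. For \emph{shuffle}, $\pA$ uses disjoint pebble sets for $\pA_1$ and $\pA_2$ and interleaves their transitions non-deterministically; since each pebble intervals automaton only accepts when every position has been read exactly once, positions are automatically partitioned between the two simulations. For \emph{iterated shuffle}, the shuffle construction is wrapped in rounds in which $\pA_1$'s simulated state and pebbles are reset whenever $\pA_1$ reaches an accepting state.

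What actually requires a careful argument is correctness, and I expect the main obstacle to be the \emph{Kleene-$\star$} case, because the semantic condition that $\pA_1$'s segments form a left-to-right partition of the input is not directly enforced by the PIA model. The key observation is that the acceptance condition $N=[|w|]$ forces every position of $w$ to be read exactly once, and by requiring the left delimiter $p$ of a new round to be placed immediately to the right of the previous right delimiter $p'$ (which in turn must have been placed so that no unread position lies strictly between $p$ and $p'$ at the end of the round), the sequence of segments is forced to be contiguous, left-to-right, and to cover $w$. Establishing this tiling property rigorously, together with checking that $\varepsilon$-membership is handled correctly in each construction, is the technically delicate part; the remaining cases reduce to direct inductions on the length of an accepting computation of the constructed PIA, matching computations of $\pA_1$ and $\pA_2$ on the appropriate subwords.
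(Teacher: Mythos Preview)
Your proposal is correct and follows essentially the same approach as the paper's proof: guess the split point for concatenation using a delimiter pebble while memorizing the boundary letter, use two delimiter pebbles for Kleene-$\star$ with the acceptance condition $N=[|w|]$ forcing the segments to tile the input contiguously, and use disjoint pebble sets with nondeterministic interleaving for shuffle and its iteration. One minor slip: in the concatenation case you have the boundary symbols reversed---for $\pA_1$ on the prefix you should rewrite the right end $\lhd$ to $p$, and for $\pA_2$ on the suffix you should rewrite the left end $\rhd$ to $p$, not the other way around.
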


\proofclosure

\subsection{Proofs of non-closure}

\begin{theorem}
\label{th_intersection}
The class of PI  languages is not effectively closed under
 \begin{enumerate}
 	\item 
	intersection, even with regular languages, nor under 
	\item
	complement.
 \end{enumerate} 
\end{theorem}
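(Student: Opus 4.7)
The strategy is to reduce an undecidable problem to the emptiness problem of a PIA obtained via intersection with a regular language; since the emptiness problem for PIAs is decidable by Theorem~\ref{th_pia_emptiness}, this yields a contradiction. A natural choice of undecidable problem is the halting problem for two-counter Minsky machines. For a fixed Minsky machine $M$, I would encode computation traces of $M$ over an alphabet containing letters $\texttt{inc}_\texttt{c}, \texttt{dec}_\texttt{c}, \texttt{jmp}_\texttt{c}$ for each counter $\texttt{c} \in \{\texttt{c}_0,\texttt{c}_1\}$ plus a $\texttt{halt}$ symbol. The set $\texttt{Inst}_M$ of syntactically well-formed traces (ignoring feasibility of jumps) is regular, obtained by tracking the control flow of $M$ with an NFA. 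The hard semantic constraint---that $\texttt{jmp}_\texttt{c}$ occurs only when counter $\texttt{c}$ equals zero, i.e.\ when increments and decrements so far balance---must be enforced by the PI component.

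\textbf{Paragraph 2.} The PI component $\texttt{L}$ should consist of strings in which, for each counter $\texttt{c}$, every occurrence of $\texttt{jmp}_\texttt{c}$ is preceded by equally many $\texttt{inc}_\texttt{c}$ and $\texttt{dec}_\texttt{c}$ letters, and overall increments and decrements are balanced at $\texttt{halt}$. For a single counter this can be generated by a context-free grammar of Dyck type with extra $\texttt{jmp}_\texttt{c}$ letters inserted only at balanced cuts; I would show that the resulting language is PI by a construction similar to $\pA_{\mathit{Dyck}}$ from Examples~\ref{ex_languages}, using two pebbles to isolate substrings between consecutive jumps and simulating $\pA_{\mathit{Dyck}}$ on each interval with a disjoint set of pebbles. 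Then $\texttt{L}$ is defined as the shuffle of the languages for the two counters, and by Theorem~\ref{th_closure} it is PI. The feasible traces of $M$ are then exactly $\texttt{L}\cap \texttt{Inst}_M$, so effective closure under intersection with regular languages, together with decidability of emptiness, would decide the Minsky halting problem---contradiction.

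\textbf{Plan for Part 2 (non-closure under complement).} I would reduce to Part 1: I will argue that the complement $(L_M)^c$ is itself PI, so if PI languages were effectively closed under complement, one could build a PIA for $L_M = ((L_M)^c)^c$ and again test emptiness. By De Morgan, $(L_M)^c = \texttt{L}^c \cup (\texttt{Inst}_M)^c$, and since regular languages are closed under complement and are themselves PI (Proposition~\ref{prop:regular-languages}), $(\texttt{Inst}_M)^c$ is PI. Using closure of PI languages under union from Theorem~\ref{th_closure}, it suffices to exhibit a PIA for $\texttt{L}^c$. A word lies in $\texttt{L}^c$ iff for some counter $\texttt{c}$ there is a prefix ending at a $\texttt{jmp}_\texttt{c}$ (or at $\texttt{halt}$) with unequal numbers of $\texttt{inc}_\texttt{c}$ and $\texttt{dec}_\texttt{c}$. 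A PIA for $\texttt{L}^c$ can guess which counter witnesses the violation, which offending prefix endpoint to inspect, and whether the surplus is of increments or of decrements, and then verify this using a $\pA_{\mathit{Dyck}}$-style simulation on the guessed prefix while ignoring letters of the other counter.

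\textbf{Main obstacle.} The routine parts are the shuffle and union constructions (handed to us by Theorem~\ref{th_closure}) and the reduction itself. The real work is the claim that the single-counter ``Dyck-with-jumps'' language is PI, and similarly that $\texttt{L}^c$ is PI. The technical challenge is to orchestrate the pebble movements so that the Dyck-style simulation runs on nested subintervals delimited by pebbles marking consecutive jumps, while reusing a small fixed set of pebbles across rounds (alternating the roles of two boundary pebbles so that the next interval's left boundary is the previous interval's right boundary). Care is needed to ensure that every input position is read exactly once, as required by the PIA acceptance condition, especially at the boundaries between rounds and for the letters belonging to the other counter in the shuffle.
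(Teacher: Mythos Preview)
Your proposal is correct and follows essentially the same approach as the paper: the same reduction from the Minsky halting problem, the same encoding with $\texttt{inc}_\texttt{c}/\texttt{dec}_\texttt{c}/\texttt{jmp}_\texttt{c}/\texttt{halt}$, the same regular language $\texttt{Inst}_M$ of syntactic traces, the same Dyck-with-jumps PI language per counter (proved PI by delimiting intervals between consecutive $\texttt{jmp}_\texttt{c}$'s with two alternating boundary pebbles and simulating $\pA_{\mathit{Dyck}}$ inside), the same shuffle to combine the two counters, and the same De~Morgan argument for Part~2 with the same guess-and-verify PIA for $\texttt{L}^c$. Your identification of the main technical obstacle---orchestrating the boundary pebbles so that each position is read exactly once across rounds---is also precisely where the paper spends its effort in the corresponding claim.
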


\proofnonclosure

\subsection{Proof of Corollary~\ref{cor_universality_inclusion}}
\proofcorunivinc

\section{Proof of upper bound for emptiness} 
\label{app_emptiness}

We now prove the upper bounds of Theorem~\ref{th_pia_emptiness}.
For the rest of this section, assume a  PIA $\pA= (\Sigma, m, Q,
q_{\mathrm{init}}, F, \delta)$. 

\begin{definition}[Feasible sequence of transitions]
\label{def_feasible_seq}
An \emph{arrangement} of  pebbles $[m]$ is  a linear order $\Ord = \lng O, R_{\leq}\rng$ with $O \subseteq [m]$. 
Each $(m,n)$-pebble assignment $\rho$ induces an arrangement
$o(\rho)$ with  $O = [m] - \rho^{-1}(\bot)$ and  $R_{\leq}(i,j)$ if and only
if $\rho(i) \leq \rho(j)$  for every $i,j \in [m]$. 

Let $r\geq 0$ and 
let 
$\bar{t} = (t_1 ,\ldots, t_r)$  be a sequence of transitions. 
We say $\bar{t}$ is \emph{feasible} if 
there is a sequence $\bar{\Ord} = (\Ord_0 ,\ldots, \Ord_r)$ 
of arrangements 
and a sequence $\bar{q}=(q_0,\ldots,q_r)$ of states with $q_{\mathrm{init}}=q_0$, 
such that $\Ord_0$ is empty, and 
for every $1 \leq \ell \leq r$:
\begin{enumerate}
	\item 
	if $t_\ell$ is a silent transition, then $t_\ell = (q_{\ell-1},q_\ell)$ and $\Ord_\ell = \Ord_{\ell-1}$, and
	\item
	if
$t_\ell$ is a $\MoveNoArgs$ transition, then there are $k \in [m]$, $i, j \in ([m]-\{k\})\cup \{\rhd,\lhd\}$, and $\sigma\in\Sigma$ 
with  $t_\ell = (q_{\ell-1},\Move{k}{i}{j},\sigma,q_\ell)$ and the
arrangement   $\Ord_\ell = \lng O_\ell, R_{\leq,\ell} \rng$ is such that
 	$O_\ell = O_{\ell-1} \cup \{k\}$, 
 $R_{\leq,\ell}(i',j')$  if and only if $R_{\leq,\ell-1}(i',j')$ for all for
 $i',j' \in [m] - \{k\}$, and additionally, 
	if $i \in [m]$ then  $R_{\leq,\ell}(i,k)$, and 
	if $j \in [m]$ then $R_{\leq,\ell}(k,j)$. 
\end{enumerate}
\end{definition}
The following lemma shows that the feasible sequences of transitions are exactly the ones corresponding to actual computations of the automaton.
\begin{lemma}
\label{lem:feasible}
\label{lem_feasible}
A sequence $\bar{t}$ of transitions 
is feasible if and only if	
there is a string $u \in \Sigma^\star$ and 
a sequence of configurations $\bar{\pi}$ such that $(\bar{t},\bar{\pi})$ is a computation of $\pA$ on $u$. 
Moreover, 
$\bar{t}$ ends in an accepting state
if and only if
$(\bar{t},\bar{\pi})$ is an accepting computation. 
\end{lemma}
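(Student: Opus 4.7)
The plan is to prove both directions by extracting/constructing the sequence of arrangements from the sequence of pebble assignments in the configurations, and vice versa.

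For the easy (forward) direction, suppose $(\bar t, \bar\pi)$ is a computation on some string $u$, with $\pi_h = (q_h, \rho_h, N_h)$. I would define $\Ord_h = o(\rho_h)$ (using the induced-arrangement map from Definition~\ref{def_feasible_seq}) and simply read off from the semantics of $\overset{t}{\rightsquigarrow}$ that the two conditions of Definition~\ref{def_feasible_seq} hold: silent transitions do not change $\rho$ (hence not $\Ord$), and a $\Move{k}{i}{j}$ transition places pebble $k$ at a fresh position strictly between the positions of $i$ and $j$, which is exactly the update to the arrangement required in the feasibility definition.

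For the harder (backward) direction, I would proceed by induction on the length $r$ of $\bar t$, proving simultaneously that there exist a string $u_r$ of length at most $r$ and configurations $\pi_0,\ldots,\pi_r$ on $u_r$ with $o(\rho_r)=\Ord_r$ (and with the read set $N_r$ containing all positions that already carry a pebble). The base case $r=0$ is immediate with $u_0=\varepsilon$ and $\pi_0=\pi_{\mathrm{init}}$. For the inductive step, silent transitions are trivial (keep $u_r=u_{r-1}$, update the state). For a $\Move{k}{i}{j}$ transition with label $\sigma$, I would \emph{insert} $\sigma$ into $u_{r-1}$ at the appropriate position: immediately to the right of the position of pebble $i$ if $i\in[m]$ (using $\hat\rho_{r-1}(i)$, or position $0$ if $i=\rhd$ or $i$ was never placed). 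This requires shifting the $\rho$-values of all pebbles lying to the right of the insertion point, and shifting $N_{r-1}$ likewise, to obtain a revised computation $(\pi'_0,\ldots,\pi'_{r-1})$ on $u_r$; the crucial point is that this reindexing is an order-preserving bijection on the already-used positions, so it preserves the validity of each earlier transition.

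The main obstacle and real content is showing that the new position we pick for pebble $k$ actually lies in the open interval dictated by the transition, i.e.\ strictly between $\hat\rho'_{r-1}(i)$ and $\hat\rho'_{r-1}(j)$. Here I would use the feasibility conditions: by Definition~\ref{def_feasible_seq} we have $R_{\leq,r}(i,k)$ and $R_{\leq,r}(k,j)$, and the update rule says that the relative order of old pebbles is preserved, so $R_{\leq,r-1}(i,j)$ holds and by the induction hypothesis $\hat\rho_{r-1}(i)<\hat\rho_{r-1}(j)$; after the insertion/shift this gap becomes strictly bigger by exactly the one new slot, into which we place $k$. This gives $\pi'_{r-1}\overset{t_r}{\rightsquigarrow}\pi'_r$, and the bookkeeping $o(\rho'_r)=\Ord_r$ and $N_r=[|u_r|]$ follows by construction. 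The ``moreover'' clause about accepting computations is then immediate, since $\bar t$ ending in an accepting state together with $N_r=[|u_r|]$ is exactly the definition of an accepting configuration.
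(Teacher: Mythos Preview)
Your proposal is correct and follows essentially the same route as the paper's proof: the forward direction extracts $\Ord_h = o(\rho_h)$ from the configurations, and the backward direction proceeds by induction on $r$, inserting the new letter immediately after pebble $i$'s current position (or at position $0$), shifting all configurations accordingly, and using the feasibility conditions $R_{\leq,r}(i,k)$, $R_{\leq,r}(k,j)$ together with preservation of the relative order on $[m]\setminus\{k\}$ to deduce $R_{\leq,r-1}(i,j)$ and hence $\hat\rho_{r-1}(i)<\hat\rho_{r-1}(j)$. The invariant $N_r=[|u_r|]$ and the ``moreover'' clause are handled exactly as in the paper.
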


\begin{lemma}
\label{lem_short_string} 
There is a feasible sequence of transitions $\bar{t}$ 
of length at most $|\pA|\cdot 2^{O(m \log m)}$
ending in an accepting state 
iff $L(\pA) \neq \emptyset$. 
\end{lemma}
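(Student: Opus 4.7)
By Lemma~\ref{lem_feasible}, $L(\pA)\ne\emptyset$ iff there exists \emph{some} feasible transition sequence ending in an accepting state. So the content of the lemma is the length bound, and the plan is a standard pumping argument on feasible sequences. The key observation is that the ``global state'' driving feasibility at step $h$ is captured entirely by the pair $(q_h,\Ord_h)$: given this pair together with the next transition $t_{h+1}$, the conditions of Definition~\ref{def_feasible_seq} depend on nothing else in the history. Therefore, if two positions $h<h'$ in a feasible sequence $(\bar t,\bar q,\bar\Ord)$ satisfy $q_h=q_{h'}$ and $\Ord_h=\Ord_{h'}$, I will splice out $t_{h+1},\ldots,t_{h'}$, keeping $t_1,\ldots,t_h,t_{h'+1},\ldots,t_r$, with arrangements $\Ord_0,\ldots,\Ord_h,\Ord_{h'+1},\ldots,\Ord_r$ and states $q_0,\ldots,q_h,q_{h'+1},\ldots,q_r$. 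Since $t_{h'+1}$ turned $(\Ord_{h'},q_{h'})=(\Ord_h,q_h)$ into $(\Ord_{h'+1},q_{h'+1})$ in the original sequence, and the feasibility conditions are local (they only inspect the current and next $(\Ord,q)$ and the transition), the spliced sequence is also feasible, is shorter, and still ends in $q_r$.

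Next, I would bound the number of distinct pairs $(q,\Ord)$. Recall an arrangement consists of a subset $O\subseteq[m]$ and a linear order on $O$, so the number of arrangements is $\sum_{k=0}^{m}\binom{m}{k}k!\le (m+1)!=2^{O(m\log m)}$. Hence the total number of distinct $(q,\Ord)$ pairs is at most $|Q|\cdot 2^{O(m\log m)}=|\pA|\cdot 2^{O(m\log m)}$. If a feasible accepting sequence is longer than this, the pigeonhole principle produces a repeated pair, and I splice; iterating yields an accepting feasible sequence of length at most $|\pA|\cdot 2^{O(m\log m)}$. The converse direction is immediate from Lemma~\ref{lem_feasible}.

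The only subtle step is verifying that splicing really preserves feasibility. For silent transitions it is trivial. For a $\MoveNoArgs$ transition $t_{h'+1}=(q_{h'},\Move{k}{i}{j},\sigma,q_{h'+1})$, the requirements in Definition~\ref{def_feasible_seq} on the pair $(\Ord_{h'},\Ord_{h'+1})$ constrain only: (i) that $O_{h'+1}=O_{h'}\cup\{k\}$; (ii) that $R_{\le,h'+1}$ agrees with $R_{\le,h'}$ on pebbles other than $k$; and (iii) the placement of $k$ relative to $i$ and $j$. Since $\Ord_h=\Ord_{h'}$, exactly the same conditions are met when $t_{h'+1}$ is applied to $\Ord_h$ to produce $\Ord_{h'+1}$, so the spliced step is a legal feasibility step. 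I expect no real obstacle here beyond carefully writing out this local check; the combinatorial counting and pigeonhole are routine.
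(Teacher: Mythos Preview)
Your proof is correct and follows essentially the same approach as the paper: a pigeonhole argument on (state/transition, arrangement) pairs followed by splicing out a repeated segment. The only cosmetic difference is that the paper pigeonholes on pairs $(t_h,\Ord_h)$ rather than $(q_h,\Ord_h)$, yielding the bound $|\delta|\cdot 2^m\cdot m!$ instead of $|Q|\cdot(m+1)!$; both are $|\pA|\cdot 2^{O(m\log m)}$.
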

\begin{proof}
\label{proof_lem_short_string}
By Lemma~\ref{lem:feasible}, if $\bar{t}$ is feasible and ends in an accepting state, then there is a sequence of configurations $\bar{\pi}$
such that $(\bar{t},\bar{\pi})$  is an accepting computation of $\pA$ on some string $u$. Hence $u \in L(\pA)$. 

Conversely, assume that $L(\pA) \neq \emptyset$, and let
$(\bar{t},\bar{\pi})$ be an accepting computation of minimal length on any string $u$. 
Denote $(\bar{t},\bar{\pi}) = ((t_1,\ldots,t_r),\pi_0,\ldots,\pi_r))$. 
By Lemma~\ref{lem:feasible}, $\bar{t}$ is feasible and ends in an accepting state. Let $\bar{\Ord}$ be as guaranteed
for $\bar{t}$ in Definition~\ref{def_feasible_seq}. 
Now assume for contradiction that there are two distinct $h_1,h_2 \in [r]$ such that 
$t_{h_1} = t_{h_2}$ and $\Ord_{h_1}=\Ord_{h_2}$. 
Then $(t_1,\ldots,t_{h_1},t_{h_2+1},\ldots,t_r)$ is a feasible sequence of transitions ending in an accepting state.
Hence, there is a word $u_{12}$ which is accepted by a computation of length $r-(h_2-h_1)$, in contradiction to the minimality of 
$(\bar{t},\bar{\pi})$. Hence, the length of $\bar{t}$ is at most $|\delta|\cdot M$, where $M$ is the number of arrangements $\Ord$ of $m$ pebbles.
We have that $M \leq 2^m \cdot m! $ since there are $2^m$ ways of choosing a subset $O \subseteq [m]$ as the universe of $\Ord$, 
and $|O|! \leq m!$ ways to linearly order the set $O$.
\end{proof} 

\subsection{Proof of Theorem~\ref{th_pia_emptiness}}

\begin{proof}
\label{proof_th_pia_emptiness}
Let $\pA= (\Sigma, m, Q, q_{\mathrm{init}}, F, \delta)$.
We non-deterministically attempt to guess a feasible sequence of transitions $\bar{t}$ which ends at an accepting state.
Due to the number of non-repeating sequences of arrangements,
$L(\pA) \neq \emptyset$ if and only if there is a feasible sequence of transitions 
ending in an accepting state 
and sequences $\bar{\Ord}$ as in Definition~\ref{def_feasible_seq} 
whose lengths are at most $|\pA|\cdot 2^{O(m \log m)}$.
We only need to keep simultaneously 
a counter of the sequence length $r$, two transitions $t_{h}, t_{h+1}: h\in [r-1]$ and two arrangements $\Ord_h, \Ord_{h+1}$.
The size of the representation of a transition is logarithmic 
in $|\pA|$.  
The size of the representation of an arrangement is $O(m^2)$.
Completeness for the case where the automaton has $O(\log |\pA|)$ pebbles follows from the $\NL$-completeness of the emptiness problem of standard finite state automata
and Prop.~\ref{prop:regular-languages}. 
\end{proof}

\section{Proof of the normal form}

\section{Proofs for the Automata-logic Connection}
\subsection{Proof of Lemma~\ref{lem_good_decoration}}
\proofgooddecoration

\subsection{Proof of Uniqueness of \texorpdfstring{$\ttrim{\cT}{1}$}{T1}}
\label{app_unique}

\begin{lemma}
\label{lemma_unique_tasked_trimming1}
\label{lemma_trim_omega_compat}
Let $\cT$ be a $\cD$-task word. 
There is a unique $\trim{\cD}{1}$-task word $\cT_1$ for which $\omega(\ts)\,{=}\,\omega(\ts_1)$ 
for each $\ts,\ts_1\,{\in}\,\compatTasks$ and each $d$ in the universe of $\trim{\cD}{1}$ 
s.t.
$\cT\,{\models}\,\ts(d)$ and $\cT_1\,{\models}\,\ts_1(d)$.
\end{lemma}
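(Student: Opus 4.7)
The plan is to prove both existence and uniqueness directly from the definitions, with the construction of the candidate $\cT_1$ essentially forced by the constraints.

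For existence, I would exhibit $\cT_1$ explicitly. Let $D_1$ denote the universe of $\trim{\cD}{1}$. For every $d \in D_1$, let $\ts \in \compatTasks$ be the unique task set with $\cT \models \ts(d)$, and define $\ts_1$ by keeping the same witness type set $\omega(\ts_1) := \omega(\ts)$, but re-evaluating completed vs.\ promised tasks with respect to $\trim{\cD}{1}$:
\[
\ts_1 \;=\; \{C_\theta \mid \theta \in \omega(\ts),\; \trim{\cD}{1} \models \exists y\, \theta(d,y)\} \;\cup\; \{P_\theta \mid \theta \in \omega(\ts),\; \trim{\cD}{1} \not\models \exists y\, \theta(d,y)\}.
\]
I then set $\cT_1 \models \ts_1(d)$ and give $\cT_1$ the same universe and order relations as $\trim{\cD}{1}$. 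I would verify that $\cT_1$ is a $\trim{\cD}{1}$-task word by checking the two conditions from Definition~\ref{def_tasked_word}: condition~(1) holds because $\omega(\ts_1) = \omega(\ts)$ and the 1-type of $d$ (which determines $\xi^{\omega(\ts)}$) is preserved from $\cD$ to $\trim{\cD}{1}$, since $d \in D_1$; condition~(2) holds by construction.

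For uniqueness, suppose $\tilde{\cT}$ is another $\trim{\cD}{1}$-task word with the same $\omega$-matching property. Pick any $d \in D_1$ and let $\tilde{\ts}$ be such that $\tilde{\cT} \models \tilde{\ts}(d)$. By the hypothesis on $\tilde{\cT}$, we have $\omega(\tilde{\ts}) = \omega(\ts) = \omega(\ts_1)$. Now for each $\theta \in \omega(\tilde{\ts})$, condition~(2) of Definition~\ref{def_tasked_word} applied to $\tilde{\cT}$ forces $C_\theta \in \tilde{\ts}$ iff $\trim{\cD}{1} \models \exists y\, \theta(d,y)$, which by our construction is exactly the criterion that placed $C_\theta$ (resp.\ $P_\theta$) into $\ts_1$. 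Hence $\tilde{\ts} = \ts_1$, and since $d$ was arbitrary and the universe and order relations are fixed by being a $\trim{\cD}{1}$-task word, $\tilde{\cT} = \cT_1$.

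The main subtlety is not in any single step but in being careful about which data word is used to evaluate the existential witness condition: a $\theta$ that is completed in $\cT$ (i.e., witnessed in $\cD$) may become promised in $\cT_1$ because its only witnesses in $\cD$ all lie in the top layer and are removed in $\trim{\cD}{1}$. Once this is made precise, the proof is essentially bookkeeping, and uniqueness follows cleanly because condition~(2) of the task word definition leaves no freedom once $\omega(\ts_1)$ is fixed.
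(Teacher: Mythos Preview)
Your proposal is correct and follows essentially the same approach as the paper: the same explicit construction of $\ts_1$ by re-evaluating the completed/promised status of each $\theta\in\omega(\ts)$ against $\trim{\cD}{1}$, and the same uniqueness argument via condition~(2) of Definition~\ref{def_tasked_word} forcing the $C_\theta$/$P_\theta$ decomposition once $\omega(\ts_1)$ is fixed. The only cosmetic difference is that the paper phrases uniqueness by contradiction while you argue it directly.
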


\proofuniquetasked

\subsection{Definitions and Proofs for Perfect Extremal Strings} 
\label{app_perfect}

\begin{definition}[$\perfect_{\alpha,\beta}(x,y)$]
\label{def_perfect_formula}
Let $\alpha = (\hlayerp{\alpha},\ts_\alpha)$ and $\beta = (\hlayerp{\beta},\ts_\beta)$
in $\Gamma$ with
at least one of them  
in $\Gammatop$.
Using the subformulas in Table~\ref{tab_perfect}, 
we define
$\perfect_{\alpha,\beta}(x,y) = \perfectalpha_{\alpha}(x) \wedge \perfectalpha_{\beta}(y) \wedge  
\bigwedge_{\mathit{bin} \in \{\leq_1, \lesssim_2, S_2\}}{\perfectalpha_{\alpha,\beta,\mathit{bin}}(x,y)}.$
\end{definition}

 \begin{table}
\caption{Components of  $\perfect_{\alpha,\beta}(x,y)$}
	\centering
	\renewcommand{\arraystretch}{1.75}
	\begin{tabular}{l@{\,=\,}l}
	\hline
	$\perfectalpha_{\alpha}(x)$ & $\xi^{\omega(\ts_\alpha)}(x)$; 
	\\
	$\perfectalpha_{\beta}(y)$  & $\xi^{\omega(\ts_\beta)}(y)$;
	\\
	$\perfectalpha_{\alpha,\beta,\leq_1}(x,y)$  & $x \lneq_1 y$; 
	\\
	$\perfectalpha_{\alpha,\beta,\lesssim_2}(x,y)$ &
	$\begin{cases}
	x \lnsim_2 y, & h_\alpha \not= \onetoprm \\
	y \lnsim_2 x, & h_\beta \not=\onetoprm \\
	x \sim_2 y, & \text{Otherwise};
	\end{cases}
	$
	\\
	$\perfectalpha_{\alpha,\beta,S_2}(x,y)$  & 
	$\begin{cases}
	S_2(x,y), &  h_\alpha = \twotoprm \\
	S_2 (y,x), &  h_\beta = \twotoprm \\
	\neg S_2(x,y), &  h_\alpha = \restrm \\
	\neg S_2 (y,x), &  h_\beta = \restrm \\
	\mathit{True}, &  \text{Otherwise}
	\end{cases}
	$ 
	\end{tabular}
	\label{tab_perfect}
 \end{table}

\begin{lemma}
\label{cl_2}
Let $\cT$ be a $\cD$-task word with universe $D$ and let $w = \dataAbs(\cT)$. 
Let $d_1,d_2 \in D$ be such that
$\cD \models d_1 \lneq_1 d_2$
and 
$ \maxdv_\cD \in \{\val{\cD}{d_1},\val{\cD}{d_2}\} $. 
Let $\ell_1$ and $\ell_2$ be such that $d_1$ is mapped to position $\ell_1$ in $\dataAbs(\cT)$ 
and 
$d_2$ is mapped to position $\ell_2$ in $\dataAbs(\cT)$.
Then $\ell_1\lneq \ell_2$ and
$\cD \models \perfect_{w(\ell_1),w(\ell_2)}(d_1,d_2)$.
\end{lemma}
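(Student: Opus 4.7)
The plan is to unpack $\perfect_{w(\ell_1),w(\ell_2)}(x,y)$ into its four conjuncts (the two unary parts and the three binary parts $\perfect_{\cdot,\cdot,\leq_1}$, $\perfect_{\cdot,\cdot,\lesssim_2}$, $\perfect_{\cdot,\cdot,S_2}$) as given in Table~\ref{tab_perfect}, and verify each conjunct separately, using (a) the fact that $\embeddingOp{\dataAbs}{\cT}$ is order-preserving from positions in $w$ to elements of $\cT$ (and thus $\cD$) with respect to $\leq_1$, (b) the definition of a $\cD$-task word, and (c) the assumption that at least one of $d_1,d_2$ has maximal data value in $\cD$.

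First, since $d_1 = \embeddingOp{\dataAbs}{\cT}(\ell_1)$ and $d_2 = \embeddingOp{\dataAbs}{\cT}(\ell_2)$ and the embedding preserves $\leq_1$, the assumption $\cD \models d_1 \lneq_1 d_2$ gives $\ell_1 \lneq \ell_2$, which immediately yields $\cD \models \perfect_{w(\ell_1),w(\ell_2),\leq_1}(d_1,d_2)$. For the unary conjuncts, since $\cT$ is a $\cD$-task word, picking $\ts_i \in \compatTasks$ such that $\cT \models \ts_i(d_i)$, Definition~\ref{def_tasked_word} gives $\cD \models \xi^{\omega(\ts_i)}(d_i)$; writing $w(\ell_i) = (h_i, \ts_i)$ from the definition of $\dataAbs$, this is exactly $\cD \models \perfect_{w(\ell_i)}(d_i)$ for $i=1,2$.

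The remaining work is the two order-related conjuncts $\perfect_{\cdot,\cdot,\lesssim_2}$ and $\perfect_{\cdot,\cdot,S_2}$, which is handled by a finite case analysis on the pair of layers $(h_1,h_2) \in \Layers \times \Layers$. The crucial point is that by Definition~\ref{def_data_abstraction}, the layer label $h_i$ in $w(\ell_i)$ records exactly whether $\val{\cD}{d_i}$ equals $\maxdv_\cD$, $\maxdv_\cD-1$, or is at most $\maxdv_\cD-2$. Combined with the hypothesis that $\maxdv_\cD \in \{\val{\cD}{d_1},\val{\cD}{d_2}\}$ (so at least one $h_i$ equals $\onetoprm$), each case determines the relative position of $d_1$ and $d_2$ in the preorder $\lesssim_2$ and whether they are $S_2$-consecutive. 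I will verify by inspection that the table entry in Table~\ref{tab_perfect} for that case is exactly what $\cD$ satisfies on $(d_1,d_2)$; for instance, if $h_1 = \onetoprm$ and $h_2 = \twotoprm$ then $\val{\cD}{d_2} = \maxdv_\cD - 1$ and $\val{\cD}{d_1} = \maxdv_\cD$, which gives $d_2 \lnsim_2 d_1$ and $S_2(d_2,d_1)$, matching the prescriptions of $\perfect_{\cdot,\cdot,\lesssim_2}$ and $\perfect_{\cdot,\cdot,S_2}$.

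No step is really a hard obstacle; the only mild nuisance is bookkeeping the up-to-nine cases of $(h_1,h_2)$ cleanly (several are ruled out by the hypothesis that one layer is $\onetoprm$), and making sure that the existing proof sketch for two representative cases (one for $\lesssim_2$, one for $S_2$) in the appendix is already sufficient to make clear that the remaining ones follow by the same pattern. Thus the proof is essentially a mechanical verification once the dictionary between layers in $\dataAbs(\cT)$ and data values in $\cD$ has been made explicit.
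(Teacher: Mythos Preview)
Your proposal is correct and follows essentially the same approach as the paper: both proofs decompose $\perfect_{w(\ell_1),w(\ell_2)}$ into its conjuncts, handle $\ell_1<\ell_2$ and the $\leq_1$-conjunct via order-preservation of $\embeddingOp{\dataAbs}{\cT}$, the unary conjuncts via condition~(1) of Definition~\ref{def_tasked_word}, and the $\lesssim_2$ and $S_2$ conjuncts by a case analysis on the layer pair, spelling out one representative case for each and declaring the rest analogous. The only cosmetic difference is which representative cases are chosen.
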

\proofcltwo

\begin{lemma}\label{lem_cor_perfect_2type}
Consider $\alpha, \beta \in \Gamma$  such that at least one of them is in  $\Gammatop$
and such that   $\perfect_{\alpha,\beta}(x,y) \not\models_{\dat{\Xi}} \chi(x,y) \wedge \chi(y,x)$.
Then it holds that $\exists x\, \exists y\, \perfect_{\alpha,\beta}(x,y) \models_{\dat{\Xi}} \neg \varphi_\forall$.
\end{lemma}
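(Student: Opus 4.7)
The plan is to argue contrapositively at the semantic level, exploiting the fact that $\perfect_{\alpha,\beta}(x,y)$ pins down a unique $2$-type over $\dat{\Xi}$, as already noted in Observation~\ref{lem_perfect_2type}. Assume a data word $\cD'$ with elements $d_1',d_2'$ such that $\cD' \models \perfect_{\alpha,\beta}(d_1',d_2')$; it will suffice to show that $\cD' \not\models \varphi_\forall$.

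First, from the hypothesis $\perfect_{\alpha,\beta}(x,y) \not\models_{\dat{\Xi}} \chi(x,y) \wedge \chi(y,x)$, I would extract a witness: some data word $\cD$ with elements $d_1,d_2$ satisfying $\cD \models \perfect_{\alpha,\beta}(d_1,d_2)$ and $\cD \not\models \chi(d_1,d_2) \wedge \chi(d_2,d_1)$. By Observation~\ref{lem_perfect_2type}, there is a unique $2$-type $\theta(x,y)$ over $\dat{\Xi}$ with $\perfect_{\alpha,\beta}(x,y) \equiv_{\dat{\Xi}} \theta(x,y)$, and hence $(d_1,d_2)$ realizes $\theta$ in $\cD$.

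Second, since $\chi(x,y)$ is quantifier-free, satisfaction of $\chi(x,y) \wedge \chi(y,x)$ on a pair depends only on the $2$-type of that pair. From the previous paragraph we obtain that $\theta(x,y)$ itself does not entail $\chi(x,y) \wedge \chi(y,x)$ over $\dat{\Xi}$; equivalently, $\theta(x,y) \notin \Theta_\forall$ or $\theta(y,x) \notin \Theta_\forall$.

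Third, back to the arbitrary $\cD'$ with $\cD' \models \perfect_{\alpha,\beta}(d_1',d_2')$: again by Observation~\ref{lem_perfect_2type}, the pair $(d_1',d_2')$ realizes the same $2$-type $\theta$. Since $\chi$ is quantifier-free, $\cD' \not\models \chi(d_1',d_2') \wedge \chi(d_2',d_1')$, so $\cD' \not\models \varphi_\forall = \forall x \forall y\, \bigvee_{\theta' \in \Theta_\forall}\theta'(x,y)$. This gives the desired entailment $\exists x\, \exists y\, \perfect_{\alpha,\beta}(x,y) \models_{\dat{\Xi}} \neg \varphi_\forall$. There is no real obstacle here; the only care needed is invoking Observation~\ref{lem_perfect_2type} correctly to transfer the counterexample pair $(d_1,d_2)$ in $\cD$ to the pair $(d_1',d_2')$ in $\cD'$ via their common $2$-type, and noting that quantifier-freeness of $\chi$ makes this transfer valid.
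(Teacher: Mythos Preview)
Your proposal is correct and essentially identical to the paper's proof: both extract a witnessing pair $(d_1,d_2)$ in some $\cD$ from the non-entailment hypothesis, invoke Observation~\ref{lem_perfect_2type} to pin down the unique $2$-type $\theta$, observe that $\theta(x,y)\notin\Theta_\forall$ or $\theta(y,x)\notin\Theta_\forall$, and then transfer this to any $(d_1',d_2')$ in any $\cD'$ realizing $\perfect_{\alpha,\beta}$. Your version is just slightly more explicit about the quantifier-freeness of $\chi$ as the reason the transfer works.
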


We first recall the following: 
\begin{obs}
\label{lem_perfect_2type}
Let $\alpha, \beta \in \Gamma$ such that at least one of them is in $\Gammatop$. 
There exists a 2-type $\theta(x,y)$ such that 
$\perfect_{\alpha,\beta}(x,y) \equiv_{\dat{\Xi}} \theta(x,y)$. 
\end{obs}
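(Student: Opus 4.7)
The plan is to show that for every atom $\alpha(x,y)$ of the vocabulary $\dvoc{\Xi}$ with free variables in $\{x,y\}$, either $\perfect_{\alpha,\beta}(x,y) \models_{\dat{\Xi}} \alpha(x,y)$ or $\perfect_{\alpha,\beta}(x,y) \models_{\dat{\Xi}} \neg\alpha(x,y)$. Since a 2-type is a maximal consistent conjunction of (negated) atoms and any pair of elements in a data word realizes exactly one 2-type, collecting the entailed signed atoms into a single conjunction will yield the unique 2-type $\theta(x,y)$ claimed by the observation.

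For the unary atoms $\xi_a(x)$ (and symmetrically for $y$), the conjunct $\perfectalpha_\alpha(x) = \xi^{\omega(\ts_\alpha)}(x)$ names a specific letter $\xi_{a^\star} \in \Xi$, and because the letters of $\Xi$ are interpreted as a partition of the universe of every data word, this pins down the sign of every $\xi_a(x)$. For the $\leq_1$ atoms, the conjunct $x \lneq_1 y$ together with linearity of $\leq_1$ forces both $x \leq_1 y$ and $\neg(y \leq_1 x)$, which also settles equality. For the $\lesssim_2$ atoms, I would do a case analysis on $(h_\alpha, h_\beta)$ following Table~\ref{tab_perfect}; by hypothesis at least one coordinate is $\onetoprm$, so exactly one of the three listed branches applies, and each of the subformulas $x \lnsim_2 y$, $y \lnsim_2 x$, and $x \sim_2 y$ determines the sign of both $x \lesssim_2 y$ and $y \lesssim_2 x$.

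The most delicate step is the successor atoms $S_2(x,y)$ and $S_2(y,x)$, since Table~\ref{tab_perfect} assigns $\mathit{True}$ in the "Otherwise" branch (when $h_\alpha = h_\beta = \onetoprm$). Here I would appeal to the axiom that $S_2$ is the successor induced by $\lesssim_2$. When one of the layers is $\twotoprm$ or $\restrm$, the explicit table entry fixes one direction of $S_2$; the corresponding strict preorder inequality already forced in the $\lesssim_2$ analysis rules out $S_2$ in the opposite direction, because no element can be its own $\lesssim_2$-successor across strictly separated classes. In the "Otherwise" branch, $x \sim_2 y$ holds; since the presence of $z$ with $x \lnsim_2 z \lnsim_2 y$ would contradict $y \lesssim_2 x$ by transitivity, no such $z$ exists, so both $S_2(x,y)$ and $S_2(y,x)$ are entailed. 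Once every atom has been assigned a sign by the above case analysis, I take $\theta(x,y)$ to be the conjunction of these signed atoms; by construction it is a 2-type, and $\perfect_{\alpha,\beta}(x,y) \equiv_{\dat{\Xi}} \theta(x,y)$ over data words, completing the argument.
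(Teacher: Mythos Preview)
Your approach is essentially the same as the paper's: show that over data words $\perfect_{\alpha,\beta}(x,y)$ determines the sign of every atom, and hence is equivalent to a single 2-type. The paper only states this and illustrates it with the case $h_\alpha=h_\beta=\onetoprm$; your case analysis is more thorough and is correct. One small remark: in the $(\onetoprm,\onetoprm)$ case you conclude that $S_2(x,y)$ and $S_2(y,x)$ are \emph{entailed}, whereas the paper's illustrative sentence asserts $\neg S_2(x,y)$ and $\neg S_2(y,x)$; your conclusion is the one consistent with the paper's stated definition of $S_2$ (successor of a preorder, where $x\sim_2 y$ forces $S_2(x,y)$), and in any event the observation only requires that \emph{some} sign is forced, which both readings deliver.
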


\proofperfectwotype

\subsection{Proof of Lemma~\ref{lem_uni_task}}
\proofunitask

\subsubsection{A syntactic representation of consecutive extremal strings}
\label{subsec_syntactic}

The extremal string $\rcon{r}{g}{s^0}$ simulates the extension of a task word $\cT^0$ whose extremal string is $s^0$ by adding 
elements with a new maximal data value. 
The letters of these elements are determined by $r$ and their placement in the linear order of $\cT^0$
is determined by $g$. 

We introduce some notation. 
\[
\begin{array}{llll}
\positions_{h}(w) &=& \{\ell \in [\length{w}] \mid w(\ell) \in \Gamma_h\}	\\
\positions_{<\onetoprm}(w) &=& \{\ell \in [\length{w}] \mid w(\ell) \in \Gammattop \cup \Gammarest\}
\end{array}
\]

\begin{definition}[$\rcon{r}{g}{s^0}$]
\label{def:rcon}
\label{def:downarrow}
Let $s \in \Gamma^\star$ . We denote by $\da{s}$ the string that is obtained from $s$ by substituting
 letters of the form $(\otop,\ts)$ with $(\ttop,\ts)$ and
letters of the form $(\ttop,\ts)$  or $(\rest,\ts)$ with $(\rest,\ts)$. 
Let $s^0 \in \EXT(\Gamma)$, $r \in \Gammatop^\star \cap \PGamma^\star$, and
let $g:[n_r] \to [n_r + n_{s^0}]$ be a strictly monotone function. 
Then let $s^1 = r \shuffle_g \da{s^0}$, 
where $u \shuffle_g v$ denotes the shuffle of $u = w(i_1)\cdots w(i_{\length{u}})$ and $v$ associated to the function $g$ that has $g(i) = i_j$ for all $1 \leq j \leq \length{u}$, see Definition~\ref{shuffle-def}.

We now define the string $\rcon{r}{g}{s^0}$, which is obtained from 
$s^1$ substituting every letter
$(\hlayer,\ts)$ at position $\ell \in [n_{s^1}]$
with a letter $(\hlayer,\ts')$. 
Essentially, this replacement updates the tasks to take into account the new elements
  described by $r$ (and whose position is given by $g$). 
The letter $(\hlayer,\ts')$ is 
 such that 
$\omega(\ts)=\omega(\ts')$ and,
for every $\theta \in \omega(\ts)$,
we have $C_\theta\in \ts'$ if $\theta$ is completed, that is, if any of the following hold
\begin{enumerate} 
 \item $\theta$ was already completed: $C_\theta \in \ts$. 
 \item The necessary witness $\ell_2$ for $\ell$ was found  to the right:
 $\theta \models_{\DW(\Xi)} x \leq_1 y$ and 
 there is
$\ell_2 \in [n_{s^1}]$ such that
$\ell\leq \ell_2$, 
either $\ell \in \positions_{\onetoprm}(s^1)$
or $\ell_2 \in \positions_{\onetoprm}(s^1)$, and 
$\perfect_{s^1(\ell),s^1(\ell_2)}(x,y) \equiv_{\dat{\Xi}} \theta(x,y)$.
\item The necessary witness $\ell_2$ for $\ell$ was found  to the left:
 $\theta \models_{\DW(\Xi)} y <_1 x$ and 
there is
$\ell_1 \in [n_{s^1}]$ such that $\ell_1 < \ell$, 
either $\ell_1 \in \positions_{\onetoprm}(s^1)$
or $\ell \in \positions_{\onetoprm}(s^1)$, and 
$\perfect_{s^1(\ell_1),s^1(\ell)}(y,x) \equiv_{\dat{\Xi}} \theta(x,y)$.
\end{enumerate}
Otherwise, since $\theta$ was not completed, it remains as a promised task and
we have $P_\theta \in \ts'$. 
\end{definition}

The pair $(s^0,\rcon{r}{g}{s^0})$ of extremal strings 
is consecutive.

\begin{lemma}
\label{lem:syntactic_to_consec}
Let $\cT_0$ be a $\cD_0$-task word,
$s^0 = \ext(\cT_0)$, 
$r \in \Gammatop^\star \cap \PGamma^\star$, and
let $g:[n_r] \to [n_r + n_{s^0}]$ be a strictly monotone function. 
Then $s^0$ and $\ext(\rcon{r}{g}{s^0})$ are consecutive. 
\end{lemma}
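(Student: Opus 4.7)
The plan is to prove consecutiveness by explicitly exhibiting a task word $\cT$ whose trimming yields $s^0$ and whose own extremal string coincides with $\ext(\rcon{r}{g}{s^0})$, then invoking Lemma~\ref{lem:consec_to_syntactic} to close the loop. In other words, rather than attempt a direct syntactic argument on the strings, I will construct the semantic witness and reduce to the already-proven reverse lemma.

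Concretely, I would first build a data word $\cD$ over $\Xi$ by adjoining to $\cD_0$ a set of $n_r$ fresh elements (indexed by $[n_r]$) that form a brand new top $\sim_2$-class, strictly above every element of $\cD_0$, ordered in $\leq_1$ among themselves by the natural order on $[n_r]$. Their 1-types are read off $r$: the $\ell$-th new element is given 1-type $\xi^{\omega(\ts)}$ where $r(\ell)=(\onetoprm,\ts)$. The interleaving with the old elements in $\leq_1$ is determined by $g$ using the extremal anchors of $\cT_0$: for each $d\in D_0$, I place the new elements relative to $d$ by locating the largest extremal element $d'\in\extElem(\cT_0)$ with $d'\leq_1 d$, identifying its position $\ell'$ in $s^0$, and then using the relation $\bar g(\ell')< g(\ell_1)$ (with $\bar g$ the embedding accompanying $r\shuffle_g \da{s^0}$) to decide whether a new element $\ell_1\in[n_r]$ sits to the left or right of $d$. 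I then pick a $\cD$-task word $\cT$ that restricts to $\cT_0$ on $D_0$ in the sense that the realized witness type sets $\omega(\ts)$ agree; such a $\cT$ exists and is essentially forced, since the $C$-versus-$P$ choice for each task is dictated by $\cD$.

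Having set up $\cT$, two facts remain. First, $\ext(\ttrim{\cT}{1})=s^0$: by construction $\ttrim{\cT}{1}$ removes precisely the new top-layer elements, leaving a $\trim{\cD}{1}$-task word whose $\omega$-signature equals that of $\cT_0$, so by the uniqueness result of Lemma~\ref{lemma_trim_omega_compat} it coincides with $\ttrim{\cT_0}{0}=\cT_0$ on the structural data needed to compute the extremal string, whence $\ext(\ttrim{\cT}{1})=\ext(\cT_0)=s^0$. Second, $\ext(\cT)=\ext(\rcon{r}{g}{s^0})$: here I feed $\cT$ into Lemma~\ref{lem:consec_to_syntactic}, which manufactures a pair $(r^\ast,g^\ast)$ such that $\ext(\cT)=\ext(\rcon{r^\ast}{g^\ast}{s^0})$ with $g^\ast=\embeddingOp{\dataAbs}{\cT|_{X_\onetoprm}}\circ\embeddingOp{\dataAbs}{\cT|_X}^{-1}$; by the way I arranged the new elements in $\leq_1$ using $g$ and the anchoring via $\extElem(\cT_0)$, this $g^\ast$ equals our given $g$, and similarly $r^\ast$ equals $r$ because the 1-types and top-layer tasks of the new elements were chosen from $r$.

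The main obstacle I anticipate is the bookkeeping needed to certify that the intrinsic $(r^\ast,g^\ast)$ returned by Lemma~\ref{lem:consec_to_syntactic} really coincides with the given $(r,g)$ rather than some compatible variant. This is a matching-of-embeddings check: the positions $X_\onetoprm$ of the new top-layer elements inside $\dataAbs(\cT)$ must be exactly the image of $[n_r]$ under $g$ (composed with the extremal embedding), and the strictly monotone interleaving encoded by $\bar g$ must match the one implicit in $\embeddingOp{\dataAbs}{\cT|_X}$. The argument is routine once one unfolds the definitions of $\extPos$, $\dataAbs$, and the partial embeddings from Appendix~\ref{app_embeddings}, but care is required because $\extElem(\cT_0)$ is used as the anchor set for both the $\leq_1$-interleaving in the construction of $\cD$ and the identification of $X$ in Lemma~\ref{lem:consec_to_syntactic}; aligning these two usages is what makes the reduction go through cleanly.
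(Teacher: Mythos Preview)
Your proposal is correct and follows essentially the same approach as the paper: construct $\cD$ by adjoining $n_r$ fresh top-layer elements with 1-types from $r$ and $\leq_1$-interleaving governed by $g$ via the extremal anchors of $\cT_0$, take a compatible $\cD$-task word $\cT$, observe that $\ttrim{\cT}{1}$ recovers $\cT_0$ (hence $s^0$), and then invoke Lemma~\ref{lem:consec_to_syntactic} to identify $\ext(\cT)$ with $\ext(\rcon{r}{g}{s^0})$ after checking that the intrinsic $g^\ast$ produced there coincides with the given $g$. The paper's proof is terser on the bookkeeping you flag as the main obstacle, but the content is the same.
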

\proofsyntactictoconsec

\begin{example}
\label{ex_da}
Applying $\downarrow$ to $\ul{s}'$, 
we have \[ \da{\ul{s}'} = (\rest, \ul{ts}_3^P)(\rest, \ul{ts}_3^P)(\ttop,
\ul{ts}_1^P) (\rest, \ul{ts}_3^P) \] 
Let $\ul{r} = (\otop, \ul{ts}_2^P)(\otop, \ul{ts}_4^P)$, and let $\ul{g}: [2] \rightarrow [2+4]$ be $\ul{g}(1) = 4$ and $\ul{g}(2) = 6$.
Then  
\[
\begin{array}{lll}
\rcon{\ul{r}}{\ul{g}}{\ul{s}'}  &=& (\rest, \ul{ts}_3^C)(\rest, \ul{ts}_3^C)(\ttop, \ul{ts}_1^C) \\  && (\otop, \ul{ts}_2^C)(\rest, \ul{ts}_3^C)(\otop, \ul{ts}_4^C).	
\end{array}
\]
Observe that $\ext(\rcon{\ul{r}}{\ul{g}}{\ul{s}'}) = \ul{s}$, and recall that $(\ul{s}',\ul{s})$ are consecutive.
\end{example}

If we have the $\cD$-task word $\cT$ at hand, then we can obtain the 
string $r$ and the function $g$ for $s$ and $s^0$. 
In fact, obtaining $r$ is quite easy: it suffices to look at the elements with maximal
data value,  and substitute every letter $(\otop,\ts)$ with 
$(\otop,\ts_P)$ such that $\ts_P = \{P_\theta \mid \theta \in  \omega(\ts)\}$.  

\begin{lemma}
\label{lem:consec_to_syntactic}
Given a $\cD$-task word $\cT$ and two extremal strings $s$ and $s^0$ such that
$s = \ext(\cT)$ and $s^0 = \ext(\ttrim{\cT}{1})$, we can effectively obtain an 
$r \in \Gammatop^\star \cap \PGamma^\star$ and $g: [|s^0|] \rightarrow [|s^0| + |s|]$
such that $s=\ext(\rcon{r}{g}{s^0})$.
\end{lemma}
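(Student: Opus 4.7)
The plan is to extract $r$ and $g$ directly from $\cT$, mirroring the construction that is implicit in the proof of Lemma~\ref{lem:syntactic_to_consec}. First, let $w = \dataAbs(\cT)$, and let $X_{\onetoprm} = \embeddingOp{\dataAbs}{\cT}(\positions_\onetoprm(w))$ be the set of top-layer elements of $\cT$, let $X' = \extElem(\ttrim{\cT}{1})$ be the set of extremal elements of the trimming, and set $X = X_{\onetoprm} \cup X'$. Define $r \in \Gammatop^\star \cap \PGamma^\star$ from $w_{X_\onetoprm} = \dataAbs(\cT|_{X_\onetoprm})$ by substituting every letter $(\otop,\ts)$ with $(\otop,\ts_P)$, where $\ts_P = \{P_\theta \mid \theta \in \omega(\ts)\}$. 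Define
\[
g \;=\; \embeddingOp{\dataAbs}{\cT|_{X_\onetoprm}} \circ \embeddingOp{\dataAbs}{\cT|_X}^{-1},
\]
restricted to the domain on which both maps are defined; the composition is strictly monotone since both factors are order-preserving. All of these objects are computable from $\cT$.

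Next I would argue that $g$ has the right type and that $r$ has the right length. Since $|w_{X_\onetoprm}| = |r|$ and $|\dataAbs(\ttrim{\cT}{1})|$ restricted to $X'$ has length $|s^0|$ (via Lemma~\ref{cl_extPos_cardinality}, which ensures that the extremal elements of $\ttrim{\cT}{1}$ are captured by $X'$), the codomain of $g$ has the size $|r| + |s^0|$ demanded by Definition~\ref{def:rcon}. I would then observe that $w_X = w_{X_\onetoprm} \shuffle_g w_{X'}$ and that $\ext(w_X) = s$, again appealing to Lemma~\ref{cl_extPos_cardinality} to conclude $\extElem(\cT) \subseteq X$.

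The main technical step — and the one I expect to be the key obstacle — is to prove the equality $w_X = \rcon{r}{g}{s^0}$ at the level of tasks. Both strings agree by construction on the layer component and on the witness type set component at each position. What must be verified is that, for every position $\ell$ and every $\theta \in \omega(\ts)$, the letter has the task $C_\theta$ in $w_X$ precisely when the completion clauses of Definition~\ref{def:rcon} fire in $\rcon{r}{g}{s^0}$. For the easy direction, if $C_\theta$ already holds in $s^1$ (the pre-update shuffle) then it is inherited. Otherwise, I would use Lemma~\ref{cl_2} together with Observation~\ref{lem_perfect_2type} to translate: the existence of a witness $d' \in \cD$ for $\theta$ at some $d \in X$ whose value is maximal implies that a $\perfect_{\cdot,\cdot}$ formula equivalent to $\theta$ holds at the matching positions of $s^1$, thereby triggering the syntactic completion condition. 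Conversely, if the syntactic condition fires, Lemma~\ref{cl_2} gives back an actual witness in $\cD$, so the semantic task $C_\theta$ holds in $w_X$.

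Once $w_X = \rcon{r}{g}{s^0}$ is established, taking extremal strings on both sides yields
\[
s \;=\; \ext(w_X) \;=\; \ext(\rcon{r}{g}{s^0}),
\]
which is the claim. The ``effectively obtain'' part is immediate since $X_\onetoprm$, $X'$, $X$ and the embeddings are all computable from $\cT$. The chief delicate point throughout is handling witnesses whose partner lives outside $X$: here I would invoke Lemma~\ref{lem:extemal-witnesses} (as in the proof of Lemma~\ref{lem:consec_to_syntactic}) to reduce, without loss of generality, to witnesses within $X$.
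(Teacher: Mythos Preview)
Your proposal is essentially identical to the paper's own proof: you define $X_{\onetoprm}$, $X'$, $X$, $r$, and $g$ in exactly the same way, reduce to showing $w_X = \rcon{r}{g}{s^0}$, and verify the task components position-by-position using Lemma~\ref{cl_2}, Observation~\ref{lem_perfect_2type}, and Lemma~\ref{lem:extemal-witnesses}. The paper carries out the case analysis on $C_\theta \in \ts_a$ versus $C_\theta \in \ts_b$ in slightly more detail, but the structure and the key lemmas invoked are the same.
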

\proofconsectosyntactic

\new{Unfortunately, we cannot keep the task word. This is not a problem,
  because we can show that we do not need the task word to keep track of the relevant 
 information on how positions evolve along extremal strings. 
 Let $s,s'$ be consecutive extremal strings, and let $\cT$ be a task word such
 that  $s = \ext(\ttrim{\cT}{1})$ and $s' = \ext(\cT)$. Then we can define a  
 \emph{partial embedding from $s$ to $s'$ via $\cT$}, which maps each
 position in $s$ that corresponds to an \emph{extremal element} in $\cT$, to its
 position in $s'$, where the extremal elements of $\cT$ are the ones that
 correspond to positions in $\ext(\cT)$. The precise definition is given in
 App.~\ref{proof_prop_f_extremal}. 
 Crucially, this embedding is independent of the specific $\cT$, and can be
 effectively constructed.} 

\begin{lemma}
\label{lem_check_consec_expspace_copy}
Given two extremal strings $s,s'$, whether they are consecutive can be decided
in $\EXPSPACE$.  
If they are, then we can also obtain in $\EXPSPACE$ a partial embedding
$\notOneTopEmbeddingNoArg{s}{s'}$  from
positions in $s$ to positions in $s'$ that coincides with the 
partial embedding from $s$ to $s'$ via
  $\cT$ for every task word $\cT$ such that 
 $s' = \ext(\cT)$ and $s = \ext(\ttrim{\cT}{1})$. 
\end{lemma}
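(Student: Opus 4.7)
The plan is to reduce the consecutiveness check to the syntactic characterization established by Lemmas~\ref{lem:consec_to_syntactic} and~\ref{lem:syntactic_to_consec}: the pair $(s,s')$ is consecutive if and only if there exist a string $r \in \Gammatop^\star \cap \PGamma^\star$ and a strictly monotone function $g : [n_r] \to [n_r + n_s]$ such that $s' = \ext(\rcon{r}{g}{s})$. The decision procedure then enumerates candidate pairs $(r,g)$, builds $\rcon{r}{g}{s}$ according to Definition~\ref{def:rcon}, extracts its extremal string using Definition~\ref{def_function_ext}, and compares the result letter-by-letter with $s'$.

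The main obstacle is bounding the search space. Since $|\Gamma|$ is exponential in $|\psi|$, a single letter of $r$ already has exponential description size, so only a polynomial bound on $|r|$ is affordable. I plan to show that whenever consecutiveness holds, it is witnessed by some $r$ with $|r| \leq 7|\Theta_{\exists}|$, matching the bound on extremal string length used throughout the paper. The idea is that each letter of $r$ represents an element newly placed in the top layer; any letter whose image under the composition of $\shuffle_g$, the task-update, and $\ext$ does not end up in $s'$ is irrelevant and can be deleted, with a matching contraction of $g$, without affecting $\ext(\rcon{r}{g}{s})$. Iterating this deletion yields a short witness, and this pruning argument is the technically delicate step. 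With the bound in place, each candidate $(r,g)$ has exponential description size, and $\ext(\rcon{r}{g}{s})$ can be computed and compared to $s'$ within exponential space, so the enumeration runs in $\EXPSPACE$.

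For the partial embedding $\notOneTopEmbeddingNoArg{s}{s'}$, once a valid $(r,g)$ has been fixed I will define it directly in syntactic terms. Let $\bar{g} : [n_{\da{s}}] \to [n_r + n_s]$ be the complementary interleaving map of $g$, so that position $\ell$ of $\da{s}$ (equivalently, of $s$) corresponds to position $\bar{g}(\ell)$ of $s^1 = r \shuffle_g \da{s}$. Define $\notOneTopEmbeddingNoArg{s}{s'}$ to send $\ell$ to the position of $\bar{g}(\ell)$ inside $s' = \ext(\rcon{r}{g}{s})$ whenever $\bar{g}(\ell)$ is extremal in $\rcon{r}{g}{s}$, and to leave it undefined otherwise. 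The fact that this coincides with the partial embedding from $s$ to $s'$ via any witnessing $\cT$ follows from the identity $g = \embeddingOp{\dataAbs}{\cT|_{X_\onetoprm}} \circ \embeddingOp{\dataAbs}{\cT|_{X}}^{-1}$ exhibited in the proof of Lemma~\ref{lem:consec_to_syntactic}: tracking an element of $\ttrim{\cT}{1}$ through $\cT$ to its position in $s'$ yields exactly the map above, and in particular depends only on the pair $(r,g)$ and not on the specific $\cT$. All maps involved admit exponential-size representations, so the construction stays within $\EXPSPACE$.
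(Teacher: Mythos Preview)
Your approach is essentially the same as the paper's: reduce to the syntactic characterization via Lemmas~\ref{lem:consec_to_syntactic} and~\ref{lem:syntactic_to_consec}, bound $|r| \leq 7|\Theta_\exists|$ by pruning non-extremal top-layer elements (the paper does this at the task-word level via Lemma~\ref{lem:delete_nonext}, but your string-level deletion is the same idea), and enumerate.

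One point deserves more care. Your syntactic construction of the partial embedding via $\bar{g}$ depends on the particular pair $(r,g)$ your enumeration happens to find, and your justification (``depends only on the pair $(r,g)$ and not on the specific $\cT$'') only shows agreement with those task words $\cT$ whose Lemma~\ref{lem:consec_to_syntactic} data is this same $(r,g)$. But distinct $\cT$'s witnessing the same $(s,s')$ can yield distinct $(r,g)$'s. To conclude that your map matches the embedding via \emph{every} witnessing $\cT$, you still need the independence-of-$\cT$ result (Lemma~\ref{lem_f_extremal} in the paper): once that is in hand, your $(r,g)$ produces, via Lemma~\ref{lem:syntactic_to_consec}, some witnessing $\cT$, your map agrees with the embedding via that $\cT$, and Lemma~\ref{lem_f_extremal} transports agreement to all others. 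The paper's own proof of this lemma is silent on computing the embedding and relies on the surrounding infrastructure in the same way, so this is not a deficiency relative to the paper---just a step you should cite rather than elide.
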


\proofcheckconsec

\subsection{Witnesses of task completion}
\label{subsec_def_tasked}

Here we prove a few lemmas which provide witnesses to the completion of tasks based on existing witnesses to the same or other tasks.

\begin{lemma}
\label{lemma_twotop_type}
Let $\cD$ be a data word and let $d,d'$ be elements with the same 1-type such that $\cD \models d \leq_1 d' \wedge d \sim_2 d'$.
Let $\theta \in \Theta_\exists$.
\begin{enumerate}
	\item 
	If $\theta(x,y) \models x \leq_1 y$ and $\cD \models \exists y \ \theta(d',y)$, then $\cD \models \exists y \ \theta(d,y)$.
	\item
	If $\theta(x,y) \models y \leq_1 x$ and $\cD \models \exists y \ \theta(d,y)$,\hphantom{'} then $\cD \models \exists y \ \theta(d',y)$.
\end{enumerate}
\end{lemma}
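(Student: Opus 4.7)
The plan is to show that when $d$ and $d'$ share their $1$-type and are $\sim_2$-equivalent, the $2$-type of $(d,e)$ and the $2$-type of $(d',e)$ agree on every atom built from $\lesssim_2$, $S_2$, and the unary predicates; only the $\leq_1$-atoms can differ, and a single application of transitivity of $\leq_1$ takes care of them.

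Fix any element $e$ of $\cD$. Since $d$ and $d'$ realize the same $1$-type, $(d,e)$ and $(d',e)$ agree on all unary atoms. Since $\lesssim_2$ is a total preorder and $d \sim_2 d'$, we have $d \lesssim_2 e$ if and only if $d' \lesssim_2 e$, and $e \lesssim_2 d$ if and only if $e \lesssim_2 d'$; because the induced successor $S_2$ is definable from $\lesssim_2$ alone, the same equivalence holds for the atoms $S_2(d,e)$, $S_2(e,d)$ and their primed counterparts. Hence the $2$-types of $(d,e)$ and of $(d',e)$ can differ only on the atoms $x \leq_1 y$ and $y \leq_1 x$.

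For part~(1), let $e$ be such that $\cD \models \theta(d',e)$. The hypothesis $\theta \models x \leq_1 y$ gives $d' \leq_1 e$, and together with $d \leq_1 d'$, transitivity of $\leq_1$ yields $d \leq_1 e$. Combined with the invariance of the remaining atoms, this implies $\cD \models \theta(d,e)$, so $d$ also has a $\theta$-witness. Part~(2) is symmetric: from $\cD \models \theta(d,e)$ and $\theta \models y \leq_1 x$ we obtain $e \leq_1 d \leq_1 d'$, hence $\cD \models \theta(d',e)$.

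No real obstacle is anticipated. The only point that merits a moment's thought is the $\sim_2$-invariance of the $\lesssim_2$- and $S_2$-atoms, and this is immediate from $\lesssim_2$ being a total preorder and $S_2$ being its induced successor.
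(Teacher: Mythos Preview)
Your proof is correct and follows essentially the same approach as the paper: reuse the witness $e$ for $d'$ (the paper calls it $d''$) as a witness for $d$, observing that $(d,e)$ and $(d',e)$ have the same $2$-type once the $\leq_1$-atom is handled by transitivity. Your write-up is in fact slightly more explicit than the paper's about why the $\lesssim_2$- and $S_2$-atoms are invariant under replacing $d'$ by a $\sim_2$-equivalent element of the same $1$-type.
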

\begin{proof}
\begin{enumerate}
	\item 
	Since $\cD \models d \leq_1 d' \wedge d \sim_2 d'$, and $d$, $d'$ have the same 1-type,
	for any element $d''$ such that $\cD \models d' \leq_1 d''$, the 2-type of $(d',d'')$ is the same as
	the 2-type of $(d,d'')$. Since $\theta(x,y) \models x \leq_1 y$, if $\cD \models \exists y \ \theta(d',y)$ there exists some $d'' \geq_1 d'$ such that
	$\cD \models \theta(d',d'')$. Therefore also $\cD \models \theta(d,d'')$ and $\cD \models \exists y \ \theta(d,y)$.
	\item
	Analogous to the previous case.
\end{enumerate}
\end{proof}

\begin{lemma}
\label{lemma_cor_twotop_type}
Let $\cT$ be a $\cD$-task word and let $d,d'$ 
be elements such that $\cD \models d \leq_1 d' \wedge d \sim_2 d'$.
Let $\cT \models \ts_d(d)$
and $\cT \models \ts_{d'}(d')$
such that $\ts_{d}$ and $\ts_{d'}$ realize the same witness type set.
Let $\theta \in \Theta_\exists$.

\begin{enumerate}
	\item 
	If $\theta(x,y) \models x \leq_1 y$ and $C_\theta \in \ts_{d'}$, then $C_\theta \in \ts_{d}$.
	\item
	If $\theta(x,y) \models y \leq_1 x$ and $C_\theta \in \ts_{d}$, then $C_\theta \in \ts_{d'}$.
\end{enumerate}
\end{lemma}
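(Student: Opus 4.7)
The plan is to reduce the claim directly to Lemma~\ref{lemma_twotop_type} by unpacking the definition of a task word, which links membership of $C_\theta$ in a task set to the existence of an appropriate witness in $\cD$.

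First, I would check the premise needed for Lemma~\ref{lemma_twotop_type}: that $d$ and $d'$ have the same $1$-type over $\dvoc{\Xi}$. Since $\cT$ is a $\cD$-task word, Condition~(1) of Definition~\ref{def_tasked_word} gives $\cD \models \xi^{\omega(\ts_d)}(d)$ and $\cD \models \xi^{\omega(\ts_{d'})}(d')$. By assumption $\omega(\ts_d) = \omega(\ts_{d'})$, so $\xi^{\omega(\ts_d)} = \xi^{\omega(\ts_{d'})}$. Because the alphabet $\Xi$ partitions the universe and the $1$-type of an element over $\dvoc{\Xi}$ is determined entirely by its $\Xi$-letter, $d$ and $d'$ indeed have the same $1$-type.

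For part~(1), suppose $\theta(x,y) \models x \leq_1 y$ and $C_\theta \in \ts_{d'}$. Then $\theta \in \omega(\ts_{d'}) = \omega(\ts_d)$, and by Condition~(2) of Definition~\ref{def_tasked_word}, $\cD \models \exists y\, \theta(d',y)$. Applying Lemma~\ref{lemma_twotop_type}(1) (which is available since $d$, $d'$ have the same $1$-type, $\cD \models d \leq_1 d' \wedge d \sim_2 d'$, and $\theta \models x \leq_1 y$), we obtain $\cD \models \exists y\, \theta(d,y)$. Since $\theta \in \omega(\ts_d)$, Condition~(2) of Definition~\ref{def_tasked_word} applied in the other direction yields $C_\theta \in \ts_d$. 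Part~(2) is entirely symmetric, using Lemma~\ref{lemma_twotop_type}(2) in place of (1).

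The argument is essentially bookkeeping, so there is no real obstacle; the only thing worth being careful about is making sure we may invoke Lemma~\ref{lemma_twotop_type}, which requires $d$ and $d'$ to have the same $1$-type. Once that is established from the equality of witness type sets via the $\xi^\omega$ letter, the two directions follow mechanically.
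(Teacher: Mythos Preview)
Your proposal is correct and follows exactly the same approach as the paper: derive that $d$ and $d'$ have the same $1$-type from the equality of witness type sets, then invoke Lemma~\ref{lemma_twotop_type} and the definition of task words. The paper's proof is a two-sentence sketch of precisely what you spelled out.
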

\begin{proof}
If $\ts_{d}$ and $\ts_{d'}$ realize the same witness type set, then $d$ and $d'$ have the same 1-type in $\cD$. 
Then the claim follows from Lemma~\ref{lemma_twotop_type} and the definition of task words.
\end{proof}

\begin{lemma}
\label{lemma_rest_theta}
Let $\cD$ be a data word and let $d,d'$ be elements with the same 1-type such that $\cD \models d \leq_1 d'$,
$\val{\cD}{d} \leq \maxdv_{\cD}-2$, and 
$\val{\cD}{d'} \leq \maxdv_{\cD}-2$. 
Let $\theta \in \Theta_\exists$ be such that $\trim{\cD}{1} \not\models \exists y \ \theta(d,y)$ 
and $\trim{\cD}{1} \not\models \exists y \ \theta(d',y)$.

\begin{enumerate}
	\item 
	If $\theta(x,y) \models x \leq_1 y$ and $\cD \models \exists y \ \theta(d',y)$, then $\cD \models \exists y \ \theta(d,y)$.
	\item
	If $\theta(x,y) \models y \leq_1 x$ and $\cD \models \exists y \ \theta(d,y)$, \, then $\cD \models \exists y \ \theta(d',y)$.
\end{enumerate}
\end{lemma}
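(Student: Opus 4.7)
The two cases are symmetric, so I focus on Case~1 and only sketch the adaptation for Case~2. The argument parallels that of Lemma~\ref{lemma_twotop_type}: take the existing witness for one of $d$, $d'$, observe that it must live in the top layer, and then transfer it to the other element by showing that the 2-type is preserved. The key is that $d$ and $d'$ share a 1-type and occupy layers whose data-value gap to the top layer is at least two, which forces all relevant binary atoms to coincide.

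\emph{Case 1.} Let $d''$ be such that $\cD \models \theta(d',d'')$. Since $\theta \models x \leq_1 y$, we have $d' \leq_1 d''$. From the hypothesis $\trim{\cD}{1} \not\models \exists y\,\theta(d',y)$, the witness $d''$ cannot belong to $\trim{\cD}{1}$, hence $\val{\cD}{d''} = \maxdv_\cD$. I will show $\cD \models \theta(d,d'')$, from which $\cD \models \exists y\,\theta(d,y)$ follows immediately. To this end it suffices to check that the 2-type of $(d,d'')$ equals the 2-type of $(d',d'')$, which by assumption is $\theta$. The 1-types agree, because $d$ and $d'$ share a 1-type and $d''$ is unchanged. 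The $\leq_1$ atoms agree because $d \leq_1 d' \leq_1 d''$ gives $d \leq_1 d''$, and $d <_1 d''$ strictly since $\val{\cD}{d} < \val{\cD}{d''}$ implies $d \neq d''$. For the preorder, both $\val{\cD}{d}$ and $\val{\cD}{d'}$ are at most $\maxdv_\cD-2$ while $\val{\cD}{d''} = \maxdv_\cD$, so $d \lnsim_2 d''$ and $d' \lnsim_2 d''$. Finally, the value gap of at least two rules out $S_2(d,d'')$, $S_2(d'',d)$, $S_2(d',d'')$ and $S_2(d'',d')$. Thus the two 2-types coincide and the claim follows.

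\emph{Case 2} is handled symmetrically. Pick a witness $d'''$ with $\cD \models \theta(d,d''')$; from $\theta \models y \leq_1 x$ we get $d''' \leq_1 d \leq_1 d'$, and the hypothesis $\trim{\cD}{1} \not\models \exists y\,\theta(d,y)$ again forces $\val{\cD}{d'''} = \maxdv_\cD$. The very same atom-by-atom analysis, now applied to the pairs $(d,d''')$ and $(d',d''')$, shows that their 2-types coincide, yielding $\cD \models \theta(d',d''')$. I do not foresee any real obstacle: the only delicate point is to verify each binary atom of the 2-type, which is routine once one observes that the value gap of at least two to the top layer uniformly determines the $\lesssim_2$ and $S_2$ relations.
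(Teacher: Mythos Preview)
Your proof is correct and follows essentially the same approach as the paper's: take the witness $d''$, use the hypothesis $\trim{\cD}{1}\not\models\exists y\,\theta(d',y)$ to force $\val{\cD}{d''}=\maxdv_\cD$, and then verify atom by atom that the 2-types of $(d,d'')$ and $(d',d'')$ coincide. Your write-up is in fact slightly more explicit than the paper's in checking each binary atom, but the argument is the same.
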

\begin{proof}
\begin{enumerate}
	\item 
	Let $\cD \models \exists y \ \theta(d',y)$, and denote by $d''$ the element such that $\cD \models \theta(d',d'')$.
	Since $\trim{\cD}{1} \not\models \exists y \ \theta(d',y)$, we conclude that $\val{\cD}{d''}=\maxdv_\cD$.
	Since both and $d$ and $d'$ have data value at most $\maxdv_\cD-2$, 
	it holds that $\cD \models d \lnsim_2 d'' \wedge \neg S_2(d,d'')$ and $\cD \models d' \lnsim_2 d'' \wedge \neg S_2(d',d'')$.
	Since $\cD \models d \leq_1 d'$ and $\theta(x,y) \models x \leq_1 y$, also $\cD \models d \leq_1 d''$, and all in all, 
	the 2-type of $(d',d'')$ is the same the 2-type of $(d,d'')$. Therefore, $\cD \models \theta(d,d'')$, implying $\cD \models \exists y \ \theta(d,y)$.
	\item
	Analogous to the previous case.
\end{enumerate}

\end{proof}

\begin{lemma}
\label{lemma_cor_rest_theta}
Let $\cT$ be a $\cD$-task word and let $d,d'$ be elements such that $\cD \models d \leq_1 d'$,
$\val{\cD}{d} \leq \maxdv_{\cD}-2$, and 
$\val{\cD}{d'} \leq \maxdv_{\cD}-2$. 
Let $\ttrim{\cT}{1} \models \ts_1(d)$
and $\ttrim{\cT}{1} \models \ts_1'(d')$ such that
$\omega(\ts_1)=\omega(\ts_1')$.
Let $\theta \in \Theta_\exists$ such that $P_\theta \in \ts_1\cap \ts_1'$.
Finally, let $\cT \models \ts(d)$ and $\cT \models \ts'(d')$.
\begin{enumerate}
	\item 
	If $\theta(x,y) \models x \leq_1 y$ and $C_\theta \in \ts'$, then $C_\theta \in \ts$.
	\item
	If $\theta(x,y) \models y \leq_1 x$ and $C_\theta \in \ts$, then $C_\theta \in \ts'$.
\end{enumerate}

\end{lemma}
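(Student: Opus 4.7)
The plan is to reduce this task-word statement to the corresponding semantic statement Lemma~\ref{lemma_rest_theta}, exactly mirroring the way Lemma~\ref{lemma_cor_twotop_type} is derived from Lemma~\ref{lemma_twotop_type}. The main ingredients are (i) that $d$ and $d'$ have the same 1-type in $\cD$, (ii) that $\omega(\ts)=\omega(\ts_1)$ and $\omega(\ts')=\omega(\ts_1')$ because $\ttrim{\cT}{1}$ is a trimming of $\cT$, and (iii) that the $P_\theta$ markers in $\ts_1,\ts_1'$ translate into the non-existence hypotheses needed by Lemma~\ref{lemma_rest_theta}.

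First I would observe that since $\ts_1$ and $\ts_1'$ realize the same witness type set, we have $\xi^{\omega(\ts_1)}=\xi^{\omega(\ts_1')}$; then property~(1) of Definition~\ref{def_tasked_word} applied to $\ttrim{\cT}{1}$ gives $\trim{\cD}{1}\models \xi^{\omega(\ts_1)}(d)\wedge\xi^{\omega(\ts_1)}(d')$, so $d$ and $d'$ have the same 1-type in $\trim{\cD}{1}$, and therefore in $\cD$. Next, by the definition of trimming (Definition~\ref{def_trimming}, uniqueness by Lemma~\ref{lemma_unique_tasked_trimming1}), $\omega(\ts)=\omega(\ts_1)$ and $\omega(\ts')=\omega(\ts_1')$, so in particular $\omega(\ts)=\omega(\ts')$ and $\theta\in\omega(\ts)\cap\omega(\ts')$, since $P_\theta\in\ts_1\cap\ts_1'$. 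Finally, property~(2) of Definition~\ref{def_tasked_word} applied to $\ttrim{\cT}{1}$ with $P_\theta \in \ts_1$ and $P_\theta\in\ts_1'$ yields $\trim{\cD}{1}\not\models \exists y\,\theta(d,y)$ and $\trim{\cD}{1}\not\models \exists y\,\theta(d',y)$, so all hypotheses of Lemma~\ref{lemma_rest_theta} are met.

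For case~1, assume $\theta(x,y)\models x\leq_1 y$ and $C_\theta\in\ts'$. Property~(2) of Definition~\ref{def_tasked_word} applied to $\cT$ gives $\cD\models\exists y\,\theta(d',y)$. Invoking Lemma~\ref{lemma_rest_theta}(1) yields $\cD\models\exists y\,\theta(d,y)$. Since $\theta\in\omega(\ts)$, the same property~(2) for $\cT$ forces $C_\theta\in\ts$. Case~2 is entirely symmetric: from $C_\theta\in\ts$ and property~(2) we get $\cD\models\exists y\,\theta(d,y)$, then apply Lemma~\ref{lemma_rest_theta}(2) to obtain $\cD\models\exists y\,\theta(d',y)$, and conclude $C_\theta\in\ts'$ from $\theta\in\omega(\ts')$ and property~(2).

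There is no real obstacle here; the proof is a bookkeeping argument that lifts the semantic lemma to the task-word world by unpacking the definition of a $\cD$-task word and the trimming relation. The only small subtlety is making sure that the equality $\omega(\ts)=\omega(\ts_1)$ (and similarly for $\ts',\ts_1'$) is explicitly invoked, so that the $\theta$ that is promised in the trimming is still part of the witness type set realized in $\cT$, and hence its presence/absence in $\ts,\ts'$ is indeed governed by property~(2) of Definition~\ref{def_tasked_word}.
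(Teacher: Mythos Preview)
Your proposal is correct and follows exactly the same approach as the paper: both reduce to Lemma~\ref{lemma_rest_theta} after extracting from the task-word data that $d,d'$ have the same 1-type and that $\trim{\cD}{1}\not\models\exists y\,\theta(d,y)$ and $\trim{\cD}{1}\not\models\exists y\,\theta(d',y)$. The paper compresses the remaining bookkeeping into the single phrase ``the claim follows from Lemma~\ref{lemma_rest_theta} and the definition of task words,'' whereas you spell out explicitly the use of the trimming relation to get $\theta\in\omega(\ts)\cap\omega(\ts')$ and the two applications of property~(2) of Definition~\ref{def_tasked_word}; this extra detail is helpful but not a different argument.
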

\begin{proof}
Since $\ts_{1}$ and $\ts_1'$ realize the same witness type set, $d$ and $d'$ have the same 1-type in $\cD$.
Since $P_\theta \in \ts_1\cap \ts_1'$, we have 
$\trim{\cD}{1} \not\models \exists y \ \theta(d,y)$ 
and $\trim{\cD}{1} \not\models \exists y \ \theta(d',y)$. 
Then the claim follows from Lemma~\ref{lemma_rest_theta} and the definition of task words.

\end{proof}

\begin{lemma}[Extremal witnesses]
\label{lem:extemal-witnesses} 
 Let $\cT$ be a $\cD$-task word. Let $d,d_0$ be elements of $\cT$ and $\theta\in\Theta_\exists$. 
 If $\cD \models \theta(d,d_0)$
 and $\maxdv_\cD\in\{\val{\cD}{d},\val{\cD}{d_0}\}$, 
 then there is an element $d' \in \extElem(\cT)$
 such that $\cD \models \theta(d,d')$. 
\end{lemma}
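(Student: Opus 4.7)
\textbf{Proof plan for Lemma~\ref{lem:extemal-witnesses}.}
The plan is to construct an extremal witness $d'$ whose layer and $1$-type coincide with those of $d_0$, and whose $\leq_1$-position relative to $d$ matches that of $d_0$; these three properties alone will determine the $2$-type of $(d,d')$ under the hypothesis that one of $d,d_0$ lies in the top layer, and that $2$-type will be exactly $\theta$.

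Set $w = \dataAbs(\cT)$, let $\ell_0$ be the position of $d_0$ in $w$, and let $\ts_0$ be the task set with $\cT \models \ts_0(d_0)$, so $w(\ell_0) = (h_0, \ts_0)$ where $h_0 \in \Layers$ is the layer of $d_0$. Write $\omega_0 = \omega(\ts_0)$ and note that all $2$-types in $\omega_0$ have the form $\theta_{a_0 b' c'}$, where $\xi_{a_0}$ is the $1$-type of $d_0$ in $\cD$. If $\Theta_\exists = \emptyset$ the lemma is vacuous; otherwise $\omega_0$ is nonempty, so I will pick any $\theta' \in \omega_0$, note that $\ell_0 \in \positions_{h_0,\theta'}(w)$, and take $\ell'$ to be $\max(\positions_{h_0,\theta'}(w))$ if $\theta \models x \leq_1 y$, and $\min(\positions_{h_0,\theta'}(w))$ otherwise. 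By definition of $\extremal_{h_0,\theta'}$, the position $\ell'$ is in $\extPos(w)$, so $d' := \embeddingOp{\dataAbs}{\cT}(\ell') \in \extElem(\cT)$.

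Three checks remain to conclude $\cD \models \theta(d,d')$. First, $w(\ell') = (h_0,\ts')$ with $\theta' \in \omega(\ts')$, so $\omega(\ts')$ is a witness type set for $\xi_{a_0}$, and hence $d'$ has $1$-type $\xi_{a_0}$, matching $d_0$. Second, by the choice of $\ell'$ we have $d \leq_1 d_0 \leq_1 d'$ (or the reverse), so the $\leq_1$-relation that $\theta$ specifies between its two arguments is preserved when $d_0$ is replaced by $d'$. Third, $d'$ lies in the same layer $h_0$ as $d_0$, and the hypothesis $\maxdv_\cD \in \{\val{\cD}{d},\val{\cD}{d_0}\}$ forces at least one of $d,d_0$ to be in $\Gammatop$; a short case analysis on the two layers of $d$ and $d_0$ (top/top, top/second-top, top/rest, and symmetrically with $d_0$ in $\Gammatop$) then shows that the $\lesssim_2$- and $S_2$-relations between $d$ and $d'$ are the same as those between $d$ and $d_0$, because once one of the two elements is known to be in the maximum $\sim_2$-class, the three-layer abstraction fully determines whether the other element is $\sim_2$-equal, strictly below by one class (i.e., an $S_2$-predecessor), or strictly below by more than one.

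The main obstacle is therefore the layer/$S_2$ case analysis in the third step: I must verify that the three-valued layer label $\Layers = \{\otop,\ttop,\restrm\}$ carries enough information to reconstruct $S_2$ between two elements whenever at least one of them is in the top layer. This is exactly the design principle behind the formulas $\perfect_{\alpha,\beta}(x,y)$ and Observation~\ref{lem_perfect_2type}, and I will reuse that observation to avoid repeating the case split explicitly: since $\perfect_{w(\ell),w(\ell_0)}$ is equivalent to a unique $2$-type whenever one of the layers is $\otop$, and since $w(\ell')$ and $w(\ell_0)$ have the same layer and the same underlying witness type set, the $2$-types of $(d,d_0)$ and $(d,d')$ must coincide, giving $\cD \models \theta(d,d')$.
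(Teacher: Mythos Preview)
Your proof is correct and follows essentially the same route as the paper's: pick an extremal position in the same layer $h_0$ as $d_0$ via some $\theta'\in\omega(\ts_0)$, choose the $\max$ or $\min$ according to the $\leq_1$-direction of $\theta$, and then argue that the shared layer and shared $1$-type force the $2$-type of $(d,d')$ to equal that of $(d,d_0)$. One small wording slip: in your last paragraph, $w(\ell')$ and $w(\ell_0)$ need not have the \emph{same witness type set}---only the same $\xi^{\omega}$ (i.e., the same $1$-type), which is exactly what $\perfect_{\alpha,\beta}$ uses and what you already established in your first check, so the argument still goes through.
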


\begin{proof}
Let $w=\dataAbs(\cT)$.  
 Let $w(\embeddingOp{\dataAbs}{\cT}^{-1}(d))=(\hlayer,\ts)$, and let
  $w(\embeddingOp{\dataAbs}{\cT}^{-1}(d_0))=(h_0,\ts_0)$. 
 Let $\theta \in \ts_0$. There are $d_{0,1} \leq_1 d_0 \leq_1 d_{0,2}$ such that
 \begin{enumerate} 
 \item $d_{0,1},d_{0,2} \in \extElem_{\theta,h}(\cT)$, 
  \item $\cD \models \xi^{\omega(\ts_0)}(d_{0,1})$ and $\cD \models \xi^{\omega(\ts_0)}(d_{0,2})$, 
  \item $\val{\cD}{d_0} = \val{\cD}{d_{0,1}} = \val{\cD}{d_{0,2}}$
 	if $\val{\cD}{d_0} \geq \maxdv_\cD-1$,  
  \item both	$\val{\cD}{d_{0,1}} \leq  \maxdv_\cD-2$ and
 	$\val{\cD}{d_{0,2}} \leq \maxdv_\cD-2$  
	if we have $\val{\cD}{d_0} \leq \maxdv_\cD-2$, and 
 \item $d_0,d_{0,1}$, $d_{0,2}$ have the same 1-type in $\cD$. 
 \end{enumerate}
 Since $\maxdv_\cD\in\{\val{\cD}{d},\val{\cD}{d_0}\}$, 
 the 2-types of $(d,d_0)$ and either $(d,d_{0,1})$ or $(d,d_{0,2})$ are equal  depending on whether $\theta(x,y) \models x \leq_1 y$;
 for the case that $\val{\cD}{d_0} \geq \maxdv_\cD-1$ we use that  $\cD \models d_0 \sim_2 d_{0,1}$
 and $\cD \models d_0 \sim_2 d_{0,2}$,
 while the case that $\val{\cD}{d_0} \leq \maxdv_\cD-2$ is 
 similar to the discussion in the proof of 
 Lemmas~\ref{lemma_rest_theta}. Hence the lemma follows by setting either 
 $d'=d_{0,1}$ or $d'=d_{0, 2}$.
\end{proof}

\subsection{Lemmas for \texorpdfstring{\EXPSPACE}{EXPSPACE}}
\label{app_lemmas}
\label{app_proof_lem:delete_nonext}

To guarantee we non-deterministically explore the entire transition relation, we use 
the following lemma to bound
the search space:
\begin{lemma}
\label{lem:delete_nonext}
Let $\cT$ be a $\cD$-task word. Let $d$ be an element of $\cT$ not in $\extElem(\cT)$
such that $\val{\cD}{d} = \maxdv_\cD$. 
Let $\cD_{-d}$ and $\cT_{-d}$ 
be the substructures of $\cD$ respectively $\cT$ obtained by removing $d$. 
Then $\cT_{-d}$ is a $\cD_{-d}$-task word and $\ext(\dataAbs(\cT)) = \ext(\dataAbs(\cT_{-d}))$. 
\end{lemma}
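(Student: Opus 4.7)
My plan is to split the statement into two independent assertions: (A) that $\cT_{-d}$ is a $\cD_{-d}$-task word, and (B) that $\ext(\dataAbs(\cT)) = \ext(\dataAbs(\cT_{-d}))$. Both rest on a single preliminary observation that $\maxdv_{\cD_{-d}} = \maxdv_\cD$. If $B = 0$ the statement is vacuous because every $\extPos$ set is empty and both extremal strings equal $\varepsilon$, so I would assume $B \geq 1$; then $\omega(\ts_d) \neq \emptyset$, and if $d$ were the unique top-layer element of $\cT$ its position $\ell_d := \embeddingOp{\dataAbs}{\cT}^{-1}(d)$ would be the sole, hence both minimal and maximal, element of $\positions_{\onetoprm,\theta}(\dataAbs(\cT))$ for any $\theta \in \omega(\ts_d)$, putting $\ell_d$ into $\extPos(\dataAbs(\cT))$ and contradicting $d \notin \extElem(\cT)$. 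So a second top-layer element exists, and every $d' \neq d$ retains its layer.

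For (A), the $1$-type condition $\cD_{-d} \models \xi^{\omega(\ts)}(d')$ transfers from $\cD$ since it is unary. The delicate point, and the main obstacle of the whole argument, is the witness condition: if $\cT \models \ts(d')$ and $C_\theta \in \ts$, then $\cD \models \exists y\,\theta(d',y)$, but a priori the only witness might be $d$ itself. Here I would invoke Lemma~\ref{lem:extemal-witnesses} with $d_0 = d$; since $\val{\cD}{d} = \maxdv_\cD$, the lemma supplies a witness $d'' \in \extElem(\cT)$, and because $d \notin \extElem(\cT)$ we must have $d'' \neq d$, so $\cD_{-d} \models \exists y\,\theta(d',y)$. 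The converse direction and the dual clause about $P_\theta$ follow directly from $\cT$ being a $\cD$-task word, since removing an element can only destroy witnesses, never create them.

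For (B), the layer preservation established above, combined with the fact that tasks in $\cT_{-d}$ are simply inherited from $\cT$ by restriction, gives that $\dataAbs(\cT_{-d})$ equals the string $w_{-d}$ obtained from $w := \dataAbs(\cT)$ by deleting the letter at position $\ell_d$. Since $d$ lies in the top layer, removing $\ell_d$ can only affect $\positions_{\onetoprm,\theta}(w)$ for $\theta \in \omega(\ts_d)$, and I would argue that $\ell_d$ is neither the minimum nor the maximum of any such set: otherwise it would lie in $\extremal_{\onetoprm,\theta}(w) \subseteq \extPos(w)$, once again contradicting $d \notin \extElem(\cT)$. Hence $\extPos(w)$ and $\extPos(w_{-d})$ correspond under the obvious index shift, the labels at these positions agree letter for letter, and $\ext(w) = \ext(w_{-d})$ as required.
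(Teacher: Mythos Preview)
Your proof is correct and follows essentially the same approach as the paper's: both arguments hinge on Lemma~\ref{lem:extemal-witnesses} to show that any witness lost by deleting $d$ can be replaced by an extremal one, and both conclude part~(B) by observing that deleting a non-extremal top-layer position leaves the extremal string unchanged. Your version is more carefully elaborated---in particular, you explicitly verify that $\maxdv_{\cD_{-d}} = \maxdv_\cD$ (hence layers are preserved) and you spell out why $\ell_d$ is neither the minimum nor the maximum of any $\positions_{\onetoprm,\theta}$ set, whereas the paper simply asserts that $\dataAbs(\cT_{-d})$ is the substring obtained by deleting the letter at $d$ and that $\extElem(\cT) = \extElem(\cT_{-d})$.
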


Assume that $\cT_{-d}$ is not a task word. 
There are
$d' \in D$, $\ts \in \compatTasks$, and $\theta \in \omega(\ts)$ such that $d \not= d'$, 
$\cT \models \ts(d)$, and $C_\theta \in \ts$,
but for every element $d'' \not=d$ of $\cD_{-d}$, $\cD_{-d} \not\models \theta(d',d'')$.
However, the non-existence of such $d''$ is in contradiction to Lemma~\ref{lem:extemal-witnesses}. 
Since $d \notin \extElem(\cT)$, $\extElem(\cT) = \extElem(\cT_{-d})$.
The string $\dataAbs(\cT_{-d})$ is the substring of $\dataAbs(\cT)$ obtained by deleting the letter corresponding to $d$,
and hence $\ext(\dataAbs(\cT)) = \ext(\dataAbs(\cT_{-d}))$.

\begin{lemma}
\label{lem:automaton_size}
Let $n = |\psi|$. 
The size of $\pA$ is at most double exponential in $n$, while $m$ is at most exponential in $n$. 
\end{lemma}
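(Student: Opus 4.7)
The plan is to bound each of the ingredients of $\pA^\varphi$ in turn, using Theorem~\ref{lem_th_our_snf} as the key input. First, by Theorem~\ref{lem_th_our_snf}, the constants $A$, $B$, $C$ are each at most exponential in $n$ (since $\varphi$ has length exponential in $n$), so $|\Xi|=A$ and $|\Theta_\exists|\leq ABC$ are exponential in $n$. In particular $m+1=7\cdot|\Theta_\exists|+1$ is exponential in $n$, giving the bound on the number of pebbles.

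Next I would bound the alphabet $\Gamma=\Layers\times \compatTasks$ used internally. Since $|\Tasks|=2|\Theta_\exists|$ is exponential in $n$, we get $|\compatTasks|\leq 2^{|\Tasks|}$, which is double exponential in $n$, and hence so is $|\Gamma|$. By the extremal-length bound (the analogue of \texttt{proofextlen} in the proof excerpt), every extremal string $s\in\EXT(\Gamma)$ has $|s|\leq 7|\Theta_\exists|$, so the number of strings in $\EXT(\Gamma)$ is at most $|\Gamma|^{7|\Theta_\exists|}$, which is still double exponential in $n$.

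Then I would bound $|Q|=|Q_e|+|Q_p|$. A state in $Q_e$ is a pair $(s,\tau)$ with $s\in\EXT(\Gamma)$ and $\tau$ an $(m+1,|s|)$-pebble assignment; the number of such $\tau$ is at most $(|s|+1)^{m+1}\leq (m+1)^{m+1}$, which is double exponential in $n$. So $|Q_e|$ is double exponential. For $Q_p$, a state has the form $(s,\tilde s,\tau,b)$ with $\tilde s$ a nonempty prefix of $s$ and $b\in\{0,1\}$, which adds only a factor of $2\cdot|s|\leq 2(m+1)$; hence $|Q_p|$ is double exponential as well. Finally, $|\delta|\leq |Q|^2+|Q|^2\cdot|\MoveS_{m+1}|\cdot|\Xi|$, and since $|\MoveS_{m+1}|$ is polynomial in $m$ and $|\Xi|$ is exponential in $n$, $|\delta|$ remains double exponential in $n$. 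Adding up $|\delta|+|Q|+|\Xi|$ gives the claimed double-exponential size bound.

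The proof is essentially a chain of routine size estimates; the only point that requires care is to invoke the correct extremal-length bound to keep the exponent in $|\Gamma|^{|s|}$ only singly exponential (rather than $|\Gamma|^{|\Gamma|}$), so that the overall bound stays at double exponential rather than blowing up further.
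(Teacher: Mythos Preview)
Your proposal is correct and follows essentially the same approach as the paper: both invoke the normal form theorem to get $|\Xi|$, $|\Theta_\exists|$, and $m$ exponential in $n$, then bound $|\Gamma|$ as double exponential, use the extremal-length bound $|s|\leq 7|\Theta_\exists|$ to keep $|\EXT(\Gamma)|\leq |\Gamma|^{7|\Theta_\exists|}$ double exponential, bound the number of pebble assignments by $(n_s+1)^{m+1}$, and finally bound $|\delta|$ via the inclusion $\delta\subseteq (Q\times Q)\cup(Q\times\MoveS_{m+1}\times\Sigma\times Q)$. Your write-up is slightly more explicit (separating $Q_e$ and $Q_p$, spelling out the $|\delta|$ estimate), but the structure and key observations are the same.
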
 

\begin{proof}
By Lemma~\ref{lem_th_our_snf}, $|\Xi|$ and $|\Theta|$ and hence $m$ are exponential in $n$. 
The size of $\Gamma$ is therefore double exponential in $n$. 
We have $\EXT(\Gamma) \subseteq \Gamma^{7 \cdot |\Theta_\exists|}$, which is double exponential in $n$. 
Given an extremal string $s$ of length $n_s$, 
the number of $(m+1,n_s)$-pebble assignments is at most $(n_s+1)^{m+1}$,
which is double exponential in $n$. Hence $|Q|$ is double exponential. 
Since 
$\delta \subseteq (Q \times Q) \cup (Q \times \MoveS_{m+1} \times \Sigma \times Q)$,
$|\delta|$ is double exponential. 
\end{proof}

\begin{lemma}
\label{lem:delta_exptime}
$\delta$ is $\EXPSPACE(\log(|\pA|))$-computable. 
\end{lemma}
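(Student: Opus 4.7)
The plan is to exhibit an algorithm that, given as input a candidate transition (a pair of states, optionally together with a pebble-move label and input letter), decides membership in $\delta$ using space polynomial in $|\pA|$. Since $|\pA|$ is double exponential in $|\psi|$ by Lemma~\ref{lem:automaton_size}, polynomial space in $|\pA|$ coincides with $\EXPSPACE(\log|\pA|)$. Note that each state description already has size polynomial in $|\pA|$: an extremal string has length at most $7 \cdot |\Theta_\exists|$, each letter of $\Gamma$ has size polynomial in $|\pA|$, and a pebble assignment $\tau:[m{+}1]\to[\length{s}]\cup\{\bot\}$ fits in the same bound.

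The algorithm will dispatch on the form of the candidate transition. The non-trivial case is a silent transition from a state $(s,\tau) \in Q_e$ to a state $(s', s'(1), \tau'_{q'}, 0) \in Q_p$, which should belong to $\delta$ precisely when $s$ and $s'$ are perfect consecutive extremal strings and $\tau'_{q'}$ is the pebble assignment induced on $s'$ by $\tau$ through the partial embedding $\notOneTopEmbeddingNoArg{s'}{s}$. To carry out this check I would first verify perfectness of $s$ and $s'$ by iterating over all position pairs $\ell_1 < \ell_2$ with at least one of $s(\ell_1), s(\ell_2)$ in $\Gammatop$ and testing $\perfect_{s(\ell_1),s(\ell_2)}(x,y) \models_{\dat{\Xi}} \chi(x,y) \wedge \chi(y,x)$; each such implication is a check between quantifier-free formulas of size polynomial in $|\pA|$, so the whole pass stays within the required space bound. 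I would then invoke Lemma~\ref{lem_check_consec_expspace_copy} to decide consecutiveness of $s$ and $s'$ and, in the positive case, to compute the partial embedding $\notOneTopEmbeddingNoArg{s'}{s}$, both in $\EXPSPACE$ in $|\psi|$. Finally I would check that $\tau'_{q'}$ is exactly the pebble assignment obtained from $\tau$ via the embedding (pebbles assigned to positions with no image become $\bot$, others are transferred accordingly).

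For the remaining transition cases I expect things to be local and straightforward. Move transitions inside $Q_p$ only compare two adjacent prefixes $\tilde{s}, \tilde{s}'$ of a common $s'$ that differ by one position, verify that the pebble is placed in the proper interval of $\tau$ between the two prescribed pebbles, and check that the letter read matches the corresponding letter of $s'$; silent transitions from $Q_p$ back to $Q_e$ only check that the prefix has reached the full extremal string and that the last position has been consumed. None of these require anything beyond simple comparisons on polynomial-size objects.

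The main obstacle will be carefully implementing the consecutiveness test and the computation of the partial embedding inside the promised space bound, but this is precisely what Lemma~\ref{lem_check_consec_expspace_copy} delivers. With that lemma used as a black box, the rest of the argument reduces to bookkeeping on polynomial-size extremal strings and pebble assignments, which can be handled within $\EXPSPACE(\log|\pA|)$.
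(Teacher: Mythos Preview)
Your approach is essentially the same as the paper's: verify that the states involved are well-formed (perfect extremal strings, valid pebble assignments), handle the $Q_e\to Q_p$ case via the consecutiveness test and partial-embedding computation of Lemma~\ref{lem_check_consec_expspace_copy}, and treat the remaining $Q_p$ transitions by local checks (including the $\gamma\in\nonext{s}{\tilde\ell}$ test for non-extremal moves, which the paper spells out and you fold into ``local and straightforward'').

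One point to tighten: your space accounting is stated too loosely. You claim to work in space polynomial in $|\pA|$ and then identify this with $\EXPSPACE(\log|\pA|)$. But $|\pA|$ is \emph{doubly} exponential in $|\psi|$, so ``polynomial in $|\pA|$'' would only give doubly exponential space in $|\psi|$, which is not enough for Corollary~\ref{cor:expspace}. What the paper actually shows (and what your algorithm in fact achieves, via Lemma~\ref{lem_check_consec_expspace_copy} and the fact that state representations are exponential in $|\psi|$) is space \emph{exponential in $|\psi|$}, i.e., polynomial in $\log|\pA|$. You should phrase the bound that way.
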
 

\begin{proof}
Let $n = |\psi|$. 
Using Lemma~\ref{lem:automaton_size}, 
the sizes of the representation of a state $q\in Q$, a string $s \in \Gamma^{7 \cdot |\Theta_\exists|}$,
a pebble $k \in [m+1]$, and a letter $\gamma \in \Gammatop$ are all at most exponential in $n$. 
We  can verify that $s$ is an extremal string in exponential space
by computing 
$\extPos(s)$ and verifying that 
$|\extPos(s)|=|s|$. The set $\extPos(s)$ can be computed by going over the string $s$
and keeping track of the relevant minimum and maximum positions
(of which there are an exponential number). 
Similarly, for $\tilde{\ell} \in [|s|]$, one can verify
that $\gamma \in \nonext{s}{\tilde{\ell}}$. 

Verifying that an extremal string $s$ of length $n_s$
is perfect is done as follows: for every two positions $\ell_1<\ell_2$ of $s$ 
such that
$\{\ell_1,\ell_2\} \cap \positions_{\onetoprm}(s) \not= \emptyset$,
it is straightforward to compute the formula $\perfect_{s(\ell_1),s(\ell_2)}(x,y)$ in exponential space. 
Since the formula $\perfect_{s(\ell_1),s(\ell_2)}(x,y)$ is a conjunction of atoms and negations of atoms,
it is also easy to complete it to its  equivalent 2-type $\beta(x,y)$. 
We can then check that $\beta(x,y),\beta(y,x)\in \Theta_\forall$. 

We know from \hyperref[sec_expspace]{our previous discussion} and Lemma~\ref{lem:delete_nonext} that is it possible to non-deterministically compute the extremal strings 
which are consecutive to $s$, and hence the set of transitions leaving $q \in Q$, in $\EXPSPACE$. 
\end{proof}

\section{The automaton \texorpdfstring{$\pA^\vphi$}{A-varphi}}
\label{sec_aut_def}
\label{subse:def_automaton}

We define an automaton $\pA^{\varphi} = (\Xi, m+1, Q, q_{\mathrm{init}}, F, \delta)$ where
$Q = Q_e \cup Q_p$
and 
$m = 7 \cdot |\Theta_\exists|$.
$\pA^{\varphi}$ uses one pebble for each existential constraint in
$\Theta_\exists$ and each layer in $\Gamma$, plus an additional pebble per
constraint. It also uses the designated pebble $m+1$ to read non-extremal
positions. 
We describe its states and transitions next. 

\noindent
\textbf{Non-prefix states.}
$Q_e$ is the set of states of the form $(s,\tau)$ with 
$s$ a  perfect extremal string and
$\newtau$ an 
$(m+1,\length{s})$-pebble assignment
satisfying the following conditions, which hold when $s$ has just been
read: 
\begin{enumerate}[label={(\subscript{\bf c}{\arabic*}}),ref=(\subscript{c}{\arabic*}),series=conditions] 
	\item \label{tau_c_mPlusOneIsBotAgain}
	the pebble $m+1$ was not used to read $s$, i.e., we have $\newtau(m+1)=\bot$, 
	\item \label{tau_c4} 
        every position of $s$ has a pebble on it, i.e., 
	if $s \neq \varepsilon$, then $[\length{s}] \subseteq \newtau([m])$, and
	\item \label{tau_c5}
        if $s=\varepsilon$, there are no pebbles on $s$, i.e., 
         $\newtau = \rho_\bot$, in other words, $\newtau([m+1])
        = \{\bot\}$. 
\end{enumerate}

We would like the automaton to transition from $(s,\tau) \in Q_e$ 
 to $(s', \tau') \in Q_e$ if $s$ and $s'$ are consecutive. 
But since the automaton can only move on pebble at a time, we need some 
 intermediate steps. 
We therefore have another set of \emph{prefix states} $Q_p$
that the automaton uses to read  extremal strings from left to right, 
by iterating over all their prefixes. 

\noindent
\textbf{Prefix states.}
$Q_p$ is the set of states of the forms $(s,\tilde{s},\tau,0)$ and $(s,\tilde{s},\tau,1)$ for every
perfect extremal string $s$,
non-empty prefix $\tilde{s}$ of $s$, and 
$(m+1,\length{s})$-pebble assignment
$\newtau$  
satisfying similar conditions as before, but which now hold if only the prefix
$\tilde{s}$ has been read: 
\begin{enumerate}[resume*=conditions]
	\item \label{tau_c_mPlusOneIsBot}
	$\newtau(m+1)=\bot$, 
	\item \label{tau_c2}
	$[\length{{\tilde{s}}}-1]\subseteq\newtau([m])$, and
	\item \label{tau_c3}
	pebbles beyond the current prefix had been placed previously, i.e., for every $\length{{\tilde{s}}} \leq \ell \leq \length{s}$,
	$\ell \in \newtau([m])$  if and only if $s(\ell) \notin \Gammatop$.
\end{enumerate}

The $0$/$1$ flag in prefix states is used below for deciding where to place the
$m+1$ pebble. 
\vspace{0.1em}
\\
\textbf{Initial state.}
The initial state is $q_{\mathrm{init}} = (\varepsilon, \rho_\bot)\in Q_e$.
\vspace{0.1em}
\\
\textbf{Final states.}
The final states are 
$F = \{(s,\tau) \in Q_e \mid s\mbox{ is completed}\}$. 

For a state $q \in Q_p$ of the form $(s,\tilde{s},\tau,0)$ or $(s,\tilde{s},\tau,1)$, or $q \in Q_e$
of the form $(s,\tau)$, denote $\tau_q = \tau$. 
For a state 
$q \in Q_p \cup Q_e$, we say 
a pebble $k \in [m+1]$ is \emph{available in $q$} if 
$\newtau_q(k) = \bot$. 
Note that the pebble $m+1$ is available by~\ref{tau_c_mPlusOneIsBot} and~\ref{tau_c_mPlusOneIsBotAgain}. 
Let $q = (s,\tilde{s},\tau_q,b)$ or $q = (s,\tau_q)$ with $|s| = n_s$. 
Since $\newtau_q$ is an $(m+1,n_s)$-pebble assignment, there are at most $n_s$ pebbles $k$ with $\newtau_q(k)\not=\bot$. 
By the bound on the length of extremal strings, we have $n_s \leq  7 \cdot |\Theta_\exists| < m$, therefore there 
is at least one pebble $k\in [m]$ which is available in $q$. All in all, we have:
\begin{obs}
For every $q \in Q_p \cup Q_e$, there is a pebble $k \in [m]$ such that both pebble $k$ and pebble $m+1$ are available in $q$. 
\end{obs}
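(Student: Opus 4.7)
The plan is to give a short counting argument that reuses the structural conditions defining states in $Q_p \cup Q_e$ together with the length bound on extremal strings. The key observation is that the claim decomposes into two parts: (i) pebble $m+1$ is always available, which follows immediately from the definition of states; (ii) there must exist some pebble in $[m]$ that is not currently placed, which needs a pigeonhole-style counting argument.

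First, I would dispatch (i) by case analysis on the type of state $q$. If $q = (s,\tau) \in Q_e$, then condition \ref{tau_c_mPlusOneIsBotAgain} yields $\tau(m+1) = \bot$ directly. If instead $q \in Q_p$, so that $q$ has the form $(s,\tilde{s},\tau,0)$ or $(s,\tilde{s},\tau,1)$, then condition \ref{tau_c_mPlusOneIsBot} again gives $\tau(m+1) = \bot$. Hence in either case pebble $m+1$ is available in $q$.

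Second, for part (ii), I would count the number of pebbles that can possibly be assigned to non-$\bot$ values. By definition, $\tau_q$ is an $(m+1,n_s)$-pebble assignment, where $n_s = |s|$ for the extremal string $s$ attached to $q$. Since distinct assigned pebbles must sit on distinct positions of $[n_s]$, at most $n_s$ pebbles in $[m+1]$ are assigned. Now I would invoke the bound on extremal string length (see the lemma whose proof is encoded in \texttt{proofextlen}): every extremal string satisfies $|s| \le 7\cdot|\Theta_\exists|$, and since $m = 7\cdot|\Theta_\exists|$, we obtain $n_s \le m$, hence $n_s < m+1$. Combined with part (i), at least one of the $m+1$ pebbles is unassigned and in addition $m+1$ itself is unassigned, so pigeonhole on the pebble set $[m]$ yields a pebble $k \in [m]$ with $\tau_q(k) = \bot$.

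The argument is essentially elementary; there is no real obstacle, just careful bookkeeping. The only point that deserves attention is that the counting bound must be strict enough to separate pebble $m+1$ from the chosen $k \in [m]$, which is exactly why the definition sets $m = 7\cdot|\Theta_\exists|$ while the automaton uses $m+1$ pebbles in total: the $+1$ is precisely the designated slack that guarantees an always-free pebble in $[m]$ beyond $m+1$ itself.
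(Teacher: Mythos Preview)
Your argument is essentially the same as the paper's: first use conditions \ref{tau_c_mPlusOneIsBotAgain} and \ref{tau_c_mPlusOneIsBot} to conclude that pebble $m+1$ is available, then observe that an $(m+1,n_s)$-pebble assignment places at most $n_s$ pebbles, and finally invoke the length bound $n_s \leq 7\cdot|\Theta_\exists|$ on extremal strings together with $m = 7\cdot|\Theta_\exists|$ to find an available pebble in $[m]$. This matches the paper's reasoning in the paragraph immediately preceding the observation.
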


\noindent \textbf{Transitions from prefix states.}
We can now define $\delta$. 
Let $q = (s,\tilde{s}, \tau_q,b) \in Q_p$.
\begin{enumerate} 
	\item \label{trans_q1_non_ext} \emph{Non-extremal transitions}: while
          reading a prefix of a task word, pebble $m+1$ iterates over
          non-extremal positions. 
	For every letter $\gamma \in \nonext{s}{\length{\tilde{s}}}$, we have that
	$(q,\Move{(m+1)}{i}{\lhd},\xi^{\omega(\gamma)},q') \in \delta$
	where $q' = (s,\tilde{s}, \tau_q, 1)$ and $i$ is equal to $\argmax_{0 \leq \ell < \length{\tilde{s}}} \{ t \mid \hat{\newtau}_q(t) = \ell \}$ if $b=0$, and is $m+1$ if $b=1$.
	\item \label{trans_q1_not_1top}
	\label{trans_q1_1top}
	\emph{Extremal transitions}: if we are at a new 1-top
        position, we read it with an available pebble, and if the current position
        already has a pebble, a silent transition moves on. The automaton will
        now be on either the next prefix, or the next extremal state if $s =
        \tilde{s}$, that is, the whole $s$ has been read. 
	Let $\newtau_{q'}$ be $\newtau_q[k\mapsto \length{\tilde{s}}]$ if $\tilde{s}(\length{\tilde{s}})\in \Gammatop$, and be 
	$\newtau_q$ if $\tilde{s}(\length{\tilde{s}})\notin \Gammatop$.
	Let $q'$ be $(s,\tilde{s} s(\length{\tilde{s}}+1),\tau_{q},0)$ if $\tilde{s} \neq s$, and be $(s,\tau_{q})$ if $\tilde{s} = s$.
	Let $i = \argmax_{0 \leq \ell < \length{\tilde{s}}} \{ t \mid \hat{\newtau}_q(t) = \ell \}$ and
	$j = \argmin_{\length{\tilde{s}} \leq \ell \leq \length{s}+1} \{ t \mid \hat{\newtau}_q(t) = \ell \}$.
	We have $(q,q') \in \delta$ if $\tilde{s}(\length{{\tilde{s}}}) \notin \Gammatop$, and 
	$(q,\Move{k}{i}{j},\xi^{\omega(\tilde{s}(\length{{\tilde{s}}}))},q') \in \delta$ if 
	$\tilde{s}(\length{{\tilde{s}}}) \in \Gammatop$ and the pebble $k$ is available in $q$.
\end{enumerate}
\textbf{Transitions from non-prefix states.}
If $(s_0,s)$ are consecutive, we start reading $s$ by moving to ${q'} =
(s,s(1),\tau_{{q'}},0) \in Q_p$, 
where $\tau_{{q'}}$ stores the pebble assignment induced by 
$\notOneTopEmbeddingNoArg{s}{s_0}$. 
For every consecutive pair $(s_0,s)$ of extremal strings and states
$q = (s_0,\tau_{q}) \in Q_e$ and 
${q'} = (s,s(1),\tau_{{q'}},0) \in Q_p$, 
we have $(q,{q'}) \in \delta$ if $\notOneTopEmbeddingNoArg{s}{s_0}$ induces $\tau_{q'}$ as follows:
for every pebble $k \in [m]$, 
$\newtau_{{q'}}(k)$ is $(\notOneTopEmbeddingNoArg{s}{s_0})^{-1}(\newtau_{q}(k))$ if $\newtau_{q}(k) \mbox{ is in the image of }\notOneTopEmbeddingNoArg{s}{s_0}$, and is $\bot$
 otherwise.

\subsection{Proof of \texorpdfstring{$L(\varphi) \subseteq L(\pA^\varphi)$}{L(varphi) subseteq L(A)}}
\label{proof_lem_defined_accepted}

\begin{lemma}
\label{lem_defined_accepted}
Let $\cD$ be a data word with $\project(\cD) = u$ 
such that $\cD \models \varphi$. 
Then $u \in L(\pA^\varphi)$.
\end{lemma}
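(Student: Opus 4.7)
The plan is to convert the semantic witness $\cD\models\varphi$ into an accepting run of $\pA^\varphi$ on $u$ by stepping through the chain of perfect extremal strings associated with the successive trimmings of a task word for $\cD$.

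By Proposition~\ref{prop_cor_perfect_completed_tasked_word} there is a perfect completed $\cD$-task word $\cT$. Setting $k=\maxdv_\cD$, I would define $\cT_k=\cT$ and $\cT_{i-1}=\ttrim{\cT_i}{1}$ for $i\in[k]$, so that $\cT_0$ is the empty task word. Each $\cT_i$ is perfect, each $s_i:=\ext(\dataAbs(\cT_i))$ is a perfect extremal string (Definition~\ref{def_perfect_tasked_word}), each pair $(s_{i-1},s_i)$ is consecutive, and $s_k$ is completed. The accepting computation will then be built as $k$ ``phases'': phase $i$ takes the automaton from the non-prefix state $q_{i-1}=(s_{i-1},\tau_{i-1})\in Q_e$ reached at the end of the previous phase to $q_i=(s_i,\tau_i)\in Q_e$, starting from $q_0=q_{\mathrm{init}}=(\varepsilon,\rho_\bot)$. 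The corner case $u=\varepsilon$ is direct: then $\cD=\emptystructure{\dat{\Xi}}$, which via the normal form implies $\emptystructure{\dvoc{\Sigma}}\models_{\dat{\Sigma}}\psi$, so $q_{\mathrm{init}}\in F$ already.

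To realize one phase, I would first fire the silent transition from $q_{i-1}$ to $(s_i,s_i(1),\tau_i^0,0)\in Q_p$, with $\tau_i^0$ induced by $\tau_{i-1}$ via $\notOneTopEmbeddingNoArg{s_i}{s_{i-1}}$, which is available by Lemma~\ref{lem_check_consec_expspace}. Then I would traverse the prefixes of $s_i$ from left to right: at each prefix boundary whose last letter is in $\Gammatop$, apply rule~\ref{trans_q1_1top} to place a fresh pebble (one exists because $|s_i|\leq 7|\Theta_\exists|<m$) onto the position of $u$ corresponding to the new $\mathrm{1top}$ extremal element of $\cT_i$; otherwise fire the silent extremal transition, since the boundary already carries a pebble inherited through $\notOneTopEmbeddingNoArg{s_i}{s_{i-1}}$. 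Interleaved with the extremal moves, apply rule~\ref{trans_q1_non_ext} repeatedly to move pebble $m{+}1$ onto every non-extremal new element of $\cD$ lying in the open interval of $u$ determined by the current prefix; Lemma~\ref{lem:delete_nonext} guarantees that each such element induces a letter in $\Gamma^{\nonextsf}(s_i,\length{\tilde{s}})$, so the required transition is in~$\delta$. At the end of phase $k$, the state $q_k=(s_k,\tau_k)$ lies in $F$ because $s_k$ is completed, giving an accepting computation.

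The hard part is the book-keeping that links positions of the state's extremal string $s_i$ to positions of the actual input $u$. I would fix the bijection $\zeta:D\to[|u|]$ with $\zeta(d)=|\{d'\in D\mid\cD\models d'\leq_1 d\}|$, and, for each $i$, the order-preserving embedding $\eta_i:[|s_i|]\to D$ arising from $\embeddingOp{\ext\circ\dataAbs}{\cT_i}$. The invariant to maintain along phase $i$ is that the configuration's physical pebble assignment $\rho$ satisfies $\rho(k)=\zeta(\eta_i(\tau(k)))$ whenever $k\in[m]$ and $\tau(k)\neq\bot$, while the set $N$ of already-read positions is $\zeta$ applied to the portion of the universe of $\cT_i$ processed so far. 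Order-preservation of $\eta_i$ and $\zeta$ then ensures that every pebble move prescribed by the rules lands in the required open interval of $u$, and that every position of $u$ is read exactly once by the end of phase~$k$. The bulk of the actual proof consists in verifying this invariant by a careful but routine case analysis over the transition rules \ref{trans_q1_non_ext} and \ref{trans_q1_1top} and over the silent move between phases.
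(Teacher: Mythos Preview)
Your plan is correct and matches the paper's proof almost exactly. The paper, too, extracts from $\cD\models\varphi$ a perfect completed $\cD$-task word, forms the induced sequence $s_0,\ldots,s_R$ of perfect extremal strings (Lemma~\ref{lem_dataword_series}), and builds the accepting computation in $R$ phases, maintaining precisely your invariant---the paper calls a configuration with $\rho(k)=\embeddingOp{\ext}{w}(\tau(k))$ for all $k$ with $\tau(k)\neq\bot$ a \emph{$w$-coherent} configuration, which is your $\rho(k)=\zeta(\eta_i(\tau(k)))$ in different notation.

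Two minor deviations worth noting. First, the paper packages the within-phase argument as a separate lemma (Lemma~\ref{lem_one_extremal}) and carries out the induction on $r$ not directly on $u$, but on the substrings $u_r=\project(\trim{\cD}{R-r})$, lifting the computation from $u_r$ to $u_{r+1}$ before firing the silent transition; your approach of working on $u$ throughout is equivalent and arguably cleaner. Second, your appeal to Lemma~\ref{lem:delete_nonext} for the non-extremal transitions is slightly off: that lemma is about removing non-extremal $\mathrm{1top}$ elements without changing $\ext$, whereas what you need here is the direct observation (used in the paper's proof of Lemma~\ref{lem_one_extremal}) that a non-extremal $\mathrm{1top}$ position $\ell_{c+1}$ has $w(\ell_{c+1})\in\nonext{s}{a_c+1}$ because it is sandwiched between extremal positions of the same layer and type on both sides. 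This is immediate from the definition of $\extPos$, so no real gap.
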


We recall the following: 
\begin{obs}
\label{prop_available_pebble}
For every $q \in Q_p \cup Q_e$, there is a pebble $k \in [m]$ such that both pebble $k$ and pebble $m+1$ are available in $q$. 
\end{obs}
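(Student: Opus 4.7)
The plan is to prove the observation by a straightforward pigeonhole-style counting argument, splitting into the two cases $q \in Q_e$ and $q \in Q_p$. The core idea, in both cases, is that $\newtau_q$ is constrained to be an $(m+1, n_s)$-pebble assignment for some extremal string $s$ of length $n_s$, and the bound on the length of extremal strings (from the lemma establishing $|\extPos(w)| \leq 7 |\Theta_\exists|$) forces $n_s$ to be strictly less than $m+1$, leaving enough room for an available pebble in $[m]$ on top of the always-available pebble $m+1$.

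First I would handle the availability of pebble $m+1$. For $q = (s, \tau_q) \in Q_e$, condition $c_1$ directly gives $\tau_q(m+1) = \bot$, so pebble $m+1$ is available by definition. For $q = (s, \tilde{s}, \tau_q, b) \in Q_p$, condition $c_6$ gives the same conclusion. Hence in either case pebble $m+1$ is available in $q$.

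Next I would address the existence of an available pebble in $[m]$. Let $n_s = |s|$. By the definition of an $(m+1, n_s)$-pebble assignment, distinct pebbles are mapped either to distinct positions in $[n_s]$ or to $\bot$. Consequently, at most $n_s$ pebbles in $[m+1]$ have a non-$\bot$ value. Since pebble $m+1$ is one of the pebbles mapped to $\bot$, the non-$\bot$ pebbles all come from $[m]$, so the number of pebbles in $[m]$ mapped to $\bot$ is at least $m - n_s$. Invoking the bound on the length of extremal strings, $n_s \leq 7 |\Theta_\exists|$, which together with the choice $m = 7 |\Theta_\exists|$ (and the fact that the automaton uses $m+1$ pebbles in total) gives $n_s < m+1$, and in fact the stricter $n_s < m$ argument sketched in the paragraph preceding the observation yields at least one pebble $k \in [m]$ with $\tau_q(k) = \bot$.

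The main (mild) obstacle is bookkeeping: one must be careful that the length bound for $s$ indeed applies in both $Q_e$ and $Q_p$ (it does, because $s$ is required to be an extremal string in both forms of state), and that the always-unassigned pebble $m+1$ is counted correctly so that the pigeonhole conclusion genuinely yields a second free pebble inside $[m]$. Once this is set up, the argument is purely arithmetic and does not depend on the semantics of $\delta$ or on any structural property of perfect/completed extremal strings, only on the static conditions $c_1$--$c_6$ that define admissible states.
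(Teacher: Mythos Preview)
Your proposal is correct and mirrors the paper's own justification (the paragraph immediately preceding the Observation): pebble $m+1$ is available by the conditions on $\tau_q$, and the extremal-string length bound $n_s \leq 7|\Theta_\exists|$ combined with pigeonhole yields an available pebble in $[m]$. One minor slip: for $q \in Q_p$ the condition forcing $\tau_q(m+1)=\bot$ is $(c_4)$, not $(c_6)$.
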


\paragraph*{Coherent configurations}
\label{proof_prop_one_extremal}

Let $\cD$ be a data word with $\project(\cD) = u$ and $|u| = n$. 
Let $\cT$ be a perfect $\cD$-task word, $w=\dataAbs(\cT)$, and $s = \ext(w)$
with $|s|=n_s$. 
Let 
$\rho$ be an $(m+1,n)$-pebble assignment, 
$\tau$ an $(m+1,n_s)$-pebble assignment, 
$q = (s,\tilde{s},\tau,b)$ or $q=(s,\tau)$ be a state and
$\pi = (q,\rho,N)$ a configuration on $u$.

$\pi$ is $w$-\emph{coherent} if
for every $k \in [m]$ such that  $\tau(k) \neq \bot$, 
$\rho(k) = \embeddingOp{\ext}{w}(\tau(k))$.
\begin{lemma}
\label{lem_one_extremal}
Let $\cD$ be a data word with $\project(\cD)=u$, let $\cT$ be a perfect $\cD$-task word, $w = \dataAbs(\cT)$, $s = \ext(w)$, and 
let $\pi = ((s,s(1),\tau,0),\rho,\positions_{<\onetoprm}(w))$ be a $w$-coherent configuration on $u$. 
There
is a $w$-coherent configuration $\pi' = ((s,\tau'),\rho',[n_u])$ on $u$ such that $\pi \rightsquigarrow^\star_u \pi'$ .
\end{lemma}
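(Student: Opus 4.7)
The plan is to proceed by induction on the prefix length $\ell = |\tilde{s}|$, strengthening the statement to: from any $w$-coherent prefix-state configuration $((s, s(1) \cdots s(\ell), \tau_\ell, 0), \rho_\ell, N_\ell)$ with $N_\ell = \positions_{<\onetoprm}(w) \cup \{\ell'' \in [n_u] : \ell'' < \embeddingOp{\ext}{w}(\ell)\}$, one can reach the analogous configuration at prefix length $\ell + 1$ or, when $\ell = |s|$, the target $w$-coherent configuration $((s, \tau'), \rho', [n_u])$ at the non-prefix state. The base case $\ell = 1$ recovers the hypothesis of the lemma, because no position of $w$ strictly below $\embeddingOp{\ext}{w}(1)$ can carry a letter in $\Gammatop$: the first occurrence of every letter in $\Gammatop$ is extremal and therefore appears in $s$.

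The inductive step has two phases. First I fire a left-to-right sequence of non-extremal transitions, each moving pebble $m+1$ onto the next unread non-extremal position of $w$ with letter in $\Gammatop$ lying in $(\embeddingOp{\ext}{w}(\ell - 1), \embeddingOp{\ext}{w}(\ell))$; the opening move uses $\Move{(m+1)}{i}{\lhd}$ with $i$ the pebble at $\tau_\ell^{-1}(\ell - 1)$ (present by \ref{tau_c2}), while subsequent moves run with $b = 1$ and $i = m + 1$, advancing pebble $m+1$ rightward. Then I apply the extremal transition for position $\ell$: if $s(\ell) \notin \Gammatop$, the position $\embeddingOp{\ext}{w}(\ell)$ already lies in $\positions_{<\onetoprm}(w) \subseteq N_\ell$ and a silent transition suffices; otherwise I pick an available pebble $k \in [m]$ (existing by Observation~\ref{prop_available_pebble}) and move it to $\embeddingOp{\ext}{w}(\ell)$, which is unread and lies in the open interval $(\rho_\ell(i), \rho_\ell(j))$ determined by the nearest pebbles in $\tau_\ell$. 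The updated $\tau_{\ell+1}$ and $\rho_{\ell+1}$ remain $w$-coherent and satisfy the invariant at prefix $\ell + 1$. Iterating over $\ell = 1, \ldots, |s|$ exhausts $[n_u]$, by the symmetric argument that no position of $w$ beyond $\embeddingOp{\ext}{w}(|s|)$ can carry a letter in $\Gammatop$.

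The main obstacle is justifying the first phase: every non-extremal position $\ell''$ of $w$ in $(\embeddingOp{\ext}{w}(\ell - 1), \embeddingOp{\ext}{w}(\ell))$ with $\gamma = w(\ell'') \in \Gammatop$ must have $\gamma \in \nonext{s}{\ell}$, so the required transition is in $\delta$. I plan to deduce this from the definition of $\extPos$: for each $\theta \in \omega(\gamma)$, the extremal positions $\min \positions_{\onetoprm, \theta}(w)$ and $\max \positions_{\onetoprm, \theta}(w)$ strictly surround $\ell''$ (as $\ell''$ is not extremal for $\theta$), and mapping them through $\embeddingOp{\ext}{w}^{-1}$ yields letters in $\Gammatop$ carrying $\theta$ at positions $< \ell$ and $\geq \ell$ of $s$ — exactly the condition ensuring that inserting $\gamma$ at position $\ell$ of $s$ leaves $\ext$ unchanged. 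A minor side check is that the transition label $\xi^{\omega(\gamma)}$ matches $u(\ell'')$, which follows because $\cT$ is a $\cD$-task word and therefore $\cD \models \xi^{\omega(\ts)}(d)$ at every element $d$ with $\cT \models \ts(d)$.
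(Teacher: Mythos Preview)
Your approach is essentially the paper's: both proofs sweep left-to-right through the positions of $w$ that are either extremal or in $\Gammatop$, alternating between extremal transitions (which advance the prefix) and non-extremal transitions (which use pebble $m{+}1$ on the in-between $\Gammatop$ positions). The paper indexes by the position $\ell_c$ in $w$; you index by the prefix length $\ell$ in $s$. Your justification that $w(\ell'')\in\nonext{s}{\ell}$ via the sandwiching $\min/\max$ of $\positions_{\onetoprm,\theta}(w)$ is exactly the paper's argument.

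There is, however, an off-by-one slip in your invariant. You set
\[
N_\ell \;=\; \positions_{<\onetoprm}(w)\,\cup\,\{\ell'' : \ell'' < \embeddingOp{\ext}{w}(\ell)\},
\]
but then your first phase at prefix $\ell$ reads $\Gammatop$-positions in $(\embeddingOp{\ext}{w}(\ell{-}1),\,\embeddingOp{\ext}{w}(\ell))$. Those positions are already in $N_\ell$ by your definition, so the \textsc{move} transitions would be blocked (the target position must lie in $[\length{u}]\setminus N$). Concretely, after your extremal step at prefix~$1$ you have read only $\positions_{<\onetoprm}(w)\cup\{\embeddingOp{\ext}{w}(1)\}$, which in general is \emph{strictly smaller} than your $N_2$ whenever there is a non-extremal $\Gammatop$-position between $\embeddingOp{\ext}{w}(1)$ and $\embeddingOp{\ext}{w}(2)$. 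The fix is to take
\[
N_\ell \;=\; \positions_{<\onetoprm}(w)\,\cup\,\{\ell'' : \ell'' \le \embeddingOp{\ext}{w}(\ell{-}1)\}
\]
(with $\embeddingOp{\ext}{w}(0)=0$); then the base case still matches the lemma's hypothesis, the non-extremal phase at prefix $\ell$ reads exactly the unread $\Gammatop$-positions in $(\embeddingOp{\ext}{w}(\ell{-}1),\,\embeddingOp{\ext}{w}(\ell))$, and the extremal step produces $N_{\ell+1}$. This is the invariant the paper's proof implicitly carries (it tracks $N_c=\positions_{<\onetoprm}(w)\cup\{\ell_1,\dots,\ell_c\}$, which on arrival at a new prefix equals the corrected $N_\ell$).
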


\begin{proof}
Let $C \in \mathbb{N}$ and $1 \leq \ell_1 < \cdots < \ell_C \leq n_u$ such that 
\[
\{\ell_1, \ldots, \ell_C\} = \extPos(w) \cup \positions_{\onetoprm}(w).
\]
For every $c \in \{0\}\cup[C]$, let
$N_{c} = \positions_{<\onetoprm} \cup \{\ell_1,\ldots,\ell_{c}\}$.
We have $N_{c+1} = N_{c} \cup \{\ell_{c}\}$ for $c \not=0$. 
Let $a_0 = 0$. For every $c \in \{0\}\cup [C-1]$, let:
\[	
a_{c+1} = 
	\begin{cases}
	a_{c}+1, &\ell_{c+1} \in \extPos(w) \\
	a_{c}, & \ell_{c+1} \notin \extPos(w)
	\end{cases}
\]
Let $\ell_{\max} = \max(\positions_{\onetoprm}(w))$. 
Observe that $\ell_C \in \extPos(w)$, since it holds that $\ell_{\max} \in \extPos_{\onetoprm,\theta}(w)$
for every $\theta\in \omega(w(\ell_{\max}))$. 
Consequently, $s(1)\cdots s(a_c) = s$ if and only if $c=C$. 
Similarly, $\ell_1 \in \extPos(w)$. 
	
We give a construction of a computation $(\bar{t},\bar{\pi})$ with 
transitions 
$\bar{t} = (t_1, \ldots, t_{C})$
and $w$-coherent configurations
$\bar{\pi} = (\pi_0, \ldots, \pi_{C})$ on $u$
such that 
$\pi_0 = \pi$, and 
$\pi_c \rightsquigarrow^{t_{c+1}}_u \pi_{c+1}$ 
for all $c \in \{0\}\cup[C-1]$.
Let $\pi_c=(q_c, \rho_c, N_c)$ where 
$
q_c = (s,s(1) \cdots s(a_c + 1), \tau_c, b_c)
$ 
for $c < C$ and 
$q_C = (s, \tau_C)$. 
We construct $t_{c+1}$ and $\pi_{c+1}$ inductively for $c\in \{0\}\cup[C-1]$ by dividing into cases as follows.

Assume $\ell_{c+1} \in \extPos(w)$. 
	We have $\ell_{c+1} = \embeddingOp{\ext}{w}(a_{c+1})$ and hence $w(\ell_{c+1}) = s(a_{c+1})$. 
	If $c+1 < C$, let $b_{c+1} = 0$. 
	\begin{enumerate} 
		\item 
		Assume $s(a_{c+1}) \notin \Gammatop$.
		Let $t_{c+1} = (q_c,q_{c+1})$, $\rho_{c+1} = \rho_c$, and $\tau_{c+1} = \tau_c$.
		Then $t_{c+1} \in \delta$,  $\pi_c \rightsquigarrow^{t_{c+1}}_u \pi_{c+1}$, and $\pi_{c+1}$ is a $w$-coherent configuration on $u$. 
		
		\item
		Assume $s(a_{c+1}) \in \Gammatop$. 
		Let $\tau_{c+1} = \tau_c[k \mapsto a_{c+1}]$. 
		By Observation~\ref{prop_available_pebble}, 
		there is some available pebble $k \in [m]$ in $q_c$. 
		There exist $i,j \in [m] \cup \{\rhd,\lhd\}$ and 
		\[
		t_{c+1}=(q_c,\Move{k}{i}{j},\xi^{\omega(s(a_{c+1}))},q_{c+1})
		\]
		such that $t_{c+1} \in \delta$. 
		By the choice of $i$ and $j$ in the definition of 
		an extremal transition from a prefix state in \S~\ref{subse:def_automaton},
		$\hat{\tau_{c}}(i) < a_{c+1} \leq \hat{\tau_{c}}(j)$. 
		Since $\pi_c$ is $w$-coherent and $\embeddingOp{\ext}{w}$ is order-preserving, 
		$\hat{\rho_{c}}(i) < \ell_{c+1} \leq \hat{\rho_{c}}(j)$. 
		We have $\ell_{c+1} \notin N_{c}$, hence
		no pebble has been placed on $\ell_{c+1}$.
		In particular we have $\ell_{c+1} \not= \rho_c(j)$,
		implying that $\hat{\rho_{c}}(i) < \ell_{c+1} < \hat{\rho_{c}}(j)$. 
		Let $\rho_{c+1} = \rho_c[k \mapsto a_{c+1}]$.
		Since $s(a_{c+1}) = w(\ell_{c+1})$, $\xi^{\omega(s(a_{c+1}))} = \xi^{\omega(\ell_{c+1})} = u(\ell_{c+1})$. 
		Then $\pi_c \rightsquigarrow^{t_{c+1}}_u \pi_{c+1}$  and $\pi_{c+1}$ is $w$-coherent. 
		
	\end{enumerate}
	
	Now assume $\ell_{c+1} \notin \extPos(w)$. Then $\ell_{c+1} \in \positions_{\onetoprm}(w)$,
	$\ell_{c+1} \notin N_c$, 
	$a_{c+1} = a_{c}$, and
	for every $\theta\in\Theta_{\exists}$ there are
	$
	1\leq c_{\theta,l}<c<c_{\theta,r} \leq C
	$
	such that
	\[
	\begin{array}{llllll}
	c_{\theta,l} &=&  \max(\{\tilde{c} \in [C] \mid \ell_{\tilde{c}} \in \extPos_{\onetoprm,\theta}(w))\} \cap[c-1]),\\
	c_{\theta,r} &=&  \min\,(\{\tilde{c}  \in [C] \mid \ell_{\tilde{c}} \in \extPos_{\onetoprm,\theta}(w))\} \\ && \hphantom{\min\,(}\cap \{c+1,\ldots,C\}). 
	 \end{array}
	\]
	Let $s_{\mathit{pr}}$ and $s_{\mathit{su}}$ be respectively the
	prefix and suffix of $s$ given by $s_{\mathit{pr}} = s(1)\cdots s(a_{c+1})$
	and $s_{\mathit{su}} = s(a_{c+1}+1)\cdots s(n_s)$. 
	We have $s = s_{\mathit{pr}} s_{\mathit{su}}$ and 
	$\ext(s)=\ext(s_{\mathit{pr}} w(\ell_{c+1}) s_{\mathit{su}})$, and hence
	$w(\ell_{c+1}) \in \nonext{s}{a_{c}+1}$.
	Let $\tau_{c+1} = \tau_{c}$ and $b_{c+1} = 1$. 
	There exists $i$ such that
	\[
	(q_c,\Move{(m+1)}{i}{\lhd},\xi^{\omega(w(\ell_{c+1}))},q_{c+1}) \in \delta.
	\]

	If $b_c = 0$, then by the choice of $i$ in the definition of a non-extremal transition 
	from a prefix state in \S~\ref{subse:def_automaton},
	$\hat{\tau}_c(i) \leq  a_c$. 
	Since $\embeddingOp{\ext}{w}$ is order-preserving, 
	$\hat{\rho}_c(i) \leq \ell_{c}$ and hence $\hat{\rho}_c(i) < \ell_{c+1}$.	
	If $b_c = 1$, then the computation $((t_1,\ldots,t_c),(\pi_0,\ldots,\pi_c))$
	has at least one non-extremal transition. Hence the pebble $m+1$ was moved during this computation,
	i.e.~$\rho_c(m+1) \in \{\ell_1,\ldots,\ell_{c}\}$.
	In either case, $\hat{\rho}_c(i) < \ell_{c+1} < \hat{\rho}_c(\lhd)=n_u+1$.
	Let $\rho_{c+1}=\rho_c$, and note that 
	$\xi^{\omega(\ell_{c+1})} = u(\ell_{c+1})$. 
	Then $\pi_c \rightsquigarrow^{t_{c+1}}_u \pi_{c+1}$,  
	and since $\tau_{c+1}=\tau_c$ and $\rho_{c+1}$ agrees with
	$\rho_c$ on all $k \in [m]$, we have that $\pi_{c+1}$ is $w$-coherent.

The lemma follows with 
$t' = t_C$ and
$\pi' = \pi_{C}$.
\end{proof}

\paragraph*{Induced sequences of extremal strings}
\label{proof_dataword_series}

Let $\cD$ be a data word with $h = \maxdv_\cD$, and $\cT$ a $\cD$-task word. 
For every $e \in \{0\} \cup [h]$, let 
$s_{e} = \ext(\dataAbs(\ttrim{\cT}{h-e}))$.
We say the sequence $s_0, \ldots, s_h$ is \emph{induced by~$\cT$}.
By Prop.~\ref{prop_cor_perfect_completed_tasked_word}:
\begin{lemma}
\label{lem_dataword_series}
Let $\cD$ be a data word with $h = \maxdv_\cD $ such that $\cD \models \varphi$.
There exists a perfect completed $\cD$-task word $\cT$ such that, if $s_0, \ldots, s_h$
is induced by $\cT$, 
then $s_0 = \varepsilon$, $s_h$ is completed, and
$(s_{e-1},s_e)$ is consecutive for $e \in [h]$. 
\end{lemma}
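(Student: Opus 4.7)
The plan is to apply Prop.~\ref{prop_cor_perfect_completed_tasked_word} to obtain a perfect completed $\cD$-task word $\cT$ from the hypothesis $\cD \models \varphi$, and then to verify that the induced sequence $s_0,\ldots,s_h$, defined via $s_e = \ext(\dataAbs(\ttrim{\cT}{h-e}))$, has the three listed properties. Everything boils down to unpacking the definitions of induced sequence, trimming, completed extremal string, and consecutive pair.

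First, I would observe that $\ttrim{\cT}{h}$ is obtained by removing all elements having maximal data value, then removing those newly maximal, and so on $h$ times; since $\maxdv_\cD = h$, this removes every element of $\cT$. Thus $\ttrim{\cT}{h}$ is the empty task word, its data abstraction is the empty string, and hence $s_0 = \ext(\varepsilon) = \varepsilon$. For $s_h = \ext(\dataAbs(\cT))$, I would use that $\cT$ is completed, so every element carries a task in $\compatCompletedTasks$; hence $\dataAbs(\cT) \in \CGamma^\star$, and extracting extremal positions preserves membership in $\CGamma^\star$, so $s_h \in \CGamma^+$ (or $s_h = \varepsilon$ with $\emptystructure{\dat{\Sigma}} \models \psi$, which is fine because $\varphi_\varepsilon$ then holds), giving that $s_h$ is completed as required by Definition~\ref{def_extremal_string}.

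For the third property, fix $e \in [h]$ and set $\cT' = \ttrim{\cT}{h-e}$. By iterated application of the definition of trimming, $\cT'$ is itself a task word (a $\trim{\cD}{h-e}$-task word) and $\ttrim{(\cT')}{1} = \ttrim{\cT}{h-e+1}$, so
\[
s_e = \ext(\dataAbs(\cT')) \qquad \text{and} \qquad s_{e-1} = \ext(\dataAbs(\ttrim{(\cT')}{1})).
\]
By Definition~\ref{def_consecutive_extremal}, the pair $(s_{e-1}, s_e)$ is consecutive, witnessed by $\cT'$. This yields consecutiveness for every $e \in [h]$, concluding the plan.

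I expect no serious obstacle: the lemma is essentially a bookkeeping consequence of Prop.~\ref{prop_cor_perfect_completed_tasked_word} together with the definitions. The only minor point to be careful about is ensuring that iterated trimming of a task word remains a task word (so that $\cT'$ is a legitimate witness to the consecutiveness of each successive pair), which follows directly from Lemma~\ref{lemma_unique_tasked_trimming1} applied repeatedly.
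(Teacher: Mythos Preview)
Your proposal is correct and follows essentially the same approach as the paper: apply Prop.~\ref{prop_cor_perfect_completed_tasked_word} to obtain a perfect completed $\cD$-task word, then verify the three properties directly from the definitions, using in particular that $\ttrim{(\ttrim{\cT}{h-e})}{1} = \ttrim{\cT}{h-e+1}$ to witness consecutiveness. Your write-up is slightly more detailed (e.g., spelling out why $s_h \in \CGamma^+$ and invoking Lemma~\ref{lemma_unique_tasked_trimming1} for iterated trimmings), but the argument is the same.
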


\begin{proof}
Since $\cD \models \varphi$, by Proposition~\ref{prop_cor_perfect_completed_tasked_word}, there exists a perfect completed $\cD$-task word $\cT$.
Let $s_0, \ldots, s_h$ be the sequence of extremal strings  induced by $\cT$. 
We have $s_0 = \ext(\dataAbs(\ttrim{\cT}{h})) = \varepsilon$.
Since $\cT$ is perfect, all the extremal strings in the sequence are perfect. Since $\cT$ is completed,  $s_h = \ext(\dataAbs(\cT))$
is completed.
Since for $e\in[h]$ we have $\ttrim{\cT}{h-(e-1)} = (\ttrim{\cT}{h-e})\trim{\,\!}{1}$, every pair $(s_{e-1},s_e)$ is consecutive. 
\end{proof}

\paragraph*{Proof of Lemma~\ref{lem_defined_accepted}}
Let $R = \maxdv_\cD$. 
By Lemma~\ref{lem_dataword_series}, 
there exists a perfect completed $\cD$-task word $\cT$ such that the sequence  $s_0, \ldots, s_R$
of extremal strings induced by $\cT$ satisfies
that $s_0 = \varepsilon$, $s_R$ is completed, and
$(s_{r-1},s_r)$ is consecutive for $r \in [R]$. 
For every $r \in  \{0\} \cup [R]$,
let $u_r$ be the string projection of $\trim{\cD}{R-r}$
and let
$w_r = \dataAbs(\ttrim{\cT}{R-r})$.
We have $s_r = \ext(w_r)$. Since $\cT$ is perfect, 
so are $\ttrim{\cT}{R-r}$ and 
$s_r$. 

Let $\pi_0=\pi_{\mathrm{init}}$. Observe that for every $r\in \{0\}\cup [R]$, $\pi_0$ is a $w_r$-coherent configuration on $u_r$. 
We construct
a sequence of transitions $(t_1,\ldots,t_R)$
and a sequence of configurations $(\pi_1,\ldots,\pi_R)$
such that,
for every $r\in [R]$,
$\pi_r = ((s_r,\tau_r), \rho_r, [n_{r}])$
is a $w_r$-coherent configuration on $u_r$
and 
$\pi_0 \rightsquigarrow^{\star}_{u_{r}} \pi_r$.
We construct the transitions and configurations inductively as follows. 
For every $r \in [R]$, assume there are $t_{r}$ and $\pi_{r}$ as described above. 

The universe of $\ttrim{\cT}{R-r}$ is a subset of the universe of $\ttrim{\cT}{R-(r+1)}$.
We use the notation
$\embeddingNoArg{w_r}{w_{r+1}}$ for the embedding 
obtained by the composition of 
$\embeddingOp{\dataAbs}{\ttrim{\cT}{R-r}}$
and 
$\embeddingOp{\dataAbs}{\ttrim{\cT}{R-(r+1)}}^{-1}$.
The string $u_r$ is a substring of $u_{r+1}$.
We denote by $\pos_{r,r+1}$ the mapping of positions of $u_r$ to positions of $u_{r+1}$.
Since the universe of
$\ttrim{\cT}{R-e}$ is equal to the universe of $\trim{\cD}{R-e}$ for $e\in \{r,r+1\}$,
$\embeddingNoArg{w_r}{w_{r+1}} = \pos_{r,r+1}$.

Let $\tilde{\pi}_r = ((s_r,\tau_r), \tilde{\rho}_{r}, \positions_{<\onetoprm}(w_{r+1}))$
be a configuration on $u_{r+1}$ with $\tilde{\rho}_{r}(k) = \bot$ if $\rho_{r}(k) = \bot$,
and $\tilde{\rho}_{r}(k)  = \embedding{w_r}{w_{r+1}}{\rho_{r}(k)}$ if $\rho_r(k) \not= \bot$.
The semantics of PIA allow us to lift a computation from a substring to a string, thus 
$\pi_0 \rightsquigarrow^{\star}_{u_{r+1}} \tilde{\pi}_r$.

Let
$\pi_{r+1}^\starttext$
be the configuration on $u_{r+1}$ given by
\[
 \begin{array}{lll}
\pi_{r+1}^\starttext &=& ((s_{r+1}, s_{r+1}(1),\tau_{r+1}^\starttext,0), \tilde{\rho}_{r}, \positions_{<\onetoprm}(w_{r+1}))\\
 \end{array}
\]
where for every pebble $k \in [m]$, 
$\tau_{r+1}^\starttext(k)$ is $\bot$ if $\tau_{r}(k)$ is not in the image of $\notOneTopEmbeddingNoArg{s_{r+1}}{s_r}$,
and is $(\notOneTopEmbeddingNoArg{s_{r+1}}{s_r})^{-1}(\tau_{r}(k))$ otherwise.
Let 
\[
t_{r+1}=((s_r,\tau_r),(s_{r+1}, s_{r+1}(1),\tau_{r+1}^\starttext,0)).
\]
Since
the pair $(s_{r},s_{r+1})$ is consecutive,
we get that $t_{r+1}\in \delta$ and 
$\tilde{\pi_{r}} \rightsquigarrow^{t_{r+1}}_{u_{r+1}} \pi_{r+1}^\starttext$,
implying that
$\pi_{0} \rightsquigarrow^{\star}_{u_{r+1}} \pi_{r+1}^\starttext$. 
We prove that $\pi_{r+1}^\starttext(k)$ is $w_{r+1}$-coherent.
$\pi_{r}$ is $w_r$-coherent by the assumption, hence
$\rho_{r}(k) = \embeddingOp{\ext}{w_{r}}(\tau_{r}(k))$.

Let $k \in [m]$ be such that $\tau_{r+1}^\starttext(k) \not= \bot$. 
Then 
$\tau_{r}(k)$ is in the image of $\notOneTopEmbeddingNoArg{s_{r+1}}{s_r}$
and in particular $\tau_{r}(k)\not=\bot$, and $\tau_{r+1}^\starttext(k)$ is given by:
\[
\begin{array}{llll}
\embeddingOp{\ext \, \circ \, \dataAbs}{\ttrim{\cT}{R-(r+1)}}^{-1}
\left(
\embeddingOp{\ext \, \circ \, \dataAbs}{\ttrim{\cT}{R-r}}
(\tau_r(k))
\right)  = & \\
\embeddingOp{\ext}{w_{r+1}}^{-1}
\left(
\embeddingNoArg{w_{r}}{w_{r+1}}
\left(
\rho_r(k)
\right)
\right).
\end{array}
\]
We have
\[
\begin{array}{lll}
\embeddingOp{\ext}{w_{r+1}}(\tau_{r+1}^\starttext(k)) = \embeddingNoArg{w_{r}}{w_{r+1}}
\left(
\rho_r(k)
\right)
 =  \tilde{\rho}_{r}(k)
	
\end{array}
\]

Hence, $\pi_{r+1}^\starttext(k)$ is $w_{r+1}$-coherent. 

Now we apply Lemma~\ref{lem_one_extremal} with the data word 
$\trim{\cD}{R-(r+1)}$ of size $n_{r+1}$, 
the string projection $u_{r+1}$, the perfect $\trim{\cD}{R-(r+1)}$-task word $\ttrim{\cT}{R-(r+1)}$,
the abstraction $w_{r+1}$, the extremal string $s_{r+1}$, and the configuration $\pi_{r+1}^\starttext(k)$;
we get 
that there are $\tau_{r+1}$ and $\rho_{r+1}$  
such that $\pi_{r+1}$
is a $w_{r+1}$-coherent configuration on $u_{r+1}$
and
$\pi_{r+1}^\starttext(k) \rightsquigarrow^{\star}_{u_{r+1}} \pi_{r+1}$, and
therefore
$\pi_0 \rightsquigarrow^{\star}_{u_{r+1}} \pi_{r+1}$.
The lemma follows for the computation 
$\pi_0 \rightsquigarrow^{\star}_{u_{R}} \pi_{R}$
since  $u=u_{r+1}$, $n_R = n$, and $(s_R,\tau_R) \in F$.

\section{Proof of \texorpdfstring{$L(\pA^\varphi) \subseteq L(\varphi)$}{L(A) subseteq L(varphi)}}

\begin{proof}
Given $w \in L(\pA^\varphi)$, we build a data word $\cD$ for it
based on an accepting computation of $\pA^\varphi$ on $w$.
To show that $\cD \models \varphi$, we prove the existence of a perfect completed $\cD$-task word based on the syntactic representation of consecutive extremal strings.

Let $(\bar{t},\bar{\pi})$ be an accepting computation.
Let $z$ be the number of transitions from a state in $\{q_{\mathrm{init}}\} \cup Q_e$ to a state in $Q_p$ in $\bar{t}$.
The computation can be broken down into parts as follows:
\[
\pi_0 
\rightsquigarrow_{w}^{t_0} \pi_1^{(1)} \rightsquigarrow_{w}^{\bar{t}_1} \pi_{1}^{(2)} \rightsquigarrow_{w}^{\bar{t}_2} 
\cdots 
\rightsquigarrow_{w}^{\bar{t}_{h-1}} \pi_{1}^{(z)} \rightsquigarrow_{w}^{\bar{t}_z} \pi_f
\]
where 
the target state of any transition is in $Q_e$ if and only if there is $e \in [z]$ such that the transition is the last one in $\bar{t}_e$. 
The sequence $\bar{t}$ is equal to the concatenation of $t_0$ and the sequences $\bar{t}_1, \ldots, \bar{t}_z$. 
For every $e$ and $a$, 
let the state of $\pi^{(e)}_a$ be $(s_e, s_e(1)\cdots s_e(a_e), \tau_{e,{a_e}}, b_e)$. 

For every transition $t = t^{(e)}_a$, let $\gamma_t \in \Gammatop \cup \{ \varepsilon\}$ be:
\begin{enumerate} 
 \item if $t\in Q\times Q$, $\gamma_t = \varepsilon$.
 \item Otherwise, 
 if $t$ is a non-extremal transition, let $\gamma_t \in \nonext{s_e}{a_e}$
 such that $t \in Q \times \MoveS_{m+1} \times \{\xi^{\gamma_t}\} \times Q$. 
 If $t$ is an extremal transition, 
 let $\gamma_t = s(a_e)$ (and note that we again have that  $t \in Q \times \MoveS_{m+1} \times \{\xi^{\gamma_t}\} \times Q$). 
\end{enumerate}

Let $\gamma_{\bar{t}_{e}} = \gamma_{t_1^{(e)}} \cdots \gamma_{t_{\mathit{len}(e)}^{(e)}}$, where $\mathit{len}(e)= |\bar{t}_{e}|$. 
For every $e \in [z]$,
let $v_e$, $u_e$ be the substrings of $w$ which the automaton reads during the transitions $\bar{t}_e$ respectively 
$(t_0,\bar{t}_1,\ldots,\bar{t}_e)$. 
Let $g_e'$ be such that $u_e$ is the shuffle $v_e \shuffle_{g_e'} u_{e-1}$ relative to the positions in $w$. 

We prove the following by induction on $e$: 
there is a data word $\cD_e$ and a $\cD_e$-task word $\cT_e$ such that for every $p \leq e$ we have:
	(i) 
	$\project(\trim{\cD_e}{p}) = u_{e-p}$,
	(ii)
	$\ext(\dataAbs(\ttrim{\cT_e}{p})) = s_{e-p}$,
	(iii)
	$\ttrim{\cT_e}{p}=\cT_{e-p}$ and $\ttrim{\cT_e}{p}$ is a $\trim{\cD_e}{p}$-task word,
	(iv)
	the universe of $\cD_e$ is contained in $[e] \times \mathbb{N}$.

Since for every $e \in [z]$, $s_e$ is an extremal string appearing in a state in $Q_p$, $s_e$ is perfect. 
Therefore $\cT_z$ is a perfect $\cD$-task word.
Since the computation is accepting, the extremal string $s_z$ is complete and therefore $\cT_z$ is a completed $\cD$-task word.
By Prop.~\ref{prop_cor_perfect_completed_tasked_word}, we have $\cD \models \varphi$.
\end{proof}

We now prove inductively the following Lemma: 
\begin{lemma}
\label{lem_accepted_satisfies}
Let $w \in \Xi^\star$ be accepted by $\pA^\varphi$. Then  $w \in L(\varphi)$.
\end{lemma}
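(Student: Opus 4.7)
The plan is to construct a data word $\cD$ with $\project(\cD) = w$ by reading off information from an accepting computation of $\pA^\varphi$ on $w$, and then to verify that $\cD \models \varphi$ by exhibiting an accompanying perfect completed $\cD$-task word and applying Prop.~\ref{prop_cor_perfect_completed_tasked_word}.

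First I would decompose the accepting computation $(\bar{t}, \bar{\pi})$ into segments $\bar{t}_1, \ldots, \bar{t}_z$ demarcated by transitions whose target is in $Q_e$. This yields a sequence of extremal strings $\varepsilon = s_0, s_1, \ldots, s_z$. The $s_e$ are perfect by the definition of $Q_e$ and $Q_p$, and $s_z$ is completed because the final state is accepting. Moreover, by the definition of $\delta$, the unique transition from a state in $Q_e$ to a state in $Q_p$ that occurs at the beginning of each $\bar{t}_e$ exists only when the pair $(s_{e-1}, s_e)$ is consecutive. Thus from the accepting computation I extract a sequence of pairwise consecutive perfect extremal strings ending in a completed one.

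The main step is an induction on $e$ to build $\cD_e$ and a perfect $\cD_e$-task word $\cT_e$ such that $\ext(\dataAbs(\cT_e)) = s_e$, the trimming $\ttrim{\cT_e}{1}$ equals $\cT_{e-1}$, and $\project(\cD_e)$ is precisely the substring $u_e$ of $w$ read up to the end of $\bar{t}_e$. The base case uses the empty structure. For the inductive step, I would extract from $\bar{t}_e$ the string $r \in \Gammatop^\star \cap \PGamma^\star$ formed by the top-layer letters read by the move transitions of $\bar{t}_e$ (ignoring silent and non-extremal transitions), together with a strictly monotone $g$ recording where each new top-layer pebble is placed relative to the existing pebbles in $[m]$. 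By construction $s_e = \ext(\rcon{r}{g}{s_{e-1}})$, and Lemma~\ref{lem:syntactic_to_consec} then gives a concrete way to extend $\cD_{e-1}$ and $\cT_{e-1}$ to the desired $\cD_e$ and $\cT_e$ by adding fresh elements with maximal data value in the prescribed positions.

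The main obstacle is bookkeeping the string projection correctly. The PIA reads letters $\xi^{\omega(\gamma_t)}$ from $w$ at both extremal positions (via move transitions using pebbles in $[m]$) and at non-extremal top-layer positions (via the pebble $m{+}1$ in non-extremal transitions). The former correspond to the new elements added in the inductive step, whereas the latter represent top-layer elements that are not captured by $s_e$ and must still be inserted into $\cD_e$ at the correct positions in $\leq_1$. I would therefore strengthen the induction hypothesis by carrying an order-preserving bijection between positions of $u_e$ and the universe of $\cD_e$, and verify that every transition in $\bar{t}_e$, whether extremal or non-extremal, silent or move, preserves this bijection; the pebble positions in each intermediate configuration serve as the guide for where new elements are placed. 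Once the induction is complete, $\cT_z$ is a perfect completed $\cD_z$-task word with $\project(\cD_z) = w$, so Prop.~\ref{prop_cor_perfect_completed_tasked_word} yields $\cD_z \models \varphi$, whence $w \in L(\varphi)$.
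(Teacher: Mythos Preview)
Your decomposition of the accepting computation into segments $\bar t_1,\ldots,\bar t_z$, the induction on $e$ building $\cD_e$ and $\cT_e$, and the appeal to Prop.~\ref{prop_cor_perfect_completed_tasked_word} all match the paper's proof. But the sentence ``By construction $s_e = \ext(\rcon{r}{g}{s_{e-1}})$'' hides the main work, and as written it is a genuine gap.

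The automaton's silent transition from $(s_{e-1},\tau)\in Q_e$ into $Q_p$ only certifies that $(s_{e-1},s_e)$ are consecutive; it does not tie $s_e$ to any particular pair $(r,g)$. Consecutiveness guarantees the existence of \emph{some} $f$ with $s_e=\ext(\rcon{r}{f}{s_{e-1}})$, but the $g$ you extract from the run---whether you read it from the relative pebble order in $\tau$ or from the actual positions in $w$---is a different object, and you must prove it also realises $s_e$. If you take $g$ from $\tau$, note that at the start of segment $e$ only the positions of $s_{e-1}$ in the image of $\notOneTopEmbeddingNoArg{s_e}{s_{e-1}}$ still carry pebbles, so $\tau$ tells you how the new $\Gammatop$-positions interleave with those, not with all of $s_{e-1}$; this is not the $g$ that $\rcon{r}{g}{s_{e-1}}$ needs. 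If instead you take $g$ from the actual positions read in $w$ (which is what you need anyway for $\project(\cD_e)=u_e$), then the equality $s_e=\ext(\rcon{r}{g}{s_{e-1}})$ is exactly what the paper establishes by its $F_e=G_e$ argument: one assumes the two interleavings differ at some first position and derives that a $\Move{k}{i}{j}$ transition would have placed pebble $k$ on the wrong side of a pebble $k'$ already sitting at the corresponding position of $s_{e-1}$, contradicting the semantics of the move. Your proposed strengthening of the induction hypothesis with an order-preserving bijection between $u_e$ and $D_e$ is useful for the projection bookkeeping, but it does not by itself deliver this step: the bijection ensures $\project(\cD_e)=u_e$ once $\cD_e$ is built, yet to build $\cD_e$ via Lemma~\ref{lem:syntactic_to_consec} you already need the correct $g$, and to know $\ext(\dataAbs(\cT_e))=s_e$ you need precisely the equality in question.

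A secondary point: the paper's $r_e$ includes the letters read by \emph{non-extremal} move transitions as well (they are top-layer elements of $\cD_e$ that happen not to be extremal), whereas you exclude them from $r$ and plan to add them afterwards. That can be made to work using Lemma~\ref{lem:delete_nonext}, but it adds a layer of verification; including them in $r$ from the start, as the paper does, is cleaner.
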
 

We assume the induction hypothesis for $e-1$ and prove for $e$.
Let $r_e \in \Gammatop^\star \cap \PGamma^\star$ be obtained from $\gamma_{\bar{t}_e}$ by setting all tasks to $P$. 
Notice $v_e = \xi^{\gamma_{\bar{t}_{e}}} = \xi^{\gamma_{r_{e}}}$. 
Let $g_e:[n_{r_e}] \to [n_{r_e}+n_{s_{e-1}}]$ be given by $g_e(\ell) = \ell + | \extPos(\dataAbs(\cT_{e-1})) \cap [g_e'(\ell) - \ell] |$.
Recall that $\bar{g}_e: [n_{s_{e-1}}] \rightarrow [n_{r_e}+n_{s_{e-1}}]$.
Note:
\[
\begin{array}{llll}
\forall \ell \in [n_{r_e}], & \xi^{(\rcon{r_e}{g_e}{s_{e-1}})(g_e(\ell))} = (\rcon{v_e}{g_e'}{u_{e-1}})(g_e'(\ell)) \\
\forall \ell \in [n_{s_{e-1}}],& \xi^{(\rcon{r_e}{g_e}{s_{e-1}})(\bar{g}_e(\ell))} = (\rcon{v_e}{g_e'}{u_{e-1}})(\bar{g}_e'(\ell)) 
\end{array}
\]

\begin{claim}
\label{lem:extend_D0}
There is a data word $\cD_e$ such that $\project(\cD_e)= u_e $
and the universe of $\cD_e$ is contained in $[e] \times \mathbb{N}$, 
and there exists a $\cD_e$-task word $\cT_e$ such that 
$\ext(\dataAbs(\cT_e))= \ext(\rcon{r_e}{g_e}{s_{e-1}})$ and for every $p \leq e$, $\ttrim{\cT_e}{p} = \cT_{e-p}$. 
\end{claim}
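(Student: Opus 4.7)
I proceed by induction on $e$. The base case $e=0$ is trivial: take $\cD_0$ and $\cT_0$ to be empty, so that $\project(\cD_0) = \varepsilon = u_0$ (as nothing has been read before the first entry into $Q_e$), $\ext(\dataAbs(\cT_0)) = \varepsilon = s_0$, and the universe of $\cD_0$ is the empty subset of $[0] \times \mathbb{N}$. For the inductive step, assume $\cD_{e-1}$ and $\cT_{e-1}$ satisfy the claim's conditions.

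The central idea is to invoke Lemma~\ref{lem:syntactic_to_consec} on $\cT_{e-1}$ with $s^0 := s_{e-1}$, $r := r_e$, and $g := g_e$. The construction in that lemma yields a data word $\cD$ extending $\cD_{e-1}$ by adjoining a fresh set of $n_{r_e}$ elements, all sharing a newly created maximal $\sim_2$-class, together with a $\cD$-task word $\cT$ such that $\ext(\dataAbs(\cT)) = \ext(\rcon{r_e}{g_e}{s_{e-1}})$ and $\ttrim{\cT}{1} = \cT_{e-1}$. I label the fresh elements by $\{e\} \times [n_{r_e}]$ so that the universe stays inside $[e] \times \mathbb{N}$, and I set $\cD_e := \cD$ and $\cT_e := \cT$. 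The condition $\project(\cD_e) = u_e$ then follows from the identities recalled just before the claim: the new elements carry letters $\xi^{\omega(\gamma)}$ with $\gamma$ ranging over $r_e$, and the linear order $\leq_1$ places them at exactly the positions of $u_e$ prescribed by $g_e'$.

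The iterated trimming identity $\ttrim{\cT_e}{p} = \cT_{e-p}$ holds for $p = 1$ by Lemma~\ref{lem:syntactic_to_consec}, and for $p > 1$ by associativity $\ttrim{\cT_e}{p} = \ttrim{(\ttrim{\cT_e}{1})}{p-1} = \ttrim{\cT_{e-1}}{p-1}$ combined with the induction hypothesis applied to $\cT_{e-1}$. The main obstacle I anticipate is reconciling the two levels of interleaving: the extremal-level function $g_e$ (which tracks positions among extremal letters) and the raw-level function $g_e'$ (which tracks positions inside $w$) must agree on how the new elements embed into $u_e$. This boils down to the identity $\xi^{(\rcon{r_e}{g_e}{s_{e-1}})(g_e(\ell))} = (\rcon{v_e}{g_e'}{u_{e-1}})(g_e'(\ell))$ noted before the claim, which is exactly what the construction in Lemma~\ref{lem:syntactic_to_consec} guarantees once $g_e$ is defined, via the equation $g_e(\ell) = \ell + |\extPos(\dataAbs(\cT_{e-1})) \cap [g_e'(\ell) - \ell]|$, in terms of $g_e'$ and the extremal positions of $\dataAbs(\cT_{e-1})$.
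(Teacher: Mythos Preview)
There is a genuine gap. You invoke Lemma~\ref{lem:syntactic_to_consec} with $g := g_e$ and take the data word $\cD$ it constructs as your $\cD_e$, but the construction inside that lemma determines the order $\leq_1$ between new elements and old elements only through the extremal elements of $\cT_{e-1}$ and the coarse function $g_e$. Concretely (item~6 in its proof), for any old $d \in D_{e-1}$ the truth of $d \leq_1 (e,\ell)$ depends only on the largest extremal $d' \leq_1 d$, so all non-extremal old elements lying in the same interval between two consecutive extremal elements end up on the same side of each new element. But $g_e'$, which records where pebble $m+1$ actually lands during the non-extremal transitions of $\bar{t}_e$, may interleave new elements strictly between such non-extremal old elements. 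In that situation the $\cD$ produced by Lemma~\ref{lem:syntactic_to_consec} has $\project(\cD) \neq u_e$, and your appeal to the displayed letter identities before the claim does not rescue this: those identities concern only extremal positions and say nothing about the $\leq_1$-placement of non-extremal old elements relative to the new ones. The map $g_e \mapsto g_e'$ is not invertible; defining $g_e$ from $g_e'$ does not let a construction that sees only $g_e$ recover the fine interleaving.

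The paper avoids this by going in the opposite direction. It builds $\cD_e$ explicitly with the $\leq_1$ order dictated by $g_e'$ (so $\project(\cD_e) = u_e$ is immediate), defines $\cT_e$ as the corresponding $\cD_e$-task word, and then applies Lemma~\ref{lem:consec_to_syntactic} (from task words to the syntactic representation) to $\cT_e$, checking that the function $g$ extracted by that lemma coincides with $g_e$; this last check is precisely where the formula $g_e(\ell) = \ell + |\extPos(\dataAbs(\cT_{e-1})) \cap [g_e'(\ell) - \ell]|$ is used. Your argument is easily repaired by following this explicit construction and invoking Lemma~\ref{lem:consec_to_syntactic} rather than Lemma~\ref{lem:syntactic_to_consec}.
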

\begin{proof}
Let $D_{e-1}$ be the universe of $\cD_{e-1}$. 
Let $\cD_e$ be the data word over $\Xi$ with universe $D_{e-1} \cup (\{e\} \times [n_{r_e}])$
such that:
\begin{enumerate}
 \item $\cD_{e-1}$ is the substructure of $\cD_e$ induced by $D_{e-1}$.
 
 \item For every $\ell \in [n_{r_e}]$ and $r_e(\ell)=(\hlayer,\ts)$, $\cD_e \models \xi^{\omega(\ts)}(e,\ell)$. 
 
 \item For every $\ell_1,\ell_2 \in [n_{r_e}]$, $\cD_e \models (e,\ell_1) \sim_2 (e,\ell_2)$.
 
 \item For every $d\in D_{e-1}$ and $\ell \in [n_{r_e}]$, $\cD_e \models d <_2  (e,\ell)$.
 
 \item For every $\ell_1, \ell_2 \in [n_{r_e}]$, $\cD_e \models (e,\ell_1) \leq_1 (e,\ell_2)$ \\ if and only if $\ell_1 \leq \ell_2$.
 
 \item For every $d\in D_{e-1}$ and $\ell \in [n_{r_e}]$, 
 $\cD \models d \leq_1 (e,\ell)$ if and only if
 $g_e'(\ell) \geq |\{ d' \in D_{e-1} \mid \cD_{e-1} \models d' \leq_1 d \}| + \ell$.
 
\end{enumerate}

Let $\cT_e$ be the $\cD_e$-task word such that
\begin{itemize}
	\item 
	for every $d \in D_{e-1}$,
	there are $\ts,\ts' \in \compatTasks$ such that $\omega(\ts)=\omega(\ts')$,
	$\cT_e \models \ts(d)$, and $\cT_{e-1} \models \ts'(d)$, and
	\item
	for every $(e,\ell) \in D_e$, there are $\ts,\ts' \in \compatTasks$ such that $\omega(\ts)=\omega(\ts')$,
	$r_e(\ell) = (\otop, \ts')$, and $\cT_{e} \models \ts(e,\ell)$.
\end{itemize} 
By our construction, $\ttrim{\cT_e}{p} = \cT_{e-p}$ for $p=1$. For $p>1$, this equality follows from the induction hypothesis.

Clearly $\ext(\dataAbs(\ttrim{\cT_e}{1})) = \ext(\dataAbs(\cT_{e-1})) = s_{e-1}$, by the induction hypothesis. 
We apply Lemma~\ref{lem:consec_to_syntactic} with $\cD_e$ for $\cD$, $\cT_e$ for $\cT$ and $r_e$ for $r$.
Note that $g$ in the lemma is $g_e$, hence we get $\ext(\dataAbs(\cT_e)) = \ext(\rcon{r_e}{g_e}{s_{e-1}})$.

Let $\cD_{\max}$ be the substructure of $\cD_e$ which consists of the elements of $\cD_e$ with maximal data value.
By the definition of $\cD_e$, we have $\project(\cD_{\max}) = \xi^{r_e} = v_e$.
By induction, $\project(\cD_{e-1}) = u_{e-1}$.
By the definition $g_e'$, $u_e = v_e \shuffle_{g_e'} u_{e-1}$, and hence $\project(\cD_e) = u_e$.

\end{proof}

From Lemma~\ref{cl_extPos_cardinality}, it follows that there exists 
$
f_e: [n_{r_e}] \to [n_{r_e}+n_{s_{e-1}}]
$ 
such that 
$s_e$ is a substring of $\rcon{{r_e}}{{{f_e}}}{{s_{e-1}}}$, hence 
$s_e=\ext(\rcon{{r_e}}{{{f_e}}}{{s_{e-1}}})$. 

Let $r_e' = \ext(r_e)$ and $s_{e-1}' = \ext(\da{s_{e-1}})$.
There exist 
$
G_e, F_e:[n_{r_e'}] \to [n_{r_e'} + n_{s_{e-1}'}]
$ 
such that 
$\ext(\rcon{{r_e}}{f_e}{s_{e-1}}) = r_e' \shuffle_{F_e} s_{e-1}'$
and
$\ext(\rcon{{r_e}}{g_e}{s_{e-1}}) = r_e' \shuffle_{G_e} s_{e-1}'$.
Assume that $F_e  \not= G_e$. 
Recall that $\bar{F}_e,\bar{G}_e: [n_{s_{e-1}'}] \rightarrow [n_{r_e'}+n_{s_{e-1}'}]$.
Let $\tilde{\ell} \in [n_{r_{e}'} + n_{s_{e-1}'}]$ be the length of the maximal common prefix of 
$r_e' \shuffle_{F_e} s_{e-1}'$ and $r_e' \shuffle_{G_e} s_{e-1}'$.
We divide into two cases:
\begin{itemize}
\item 
	Assume the letter at position $\tilde{\ell}+1$ in 
	$r_e' \shuffle_{G_e} s_{e-1}'$ is in $\Gammatop$. 
	Then the letter at position $\tilde{\ell}+1$ in
  $r_e' \shuffle_{F_e} s_{e-1}'$ is not in $\Gammatop$.
	Let $\ell_1 \in [n_{r_e'}]$ and $\ell_2 \in [n_{s_{e-1}'}]$ be such that $G_e(\ell_1) = \bar{F}_e(\ell_2) = \tilde{\ell}+1$.
	Then $F_e(\ell_1) > \bar{F}_e(\ell_2)$ and $\bar{G}_e(\ell_2) > G_e(\ell_1)$.
	\item 
	Assume the letter at position $\tilde{\ell}+1$ in 
	$r_e' \shuffle_{F_e} s_{e-1}'$ is in $\Gammatop$. 
	Analogously to the previous case, we have $G_e(\ell_1) > \bar{G}_e(\ell_2)$ and $\bar{F}_e(\ell_2) > F_e(\ell_1)$.
	
\end{itemize}
In both cases,
$G_e(\ell_1) < \bar{G}_e(\ell_2)$
if and only if
$\bar{F}_e(\ell_2) < F_e(\ell_1)$. 

Let $t^{(e)}_c$ be the transition in which the automaton reads position $\embeddingOp{\ext}{r_e}(\ell_1)$ of $v_e$. 
Let $\ell_1^w \in [n_w]$ be the position of $w$ which $t^{(e)}_c$ reads. 
Let $\Move{k}{i}{j}$ be the move in $t^{(e)}_c$. 
There is a pebble 
$k' = \tau_{e,c}^{-1}(\bar{F}_e(\ell_2))$
on the position in $s_e$ corresponding to $\ell_2$ in $s_{e-1}'$. 
Let $\ell_2^w \in [n_w]$ be the position of $k'$ in $w$.
$G_e$ and $\bar{G}_e$ have disjoint images, hence $G_e(\ell_1) \neq \bar{G}_e(\ell_2)$, and similarly for $F_e,\bar{F}_e$.
We divide into cases:
\begin{enumerate}
 
 \item If $G_e(\ell_1) < \bar{G}_e(\ell_2)$, 
 then 
 $\bar{F}_e(\ell_2) < F_e(\ell_1)$. 
 By the definition of the automaton, the pebble $i$ is located to the left of $\ell_1^w$, and $i$
  is either $k'$ itself, or another pebble located to the right of $k'$. 
  Hence, $\ell_1^w > \ell_2^w$. 
  But since  $G_e(\ell_1) < \bar{G_e}(\ell_2)$, we have 
  $g_e(\embeddingOp{\ext}{r_e}(\ell_1)) < \bar{g_e}(\embeddingOp{\ext}{\,\da{s_{e-1}}}{\ell_2})$, 
  we have $g_e'(\embeddingOp{\ext}{r_e}(\ell_1)) < \bar{g_e'}(\embeddingOp{\ext}{\,\da{s_{e-1}}}(\ell_2))$, and hence $\ell_1^w < \ell_2^w$, in contradiction.
 
 \item The case of $g_e(\ell_1) > \bar{g_e}(\ell_2)$ is analogous. 
 
 \end{enumerate}

Hence $F_e = G_e$ and $s_e = \ext(\rcon{{r_e}}{{{g_e}}}{{s_{e-1}}})$,
and the induction hypothesis holds.

 To define partial embeddings, we use the following definition and lemma:
 \begin{definition}[$\extElem(\cT)$]
 Denote $\extElem(\cT) = \embeddingOp{\dataAbs}{\cT}(\extPos(\dataAbs(\cT)))$ for a $\cD$-task word $\cT$.
 That is, the elements corresponding to positions in the extremal string of $\cT$.
 For $h \in \Layers$, $\theta \in \Theta_\exists$, denote  $\extElem_{h,\theta}(\cT)$ and $\extElem_\theta(\cT)$ similarly. 
 \end{definition}

 \begin{lemma}
 \label{cl_extPos_cardinality}
 Let $\cT$ be a $\cD$-task word. 
 For every $\theta \in \Theta_\exists$:
 \begin{enumerate}
 	\item 
 	$\extElem_{\mathrm{2top},\theta}(\cT) = 
 	\extElem_{\mathrm{1top},\theta}(\ttrim{\cT}{1})$,
 	\item
 	$\extElem_{\theta}(\cT) 
 	\subseteq 
 	\extElem_{\mathrm{2top},\theta}({\ttrim{\cT}{1}}) 
 	\cup \extElem_{\theta}({\ttrim{\cT}{1}})$, and
 	\item
 	$\extElem_{\mathrm{rest},\theta}(\cT)
 	\subseteq  
 	\extElem_{\mathrm{2top},\theta}({\ttrim{\cT}{1}}) 
 	\cup \extElem_{\mathrm{rest},\theta}({\ttrim{\cT}{1}})$.
 \end{enumerate}	 

 \end{lemma}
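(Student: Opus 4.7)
I will prove the three containments separately, each time fixing $\theta\in\Theta_\exists$ and an element $d$ on the left-hand side, and splitting according to $\val{\cD}{d}$ to determine which layer $d$ occupies in $\ttrim{\cT}{1}$. The $\max$- and $\min$-variants inside each set are symmetric, so I will only treat the $\leq_1$-maximum case; for Part~2 this corresponds to $\theta\models x\leq_1 y$.

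A key preliminary observation sets the stage: every $\theta\in\Theta_\exists$ is a 2-type that already pins down the 1-type of its first argument, so any witness-type set $\omega\ni\theta$ lies in a single $\Omega_a$ and satisfies $\xi^\omega=\xi_a$. Combined with Lemma~\ref{lemma_unique_tasked_trimming1}, this means that whenever $\theta\in\omega(\ts)$ for an element $d$ (whether $\cT\models\ts(d)$ or $\ttrim{\cT}{1}\models\ts(d)$), $d$ has the fixed 1-type $\xi_a$ in $\cD$. This is precisely what will let me apply the lemmas of Appendix~\ref{subsec_def_tasked} across pairs of elements whose assigned task sets differ.

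Parts~1 and~3 will then follow by routine unfolding. For Part~1 the elements of layer $\ttop$ in $\cT$ are exactly those with $\val{\cD}{d}=\maxdv_\cD-1$, which are the elements of layer $\otop$ in $\ttrim{\cT}{1}$; the $\leq_1$-order on them is unchanged and, by Lemma~\ref{lemma_unique_tasked_trimming1}, so is $\omega$, hence the $\leq_1$-extrema coincide. For Part~3, fixing $d\in\extElem_{\rest,\theta}(\cT)$ as the $\leq_1$-maximum, I split on whether $\val{\cD}{d}=\maxdv_\cD-2$ (placing $d$ in layer $\ttop$ of $\ttrim{\cT}{1}$) or $\val{\cD}{d}\leq\maxdv_\cD-3$ (placing $d$ in layer $\rest$ of $\ttrim{\cT}{1}$); in each case any hypothetical $d'>_1 d$ in the corresponding $\positions_{\cdot,\theta}$ of $\ttrim{\cT}{1}$ has a data value that forces it into $\positions_{\rest,\theta}(\dataAbs(\cT))$, contradicting the maximality of $d$.

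Part~2 is the real obstacle, because the $P_\theta / C_\theta$ label of an element can change between $\cT$ and $\ttrim{\cT}{1}$ when the element's $\theta$-witness sat in the top layer. Fix $d$, the $\leq_1$-max of $\positions_{\rest,P_\theta}(\dataAbs(\cT))$. In Case~A, $\val{\cD}{d}=\maxdv_\cD-2$, so $d$ sits in layer $\ttop$ of $\ttrim{\cT}{1}$; any challenger $d'>_1 d$ with $\theta\in\omega$ at that level either already carries $P_\theta$ in $\cT$ (immediate contradiction) or carries $C_\theta\in\ts'$ in $\cT$, in which case $d,d'$ share the 1-type $\xi_a$ and satisfy $d\sim_2 d'$ (equal data values), so Lemma~\ref{lemma_twotop_type} produces a $\theta$-witness for $d$, contradicting $P_\theta\in\ts$. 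In Case~B, $\val{\cD}{d}\leq\maxdv_\cD-3$, so $d$ sits in layer $\rest$ of $\ttrim{\cT}{1}$; a putative $d'>_1 d$ in $\positions_{\rest,P_\theta}(\dataAbs(\ttrim{\cT}{1}))$ with $C_\theta\in\ts'$ in $\cT$ must have its $\cT$-witness $d^*$ at value $\maxdv_\cD$, and then the 2-types of $(d,d^*)$ and $(d',d^*)$ coincide (identical 1-types of $x$ and $y$, and both forcing $x<_1 y$, $x\lnsim_2 y$, $\neg S_2(x,y)$, and $\neg S_2(y,x)$, since in both cases the data-value gap to $d^*$ is at least three), yielding $\cD\models\theta(d,d^*)$ and again contradicting $P_\theta\in\ts$.
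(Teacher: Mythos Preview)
Your plan is correct and follows essentially the same route as the paper: the same layer-shift argument for Part~1, the same union-of-layers reduction for Part~3, and for Part~2 the same case split on whether $d$ lands in layer $\ttop$ or $\rest$ of $\ttrim{\cT}{1}$, dispatched via Lemma~\ref{lemma_twotop_type} in the first case and (the content of) Lemma~\ref{lemma_rest_theta} in the second. The only cosmetic difference is that in Case~B you inline the 2-type comparison that the paper packages as Lemma~\ref{lemma_rest_theta}; also note that in Case~B you silently skip the easy subcase $P_\theta\in\ts'$ (it is the same ``immediate contradiction'' as in Case~A), which you should state explicitly when writing the full proof.
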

\begin{proof}
We prove:
\begin{enumerate}
\item
	$\extElem_{\mathrm{2top},\theta}(\cT) = 
	\extElem_{\mathrm{1top},\theta}(\ttrim{\cT}{1})$ 
\item
	$\extElem_{\theta}(\cT) 
	\subseteq 
	\extElem_{\mathrm{2top},\theta}({\ttrim{\cT}{1}}) 
	\cup \extElem_{\theta}({\ttrim{\cT}{1}})$
\item
	$\extElem_{\mathrm{rest},\theta}(\cT)
	\subseteq$ \\  
	$\extElem_{\mathrm{2top},\theta}({\ttrim{\cT}{1}}) 
	\cup \extElem_{\mathrm{rest},\theta}({\ttrim{\cT}{1}})$
\end{enumerate}

Let $w = \dataAbs(\cT)$ and $w' = \dataAbs(\ttrim{\cT}{1})$.
\begin{enumerate}
	\item
	Let $d \in D$. Since $\val{\cT}{d} = \val{\ttrim{\cT}{1}}{d}+1$, 
	We have $w(\embeddingOp{\dataAbs}{\cT}^{-1})(d) \in \Gammattop$ iff $w'(\embeddingOp{\dataAbs}{\ttrim{\cT}{1}}^{-1})(d)) \in \Gammatop$. 
	Using Lemma~\ref{lemma_trim_omega_compat}, we have for every $\theta \in \Theta_\exists$ that
	\[\embeddingOp{\dataAbs}{\cT}
	(\positions_{\mathrm{2top},\theta}(w)) = 
	\embeddingOp{\dataAbs}{\ttrim{\cT}{1}}
	(\positions_{\mathrm{1top},\theta}(w')).\]
	Since $\embeddingOp{\dataAbs}{\cT}$ and $\embeddingOp{\dataAbs}{\ttrim{\cT}{1}}$ are order-preserving, 
	for both functions $\mathit{opt}=\max$ and $\mathit{opt}=\min$ we have that the positions
	$\ell_{\mathit{opt}} = \mathit{opt}(\positions_{\mathrm{2top},\theta}(w))$, 
	$\ell_{\mathit{opt}}' = \mathit{opt}(\positions_{\mathrm{1top},\theta}(w'))$
	are obtained under $\dataAbs$ from the same $\cT$ element
	\[
	d_{\mathit{opt}} =  \embeddingOp{\dataAbs}{\cT}(\ell_{\mathit{opt}}) = \embeddingOp{\dataAbs}{\ttrim{\cT}{1}}(\ell_{\mathit{opt}}').
	\]

	\item
	Let $\ell \in \extremal_{\theta}(w)$, $d = \embeddingOp{\dataAbs}{\cT}(\ell)$, and let $w(\ell) = (\rest,\ts_{d})$.
	We have that $P_\theta \in \ts_d$. 
	Let $\ell'$ be such that $d = \embeddingOp{\dataAbs}{\ttrim{\cT}{1}}(\ell')$, and $w'(\ell') = (\hlayer,\ts_{d}')$ with $h\in\{\twotoprm,\restrm\}$. 
	By Lemma~\ref{lemma_trim_omega_compat}, $\theta \in \omega(\ts_{d}')$. 
	Let $D'$ be the universe of $\trim{\cD}{1}$. 
	Since $\trim{\cD}{1}$ is a substructure of $\cD$ and $d\in D'$, if $\cD \not\models \exists y \ \theta(d,y)$ then also $\trim{\cD}{1} \not \models  \exists y \ \theta(d,y)$
	and therefore also $P_\theta \in \ts_d'$.
	Assume $\theta(x,y) \models x \leq_1 y$. The case of $\theta(x,y) \models y \leq_1 x$ is analogous. 
	
	Assume for contradiction that $\ell' \notin \extremal_{\mathrm{2top},\theta}(w')$ and $\ell' \notin \extremal_{\theta}(w')$.
	There exist $d_1 \in D'$, $\ell_1' \in [|D'|]$, and $h_1 \in \{\twotoprm,\restrm\}$
	such that $d_1 = \embeddingOp{\dataAbs}{\ttrim{\cT}{1}}(\ell_1')$, $w'(\ell_1')\in \Gamma_{h_1}$,
	$\ell' \lneq \ell_1'$, $\cD \models d \lneq_1 d_1$, and:		
	\begin{enumerate}
 	 \item if $w'(\ell') \in \Gammattop$, then 
 	 $h_1=\twotoprm$, 
 	 $\ell_1' \in \extremal_{\mathrm{2top},\theta}(w')$, and
 	 $\cD \models d \sim_2 d_1$;
 	 \item if $w'(\ell') \in \Gammarest$, then 
 	 $h_1=\restrm$, 
 	 $\ell_1' \in \extremal_{\theta}(w')$ and $w'(\ell_1')\in \Gammarest$.
	\end{enumerate}
	Let $\ell_1$ be such that $d_1 = \embeddingOp{\dataAbs}{\cT}(\ell_1)$.
	We have $\ell \lneq \ell_1$ and $w(\ell_1) \in \Gammarest$.
	Let $w(\ell_1) = (\rest,\ts_{d_1})$ and $w'(\ell_1') = (\hlayerp{1},\ts_{d_1}')$. 
	Since $\ell_1' \in \extremal_{\mathrm{2top},\theta}(w') \cup \extremal_{\theta}(w')$,
	we have $\theta\in\omega(\ts_{d_1}')$. 
	By Lemma~\ref{lemma_trim_omega_compat} we have $\theta \in \omega(\ts_{d_1})$, 
	and using that $\ell \in \extremal_{\theta}(w)$ we have $C_\theta \in \ts_{d_1}$
	and hence $\cD \models \exists y \ \theta(d_1,y)$. 
	Since $\theta \in \omega(\ts_d)\cap\omega(\ts_{d_1})$, we have $\xi^{\omega(\ts_d)} = \xi^{\omega(\ts_{d_1})}$, implying
	that $d$ and $d_1$ have the same 1-type in $\cD$. 
	
	\begin{enumerate}
		\item 
		If $w'(\ell') \in \Gammattop$, 
		by Lemma~\ref{lemma_twotop_type} we have $\cD \models \exists y \ \theta(d,y)$, 
	        in contradiction to $P_\theta \in \ts_{d}$. 
		\item
		If $w'(\ell') \in \Gammarest$, 
		since $\ell_1' \in \extremal_{\theta}(w')$ we have $P_\theta \in \ts_{d_1}'$ 
		and hence $\trim{\cD}{1} \not \models  \exists y \ \theta(d_1,y)$. 
		By Lemma~\ref{lemma_rest_theta} we have $\cD \models \exists y \ \theta(d,y)$, 
		in contradiction to $P_\theta \in \ts_{d}$. 
	\end{enumerate}
	\item
	For every $d \in D$, $\val{\cD}{d} = \val{\trim{\cD}{1}}{d} - 1$. 
	Hence, using 
	Lemma~\ref{lemma_trim_omega_compat}, we have that for every $\theta \in \Theta_\exists$, 
	the embedding $\embeddingOp{\dataAbs}{\cT}(\positions_{\mathrm{rest}, \theta}(w)))$ is given by
	\[
	\embeddingOp{\dataAbs}{\ttrim{\cT}{1}}(\positions_{\twotoprm, \theta}(w') \cup
	 \positions_{\restrm, \theta}(w')).
	\]
	Since $\embeddingOp{\dataAbs}{\cT}$ and $\embeddingOp{\dataAbs}{\ttrim{\cT}{1}}$ are order-preserving, 
	for both functions $\mathit{opt}=\max$ and $\mathit{opt}=\min$ we have that the positions
	$\ell_{\mathit{opt}}' = \mathit{opt}(\positions_{\twotoprm, \theta}(w') \cup \positions_{\restrm, \theta}(w'))$
	and 
	$\ell_{\mathit{opt}} = \mathit{opt}(\positions_{\mathrm{rest}, \theta}(w)))$
	are obtained under $\dataAbs$ from the same $\cT$ element
	$d_{\mathit{opt}} =  \embeddingOp{\dataAbs}{\cT}(\ell_{\mathit{opt}}) = \embeddingOp{\dataAbs}{\ttrim{\cT}{1}}(\ell_{\mathit{opt}}')$.

	\end{enumerate}

\end{proof}

 Let $(s',s)$ be a pair of consecutive extremal strings, and 
 let  $\cT$ be a task word such that
 $s = \ext(\cT)$
 and 
 $s' = \ext(\ttrim{\cT}{1})$. 
 We denote by  
 $\notOneTopEmbeddingNoArgT{s}{s'}{\cT}$
 the function 
 from the set 
 $s^{-1}(\Gamma - \Gammatop)$
 of positions $\ell$ 
 for which $s(\ell)\notin \Gammatop$
 to $[n_{s'}]$ 
 defined as
 $\notOneTopEmbeddingNoArgT{s}{s'}{\cT}(\ell) = 
 \embeddingOp{\ext \, \circ \, \dataAbs}{\ttrim{\cT}{1}}^{-1}\left(
 \embeddingOp{\ext \, \circ \, \dataAbs}{\cT}
 (\ell)\right)$.
 The function $\notOneTopEmbeddingNoArgT{s}{s'}{\cT}$ is well-defined: 
 since $s(\ell)$ is not in $\Gammatop$, we get from Lemma~\ref{cl_extPos_cardinality} that
 $\embeddingOp{\ext \, \circ \, \dataAbs}{\cT}
 	(\ell) \in
 	\embeddingOp{\ext \, \circ \, \dataAbs}{\ttrim{\cT}{1}}([n_{s'}])$.
 We call $\notOneTopEmbeddingNoArgT{s}{s'}{\cT}$ a \emph{partial embedding via $\cT$} since
 it is injective and order-preserving.

 \begin{runningexample}
 $\notOneTopEmbeddingNoArgT{\ul{s}}{\ul{s}'}{\ul{\cT}}$ 
 has the domain $\{1,2,4\}$, and  
 is equal to the composition:
 \[
 \embeddingOp{\ext}{\dataAbs(\ttrim{\ul{\cT}}{1})}^{-1} \circ \embeddingOp{\dataAbs}{\ttrim{\ul{\cT}}{1}}^{-1}
 \circ \embeddingOp{\dataAbs}{\ul{\cT}} \circ \embeddingOp{\ext}{\dataAbs(\ul{\cT})}.
 \]
 The embeddings $\embeddingOp{\dataAbs}{\ttrim{\ul{\cT}}{1}}$ and $\embeddingOp{\dataAbs}{\ul{\cT}}$ 
 were given in Example~\ref{ex_emb_abst},
 and the embeddings $\embeddingOp{\ext}{\dataAbs(\ttrim{\ul{\cT}}{1})}$ and $\embeddingOp{\ext}{\dataAbs(\ul{\cT})}$ 
 were given in Example~\ref{ex_emb_ext}. We have:
\[
\begin{array}{lll}
\notOneTopEmbeddingNoArgT{\ul{s}}{\ul{s}'}{\ul{\cT}}(1) &=& 1 \\
\notOneTopEmbeddingNoArgT{\ul{s}}{\ul{s}'}{\ul{\cT}}(2) &=& 3 \\ 
\notOneTopEmbeddingNoArgT{\ul{s}}{\ul{s}'}{\ul{\cT}}(4) &=& 4	
\end{array}
\]
 \end{runningexample}

 \begin{definition}[Partial embedding]
 Let $(s',s)$ be consecutive extremal strings. 
 We denote by  
 $\notOneTopEmbeddingNoArg{s}{s'}$
 the function 
 from
 $s^{-1}(\Gamma - \Gammatop)$
 to $[|s'|]$ 
 defined as follows.
 Let $\cT$  be a task word, and 
 $s = \ext(\cT)$ and 
 $s' = \ext(\ttrim{\cT}{1})$. 
 Then we have
 $\notOneTopEmbeddingNoArg{s}{s'} = 
 \notOneTopEmbeddingNoArgT{s}{s'}{\cT}$.
 \end{definition}

The following lemma shows that partial embeddings are well-defined:
\begin{lemma}
\label{lem_f_extremal}
Let $\cT_1$ and $\cT_2$ be task words, and let $s = \ext(\dataAbs(\cT_1)) = \ext(\dataAbs(\cT_2))$ and
$s' = \ext(\dataAbs(\ttrim{\cT_1}{1})) = \ext(\dataAbs(\ttrim{\cT_2}{1}))$.
Then $\notOneTopEmbeddingNoArgT{s}{s'}{\cT_1}= 
\notOneTopEmbeddingNoArgT{s}{s'}{\cT_2}$. 
\end{lemma}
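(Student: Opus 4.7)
The plan is to reduce the claim to equality of images in $[|s'|]$: both maps $\notOneTopEmbeddingNoArgT{s}{s'}{\cT_i}$ are strictly order-preserving injections defined on the common domain $s^{-1}(\Gamma - \Gammatop)$, so two such maps with the same image necessarily agree pointwise. I would then analyse the image by partitioning the domain into $s^{-1}(\Gammattop)$ and $s^{-1}(\Gammarest)$.

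For $\ell$ with $s(\ell) \in \Gammattop$, the corresponding element of $\cT_i$ is second-top and extremal. Part~(1) of Lemma~\ref{cl_extPos_cardinality} gives the equality $\extElem_{\ttop,\theta}(\cT_i) = \extElem_{\onetoprm,\theta}(\ttrim{\cT_i}{1})$ for every $\theta$, from which the image of this part lies in $s'^{-1}(\Gammatop)$ and the cardinalities match. Each restricted map is therefore a bijection from $s^{-1}(\Gammattop)$ onto $s'^{-1}(\Gammatop)$, and the unique order-preserving such bijection depends only on $s$ and $s'$. For $\ell$ with $s(\ell) \in \Gammarest$, parts~(2) and~(3) of Lemma~\ref{cl_extPos_cardinality} constrain the image to lie in $s'^{-1}(\Gammattop \cup \Gammarest)$, but cardinalities no longer need to match all available targets.

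The main obstacle is proving equality of images on $s^{-1}(\Gammarest)$. My approach uses the syntactic characterization in Lemma~\ref{lem:consec_to_syntactic}: each $\cT_i$ yields a pair $(r_i, g_i)$ with $s = \ext(\rcon{r_i}{g_i}{s'})$, and one can verify that the restriction of $\notOneTopEmbeddingNoArgT{s}{s'}{\cT_i}$ to rest positions equals $\bar{g_i}^{-1} \circ \embeddingOp{\ext}{\rcon{r_i}{g_i}{s'}}$, well-defined because a rest position of $s$ must back-translate to the $\da{s'}$-portion of the shuffle (as $r_i \in \Gammatop^\star$). Although different $\cT_i$ may yield different $(r_i, g_i)$, both shuffles produce the same $s$, and I would finish by a case analysis on whether each rest-extremal element of $\cT_i$ originates from $\extElem_{\rest, \theta}$ or $\extElem_\theta$: using Lemma~\ref{cl_extPos_cardinality}(2)(3) together with order preservation and matching of $\omega$-patterns between $s(\ell)$ and $s'(\notOneTopEmbeddingNoArgT{s}{s'}{\cT_i}(\ell))$, the target in $s'$ is pinned down purely in terms of $s$ and $s'$, giving the coincidence of images and hence of the two maps.
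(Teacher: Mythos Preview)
Your reduction to equality of images is a valid and clean observation: two strictly order-preserving injections from the same finite domain with the same image must coincide. Your treatment of $s^{-1}(\Gammattop)$ is also correct and in fact slightly cleaner than the paper's. Using Lemma~\ref{cl_extPos_cardinality}(1), the set of second-top extremal elements of $\cT_i$ equals the set of top extremal elements of $\ttrim{\cT_i}{1}$, so each $\notOneTopEmbeddingNoArgT{s}{s'}{\cT_i}$ restricts to an order-preserving bijection $s^{-1}(\Gammattop)\to s'^{-1}(\Gammatop)$, and uniqueness forces agreement there. The paper instead derives this via a contradiction using Claim~\ref{cl_notextPos} and the minimality of the first disagreement position.

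The genuine gap is in your treatment of $s^{-1}(\Gammarest)$. You write that ``the target in $s'$ is pinned down purely in terms of $s$ and $s'$'' by ``order preservation and matching of $\omega$-patterns,'' but this is the statement to be proved, not an argument. The difficulty is real: a rest position $\ell$ in $s$ corresponds to an element whose layer in $\ttrim{\cT_i}{1}$ can be either second-top or rest, and which of the two (hence which position in $s'$) is \emph{a priori} a function of the actual data value in $\cT_i$, not of $s$ and $s'$. Knowing $\omega(s(\ell))$ and the layer constraints from Lemma~\ref{cl_extPos_cardinality}(2)(3) does not single out a position in $s'$. Your detour through the syntactic pair $(r_i,g_i)$ from Lemma~\ref{lem:consec_to_syntactic} does not help either: different $\cT_i$ yield different $(r_i,g_i)$, and you would still need to show that $\bar g_1^{-1}\circ\embeddingOp{\ext}{\rcon{r_1}{g_1}{s'}}$ and $\bar g_2^{-1}\circ\embeddingOp{\ext}{\rcon{r_2}{g_2}{s'}}$ agree, which is exactly the original problem in syntactic clothing.

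The paper's proof of the rest case is where the substance lies. It argues by contradiction from the minimal disagreement $\ell_s$: using Claim~\ref{cl_notextPos} (the ``mismatched'' element $d_{1,2}$ is not extremal in $\cT_2$), it first excludes $\ell_2\in\extremal_{\restrm,\theta}(w_2)$ for every $\theta\in\omega(\ts_s)$ by exhibiting witnesses on both sides, then forces $\ell_2\in\extremal_{\bar\theta}(w_2)$ for some $\bar\theta$ with $P_{\bar\theta}\in\ts_s$. Claim~\ref{cl_rest}, which relies on Lemmas~\ref{lemma_cor_twotop_type} and~\ref{lemma_cor_rest_theta} about how promised tasks propagate along $\leq_1$, then pins down the direction of $\bar\theta$ and shows $P_{\bar\theta}\in\ts_{2,1}$, producing a larger element of $\positions_{\restrm,P_{\bar\theta}}(w_2)$ and the contradiction. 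Your sketch contains none of this machinery; to close the gap you would need an analogue of Claim~\ref{cl_rest} and the promised-task propagation lemmas, or an entirely different argument showing that the image on $s^{-1}(\Gammarest)$ is determined by $(s,s')$.
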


\paragraph*{Proof of Lemma~\ref{lem_f_extremal}}
\label{proof_prop_f_extremal}
Let $n_s = |s|$.
Since $\ext(\dataAbs(\cT_1)) = \ext(\dataAbs(\cT_2))$, 
the domains of 
$\notOneTopEmbeddingNoArgT{s}{s'}{\cT_1}$ and 
$\notOneTopEmbeddingNoArgT{s}{s'}{\cT_2}$ 
are equal.  
Assume for contradiction that $\notOneTopEmbeddingNoArgT{s}{s'}{\cT_1} \not= \notOneTopEmbeddingNoArgT{s}{s'}{\cT_2}$.
Let $\ell_s$ be the minimal position in $s$ such that
$s(\ell_s)\notin \Gammatop$ and 
$\notOneTopEmbeddingNoArgT{s}{s'}{\cT_1}(\ell_s) \neq \notOneTopEmbeddingNoArgT{s}{s'}{\cT_2}(\ell_s)$.
For $i=1,2$, 
let $w_i = \dataAbs(\cT_i)$, $n_i = |w_i|$, and $w_i' = \dataAbs(\ttrim{\cT_i}{1})$.
Let $\ell_i, d_i$, $\ell_{s',i}$ be as follows:
\[
 \begin{array}{lllll}
 \ell_i &=& \embeddingOp{\ext}{w_i}(\ell_s) \\
  d_i & = & \embeddingOp{\dataAbs}{\cT_i}(\ell_i)\\
  \ell_{s',i} &=& \notOneTopEmbeddingNoArgT{s}{s'}{\cT_i}(\ell_s)\\
 \end{array}
\]
and note $d_i = \embeddingOp{\ext \, \circ \, \dataAbs}{\cT_i}(\ell_s) = \embeddingOp{\ext \, \circ \, \dataAbs}{\ttrim{\cT_i}{1}}(\ell_{s',i})$,
and since $s(\ell_s)\notin \Gammatop$, $d_i$ belongs to the universe of $\ttrim{\cT_i}{1}$. 
For distinct $i,j\in\{1,2\}$, 
let $d_{i,j}$, $\ell_{i,j}$, $\ell_{w_2',1}$ be as follows:
\[
 \begin{array}{lllll}
\ell_{w_j',i} &=& \embeddingOp{\ext}{w_j'}(\ell_{s',i}) \\
 d_{i,j} &=& \embeddingOp{\dataAbs}{\ttrim{\cT_j}{1}}(\ell_{w_j',i})\\
\ell_{i,j} &=& \left(\embeddingOp{\dataAbs}{\cT_j}\right)^{-1}(d_{i,j})\\
\end{array}
\]
and note $d_{i,j} =  \embeddingOp{\ext \, \circ \, \dataAbs}{\ttrim{\cT_j}{1}}(\ell_{s',i})$,
and that $d_{i,j}$ belongs to the universe of $\ttrim{\cT_j}{1}$ and hence to that of $\cT_j$.
W.l.o.g $\ell_{s',1} < \ell_{s',2}$, and therefore by the order-preservation property of embeddings,  
$\cT_1 \models d_1 \lneq_1 d_{2,1}$, 
$\ell_1 < \ell_{2,1}$, 
$\cT_2 \models d_{1,2} \lneq_1 d_2$, and
$\ell_{1,2} < \ell_2$ (see Table~\ref{table:prop_f_extremal}). 
\begin{table}
\caption{\label{table:prop_f_extremal}
 }
In each row there is one element $d$ of $\ttrim{\cT_i}{1}$ with $i\in \{1,2\}$. 
Each of the columns $u\in \{w_1,w_2,s\}$ indicates the position of $d$ in the string $u$ according to the embedding
$\embeddingOp{Op}{\cT_i}$ with the appropriate operation $Op = \dataAbs$ for $u \in \{w_1,w_2\}$ and $Op = \ext \circ \dataAbs$ for $u=s$. 
Each of the columns $u'\in \{s,w_2'\}$ indicates the position of $d$ in the string $u'$ according to the embedding
$\embeddingOp{Op}{\ttrim{\cT_i}{1}}$ with $Op = \dataAbs$ for $u' = w_2'$ and $Op = \ext \circ \dataAbs$ for $u'=s$.
Each of the elements and positions in row 1  (respectively row 3) 
are strictly smaller than the elements and positions in the same column in row 2 (respectively row 4). 
(The comparison of elements is with respect to $\leq_1$.) 
\\ \, \
\\
\begin{tabular}{c l l l l l l l}
\cline{2-8}
& ~ $\ttrim{\cT_1}{1}$ & $\ttrim{\cT_2}{1}$ & $w_1$ &  $w_2$  & $s$ & $s'$ & $w_2'$
\\
\cline{2-8}
\footnotesize smaller & & $d_{1,2}$  & & $\ell_{1,2}$ & & $\ell_{s',1}$ & $\ell_{w_2',1}$  \\
\footnotesize larger  & & $d_2$ &  & $\ell_2$ & $\ell_s$ & $\ell_{s',2}$ & \\
\cline{2-8}
\footnotesize smaller & ~ $d_1$ &  & $\ell_1$ &  & $\ell_s$ & $\ell_{s',1}$ &\\
\footnotesize larger & ~ $d_{2,1}$ &  & $\ell_{2,1}$ &  & & $\ell_{s',2}$ &  
\\
\cline{2-8}
\end{tabular}
\end{table}
Let 
\[
 \begin{array}{lll}
  \ts(s(\ell_s)) &=& \ts_s  \\
  \ts(w_i(\ell_{i})) &=& \ts_i\\
  \ts(s'(\ell_{s',i})) &=& \ts_{s',i}\\
  \ts(w_{j}(\ell_{i,j})) &=& \ts_{i,j}
  \end{array}
\]
From the definition of $\ext$, 
$\ts_s = \ts_1 = \ts_2$. 
By the definitions $\ext$ and $\dataAbs$ and from Lemma~\ref{lemma_trim_omega_compat}, 
$\ts_s$, $\ts_{s',1}$, $\ts_{s',2}$, $\ts_{1,2}$, and $\ts_{2,1}$
all realize the same set-type. 

Before continuing the proof of Lemma~\ref{lem_f_extremal}, we prove three claims. 

\begin{claim}
\label{lemma_PEmb_layers}
Let $(s',s)$ be a pair of consecutive extremal strings, and 
let  $\cT$ be a task word such that
$s = \ext(\dataAbs(\cT))$
and 
$s' = \ext(\dataAbs(\ttrim{\cT}{1}))$.
Let $\ell \in [n_s]$. Then:
\begin{enumerate}
 \item $s(\ell)\in \Gammattop$ if and only if $s'(\notOneTopEmbeddingNoArgT{s}{s'}{\cT}(\ell))\in \Gammatop$. \vspace{0.1em} 
 \item $s(\ell)\in \Gammarest$ if and only if $s'(\notOneTopEmbeddingNoArgT{s}{s'}{\cT}(\ell))\in \Gammattop \cup \Gammarest$. 
\end{enumerate}
\end{claim}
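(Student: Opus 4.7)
My plan is to unfold the definition of $\notOneTopEmbeddingNoArgT{s}{s'}{\cT}$ and track a single element of the task word as it moves between the layers of $\cT$ and $\ttrim{\cT}{1}$. Concretely, given $\ell$ with $s(\ell) \notin \Gammatop$, let
\[
 d = \embeddingOp{\dataAbs}{\cT}\!\left(\embeddingOp{\ext}{\dataAbs(\cT)}(\ell)\right)
\]
be the element of $\cT$ represented at position $\ell$ of $s$. By the very definition of the partial embedding,
\[
 d = \embeddingOp{\dataAbs}{\ttrim{\cT}{1}}\!\left(\embeddingOp{\ext}{\dataAbs(\ttrim{\cT}{1})}(\notOneTopEmbeddingNoArgT{s}{s'}{\cT}(\ell))\right),
\]
so $d$ also represents $\notOneTopEmbeddingNoArgT{s}{s'}{\cT}(\ell)$ in $s'$. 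Since $s(\ell)\notin\Gammatop$, Lemma~\ref{cl_extPos_cardinality} tells us $d$ is indeed an element of $\ttrim{\cT}{1}$, guaranteeing that the composition makes sense.

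Next I would compare the two layers assigned to $d$ in $\cT$ and in $\ttrim{\cT}{1}$ using the definition of $\dataAbs$. Here the key observation is that $\maxdv_{\trim{\cD}{1}} = \maxdv_\cD - 1$ and that $\val{\trim{\cD}{1}}{d} = \val{\cD}{d}$ for every $d$ in $\ttrim{\cD}{1}$. This gives a direct translation between the layers:
\[
 h_{\cT}(d) = \ttop \;\Longleftrightarrow\; h_{\ttrim{\cT}{1}}(d) = \onetoprm,
\]
\[
 h_{\cT}(d) = \rest \;\Longleftrightarrow\; h_{\ttrim{\cT}{1}}(d)\in\{\twotoprm,\restrm\}.
\]
The first equivalence is exactly part~(1) of the claim, and the second equivalence is exactly part~(2), modulo the step from $\dataAbs(\ttrim{\cT}{1})$ to $s'=\ext(\dataAbs(\ttrim{\cT}{1}))$.

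Finally, I would note that this last step is immediate, because $\ext$ only \emph{deletes} non-extremal positions from its argument and never modifies the letter at a retained position. Hence the layer component of $s'(\notOneTopEmbeddingNoArgT{s}{s'}{\cT}(\ell))$ agrees with the layer of $d$ in $\dataAbs(\ttrim{\cT}{1})$, and both equivalences lift from $\dataAbs(\ttrim{\cT}{1})$ to $s'$. I do not foresee a serious obstacle: the content of the claim is essentially a bookkeeping statement about how trimming shifts the top two layers down by one, and the only subtlety is making sure the partial embedding is well-defined on the indicated domain, which is handled by Lemma~\ref{cl_extPos_cardinality} and by Lemma~\ref{lem_f_extremal} (which guarantees independence of $\cT$, so that the statement about $\notOneTopEmbeddingNoArg{s}{s'}$ follows).
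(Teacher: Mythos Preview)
Your proposal is correct and follows essentially the same approach as the paper, which merely states that the claim follows from the definitions of $\notOneTopEmbeddingNoArgT{s}{s'}{\cT}$, $\ext$, and $\dataAbs$ together with Lemma~\ref{cl_extPos_cardinality}; you have simply spelled out the details of that unfolding. One minor remark: the appeal to Lemma~\ref{lem_f_extremal} is unnecessary here, since the claim is stated for the $\cT$-indexed embedding $\notOneTopEmbeddingNoArgT{s}{s'}{\cT}$ rather than the task-word-independent $\notOneTopEmbeddingNoArg{s}{s'}$.
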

\begin{proof}
The claim follows from the definitions of  $\notOneTopEmbeddingNoArgT{s}{s'}{\cT}$, $\ext$, and $\dataAbs$, and from Claim~\ref{cl_extPos_cardinality}
\end{proof}

\begin{claim}
\label{cl_notextPos}
$\ell_{1,2} \notin \extPos(w_2)$. 
\end{claim}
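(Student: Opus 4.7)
\textbf{Proof plan for Claim \ref{cl_notextPos}.} The plan is to argue by contradiction using the minimality of $\ell_s$ together with the injectivity of the partial embeddings $\notOneTopEmbeddingNoArgT{s}{s'}{\cT_i}$. Suppose, toward a contradiction, that $\ell_{1,2}\in\extPos(w_2)$. Then there is a unique position $\ell_s^{1,2}\in [n_s]$ with $\embeddingOp{\ext}{w_2}(\ell_s^{1,2})=\ell_{1,2}$, and since $\ext(w_2)=s$ we have $s(\ell_s^{1,2})=w_2(\ell_{1,2})$.

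First I would collect the easy structural facts about $\ell_s^{1,2}$. Because $\ell_{1,2}<\ell_2$ in $w_2$ and $\embeddingOp{\ext}{w_2}$ is order-preserving, we get $\ell_s^{1,2}<\ell_s$. Because $d_{1,2}$ lies in the universe of $\ttrim{\cT_2}{1}$ (it is obtained via $\embeddingOp{\dataAbs}{\ttrim{\cT_2}{1}}$), the letter $w_2(\ell_{1,2})$ belongs to $\Gammattop\cup\Gammarest$, so $s(\ell_s^{1,2})\notin\Gammatop$; hence $\ell_s^{1,2}$ lies in the domain of $\notOneTopEmbeddingNoArgT{s}{s'}{\cT_2}$.

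Next I would compute $\notOneTopEmbeddingNoArgT{s}{s'}{\cT_2}(\ell_s^{1,2})$ directly from the definition and the chain of embeddings recorded in Table~\ref{table:prop_f_extremal}. Unfolding,
\[
\notOneTopEmbeddingNoArgT{s}{s'}{\cT_2}(\ell_s^{1,2})
=\embeddingOp{\ext\,\circ\,\dataAbs}{\ttrim{\cT_2}{1}}^{-1}\!\bigl(\embeddingOp{\ext\,\circ\,\dataAbs}{\cT_2}(\ell_s^{1,2})\bigr)
=\embeddingOp{\ext\,\circ\,\dataAbs}{\ttrim{\cT_2}{1}}^{-1}(d_{1,2})
=\ell_{s',1},
\]
where the last equality uses $d_{1,2}=\embeddingOp{\dataAbs}{\ttrim{\cT_2}{1}}(\ell_{w_2',1})$ and $\ell_{w_2',1}=\embeddingOp{\ext}{w_2'}(\ell_{s',1})$.

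Now I would invoke the minimality of $\ell_s$: since $\ell_s^{1,2}<\ell_s$ and $\ell_s$ was chosen as the \emph{smallest} position outside $s^{-1}(\Gammatop)$ on which the two partial embeddings disagree, we get $\notOneTopEmbeddingNoArgT{s}{s'}{\cT_1}(\ell_s^{1,2})=\notOneTopEmbeddingNoArgT{s}{s'}{\cT_2}(\ell_s^{1,2})=\ell_{s',1}$. But by construction $\notOneTopEmbeddingNoArgT{s}{s'}{\cT_1}(\ell_s)=\ell_{s',1}$ as well, and $\notOneTopEmbeddingNoArgT{s}{s'}{\cT_1}$ is injective (being the composition of the injective maps from partial embeddings), so $\ell_s^{1,2}=\ell_s$, contradicting $\ell_s^{1,2}<\ell_s$. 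The main thing to be careful about is keeping the chain of embeddings straight and checking that $\ell_s^{1,2}$ really lies in the domain of $\notOneTopEmbeddingNoArgT{s}{s'}{\cT_2}$, which follows from the fact that $d_{1,2}$ belongs to $\ttrim{\cT_2}{1}$; once that is in place, everything else is routine.
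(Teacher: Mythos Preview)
Your argument is correct and matches the paper's proof essentially step for step: assume $\ell_{1,2}\in\extPos(w_2)$, pull it back to a position $\tilde\ell_{1,2}<\ell_s$ in $s$ outside $\Gammatop$, compute $\notOneTopEmbeddingNoArgT{s}{s'}{\cT_2}(\tilde\ell_{1,2})=\ell_{s',1}$, and then use minimality of $\ell_s$ together with injectivity of $\notOneTopEmbeddingNoArgT{s}{s'}{\cT_1}$ to reach a contradiction. Your write-up is in fact a bit more explicit than the paper's about why $s(\tilde\ell_{1,2})\notin\Gammatop$ and about the chain of embeddings yielding $\ell_{s',1}$, which is helpful.
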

\begin{proof}
	Assume for contradiction that $\ell_{1,2} \in \extPos(w_2)$. 
	Then there exists $\tilde{\ell}_{1,2} \in [n_s]$ such that
	$\tilde{\ell}_{1,2} = \embeddingOp{\ext}{w_2}(\ell_{1,2})$ and 
	$d_{1,2} = \embeddingOp{\ext \, \circ \, \dataAbs}{\cT_2}(\tilde{\ell}_{1,2})$ 
	and we have that $\notOneTopEmbeddingNoArgT{s}{s'}{\cT_2}(\tilde{\ell}_{1,2}) = \ell_{s',1}$.
	Since $d_{1,2} \lneq_1 d_2$ we have 
	$s(\tilde{\ell}_{1,2}) \notin \Gammatop$ and $\tilde{\ell}_{1,2} < \ell_s$. 
	Since $\notOneTopEmbeddingNoArgT{s}{s'}{\cT_1}$ is injective, 
	and $\notOneTopEmbeddingNoArgT{s}{s'}{\cT_1}(\ell_s) = \ell_{s',1}$, 
	we have $\notOneTopEmbeddingNoArgT{s}{s'}{\cT_1}(\tilde{\ell}_{1,2}) \neq \ell_{s',1}$, in contradiction to the minimality of $\ell_s$. 
\end{proof}

\begin{claim}
\label{cl_rest}
Let $P_{\bar{\theta}} \in \ts_s$, 
$s(\ell_s) \in \Gammarest$, 
and $\ell_2 \in \extremal_{\bar{\theta}}(w_2)$. Then 
$\bar{\theta}(x,y) \models x \leq_1 y$ and 
$P_{\bar{\theta}} \in \ts_{2,1}$.
\end{claim}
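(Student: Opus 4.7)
The plan is to first establish two preliminary observations, then prove the orientation statement $\bar{\theta} \models x \leq_1 y$ by contradiction, and finally conclude $P_{\bar{\theta}} \in \ts_{2,1}$ by a mirror argument that uses the orientation just established. Both preliminaries concern layers and task sets: (i) since $s(\ell_s) \in \Gammarest$, combining Claim~\ref{lemma_PEmb_layers} with the definitions of $\dataAbs$ and $\trim{\cD}{1}$ (translating between $\maxdv_{\cD_i}$ and $\maxdv_{\trim{\cD_i}{1}}$) shows $w_2(\ell_{1,2}) \in \Gammarest$ and $w_1(\ell_{2,1}) \in \Gammarest$, so $d_1, d_{1,2}, d_2, d_{2,1}$ all have value at most $\maxdv - 2$ in their respective data words; (ii) the task set $\ts(s'(\ell_{s',j}))$ simultaneously describes $d_1, d_{1,2}$ when $j=1$ and $d_2, d_{2,1}$ when $j=2$, because under the relevant compositions of $\ext$ and $\dataAbs$ both elements map to position $\ell_{s',j}$ in $s'$. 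Since $P_{\bar{\theta}} \in \ts_s = \ts_1 = \ts_2$, no $\bar{\theta}$-witness exists in $\cD_1$ at $d_1$ or in $\cD_2$ at $d_2$, so none exists in the trimmings either, and Lemma~\ref{lemma_trim_omega_compat} then places $P_{\bar{\theta}}$ in both $\ts(s'(\ell_{s',1}))$ and $\ts(s'(\ell_{s',2}))$.

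For the first conclusion, suppose for contradiction $\bar{\theta} \models y <_1 x$. The hypothesis $\ell_2 \in \extremal_{\bar{\theta}}(w_2)$ then yields $\ell_2 = \min(\positions_{\restrm,P_{\bar{\theta}}}(w_2))$, which I will contradict by showing $\ell_{1,2}$ also lies in this set, since Table~\ref{table:prop_f_extremal} gives $\ell_{1,2} < \ell_2$. Layer membership $w_2(\ell_{1,2}) \in \Gammarest$ is preliminary (i), and $\bar{\theta} \in \omega(\ts_{1,2})$ follows from witness-type-set preservation under $\ext$. To rule out $C_{\bar{\theta}} \in \ts_{1,2}$, assume otherwise so that $\cD_2 \models \exists y\, \bar{\theta}(d_{1,2}, y)$; then Lemma~\ref{lemma_rest_theta} applied with $d = d_{1,2}$, $d' = d_2$, $\theta = \bar{\theta}$ is valid because $\bar{\theta} \models y <_1 x$, both elements have value at most $\maxdv_{\cD_2}-2$ by (i), and preliminary (ii) supplies the two trim-side conditions $\trim{\cD_2}{1} \not\models \exists y\, \bar{\theta}(d_{1,2}, y)$ and $\trim{\cD_2}{1} \not\models \exists y\, \bar{\theta}(d_2, y)$. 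Its conclusion $\cD_2 \models \exists y\, \bar{\theta}(d_2, y)$ contradicts $P_{\bar{\theta}} \in \ts_2$. Hence $P_{\bar{\theta}} \in \ts_{1,2}$ and the contradiction $\ell_{1,2} \geq \ell_2$ follows.

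For the second conclusion, assume for contradiction $C_{\bar{\theta}} \in \ts_{2,1}$, so $\cD_1 \models \exists y\, \bar{\theta}(d_{2,1}, y)$. The plan is a mirror application of Lemma~\ref{lemma_rest_theta}, now in $\cD_1$ with $d = d_1$, $d' = d_{2,1}$, and the already-established $\bar{\theta} \models x \leq_1 y$. The $\leq_1$-ordering and layer-$\restrm$ memberships come from Table~\ref{table:prop_f_extremal} and preliminary (i); the two trim-side conditions follow from preliminary (ii) applied at $s'(\ell_{s',1})$ and $s'(\ell_{s',2})$. The lemma then forces $\cD_1 \models \exists y\, \bar{\theta}(d_1, y)$, i.e., $C_{\bar{\theta}} \in \ts_1 = \ts_s$, contradicting $P_{\bar{\theta}} \in \ts_s$. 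The main obstacle I expect is preliminary (ii): it requires careful bookkeeping of how a single position in the trimmed extremal string $s'$ simultaneously encodes tasks for elements in two different task words, and of how the promised status of $\bar{\theta}$ in $\ts_s$ propagates consistently into both trimmed views; once this is set up, the orientation and task-membership claims are each settled by a single application of Lemma~\ref{lemma_rest_theta}.
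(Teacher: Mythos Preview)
Your proposal is correct and follows essentially the same route as the paper's proof: establish that all four elements lie in layer $\restrm$ and that $P_{\bar\theta}\in\ts_{s',1}\cap\ts_{s',2}$ via the substructure argument, then for each conclusion rule out $C_{\bar\theta}$ in the relevant task set by a transfer argument, which forces the contradiction with $P_{\bar\theta}\in\ts_s$. The only cosmetic difference is that you invoke Lemma~\ref{lemma_rest_theta} (the data-word version) directly and verify its trim-side hypotheses by hand, whereas the paper invokes the packaged task-word corollary Lemma~\ref{lemma_cor_rest_theta}; since the latter is just the former plus the task-word bookkeeping you spell out in preliminary~(ii), the two arguments are the same in substance.
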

\begin{proof}
Since $s(\ell_s) \in \Gammarest$,
by Claim~\ref{lemma_PEmb_layers} we have $s'(\ell_{s',1}),s'(\ell_{s',2}) \in \Gammattop \cup \Gammarest$. 
Hence 
\[
\begin{array}{ll}
\val{\cD}{d_1},\val{\cD}{d_2},\val{\cD}{d_{2,1}},\val{\cD}{d_{1,2}} 
\leq \\
\maxdv_{\cD} -2 = \maxdv_{\trim{\cD}{1}} -1
\end{array}
\]
and $s(\ell_{1,2}),s(\ell_{2,1}) \in \Gammarest$. 
Since $\omega(\ts_s) = \omega(\ts_{1,2})=\omega(\ts_{2,1})$, we have $\bar{\theta} \in \omega(\ts_{1,2}) \cap \omega(\ts_{2,1})  $. 
Since $\trim{\cD}{1}$ is a substructure of $\cD$ and $P_{\bar{\theta}} \in \ts_s$, 
we also have $P_{\bar{\theta}} \in \ts_{s',1}\cap \ts_{s',2}$.

Assume for contradiction that $\bar{\theta}(x,y) \models y \lneq_1 x$. 
If $P_{\bar{\theta}} \in \ts_{1,2}$,
then since $\ell_{1,2} < \ell_2$, we have
$
\ell_2 \notin \extremal_{\bar{\theta}}(w_2) = 
\left\{\ell \mid \ell = \min(\positions_{\restrm, P_\theta}(w))\right\}
$
in contradiction. 
If $C_{\bar{\theta}} \in \ts_{1,2}$, then since $P_{\bar{\theta}} \in \ts_{s',1} \cap \ts_{s',2}$, 
and since $\bar{\theta}(x,y) \models y \lneq_1 x$, 
from Lemma~\ref{lemma_cor_rest_theta} (with $d = d_{1,2}$ and $d'=d_2$) it follows that $C_{\bar{\theta}} \in \ts_s$, in contradiction to $P_{\bar{\theta}} \in \ts_s$.
Hence $\bar{\theta}(x,y) \models x \leq_1 y$.

Since $\bar{\theta} \in \omega(\ts_{2,1})$, either $P_{\bar{\theta}}\in\ts_{2,1}$ or $C_{\bar{\theta}}\in\ts_{2,1}$. 
If $C_{\bar{\theta}} \in \ts_{2,1}$, then
from Lemma~\ref{lemma_cor_rest_theta} (with $d = d_1$ and $d'=d_{2,1}$) it follows that $C_{\bar{\theta}} \in  \ts_s$, in contradiction to $P_{\bar{\theta}} \in \ts_s$.
Hence $P_{\bar{\theta}} \in \ts_{2,1}$. 
\end{proof}

We are now ready to resume the proof of Lemma~\ref{lem_f_extremal}. 
We divide into cases depending on whether 
$s(\ell_s) \in \Gammattop$ or $s(\ell_s) \in \Gammarest$. 
First assume that $s(\ell_s) \in \Gammattop$. 
	Since $\ell_{s',1} = \notOneTopEmbeddingNoArgT{s}{s'}{\cT_1}(\ell_s)$,
	we have that $s'(\ell_{s',1}) \in \Gammatop$ by Claim~\ref{lemma_PEmb_layers}.
	Since $s' = \ext(w_2')$, we have 
	that
	$\ell_{w_2',i} \in \extPos(w_2')$, so
	there is some $\theta \in \omega(\ts_s)$ such that 
	$\ell_{w_2',i} \in \extremal_{\mathrm{1top},\theta}(w_2')$ 
	and $d_{1,2} \in \embeddingOp{\dataAbs}{\ttrim{\cT_2}{1}}(\extremal_{\mathrm{1top},\theta}(w_2'))$. 
	Therefore
	by Lemma~\ref{cl_extPos_cardinality}(1) we have that $d_{1,2} \in \embeddingOp{\dataAbs}{\cT_2}(\extremal_{\twotoprm,\theta}(w_2))$.
	Hence $\ell_{1,2} \in \extPos(w_2)$, in contradiction to Claim~\ref{cl_notextPos}. 
	
	Now assume
	$s(\ell_s) \in \Gammarest$. We have $s'(\ell_{s',1}),s'(\ell_{s',2}) \in \Gammattop \cup \Gammarest$ by Claim~\ref{lemma_PEmb_layers},
	i.e.~ 
	$
	d_{1,2},d_{2,1} \leq \maxdv_{\trim{\cD}{1}} -1 = \maxdv_{\cD} - 2
	$.
	Hence, $s(\ell_{1,2}),s(\ell_{2,1}) \in \Gammarest$. 
	Since  $\ell_2 = \embeddingOp{\ext}{w_2}(\ell_s)$ we have that $\ell_2 \in \extPos(w_2)$.
	Let $\theta\in \omega(\ts_s) = \omega(\ts_{1,2})$. 
	By Claim~\ref{cl_notextPos} and using that $s(\ell_{1,2})\in \Gammarest$, 
	there exists $\tilde{\ell}_{1,2} < \ell_{1,2}$ such that $\tilde{\ell}_{1,2} \in \extPos_{\restrm,\theta}(w_2)$. 
	Let $\ell_{0,\theta}$ be such that $\tilde{\ell}_{1,2} = \embeddingOp{\ext}{w_2}(\ell_{0,\theta})$.
	We have that $\tilde{\ell}_{1,2} < \ell_2$, and hence $\ell_{0,\theta} < \ell_s$. 
	Let $s({\ell}_{0,\theta}) = (\hlayerp{0,\theta},\ts_{0,\theta})$. 
	Since $\tilde{\ell}_{1,2} \in \extPos_{\restrm,\theta}(w_2)$ we have $\theta \in \omega(\ts_{0,\theta})$. 
	Let $\ell_{3,\theta}$ be such that $\ell_{2,1} = \embeddingOp{\ext}{w_1}(\ell_{3,\theta})$.
	Since $\ell_1 < \ell_{2,1}$ we have $\ell_s < \ell_{3,\theta}$. 
	We have $\theta \in \omega(\ell_{2,1}) = \omega(\ell_{3,\theta})$. 	
	Hence, $\ell_2 \notin \extremal_{\mathrm{rest},\theta}(w_2)$ for all $\theta \in \omega(\ts)$. 
	Consequently, there is $\bar{\theta}$ such that 
	$\ell_2 \in \extremal_{\bar{\theta}}(w_2)$ 
	and $P_{\bar{\theta}} \in \ts_2 = \ts_s$.

  By Claim~\ref{cl_rest} we have $\bar{\theta}(x,y) \models x \leq_1 y$
	and $P_{\bar{\theta}} \in \ts_{2,1}$, and hence 
	there is $\ell_{\bar{\theta},w_1}$ such that 
	$\extremal_{\bar{\theta}}(w_1)= \{\ell_{\bar{\theta},w_1}\}$ and $\ell_{\bar{\theta},w_1} = \max(\positions_{\restrm, P_{\bar{\theta}}}(w_1))$.
	We have $P_{\bar{\theta}} \in \ts(w_1(\ell_{\bar{\theta},w_1}))$ and 
	$\ell_1 < \ell_{2,1} \leq \ell_{\bar{\theta},w_1}$.
	Let $\ell_{\bar{\theta},s} \in [n_s]$ 
	and $\ell_{\bar{\theta},w_2} \in [n_2]$
	be such that $\ell_{\bar{\theta},w_1} = \embeddingOp{\ext}{w_1}(\ell_{\bar{\theta},s})$ and
	$\ell_{\bar{\theta},w_2}  = \embeddingOp{\ext}{w_2}(\ell_{\bar{\theta},s})$, 
	then we have that 
	$P_{\bar{\theta}} \in 
	\ts(s(\ell_{\bar{\theta},s})) = \ts(w_1(\ell_{\bar{\theta},w_1})) = \ts(w_2(\ell_{\bar{\theta},w_2}))$, 
	$\ell_s < \ell_{\bar{\theta},s}$, and
	$\ell_2 < \ell_{\bar{\theta},w_2}$.
	Since $w_1(\ell_{\bar{\theta},w_1})  \in \Gammarest$
	and $w_1(\ell_{\bar{\theta},w_1}) = s(\ell_{\bar{\theta},s}) = w_2(\ell_{\bar{\theta},w_2})$,
	we have
	$\ell_2 \notin \extremal_{\bar{\theta}}(w_2) = 
	\left\{\max(\positions_{\restrm, P_{\bar{\theta}}}(w_2))\right\}$, 
	in contradiction.

\end{document}